\documentclass[review=false]{jfp-epi}
\usepackage{jfp-cup2epi}

\usepackage[dvipsnames]{xcolor}
\usepackage{amsmath}
\usepackage{amsthm}
\usepackage{subcaption}
\usepackage{mathpartir}
\usepackage{newunicodechar}
\usepackage[bibliography=common]{apxproof}
\usepackage{cleveref}
\newunicodechar{ }{{\,}}
\newunicodechar{¬}{\ensuremath{\neg}} %
\newunicodechar{²}{^2}
\newunicodechar{³}{^3}
\newunicodechar{·}{\ensuremath{\cdot}}
\newunicodechar{¹}{^1}
\newunicodechar{×}{\ensuremath{\times}} %
\newunicodechar{÷}{\ensuremath{\div}} %
\newunicodechar{ʲ}{^j}
\newunicodechar{ʳ}{^r}
\newunicodechar{ˡ}{^l}
\newunicodechar{̷}{\not} %
\newunicodechar{Γ}{\ensuremath{\Gamma}}   %
\newunicodechar{Δ}{\ensuremath{\Delta}} %
\newunicodechar{Η}{\ensuremath{\textrm{H}}} %
\newunicodechar{Θ}{\ensuremath{\Theta}} %
\newunicodechar{Λ}{\ensuremath{\Lambda}} %
\newunicodechar{Ξ}{\ensuremath{\Xi}} %
\newunicodechar{Π}{\ensuremath{\Pi}}   %
\newunicodechar{Σ}{\ensuremath{\Sigma}} %
\newunicodechar{Φ}{\ensuremath{\Phi}} %
\newunicodechar{Ψ}{\ensuremath{\Psi}} %
\newunicodechar{Ω}{\ensuremath{\Omega}} %
\newunicodechar{α}{\ensuremath{\mathnormal{\alpha}}}
\newunicodechar{β}{\ensuremath{\beta}} %
\newunicodechar{γ}{\ensuremath{\mathnormal{\gamma}}} %
\newunicodechar{δ}{\ensuremath{\mathnormal{\delta}}} %
\newunicodechar{ε}{\ensuremath{\mathnormal{\varepsilon}}} %
\newunicodechar{ζ}{\ensuremath{\mathnormal{\zeta}}} %
\newunicodechar{η}{\ensuremath{\mathnormal{\eta}}} %
\newunicodechar{θ}{\ensuremath{\mathnormal{\theta}}} %
\newunicodechar{ι}{\ensuremath{\mathnormal{\iota}}} %
\newunicodechar{κ}{\ensuremath{\mathnormal{\kappa}}} %
\newunicodechar{λ}{\ensuremath{\mathnormal{\lambda}}} %
\newunicodechar{μ}{\ensuremath{\mathnormal{\mu}}} %
\newunicodechar{ν}{\ensuremath{\mathnormal{\mu}}} %
\newunicodechar{ξ}{\ensuremath{\mathnormal{\xi}}} %
\newunicodechar{π}{\ensuremath{\mathnormal{\pi}}}
\newunicodechar{π}{\ensuremath{\mathnormal{\pi}}} %
\newunicodechar{ρ}{\ensuremath{\mathnormal{\rho}}} %
\newunicodechar{σ}{\ensuremath{\mathnormal{\sigma}}} %
\newunicodechar{τ}{\ensuremath{\mathnormal{\tau}}} %
\newunicodechar{φ}{\ensuremath{\mathnormal{\varphi}}} %
\newunicodechar{χ}{\ensuremath{\mathnormal{\chi}}} %
\newunicodechar{ψ}{\ensuremath{\mathnormal{\psi}}} %
\newunicodechar{ω}{\ensuremath{\mathnormal{\omega}}} %
\newunicodechar{ϕ}{\ensuremath{\mathnormal{\phi}}} %
\newunicodechar{ϵ}{\ensuremath{\mathnormal{\epsilon}}} %
\newunicodechar{ᵏ}{^k}
\newunicodechar{ᵢ}{\ensuremath{_i}} %
\newunicodechar{ }{\quad}
\newunicodechar{ }{{\,}}
\newunicodechar{′}{\ensuremath{^\prime}}  %
\newunicodechar{″}{\ensuremath{^\second}} %
\newunicodechar{‴}{\ensuremath{^\third}}  %
\newunicodechar{ⁱ}{^i}
\newunicodechar{⁵}{\ensuremath{^5}}
\newunicodechar{⁺}{\ensuremath{^+}} 
\newunicodechar{⁻}{\ensuremath{^-}} 
\newunicodechar{ⁿ}{^n}
\newunicodechar{₀}{\ensuremath{_0}} %
\newunicodechar{₁}{\ensuremath{_1}} 
\newunicodechar{₂}{\ensuremath{_2}} 
\newunicodechar{₃}{\ensuremath{_3}}
\newunicodechar{₊}{\ensuremath{_+}} 
\newunicodechar{₋}{\ensuremath{_-}} 
\newunicodechar{ₙ}{_n} %
\newunicodechar{ℂ}{\ensuremath{\mathbb{C}}} %
\newunicodechar{ℒ}{\ensuremath{\mathscr{L}}}
\newunicodechar{ℕ}{\mathbb{N}} %
\newunicodechar{ℚ}{\ensuremath{\mathbb{Q}}}
\newunicodechar{ℝ}{\ensuremath{\mathbb{R}}} %
\newunicodechar{ℤ}{\ensuremath{\mathbb{Z}}} %
\newunicodechar{ℳ}{\mathscr{M}}
\newunicodechar{⅋}{\ensuremath{\parr}} %
\newunicodechar{←}{\ensuremath{\leftarrow}} %
\newunicodechar{↑}{\ensuremath{\uparrow}} %
\newunicodechar{→}{\ensuremath{\rightarrow}} %
\newunicodechar{↔}{\ensuremath{\leftrightarrow}} %
\newunicodechar{↖}{\nwarrow} %
\newunicodechar{↗}{\nearrow} %
\newunicodechar{↝}{\ensuremath{\leadsto}}
\newunicodechar{↦}{\ensuremath{\mapsto}}
\newunicodechar{⇆}{\ensuremath{\leftrightarrows}} %
\newunicodechar{⇐}{\ensuremath{\Leftarrow}} %
\newunicodechar{⇒}{\ensuremath{\Rightarrow}} %
\newunicodechar{⇔}{\ensuremath{\Leftrightarrow}} %
\newunicodechar{∀}{\ensuremath{\forall}}   %
\newunicodechar{∂}{\ensuremath{\partial}}
\newunicodechar{∃}{\ensuremath{\exists}} %
\newunicodechar{∅}{\ensuremath{\varnothing}} %
\newunicodechar{∈}{\ensuremath{\in}}
\newunicodechar{∉}{\ensuremath{\not\in}} %
\newunicodechar{∋}{\ensuremath{\ni}}  %
\newunicodechar{∎}{\ensuremath{\qed}}
\newunicodechar{∏}{\prod}
\newunicodechar{∑}{\sum}
\newunicodechar{∗}{\ensuremath{\ast}} %
\newunicodechar{∘}{\ensuremath{\circ}} %
\newunicodechar{∙}{\ensuremath{\bullet}} 
\newunicodechar{∞}{\ensuremath{\infty}} %
\newunicodechar{∣}{\ensuremath{\mid}} %
\newunicodechar{∧}{\wedge}%
\newunicodechar{∨}{\vee}%
\newunicodechar{∩}{\ensuremath{\cap}} %
\newunicodechar{∪}{\ensuremath{\cup}} %
\newunicodechar{∫}{\int}
\newunicodechar{∷}{::} %
\newunicodechar{∼}{\ensuremath{\sim}} %
\newunicodechar{≃}{\ensuremath{\simeq}} %
\newunicodechar{≅}{\ensuremath{\cong}} %
\newunicodechar{≈}{\ensuremath{\approx}} %
\newunicodechar{≜}{\ensuremath{\stackrel{\scriptscriptstyle {\triangle}}{=}}} 
\newunicodechar{≝}{\ensuremath{\stackrel{\scriptscriptstyle {\text{def}}}{=}}}
\newunicodechar{≟}{\ensuremath{\stackrel {_\text{\textbf{?}}}{\text{\textbf{=}}\negthickspace\negthickspace\text{\textbf{=}}}}} 
\newunicodechar{≠}{\ensuremath{\neq}}%
\newunicodechar{≡}{\ensuremath{\equiv}}%
\newunicodechar{≤}{\ensuremath{\le}} %
\newunicodechar{≥}{\ensuremath{\ge}} %
\newunicodechar{⊂}{\ensuremath{\subset}} %
\newunicodechar{⊃}{\ensuremath{\supset}} %
\newunicodechar{⊆}{\ensuremath{\subseteq}} %
\newunicodechar{⊇}{\ensuremath{\supseteq}} %
\newunicodechar{⊎}{\ensuremath{\uplus}} %
\newunicodechar{⊑}{\ensuremath{\sqsubseteq}} %
\newunicodechar{⊒}{\ensuremath{\sqsupseteq}} %
\newunicodechar{⊓}{\ensuremath{\sqcap}} %
\newunicodechar{⊔}{\ensuremath{\sqcup}} %
\newunicodechar{⊕}{\ensuremath{\oplus}} %
\newunicodechar{⊗}{\ensuremath{\otimes}} %
\newunicodechar{⊛}{\ensuremath{\circledast}}
\newunicodechar{⊢}{\ensuremath{\vdash}} %
\newunicodechar{⊤}{\ensuremath{\top}}
\newunicodechar{⊥}{\ensuremath{\bot}} 
\newunicodechar{⊧}{\models} %
\newunicodechar{⊨}{\models} %
\newunicodechar{⊩}{\Vdash}
\newunicodechar{⊸}{\ensuremath{\multimap}} %
\newunicodechar{⋁}{\ensuremath{\bigvee}}
\newunicodechar{⋃}{\ensuremath{\bigcup}} %
\newunicodechar{⋄}{\ensuremath{\diamond}} %
\newunicodechar{⋅}{\ensuremath{\cdot}}
\newunicodechar{⋆}{\ensuremath{\star}} %
\newunicodechar{⋮}{\ensuremath{\vdots}} %
\newunicodechar{⋯}{\ensuremath{\cdots}} %
\newunicodechar{─}{---}
\newunicodechar{■}{\ensuremath{\blacksquare}} 
\newunicodechar{□}{\ensuremath{\square}} %
\newunicodechar{▵}{\ensuremath{\triangle}} %
\newunicodechar{▹}{\ensuremath{\rhd}} %
\newunicodechar{◃}{\triangleleft{}}
\newunicodechar{◅}{\ensuremath{\triangleleft{}}}
\newunicodechar{◇}{\ensuremath{\diamond}} %
\newunicodechar{◽}{\ensuremath{\square}}
\newunicodechar{★}{\ensuremath{\star}}   %
\newunicodechar{♭}{\ensuremath{\flat}} %
\newunicodechar{♯}{\ensuremath{\sharp}} %
\newunicodechar{✓}{\ensuremath{\checkmark}} %
\newunicodechar{⟂}{\ensuremath{^\bot}} 
\newunicodechar{⟦}{\ensuremath{\llbracket}}
\newunicodechar{⟧}{\ensuremath{\rrbracket}}
\newunicodechar{⟨}{\ensuremath{\langle}} %
\newunicodechar{⟩}{\ensuremath{\rangle}} %
\newunicodechar{⟶}{{\longrightarrow}}
\newunicodechar{⟷} {\ensuremath{\leftrightarrow}}
\newunicodechar{⟹}{\ensuremath{\Longrightarrow}} %
\newunicodechar{ⱼ}{_j} %
\newunicodechar{𝒟}{\ensuremath{\mathcal{D}}} %
\newunicodechar{𝒢}{\ensuremath{\mathcal{G}}}
\newunicodechar{𝒦}{\ensuremath{\mathcal{K}}} %
\newunicodechar{𝒫}{\ensuremath{\mathcal{P}}}
\newunicodechar{𝔸}{\ensuremath{\mathds{A}}} %
\newunicodechar{𝔹}{\ensuremath{\mathds{B}}} %
\newunicodechar{𝟙}{\ensuremath{\mathds{1}}} %

\usepackage[supertabular,implicitLineBreakHack]{ottalt}

\newenvironment{ottdefnblock}[3][]{ \framebox{\mbox{#2}} \quad #3 \\[0pt]}{}

\newcommand{\ottnt}[1]{\mathit{#1}}
\newcommand{\ottmv}[1]{\mathit{#1}}
\newcommand{\ottkw}[1]{\mathbf{#1}}
\newcommand{\ottsym}[1]{#1}

\newcommand{\ottafterlastrule}{\\}

  \renewottcommands[ott]

\makeatletter
\@ifundefined{lhs2tex.lhs2tex.sty.read}%
  {\@namedef{lhs2tex.lhs2tex.sty.read}{}%
   \newcommand\SkipToFmtEnd{}%
   \newcommand\EndFmtInput{}%
   \long\def\SkipToFmtEnd#1\EndFmtInput{}%
  }\SkipToFmtEnd

\newcommand\ReadOnlyOnce[1]{\@ifundefined{#1}{\@namedef{#1}{}}\SkipToFmtEnd}
\usepackage{amstext}
\usepackage{amssymb}
\usepackage{stmaryrd}
\DeclareFontFamily{OT1}{cmtex}{}
\DeclareFontShape{OT1}{cmtex}{m}{n}
  {<5><6><7><8>cmtex8
   <9>cmtex9
   <10><10.95><12><14.4><17.28><20.74><24.88>cmtex10}{}
\DeclareFontShape{OT1}{cmtex}{m}{it}
  {<-> ssub * cmtt/m/it}{}

\DeclareFontShape{OT1}{cmtt}{bx}{n}
  {<5><6><7><8>cmtt8
   <9>cmbtt9
   <10><10.95><12><14.4><17.28><20.74><24.88>cmbtt10}{}
\DeclareFontShape{OT1}{cmtex}{bx}{n}
  {<-> ssub * cmtt/bx/n}{}

\newcommand{\Conid}[1]{\mathit{#1}}
\newcommand{\Varid}[1]{\mathit{#1}}
\newcommand{\anonymous}{\kern0.06em \vbox{\hrule\@width.5em}}

\newcommand{\bind}{\mathbin{>\!\!\!>\mkern-6.7mu=}}
\newcommand{\sequ}{\mathbin{>\!\!\!>}}
\renewcommand{\leq}{\leqslant}

\usepackage{polytable}

\@ifundefined{mathindent}%
  {\newdimen\mathindent\mathindent\leftmargini}%
  {}%

\def\resethooks{%
  \global\let\SaveRestoreHook\empty
  \global\let\ColumnHook\empty}
\newcommand*{\savecolumns}[1][default]%
  {\g@addto@macro\SaveRestoreHook{\savecolumns[#1]}}
\newcommand*{\restorecolumns}[1][default]%
  {\g@addto@macro\SaveRestoreHook{\restorecolumns[#1]}}
\newcommand*{\aligncolumn}[2]%
  {\g@addto@macro\ColumnHook{\column{#1}{#2}}}

\resethooks

\newcommand{\onelinecommentchars}{\quad-{}- }
\newcommand{\commentbeginchars}{\enskip\{-}
\newcommand{\commentendchars}{-\}\enskip}

\newcommand{\visiblecomments}{%
  \let\onelinecomment=\onelinecommentchars
  \let\commentbegin=\commentbeginchars
  \let\commentend=\commentendchars}

\newcommand{\invisiblecomments}{%
  \let\onelinecomment=\empty
  \let\commentbegin=\empty
  \let\commentend=\empty}

\visiblecomments

\newlength{\blanklineskip}
\newcommand{\hsindent}[1]{\quad}
\let\hspre\empty
\let\hspost\empty

\EndFmtInput
\makeatother
\ReadOnlyOnce{polycode.fmt}%
\makeatletter

\newcommand{\hsnewpar}[1]%
  {{\parskip=0pt\parindent=0pt\par\vskip #1\noindent}}

\newcommand{\hscodestyle}{}

\newcommand{\sethscode}[1]%
  {\expandafter\let\expandafter\hscode\csname #1\endcsname
   \expandafter\let\expandafter\endhscode\csname end#1\endcsname}

  {\par\noindent
   \advance\leftskip\mathindent
   \hscodestyle
   \let\\=\@normalcr
   \let\hspre\(\let\hspost\)%
   \pboxed}%
  {\endpboxed\)%
   \par\noindent
   \ignorespacesafterend}

  {\hsnewpar\abovedisplayskip
   \advance\leftskip\mathindent
   \hscodestyle
   \let\hspre\(\let\hspost\)%
   \pboxed}%
  {\endpboxed%
   \hsnewpar\belowdisplayskip
   \ignorespacesafterend}

  {\hsnewpar\abovedisplayskip
   \advance\leftskip\mathindent
   \hscodestyle
   \let\\=\@normalcr
   \(\pboxed}%
  {\endpboxed\)%
   \hsnewpar\belowdisplayskip
   \ignorespacesafterend}

\newcommand{\plainhs}{\sethscode{plainhscode}}

\plainhs

  {\hsnewpar\abovedisplayskip
   \advance\leftskip\mathindent
   \hscodestyle
   \let\\=\@normalcr
   \(\parray}%
  {\endparray\)%
   \hsnewpar\belowdisplayskip
   \ignorespacesafterend}

  {\parray}{\endparray}

  {\(\parray}{\endparray\)}

\def\codeframewidth{\arrayrulewidth}
\RequirePackage{calc}

  {\parskip=\abovedisplayskip\par\noindent
   \hscodestyle
   \arrayrulewidth=\codeframewidth
   \tabular{@{}|p{\linewidth-2\arraycolsep-2\arrayrulewidth-2pt}|@{}}%
   \hline\framedhslinecorrect\\{-1.5ex}%
   \let\endoflinesave=\\
   \let\\=\@normalcr
   \(\pboxed}%
  {\endpboxed\)%
   \framedhslinecorrect\endoflinesave{.5ex}\hline
   \endtabular
   \parskip=\belowdisplayskip\par\noindent
   \ignorespacesafterend}

\newcommand{\framedhslinecorrect}[2]%
  {#1[#2]}

  {\(\def\column##1##2{}%
   \let\>\undefined\let\<\undefined\let\\\undefined
   \newcommand\>[1][]{}\newcommand\<[1][]{}\newcommand\\[1][]{}%
   \def\fromto##1##2##3{##3}%
   }{\) }%

  {\let\orighscode=\hscode
   \let\origendhscode=\endhscode
   \def\endhscode{\def\hscode{\endgroup\def\@currenvir{hscode}\\}\begingroup}
   \orighscode\def\hscode{\endgroup\def\@currenvir{hscode}}}%
  {\origendhscode
   \global\let\hscode=\orighscode
   \global\let\endhscode=\origendhscode}%

\makeatother
\EndFmtInput
  \InputIfFileExists{no-editing-marks}{
    
  }

  \usepackage{xargs}
  \usepackage{tikz}
  \usetikzlibrary{automata, positioning}
  \usepackage[colorinlistoftodos,prependcaption,textsize=tiny]{todonotes}

      \newcommand{\note}[1]{}

      \newcommand{\unsure}[2][1=]{}
      \newcommand{\info}[2][1=]{}
      \newcommand{\change}[2]{}
      \newcommand{\inconsistent}[2]{}
      \newcommand{\critical}[2]{}
      \newcommand{\improvement}[1]{}
      \newcommand{\resolved}[2]{}

      \newcommand{\csongor}[2][1=]{}
      \newcommand{\jp}[2][1=]{}
      \newcommandx{\rae}[2][1=]{}
      \newcommandx{\nw}[2][1=]{}

  \newcommand{\constraintcolour}{\color{RoyalBlue}}
  \newcommand{\constraintfont}[1]{{\constraintcolour#1}}
  \newcommand{\multiplicitycolour}{\color{RoyalBlue}}
  \newcommand{\multiplicityfont}[1]{{\multiplicitycolour#1}}

  \newcommand{\vdashi}{%
      \mathrel{%
          \vdash\hspace*{-4pt}%
          \raisebox{0.9pt}{\scalebox{.66}{\(\blacktriangleright\)}}%
      }%
  }
  \newcommand{\vdashs}{⊢_{\mathsf{s}}}
  \newcommand{\vdashd}{⊢_{\mathsf{diff}}}
  
  \newcommand{\scale}{\constraintfont{\cdot}}
  \newcommand{\constraintop}[1]{\mathop{\constraintfont{#1}}}
  \newcommand{\aand}{\constraintop{\&}}
  \DeclareMathOperator*{\bigaand}{\vcenter{\hbox{\Large\&}}}
  \newcommand{\lollycirc}{\raisebox{-0.255ex}{\scalebox{1.4}{$\circ$}}}
  \newcommand{\Lolly}{\constraintop{=\kern-0.9ex \lollycirc}}
  \newcommand{\FatArrow}{\constraintop{\Rightarrow}}
  \newcommand{\RLolly}{\mathop{\constraintfont \circledless}}
  
  \newcommand{\qtensor}{\constraintop{\otimes}}
  \newcommand{\dsterm}[2]{\llbracket #2 \rrbracket_{#1}}
  \newcommand{\dstype}[1]{\llbracket #1 \rrbracket}
  \newcommand{\dsevidence}[1]{\llbracket #1 \rrbracket^{\mathbf{ev}}}

  \newcommand{\keyword}[1]{\mathbf{#1}}
  \newcommand{\klet}{\keyword{let}}
  \newcommand{\kcase}{\keyword{case}}

  \newcommand{\packbox}{\raisebox{-0.08ex}{\scalebox{0.88}{$\square$}}}
  \newcommand{\kin}{\keyword{in}}
  
  \newcommand{\kunpack}{\klet\packbox}

  \newcommand{\bnfeq}{\mathrel{\Coloneqq}}
  \newcommand{\bnfor}{\mathrel{\mid}}

  \newtheoremrep{theorem}{Theorem}[section]
  \newtheoremrep{lemma}[theorem]{Lemma}
  \newtheoremrep{corollary}[theorem]{Corollary}
  \newtheoremrep{property}[theorem]{Property}

  \newtheorem{definition}[theorem]{Definition}

\begin{document}

\jfpYear{2026}

\title{Linear Constraints}

\author{Arnaud Spiwack}
\orcid{0000-0002-5985-2086}
\affiliation{
  \institution{Tweag}
  \city{Paris}
  \country{France}
  \authoremail{arnaud.spiwack@tweag.io}
 }
\author{Csongor Kiss}
\orcid{0000-0002-0195-2420}
\affiliation{
  \institution{Imperial College London}
  \city{London}
  \country{United Kingdom}
  \authoremail{csongor.kiss14@imperial.ac.uk}
 }
\author{Jean-Philippe Bernardy}
\orcid{0000-0002-8469-5617}
\affiliation{
  \institution{University of Gothenburg}
  \city{Gothenburg}
  \country{Sweden}
  \authoremail{jean-philippe.bernardy@gu.se}
 }
\author{Nicolas Wu}
\orcid{0000-0002-4161-985X}
\affiliation{
  \institution{Imperial College London}
  \city{London}
  \country{United Kingdom}
  \authoremail{n.wu@imperial.ac.uk}
 }
\author{Richard A. Eisenberg}
\orcid{0000-0002-7669-9781}
\affiliation{
  \institution{Tweag}
  \city{Paris}
  \country{France}
  \authoremail{rae@richarde.dev}
 }
\authornote{Since working on this paper, Richard Eisenberg's affiliation has
changed to Jane Street, USA}

\begin{abstract}
  Linear constraints are the linear counterpart of Haskell's class
  constraints.
  Linearly typed parameters allow the programmer to control resources
  such as file handles and manually managed memory as linear
  arguments. Indeed, a linear type system can verify that these
  resources are used safely.  However, writing code with explicit
  linear arguments requires bureaucracy. Linear constraints address
  this shortcoming: a linear constraint acts as an implicit linear
  argument that can be filled in automatically by the compiler.
  
  We present this new feature as a qualified type system,
  together with an inference algorithm which extends
  \textsc{ghc}'s existing constraint solver algorithm. Soundness of
  linear constraints is ensured by the fact that they desugar into
  Linear Haskell.

  This paper is a revised and extended version of a previous paper by
  the same authors~\citep{SpiwackKBWE22}. The formal system and the
  constraint solver have been significantly simplified and numerous
  additional applications are described.
\end{abstract}

\maketitle

\newcommand{\maybesmall}{\small}

\section{Introduction}
\label{sec:introduction}

Linear type systems have seen a renaissance in recent years in
various programming communities. Rust's ownership system guarantees
memory safety for systems programmers, Haskell's \textsc{ghc} includes
support for linear types, and even
dependently typed programmers can now use linear types with Idris 2.
All of these systems are vastly different in ergonomics and
scope. Rust uses dedicated syntax and code generation
to support management of resources, while Linear Haskell is a
type system change without requiring any other impact on the compiler, such as in the
code generator or runtime system.
Linear Haskell is powerful enough to emulate Rust's
ownership, but using its linear arguments to do so is a tedious exercise,
since it requires
the programmer to carefully thread resource tokens.

To get a sense of the power and the tedium of programming with linear types
for memory management, consider the following function which
reads the first two elements \ensuremath{\Varid{x}} and \ensuremath{\Varid{y}} of an array \ensuremath{\Varid{arr}_{0}} and
returns them after deallocating the array:
\begin{hscode}\SaveRestoreHook
\column{B}{@{}>{\hspre}l<{\hspost}@{}}%
\column{3}{@{}>{\hspre}l<{\hspost}@{}}%
\column{8}{@{}>{\hspre}l<{\hspost}@{}}%
\column{19}{@{}>{\hspre}l<{\hspost}@{}}%
\column{E}{@{}>{\hspre}l<{\hspost}@{}}%
\>[B]{}\Varid{read2AndDiscard}\mathbin{::}\Conid{MArray}\;\Varid{a}⊸(\Conid{Ur}\;\Varid{a},\Conid{Ur}\;\Varid{a}){}\<[E]%
\\
\>[B]{}\Varid{read2AndDiscard}\;\Varid{arr}_{0}\mathrel{=}{}\<[E]%
\\
\>[B]{}\hsindent{3}{}\<[3]%
\>[3]{}\mathbf{let}\;{}\<[8]%
\>[8]{}(\Varid{arr}_{1},\Varid{x}){}\<[19]%
\>[19]{}\mathrel{=}\Varid{read}\;\Varid{arr}_{0}\;\mathrm{0}{}\<[E]%
\\
\>[8]{}(\Varid{arr}_{2},\Varid{y}){}\<[19]%
\>[19]{}\mathrel{=}\Varid{read}\;\Varid{arr}_{1}\;\mathrm{1}{}\<[E]%
\\
\>[8]{}(){}\<[19]%
\>[19]{}\mathrel{=}\Varid{free}\;\Varid{arr}_{2}{}\<[E]%
\\
\>[B]{}\hsindent{3}{}\<[3]%
\>[3]{}\mathbf{in}\;(\Varid{x},\Varid{y}){}\<[E]%
\ColumnHook
\end{hscode}\resethooks
The power of linearity enables the array library to ensure that there is only
one reference to the array, and therefore it can be mutated in-place without
violating referential transparency.
Linearity also ensures that after the array has been freed, it is no longer
possible to read or write to it.
The values \ensuremath{\Varid{x}} and \ensuremath{\Varid{y}} read from the array are returned; their
types include elements wrapped by the \ensuremath{\Conid{Ur}}
(pronounced ``unrestricted'',  and corresponding to the
``!'' operator of \citet{girard-linear-logic} and defined precisely by \citet{LinearHaskell}) type, allowing them to be used
arbitrarily many times. This works because \ensuremath{\Varid{read2AndDiscard}} takes a restricted-use array
containing unrestricted elements.

On the other hand, there is significant tedium in manipulating
the array through \ensuremath{\Varid{arr}_{0}}, \ensuremath{\Varid{arr}_{1}}, and \ensuremath{\Varid{arr}_{2}}.
This arises because the \ensuremath{\Varid{read}} function consumes the array and returns a fresh
array, to be used in future operations.
Operationally, the array remains
the same, but each operation assigns a new name to it, to facilitate tracking
references statically.

In a non-linear language, one would have to use a monadic interface, or allow arbitrary effects, foregoing referential transparency.
Compare the above function with what one would write in a non-linear, impure
language:
\begin{hscode}\SaveRestoreHook
\column{B}{@{}>{\hspre}l<{\hspost}@{}}%
\column{3}{@{}>{\hspre}l<{\hspost}@{}}%
\column{9}{@{}>{\hspre}l<{\hspost}@{}}%
\column{14}{@{}>{\hspre}l<{\hspost}@{}}%
\column{E}{@{}>{\hspre}l<{\hspost}@{}}%
\>[B]{}\Varid{read2AndDiscard}\mathbin{::}\Conid{MArray}\;\Varid{a}\to (\Varid{a},\Varid{a}){}\<[E]%
\\
\>[B]{}\Varid{read2AndDiscard}\;\Varid{arr}\mathrel{=}{}\<[E]%
\\
\>[B]{}\hsindent{3}{}\<[3]%
\>[3]{}\mathbf{let}\;{}\<[9]%
\>[9]{}\Varid{x}{}\<[14]%
\>[14]{}\mathrel{=}\Varid{read}\;\Varid{arr}\;\mathrm{0}{}\<[E]%
\\
\>[9]{}\Varid{y}{}\<[14]%
\>[14]{}\mathrel{=}\Varid{read}\;\Varid{arr}\;\mathrm{1}{}\<[E]%
\\
\>[9]{}(){}\<[14]%
\>[14]{}\mathrel{=}\Varid{unsafeFree}\;\Varid{arr}{}\<[E]%
\\
\>[B]{}\hsindent{3}{}\<[3]%
\>[3]{}\mathbf{in}\;(\Varid{x},\Varid{y}){}\<[E]%
\ColumnHook
\end{hscode}\resethooks
This non-linear version does not guarantee that there is a unique reference to
the array, so freeing the array is a potentially unsafe operation.
However, it is simpler because there is less bureaucracy to manage: we are
clearly interacting with the \emph{same} array throughout, and this version makes
that apparent.
We see here a clear tension between extra safety and clarity of code---one
we wish, as language designers, to avoid. 
%
How can we get the compiler to see that the array is used safely
without explicit threading?

Following well-known ideas~\citep{DBLP:conf/popl/CraryWM99,DBLP:conf/esop/SmithWM00,DBLP:conf/tic/WalkerM00}, our approach is to let arrays be unrestricted, but
associate linear capabilities (such as \ensuremath{\constraintfont{\Conid{Read}}}, \ensuremath{\constraintfont{\Conid{Write}}}) to them.
In fact, we show in this paper that such linear capabilities
are the natural analogue of Haskell's type class
constraints to the setting of linear types.
We call these new constraints \emph{linear constraints}.
Like class constraints~\citep{OutsideIn},
linear constraints are propagated implicitly by the compiler.
Like linear arguments, they can be used to safely track resources such as arrays
or file handles. Thus, linear constraints are the combination of these two
concepts.

With our extension, we can write a new pure version of \ensuremath{\Varid{read2AndDiscard}} which does
not require explicit threading of the array. This instead uses a \emph{unique} array
\ensuremath{\Conid{UArray}\;\Varid{a}\;\Varid{n}} of values of type \ensuremath{\Varid{a}}, which is tagged with a type value \ensuremath{\Varid{n}}:
\begin{hscode}\SaveRestoreHook
\column{B}{@{}>{\hspre}l<{\hspost}@{}}%
\column{3}{@{}>{\hspre}l<{\hspost}@{}}%
\column{7}{@{}>{\hspre}c<{\hspost}@{}}%
\column{7E}{@{}l@{}}%
\column{11}{@{}>{\hspre}l<{\hspost}@{}}%
\column{E}{@{}>{\hspre}l<{\hspost}@{}}%
\>[B]{}\Varid{read2AndDiscard}\mathbin{::}\constraintfont{(\Conid{Read}\;\Varid{n},\Conid{Write}\;\Varid{n})}\Lolly \Conid{UArray}\;\Varid{a}\;\Varid{n}\to (\Conid{Ur}\;\Varid{a},\Conid{Ur}\;\Varid{a}){}\<[E]%
\\
\>[B]{}\Varid{read2AndDiscard}\;\Varid{arr}\mathrel{=}\Conid{Linearly}.\mathbf{do}{}\<[E]%
\\
\>[B]{}\hsindent{3}{}\<[3]%
\>[3]{}\Varid{x}{}\<[7]%
\>[7]{}\leftarrow {}\<[7E]%
\>[11]{}\Varid{read}\;\Varid{arr}\;\mathrm{0}{}\<[E]%
\\
\>[B]{}\hsindent{3}{}\<[3]%
\>[3]{}\Varid{y}{}\<[7]%
\>[7]{}\leftarrow {}\<[7E]%
\>[11]{}\Varid{read}\;\Varid{arr}\;\mathrm{1}{}\<[E]%
\\
\>[B]{}\hsindent{3}{}\<[3]%
\>[3]{}\Varid{free}\;\Varid{arr}{}\<[E]%
\\
\>[B]{}\hsindent{3}{}\<[3]%
\>[3]{}\Varid{\Conid{Linearly}.return}\;(\Varid{x},\Varid{y}){}\<[E]%
\ColumnHook
\end{hscode}\resethooks
The only changes from the impure version are that this version explicitly constrains
the function to require read and write access to the array, and the
use of the qualified \ensuremath{\Conid{Linearly}.\mathbf{do}} notation to sequence the actions (see \cref{sec:packing-unpacking-do}).
Crucially, the resource representing the ownership of the
array is a linear constraint and is separate from the array itself, which no
longer needs to be threaded manually.
In this approach the array is non-linear but its capabilities
(\textit{i.e.} read or write access) are given by \emph{linear constraints}.
Once these capabilities are consumed, the array can no longer be read from
or written to without triggering a compile time error.
Additionally, using capabilities in this way prevents problems
that arise from aliasing.


This paper is an extended version of a conference
publication~\citep{SpiwackKBWE22}. Overall our contributions are:
\begin{itemize}
\item A system of qualified types that allows a constraint assumption
  to be given a multiplicity (linear or unrestricted). Linear assumptions are used precisely
  once in the body of a definition
  (\cref{sec:qualified-type-system}). This system supports examples
  that have motivated the design of several resource-aware systems,
  such as ownership \textit{à la} Rust (\cref{sec:memory-ownership}), or
  capabilities in the style of Mezzo~\cite{mezzo-permissions} or
  \textsc{ats}~\cite{AtsLinearViews}; accordingly, our system points
  towards a possible unification of these lines of research.

\item Applications of this qualified type system to allow writing
  \begin{itemize}
  \item functions whose result can only be used linearly (\cref{sec:Unique-constraint})
  \item resource-aware algorithms without explicit threading (\cref{sec:memory-ownership}); and
  \end{itemize}

\item An inference algorithm that respects the multiplicity of
  assumptions. We prove that this algorithm is sound with respect to
  our type system~(\cref{sec:type-inference}). It consists of
  \begin{itemize}
  \item a constraint generation
    algorithm~(\cref{sec:constraint-generation}). The language of
    generated constraints tracks multiplicities.
  \item a solver~(\cref{sec:constraint-solver}) for the generated
    constraints, which restricts proof-search algorithms for linear
    logic in order to be \emph{guess
      free}~\cite[Section~6.4]{OutsideIn}. A guess-free algorithm
    ensures that constraint inference is predictable and insensitive
    to small changes in the source program; it is necessarily
    incomplete.
  \end{itemize}
\end{itemize}
Our language is given semantics by desugaring into a core language based
on that of \citet{LinearHaskell}.
Our design is intended to work well with other features of Haskell and
\textsc{ghc} extensions. Indeed, we have a prototype implementation (\cref{sec:implementation}).

The specific contributions of  this extended version are:
\begin{itemize}
\item numerous new examples and applications~(\cref{sec:memory-ownership})
\item simplified requirements for the entailment relation of simple
        constraints~(\cref{sec:qualified-type-system})
\item a new and simplified constraint solver~(\cref{sec:constraint-solver})
\item a discussion of the design decisions~(\cref{sec:design-considerations})
\end{itemize}

\section{Background: Linear Haskell}
\label{sec:linear-types}

This section follows \citet[Section 2.1]{LinearHaskell},
to describe our baseline approach, as released in \textsc{ghc}~9.0.
Linear Haskell adds a new type of functions,
dubbed \emph{linear functions}, and written \ensuremath{\Varid{a}\mathbin{⊸}\Varid{b}}.\footnote{The linear function
  type and its notation come from linear
  logic~\citep{girard-linear-logic}, to which the phrase \emph{linear
    types} refers. All the various designs of linear typing in the
  literature amount to adding such a linear function type, but details
  can vary wildly. See~\citet[Section 6]{LinearHaskell} for an analysis
  of alternative approaches.} A linear function consumes its
  argument exactly once. Linear Haskell defines it as follows:

\begin{quote}
\ensuremath{\Varid{f}\mathbin{::}\Varid{a}\mathbin{⊸}\Varid{b}} guarantees that if \ensuremath{(\Varid{f}\;\Varid{u})} is consumed exactly once, \\
then the argument \ensuremath{\Varid{u}} is consumed exactly once.
\end{quote}
To make precise sense of this statement, we need to know what ``consumed exactly once'' means
formally, which requires an understanding of the type system that we introduce later.
However, an intuitive understanding can be given by
inspecting the type of the value concerned:
\begin{definition}[Consume exactly once]~ \label{def:consume}
\begin{itemize}
  \item To consume a value of atomic base type (e.g. \ensuremath{\Conid{Int}}) evaluate it exactly once.
  \item To consume a function exactly once, apply it to one argument, and then consume its result exactly once.
  \item To consume a pair exactly once, pattern-match on it, and then consume each component exactly once.
  \item In general, to consume a value of an algebraic datatype exactly once, pattern-match on it,
  and then consume all its linear components exactly once.
\end{itemize}
\end{definition}
\noindent
Note that in this paper, we consider arrays to be generalised unrestricted containers.
In fact, defining exactly what ``evaluate exactly once'' means for arrays is
a difficult task. In this paper, we avoid the difficulty by considering instead the
capabilities associated with such arrays.

\subsection{Multiplicities}
\label{sec:multiplicities}
The usual arrow type \ensuremath{\Varid{a}\to \Varid{b}} can be recovered using \ensuremath{\Conid{Ur}}, as \ensuremath{\Conid{Ur}\;\Varid{a}\mathbin{⊸}\Varid{b}}, but Linear
Haskell provides a first-class treatment of \ensuremath{\Varid{a}\to \Varid{b}}, thus ensuring
backwards compatibility with Haskell. In practice, the type-checker
records the \emph{multiplicity} of every introduced variable: \(1\) for linear
arguments and \(ω\) for unrestricted ones. Multiplicity is also recorded
in the arrows:
\[
  \ensuremath{\Varid{a}\mathop{\to_{\multiplicityfont{\Varid{\pi}}}}\Varid{b}}
\]
This way, one can give a unified treatment of both arrow types~\citep{linearity-and-pi-calculus} by
instantiating $\pi$ to the appropriate multiplicity:
linear arrows \ensuremath{\Varid{a}⊸\Varid{b}} are given by
\ensuremath{\Varid{a}\mathop{\to_{\multiplicityfont{\mathrm{1}}}}\Varid{b}} and unrestricted arrows \ensuremath{\Varid{a}\to \Varid{b}} are given by \ensuremath{\Varid{a}\mathop{\to_{\multiplicityfont{\omega}}}\Varid{b}}.

We stress that a multiplicity of \(1\) restricts \emph{how the variable can be used}. It does not
restrict \emph{which values can be substituted for it}.
In particular, a linear function cannot assume that it is given the
unique pointer to its argument.  For example, if \ensuremath{\Varid{f}\mathbin{::}\Varid{a}\mathbin{⊸}\Varid{b}}, then
the following is fine because the type of \ensuremath{\Varid{duplicate}} makes no guarantees about how it uses \ensuremath{\Varid{x}}:
\begin{hscode}\SaveRestoreHook
\column{B}{@{}>{\hspre}l<{\hspost}@{}}%
\column{E}{@{}>{\hspre}l<{\hspost}@{}}%
\>[B]{}\Varid{duplicate}\mathbin{::}(\Varid{a}\mathbin{⊸}\Varid{b})\to \Varid{a}\to (\Varid{b},\Varid{b}){}\<[E]%
\\
\>[B]{}\Varid{duplicate}\;\Varid{f}\;\Varid{x}\mathrel{=}(\Varid{f}\;\Varid{x},\Varid{f}\;\Varid{x}){}\<[E]%
\ColumnHook
\end{hscode}\resethooks
In particular, \ensuremath{\Varid{duplicate}} can pass two copies of \ensuremath{\Varid{x}} to two copies of \ensuremath{\Varid{f}}.

Pattern matching on a value of type \ensuremath{\Conid{Ur}\;\Varid{a}} yields a payload of multiplicity
$  \multiplicityfont{ \omega }  $, even when the scrutinee has multiplicity $  \multiplicityfont{ \ottsym{1} }  $.

\section{Working With Linear Constraints}
\label{sec:what-it-looks-like}

The use of type classes to handle \textit{ad-hoc} polymorphism
was one of the innovative features of Haskell's design~\citep{WalderB89,HudakHJW07}.
This allows for a constraint to appear in the signature of a type
that the compiler attempts to satisfy.
For instance, consider the Haskell function \ensuremath{\Varid{show}}:
\begin{hscode}\SaveRestoreHook
\column{B}{@{}>{\hspre}l<{\hspost}@{}}%
\column{E}{@{}>{\hspre}l<{\hspost}@{}}%
\>[B]{}\Varid{show}\mathbin{::}\constraintfont{\Conid{Show}\;\Varid{a}}\FatArrow \Varid{a}\to \Conid{String}{}\<[E]%
\ColumnHook
\end{hscode}\resethooks
In addition to the function arrow \ensuremath{\to }, common to all functional
programming languages, the type of this function features a constraint arrow \ensuremath{\FatArrow }.
Everything to the
left of a constraint arrow is called a \emph{constraint}, and will be
highlighted in \constraintfont{blue} throughout the paper. Here
\ensuremath{\constraintfont{\Conid{Show}\;\Varid{a}}} is a
class constraint.

Constraints are handled implicitly by
the typechecker. That is, if we want to \ensuremath{\Varid{show}} the integer \ensuremath{\Varid{n}\mathbin{::}\Conid{Int}} we would write \ensuremath{\Varid{show}\;\Varid{n}}, and the typechecker is responsible for proving that \ensuremath{\constraintfont{\Conid{Show}\;\Conid{Int}}} holds, without
intervention from the programmer.
Note that \ensuremath{\FatArrow } arrows are \emph{unrestricted}: they create a demand
from the compiler to supply the required constraint, which it achieves by
providing the appropriate instance.
Since \ensuremath{\Conid{Show}\;\Conid{Int}} is a valid instance, its use is entirely unrestricted.

Just as \ensuremath{\FatArrow } arrows allow a value to depend on \emph{unrestricted} implicit parameters,
linear constraints allow a value to depend on \emph{linear} implicit parameters.
In order to manage linearity implicitly, this paper introduces a
linear constraint arrow (\ensuremath{\Lolly }), much like Linear Haskell introduces a linear
function arrow (\ensuremath{⊸}). Constraints to the left of a linear constraint
arrow are \emph{linear constraints}. There are thus four different arrows that can be used:
\[
\begin{array}{c|cc}
      & \text{explicit argument} & \text{implicit argument} \\ \hline
\text{unrestricted} & \ensuremath{\to }   & \ensuremath{\FatArrow } \\
\text{linear}        & \ensuremath{⊸} & \ensuremath{\Lolly }
\end{array}
\]
The fact that linear constraints are implicit and inferred by the compiler is
key to avoiding the tedium of passing resource variables around.
This section proceeds by demonstrating
how linear constraints are consumed (\cref{sec:consuming-linear-constraints}),
how linear constraints are threaded through programs (\cref{sec:borrowing-linear-constraints}),
and how these constraints are produced (\cref{sec:producing-linear-constraints}).

\subsection{Consuming Linear Constraints}
\label{sec:consuming-linear-constraints}

\begin{figure}%
  \maybesmall
  \centering
  \begin{subfigure}{.3\linewidth}%
    \noindent%
\begin{hscode}\SaveRestoreHook
\column{B}{@{}>{\hspre}l<{\hspost}@{}}%
\column{6}{@{}>{\hspre}l<{\hspost}@{}}%
\column{E}{@{}>{\hspre}l<{\hspost}@{}}%
\>[B]{}\phantom{(\mathbf{type}\;\constraintfont{\Conid{RW}\;\Varid{n}}\mathrel{=}\constraintfont{(\Conid{Read}\;\Varid{n},\Conid{Write}\;\Varid{n})})}{}\<[E]%
\\[\blanklineskip]%
\>[B]{}\Varid{new}{}\<[6]%
\>[6]{}\mathbin{::}\Conid{Int}\to (\Conid{MArray}\;\Varid{a}⊸\Conid{Ur}\;\Varid{r})⊸\Conid{Ur}\;\Varid{r}{}\<[E]%
\\
\>[B]{}\Varid{write}\mathbin{::}\Conid{MArray}\;\Varid{a}⊸\Conid{Int}\to \Varid{a}\to \Conid{MArray}\;\Varid{a}{}\<[E]%
\\
\>[B]{}\Varid{read}\mathbin{::}\Conid{MArray}\;\Varid{a}⊸\Conid{Int}\to (\Conid{MArray}\;\Varid{a},\Conid{Ur}\;\Varid{a}){}\<[E]%
\\
\>[B]{}\Varid{free}\mathbin{::}\Conid{MArray}\;\Varid{a}⊸(){}\<[E]%
\ColumnHook
\end{hscode}\resethooks
\caption{Linear Types}
\label{fig:linear-interface}
  \end{subfigure}
  \hfill
  \begin{subfigure}{.55\linewidth}
\begin{hscode}\SaveRestoreHook
\column{B}{@{}>{\hspre}l<{\hspost}@{}}%
\column{6}{@{}>{\hspre}l<{\hspost}@{}}%
\column{E}{@{}>{\hspre}l<{\hspost}@{}}%
\>[B]{}\mathbf{type}\;\constraintfont{\Conid{RW}\;\Varid{n}}\mathrel{=}\constraintfont{(\Conid{Read}\;\Varid{n},\Conid{Write}\;\Varid{n})}{}\<[E]%
\\[\blanklineskip]%
\>[B]{}\Varid{new}{}\<[6]%
\>[6]{}\mathbin{::}\constraintfont{\constraintfont{\Conid{Linearly}}}\Lolly \Conid{Int}\to \exists\;\Varid{n}.\Conid{Ur}\;(\Conid{UArray}\;\Varid{a}\;\Varid{n})\RLolly\constraintfont{\Conid{RW}\;\Varid{n}}{}\<[E]%
\\
\>[B]{}\Varid{write}\mathbin{::}\constraintfont{\Conid{RW}\;\Varid{n}}\Lolly \Conid{UArray}\;\Varid{a}\;\Varid{n}\to \Conid{Int}\to \Varid{a}\to ()\RLolly\constraintfont{\Conid{RW}\;\Varid{n}}{}\<[E]%
\\
\>[B]{}\Varid{read}\mathbin{::}\constraintfont{\Conid{Read}\;\Varid{n}}\Lolly \Conid{UArray}\;\Varid{a}\;\Varid{n}\to \Conid{Int}\to \Conid{Ur}\;\Varid{a}\RLolly\constraintfont{\Conid{Read}\;\Varid{n}}{}\<[E]%
\\
\>[B]{}\Varid{free}\mathbin{::}\constraintfont{\Conid{RW}\;\Varid{n}}\Lolly \Conid{UArray}\;\Varid{a}\;\Varid{n}\to (){}\<[E]%
\ColumnHook
\end{hscode}\resethooks
\caption{Linear Constraints}
\label{fig:constraints-interface}
  \end{subfigure}
\caption{Interfaces for mutable arrays}
\end{figure}

One way to understand constraints is that they demand evidence to be
supplied before the value they constrain can be used.
From this perspective, linear constraints consume that evidence exactly
once.

In the context of mutuble array access, the capabilities of that
array --such as the ability to read or write from the array--
act as suitable evidence that operations may act on that array.
These capabilities exist at the type level and must refer to a specific
array. To achieve this, an array \ensuremath{\Varid{arr}} of values of type \ensuremath{\Varid{a}} is tagged with a type variable \ensuremath{\Varid{n}}:
\begin{hscode}\SaveRestoreHook
\column{B}{@{}>{\hspre}l<{\hspost}@{}}%
\column{3}{@{}>{\hspre}l<{\hspost}@{}}%
\column{E}{@{}>{\hspre}l<{\hspost}@{}}%
\>[3]{}\Varid{arr}\mathbin{::}\Conid{MArray}\;\Varid{a}\;\Varid{n}{}\<[E]%
\ColumnHook
\end{hscode}\resethooks
Then, the constraints \ensuremath{\constraintfont{\Conid{Read}\;\Varid{n}}} and \ensuremath{\constraintfont{\Conid{Write}\;\Varid{n}}} become evidence that the array can be
read from or written to. As a convenience we also have a synonym for their product:
\begin{hscode}\SaveRestoreHook
\column{B}{@{}>{\hspre}l<{\hspost}@{}}%
\column{E}{@{}>{\hspre}l<{\hspost}@{}}%
\>[B]{}\mathbf{type}\;\constraintfont{\Conid{RW}\;\Varid{n}}\mathrel{=}\constraintfont{(\Conid{Read}\;\Varid{n},\Conid{Write}\;\Varid{n})}{}\<[E]%
\ColumnHook
\end{hscode}\resethooks
That is, the constraint \ensuremath{\constraintfont{\Conid{RW}\;\Varid{n}}}
is provable if and only if the array tagged with \ensuremath{\Varid{n}} is readable and writable.
This constraint is linear: it must be consumed (that is, used as an assumption
in a function call) exactly once.

The \ensuremath{\Varid{free}} function, which frees the contents of a given array will consume the \ensuremath{\constraintfont{\Conid{RW}\;\Varid{n}}} constraint
entirely, making it no longer available:
\begin{hscode}\SaveRestoreHook
\column{B}{@{}>{\hspre}l<{\hspost}@{}}%
\column{E}{@{}>{\hspre}l<{\hspost}@{}}%
\>[B]{}\Varid{free}\mathbin{::}\constraintfont{\Conid{RW}\;\Varid{n}}\Lolly \Conid{UArray}\;\Varid{a}\;\Varid{n}\to (){}\<[E]%
\ColumnHook
\end{hscode}\resethooks
This function can only be called when \ensuremath{\constraintfont{\Conid{RW}\;\Varid{n}}} can be satisfied for the given array of type \ensuremath{\Conid{UArray}\;\Varid{a}\;\Varid{n}}.

There are a few things to notice:
\begin{itemize}
\item We have introduced a new type variable \ensuremath{\Varid{n}}, which is a type-level tag
  used to identify the array. In contrast, the version in
  \cref{fig:linear-interface} without linear constraints  has type
  \ensuremath{\Varid{free}\mathbin{::}\Conid{MArray}\;\Varid{a}⊸()}.
\item The run-time variable representing the array can now be used multiple
  times. Instead of restricting the use of this variable,
  the linear constraint \ensuremath{\constraintfont{\Conid{RW}\;\Varid{n}}} controls access to the
  array.
\item If we have a linear \ensuremath{\constraintfont{\Conid{RW}\;\Varid{n}}}
  available, then after \ensuremath{\Varid{free}} there will not be any \ensuremath{\constraintfont{\Conid{RW}\;\Varid{n}}}
  left to use, thus preventing the array from being used after freeing.
  This is precisely what we were trying to achieve.
\end{itemize}

Let us now see some examples programs that use \ensuremath{\Varid{free}}. We reject \ensuremath{\Varid{dithering}}
because it does not unconditionally consume \ensuremath{\constraintfont{\Conid{RW}\;\Varid{n}}}:
\begin{hscode}\SaveRestoreHook
\column{B}{@{}>{\hspre}l<{\hspost}@{}}%
\column{E}{@{}>{\hspre}l<{\hspost}@{}}%
\>[B]{}\Varid{dithering}\mathbin{::}\constraintfont{\Conid{RW}\;\Varid{n}}\Lolly \Conid{UArray}\;\Varid{a}\;\Varid{n}\to \Conid{Bool}\to (){}\<[E]%
\\
\>[B]{}\Varid{dithering}\;\Varid{arr}\;\Varid{x}\mathrel{=}\mathbf{if}\;\Varid{x}\;\mathbf{then}\;\Varid{free}\;\Varid{arr}\;\mathbf{else}\;(){}\<[E]%
\ColumnHook
\end{hscode}\resethooks
The branch where \ensuremath{\Varid{x}\equiv \Conid{True}} uses the resource \ensuremath{\constraintfont{\Conid{RW}\;\Varid{n}}},
whereas the other branch does not.

Linear constraints can interact with linear functions.
Consider the type of the linear version of \ensuremath{\Varid{const}}:
\begin{hscode}\SaveRestoreHook
\column{B}{@{}>{\hspre}l<{\hspost}@{}}%
\column{E}{@{}>{\hspre}l<{\hspost}@{}}%
\>[B]{}\Varid{const}\mathbin{::}\Varid{a}⊸\Varid{b}\to \Varid{a}{}\<[E]%
\ColumnHook
\end{hscode}\resethooks
This function uses its first argument linearly, and ignores the second. Thus,
the second arrow is unrestricted.
One way to improperly use the linear \ensuremath{\Varid{const}} is by neglecting a linear variable:
\begin{hscode}\SaveRestoreHook
\column{B}{@{}>{\hspre}l<{\hspost}@{}}%
\column{E}{@{}>{\hspre}l<{\hspost}@{}}%
\>[B]{}\Varid{neglecting}\mathbin{::}\constraintfont{\Conid{RW}\;\Varid{n}}\Lolly \Conid{UArray}\;\Varid{a}\;\Varid{n}\to (){}\<[E]%
\\
\>[B]{}\Varid{neglecting}\;\Varid{arr}\mathrel{=}\Varid{const}\;()\;(\Varid{free}\;\Varid{arr}){}\<[E]%
\ColumnHook
\end{hscode}\resethooks
The problem with \ensuremath{\Varid{neglecting}} is that, although \ensuremath{\Varid{free}} is mentioned in this program,
it is never consumed: \ensuremath{\Varid{const}} does not use its second argument.
The constraint \ensuremath{\constraintfont{\Conid{RW}\;\Varid{n}}} is not consumed exactly once, and
thus this program is rejected.
The rule is that a linear constraint can only be consumed (linearly)
in a linear context. For example,
\begin{hscode}\SaveRestoreHook
\column{B}{@{}>{\hspre}l<{\hspost}@{}}%
\column{E}{@{}>{\hspre}l<{\hspost}@{}}%
\>[B]{}\Varid{notNeglecting}\mathbin{::}\constraintfont{\Conid{RW}\;\Varid{n}}\Lolly \Conid{UArray}\;\Varid{a}\;\Varid{n}\to (){}\<[E]%
\\
\>[B]{}\Varid{notNeglecting}\;\Varid{arr}\mathrel{=}\Varid{const}\;(\Varid{free}\;\Varid{arr})\;(){}\<[E]%
\ColumnHook
\end{hscode}\resethooks
is accepted, because the \ensuremath{\constraintfont{\Conid{RW}\;\Varid{n}}} constraint is passed on to \ensuremath{\Varid{free}}
which itself appears as an argument to a linear function (whose result is
itself consumed linearly).

\label{sec:overusing}
Finally, the following program is rejected because it uses \ensuremath{\Conid{RW}\;\Varid{n}} twice:
\begin{hscode}\SaveRestoreHook
\column{B}{@{}>{\hspre}l<{\hspost}@{}}%
\column{E}{@{}>{\hspre}l<{\hspost}@{}}%
\>[B]{}\Varid{indulging}\mathbin{::}\constraintfont{\Conid{RW}\;\Varid{n}}\Lolly \Conid{UArray}\;\Varid{a}\;\Varid{n}\to ((),()){}\<[E]%
\\
\>[B]{}\Varid{indulging}\;\Varid{arr}\mathrel{=}(\Varid{free}\;\Varid{arr},\Varid{free}\;\Varid{arr}){}\<[E]%
\ColumnHook
\end{hscode}\resethooks
Although freeing a resource multiple times is sometimes only
considered bad form, our type system rules this out completely.

\subsection{Threading Linear Constraints}
\label{sec:borrowing-linear-constraints}

Working with mutable arrays requires operations such as \ensuremath{\Varid{read}} and \ensuremath{\Varid{write}} to coordinate
on their access. Using ordinary linear types this coordination is achieved by linearly
consuming the input array of type \ensuremath{\Conid{MArray}\;\Varid{a}}, and making it available as output
for the next operation.
This can be readily seen in the types, where \ensuremath{\Conid{MArray}\;\Varid{a}} appears twice -- once as input
into a linear function and then again in the output. Consider the type of \ensuremath{\Varid{write}}:
\begin{hscode}\SaveRestoreHook
\column{B}{@{}>{\hspre}l<{\hspost}@{}}%
\column{E}{@{}>{\hspre}l<{\hspost}@{}}%
\>[B]{}\Varid{write}\mathbin{::}\Conid{MArray}\;\Varid{a}⊸\Conid{Int}\to \Varid{a}\to \Conid{MArray}\;\Varid{a}{}\<[E]%
\ColumnHook
\end{hscode}\resethooks
%
This interface is at the heart of the bureaucracy that we wish to avoid
in programs like the first \ensuremath{\Varid{read2AndDicard}} implementation (\cref{sec:introduction}),
since the resulting array is an explicit value that must be threaded
forward through to the next operation. The frustration is that this is essentially
a fresh variable for the array which was passed in as input.

Using linear constraints allows the appropriate capability to be passed around implicitly,
so that the array variable can be reused and no fresh variables need to be emitted
for threading.
In terms of capabilities, the operation for writing to an array requires
exclusive access to both read and write capabilities so that any other
reads or writes cannot occur concurrently.
This is demonstrated in the type of the \ensuremath{\Varid{write}} function that our library
enables, where \ensuremath{\constraintfont{\Conid{RW}\;\Varid{n}}} is the capability to access an array, which
is to say that both read and write capabilities are needed:
\begin{hscode}\SaveRestoreHook
\column{B}{@{}>{\hspre}l<{\hspost}@{}}%
\column{E}{@{}>{\hspre}l<{\hspost}@{}}%
\>[B]{}\Varid{write}\mathbin{::}\constraintfont{\Conid{RW}\;\Varid{n}}\Lolly \Conid{UArray}\;\Varid{a}\;\Varid{n}\to \Conid{Int}\to \Varid{a}\to ()\RLolly\constraintfont{\Conid{RW}\;\Varid{n}}{}\<[E]%
\ColumnHook
\end{hscode}\resethooks
Here there are two occurrences of the \ensuremath{\constraintfont{\Conid{RW}\;\Varid{n}}} constraint:
The first occurrence in the usual constraint position on the left, \ensuremath{\constraintfont{\Conid{RW}\;\Varid{n}}\Lolly \mathbin{...}},
indicates that this function \emph{consumes} a \ensuremath{\constraintfont{\Conid{RW}\;\Varid{n}}} constraint.
The second occurrence at the end of the signature, \ensuremath{\mathbin{...}\RLolly\constraintfont{\Conid{RW}\;\Varid{n}}},
indicates that this function \emph{produces} a \ensuremath{\constraintfont{\Conid{RW}\;\Varid{n}}} constraint.\footnote{In general \ensuremath{\Conid{A}\RLolly\constraintfont{\Conid{C}}} pairs a linear dictionary of \ensuremath{\constraintfont{\Conid{C}}} with \ensuremath{\Conid{A}}, see \cref{sec:typing-rules} for the formal treatment.}

We must also ensure that \ensuremath{\Varid{read}} can both promise to operate only on a readable array
and that the array persists in its readable state:
\begin{hscode}\SaveRestoreHook
\column{B}{@{}>{\hspre}l<{\hspost}@{}}%
\column{E}{@{}>{\hspre}l<{\hspost}@{}}%
\>[B]{}\Varid{read}\mathbin{::}\constraintfont{\Conid{Read}\;\Varid{n}}\Lolly \Conid{UArray}\;\Varid{a}\;\Varid{n}\to \Conid{Int}\to \Conid{Ur}\;\Varid{a}\RLolly\constraintfont{\Conid{Read}\;\Varid{n}}{}\<[E]%
\ColumnHook
\end{hscode}\resethooks
That is, \ensuremath{\Varid{read}} must both consume
a linear constraint \ensuremath{\constraintfont{\Conid{Read}\;\Varid{n}}} and also produce a fresh linear constraint
\ensuremath{\constraintfont{\Conid{Read}\;\Varid{n}}}.

As a consequence, the array \ensuremath{\Varid{arr}} can be safely reused as a parameter
between different invocations of \ensuremath{\Varid{read}\;\Varid{arr}} and \ensuremath{\Varid{write}\;\Varid{arr}}, while it is the
capabilities that are being implicitly threaded around.
When the work with the array is finished, the capabilities can be consumed with \ensuremath{\Varid{free}\;\Varid{arr}},
again using the same variable name.

\subsection{Producing Linear Contexts}
\label{sec:linear-contexts}
\label{sec:Unique-constraint}
\label{sec:producing-linear-constraints}

The above deals with freeing an array and ensuring that it cannot be used afterwards.
However, we still need to explain how a constraint \ensuremath{\constraintfont{\Conid{RW}\;\Varid{n}}} can come
into scope. First though, it will be beneficial to understand the
mechanics of creating new arrays using linear types without constraints.

The creation of an \ensuremath{\Conid{MArray}\;\Varid{a}} must be carefully protected,
because a naive constructor function can be abused. For instance, consider
the following function that creates a new array:
\begin{hscode}\SaveRestoreHook
\column{B}{@{}>{\hspre}l<{\hspost}@{}}%
\column{E}{@{}>{\hspre}l<{\hspost}@{}}%
\>[B]{}\Varid{newNaive}\mathbin{::}\Conid{Int}\to \Conid{MArray}\;\Varid{a}{}\<[E]%
\ColumnHook
\end{hscode}\resethooks
Nothing prevents the result of a call
to \ensuremath{\Varid{naive}} from being duplicated with \ensuremath{\Varid{diagonal}}:
\begin{hscode}\SaveRestoreHook
\column{B}{@{}>{\hspre}l<{\hspost}@{}}%
\column{E}{@{}>{\hspre}l<{\hspost}@{}}%
\>[B]{}\Varid{diagonal}\mathbin{::}\Varid{a}\to (\Varid{a},\Varid{a}){}\<[E]%
\\
\>[B]{}\Varid{diagonal}\;\Varid{x}\mathrel{=}(\Varid{x},\Varid{x}){}\<[E]%
\ColumnHook
\end{hscode}\resethooks
Now the first and second components of \ensuremath{\Varid{diagonal}\;(\Varid{newNaive}\;\mathrm{10})} point to
the same array and there is nothing preventing \ensuremath{\Varid{free}} from
being evaluated on one component before the other is written to.
Similarly, the array produced can be lifted an unrestricted occurrence
with a simple use of \ensuremath{\Conid{Ur}\;(\Varid{newNaive}\;\Varid{n})}.
With the above interface it possible to perform repeated operations on an array,
completely circumventing the linear safety mechanism.

The original interface for linear types
makes use of the \ensuremath{\Varid{new}} function to create new arrays,
following the pattern of using a linear continuation
advocated by \citet[Fig.~2]{LinearHaskell},
(which is similar style to the use of the \ensuremath{\Conid{ST}} monad~\cite{st-monad}):
\begin{hscode}\SaveRestoreHook
\column{B}{@{}>{\hspre}l<{\hspost}@{}}%
\column{6}{@{}>{\hspre}l<{\hspost}@{}}%
\column{E}{@{}>{\hspre}l<{\hspost}@{}}%
\>[B]{}\Varid{new}{}\<[6]%
\>[6]{}\mathbin{::}\Conid{Int}\to (\Conid{MArray}\;\Varid{a}⊸\Conid{Ur}\;\Varid{r})⊸\Conid{Ur}\;\Varid{r}{}\<[E]%
\ColumnHook
\end{hscode}\resethooks
The result of \ensuremath{\Varid{new}\;\Varid{k}\;(\lambda \Varid{arr}\to \Varid{p})} is essentially that of the program \ensuremath{\Varid{p}},
where it has been able to make use of the array \ensuremath{\Varid{arr}} of size \ensuremath{\Varid{k}},
with the guarantee that \ensuremath{\Varid{arr}} is treated linearly.
In particular, only one reference to \ensuremath{\Varid{arr}} is produced by \ensuremath{\Varid{new}},
and this must be consumed linearly, thus avoiding the problems that arise from using the naive constructor.

Similarly, it is possible to create a \ensuremath{\Varid{new}} function
for working with \ensuremath{\Conid{UArray}\;\Varid{a}\;\Varid{n}} types using the same trick:
\begin{hscode}\SaveRestoreHook
\column{B}{@{}>{\hspre}l<{\hspost}@{}}%
\column{6}{@{}>{\hspre}l<{\hspost}@{}}%
\column{51}{@{}>{\hspre}l<{\hspost}@{}}%
\column{E}{@{}>{\hspre}l<{\hspost}@{}}%
\>[B]{}\Varid{new}{}\<[6]%
\>[6]{}\mathbin{::}\Conid{Int}\to (\forall\;\Varid{n}.\constraintfont{\Conid{RW}\;\Varid{n}}\Lolly {}\<[51]%
\>[51]{}\Conid{UArray}\;\Varid{a}\;\Varid{n}\to \Conid{Ur}\;\Varid{r})⊸\Conid{Ur}\;\Varid{r}{}\<[E]%
\ColumnHook
\end{hscode}\resethooks
Now, when \ensuremath{\Varid{new}\;\Varid{k}\;(\lambda \Varid{arr}\to \Varid{p})} is called, the \ensuremath{\constraintfont{\Conid{RW}\;\Varid{n}}} constraint is made
available to the computation \ensuremath{\Varid{p}} that is provided unrestricted access to
the new array \ensuremath{\Varid{arr}} of size \ensuremath{\Varid{k}}.

\subsection{Linear Contexts with Linearly}

\begin{figure}
\begin{hscode}\SaveRestoreHook
\column{B}{@{}>{\hspre}l<{\hspost}@{}}%
\column{E}{@{}>{\hspre}l<{\hspost}@{}}%
\>[B]{}\mathbf{type}\;\constraintfont{\Conid{Linearly}}\mathbin{::}\Conid{Constraint}{}\<[E]%
\\[\blanklineskip]%
\>[B]{}\Varid{dup}\mathbin{::}\constraintfont{\Conid{Linearly}}\Lolly ()\RLolly(\constraintfont{\Conid{Linearly}},\constraintfont{\Conid{Linearly}}){}\<[E]%
\\
\>[B]{}\Varid{dis}\mathbin{::}\constraintfont{\Conid{Linearly}}\Lolly (){}\<[E]%
\\[\blanklineskip]%
\>[B]{}\Varid{linearly}\mathbin{::}(\constraintfont{\Conid{Linearly}}\Lolly \Conid{Ur}\;\Varid{r})⊸\Conid{Ur}\;\Varid{r}{}\<[E]%
\ColumnHook
\end{hscode}\resethooks
\caption{Interface for the built-in \ensuremath{\constraintfont{\Conid{Linearly}}} constraint}
\label{fig:unique-interface}
\end{figure}
Working within a linear context provided using a continuation-based interface
restricts the scope within which the bound variable can occur.
Thus, when multiple resources are required, instances of \ensuremath{\Varid{new}} must
be nested, which can result in unwieldy code.

Linear constraints provide an alternative solution to this problem
by working with a special built-in \ensuremath{\constraintfont{\Conid{Linearly}}} constraint.
Using this special constraint, it is possible to create a new
array variable for the explicitly threaded \ensuremath{\Conid{MArray}} type that can be safely used:
\begin{hscode}\SaveRestoreHook
\column{B}{@{}>{\hspre}l<{\hspost}@{}}%
\column{E}{@{}>{\hspre}l<{\hspost}@{}}%
\>[B]{}\Varid{new}\mathbin{::}\constraintfont{\constraintfont{\Conid{Linearly}}}\Lolly \Conid{Int}\to \Conid{MArray}\;\Varid{a}{}\<[E]%
\ColumnHook
\end{hscode}\resethooks
This works in a manner that mirrors \ensuremath{\Varid{newNaive}}, but nevertheless
ensures that no misuse is possible.
Suppose that we have assumed the \ensuremath{\constraintfont{\constraintfont{\Conid{Linearly}}}} constraint linearly; that is,
we must use the \ensuremath{\constraintfont{\constraintfont{\Conid{Linearly}}}} assumption exactly once.
Now, \ensuremath{\Varid{diagonal}\;(\Varid{new}\;\mathrm{10})} is rejected: either we infer
the new array to have multiplicity~$  \multiplicityfont{ \omega }  $,
in which case its definition uses \ensuremath{\constraintfont{\constraintfont{\Conid{Linearly}}}} $  \multiplicityfont{ \omega }  $ times; or
we infer it to have multiplicity~$  \multiplicityfont{ \ottsym{1} }  $, in which case its use (twice) violates
the linearity restriction. Likewise, the use of \ensuremath{\Conid{Ur}\;(\Varid{new}\;\mathrm{10})} requires using
the \ensuremath{\constraintfont{\constraintfont{\Conid{Linearly}}}} assumption $  \multiplicityfont{ \omega }  $ times and is thus rejected.

A \ensuremath{\Varid{new}} function that works for the linear constraints
interface for arrays using \ensuremath{\Conid{UArray}} can also be defined:
\begin{hscode}\SaveRestoreHook
\column{B}{@{}>{\hspre}l<{\hspost}@{}}%
\column{6}{@{}>{\hspre}l<{\hspost}@{}}%
\column{E}{@{}>{\hspre}l<{\hspost}@{}}%
\>[B]{}\Varid{new}{}\<[6]%
\>[6]{}\mathbin{::}\constraintfont{\constraintfont{\Conid{Linearly}}}\Lolly \Conid{Int}\to \exists\;\Varid{n}.\Conid{Ur}\;(\Conid{UArray}\;\Varid{a}\;\Varid{n})\RLolly\constraintfont{\Conid{RW}\;\Varid{n}}{}\<[E]%
\ColumnHook
\end{hscode}\resethooks
This time, the new array comes equipped with the \ensuremath{\constraintfont{\Conid{RW}\;\Varid{n}}} capability,
ready to be consumed by future operations.

This behaviour suggests that \ensuremath{\Conid{Linearly}} gives us precisely the discipline that we
want: a way to assume a constraint exactly once, thereby guaranteeing that any
resource created under that assumption is itself used linearly.
However, the picture is not yet complete.
Assuming \ensuremath{\constraintfont{\constraintfont{\Conid{Linearly}}}} exactly once is not, by itself, enough:
real programs allocate different numbers of arrays, may allocate none at all,
and must ensure that no linear resource escapes its intended scope.
These requirements are fulfilled by the interface to \ensuremath{\constraintfont{\constraintfont{\Conid{Linearly}}}} constraints (\cref{fig:unique-interface}).

The remainder of this section explains why each of these operations is required,
and how they allow \ensuremath{\constraintfont{\constraintfont{\Conid{Linearly}}}} to serve as a usable linear capability.
We discuss, in turn, how to \emph{duplicate} a linear constraint, how to
\emph{discard} it, and finally how to \emph{introduce} it.

\paragraph*{Duplicating \ensuremath{\constraintfont{\constraintfont{\Conid{Linearly}}}}}
Programs that require multiple arrays each with different unique
pointers need to interact well with \ensuremath{\constraintfont{\constraintfont{\Conid{Linearly}}}}.
If \ensuremath{\constraintfont{\constraintfont{\Conid{Linearly}}}} is assumed
linearly, then \ensuremath{\mathbf{let}\;\Varid{arr}_{1}\mathrel{=}\Varid{new}\;\mathrm{5};\Varid{arr}_{2}\mathrel{=}\Varid{new}\;\mathrm{6}} will fail, because it uses our
\ensuremath{\constraintfont{\constraintfont{\Conid{Linearly}}}} assumption twice. We thus stipulate that \ensuremath{\constraintfont{\constraintfont{\Conid{Linearly}}}}
must itself be duplicable: from one assumption of \ensuremath{\constraintfont{\constraintfont{\Conid{Linearly}}}}, we
must be able to satisfy any arbitrary fixed number of demands on that constraint.
Thus, the interface to \ensuremath{\constraintfont{\constraintfont{\Conid{Linearly}}}} provides \ensuremath{\Varid{dup}}, which
allows this duplication:
\begin{hscode}\SaveRestoreHook
\column{B}{@{}>{\hspre}l<{\hspost}@{}}%
\column{E}{@{}>{\hspre}l<{\hspost}@{}}%
\>[B]{}\Varid{dup}\mathbin{::}\constraintfont{\Conid{Linearly}}\Lolly ()\RLolly(\constraintfont{\Conid{Linearly}},\constraintfont{\Conid{Linearly}}){}\<[E]%
\ColumnHook
\end{hscode}\resethooks
By ``arbitrary fixed number'', we mean to say that we can duplicate \ensuremath{\constraintfont{\constraintfont{\Conid{Linearly}}}}
a finite number of times, but we may not use an assumption of \ensuremath{\constraintfont{\constraintfont{\Conid{Linearly}}}} with
multiplicity 1 to satisfy \ensuremath{\constraintfont{\constraintfont{\Conid{Linearly}}}} at multiplicity $  \multiplicityfont{ \omega }  $.

\paragraph*{Discarding \ensuremath{\constraintfont{\constraintfont{\Conid{Linearly}}}}} Similarly to allowing
  duplication, we must allow discarding, in case a function allocates
  no arrays at all. This is made possible with a function \ensuremath{\Varid{dis}}:
\begin{hscode}\SaveRestoreHook
\column{B}{@{}>{\hspre}l<{\hspost}@{}}%
\column{E}{@{}>{\hspre}l<{\hspost}@{}}%
\>[B]{}\Varid{dis}\mathbin{::}\constraintfont{\Conid{Linearly}}\Lolly (){}\<[E]%
\ColumnHook
\end{hscode}\resethooks
  Accordingly, we allow a linear assumption of \ensuremath{\constraintfont{\constraintfont{\Conid{Linearly}}}} to
  be accepted even if the constraint is never used.  Our type-system
  not only supports \ensuremath{\Varid{dup}} and \ensuremath{\Varid{dis}}, but will also infer when they
  need to be used.  In sum, a single \ensuremath{\constraintfont{\constraintfont{\Conid{Linearly}}}} is
  sufficient to indicate that the program can call any number of
  functions that require this linear constraint.

\paragraph*{Introducing \ensuremath{\constraintfont{\constraintfont{\Conid{Linearly}}}}} For this approach to work,
we must have an assumption of \ensuremath{\constraintfont{\constraintfont{\Conid{Linearly}}}} of multiplicity 1. We can
achieve this via the following primitive:
\begin{hscode}\SaveRestoreHook
\column{B}{@{}>{\hspre}l<{\hspost}@{}}%
\column{E}{@{}>{\hspre}l<{\hspost}@{}}%
\>[B]{}\Varid{linearly}\mathbin{::}(\constraintfont{\Conid{Linearly}}\Lolly \Conid{Ur}\;\Varid{r})⊸\Conid{Ur}\;\Varid{r}{}\<[E]%
\ColumnHook
\end{hscode}\resethooks
The argument to \ensuremath{\Varid{linearly}} will be a continuation that assumes \ensuremath{\constraintfont{\constraintfont{\Conid{Linearly}}}}
with multiplicity 1. Because \ensuremath{\Varid{linearly}} returns an unrestricted value, no restricted
values from the continuation can escape the scope of the \ensuremath{\constraintfont{\constraintfont{\Conid{Linearly}}}}
assumption. Thus, the continuation has exactly the condition we need: a
linear assumption of \ensuremath{\constraintfont{\constraintfont{\Conid{Linearly}}}}.

\vspace{\baselineskip} 

With just these simple ingredients---a duplicable, discardable constraint
that can be assumed linearly---we can write \textsc{api}s that require
uniqueness without heavy use of continuations.
The pattern of using a continuation in \ensuremath{\Varid{linearly}} mirrors the use
of that technique by \citet[Fig.~2]{LinearHaskell}. But
\ensuremath{\Varid{linearly}} is, now, the only place where we need a continuation: once
we have our linear \ensuremath{\constraintfont{\constraintfont{\Conid{Linearly}}}} assumption, we can use it to produce
new values that must be unique. We compare the two approaches further in
\cref{sec:linearly-vs-sticky}.

\subsection{Chaining with the ``do'' notation}
\label{sec:packing-unpacking-do}

The last ingredient that we need is the ability to string
linear-constraint-threading functions together. For this we will rely on a ``do''
notation. We want to write functions like:

\begin{hscode}\SaveRestoreHook
\column{B}{@{}>{\hspre}l<{\hspost}@{}}%
\column{3}{@{}>{\hspre}l<{\hspost}@{}}%
\column{E}{@{}>{\hspre}l<{\hspost}@{}}%
\>[B]{}\Conid{Linearly}.\mathbf{do}{}\<[E]%
\\
\>[B]{}\hsindent{3}{}\<[3]%
\>[3]{}\Varid{arr}\leftarrow \Varid{new}\;\mathrm{57}{}\<[E]%
\\
\>[B]{}\hsindent{3}{}\<[3]%
\>[3]{}\Varid{write}\;\mathrm{42}\;\text{\ttfamily \char34 answer\char34}{}\<[E]%
\\
\>[B]{}\hsindent{3}{}\<[3]%
\>[3]{}\Conid{Ur}\;\Varid{x}\leftarrow \Varid{read}\;\mathrm{0}{}\<[E]%
\\
\>[B]{}\hsindent{3}{}\<[3]%
\>[3]{}\Varid{\Conid{Linearly}.return}\;\Varid{x}{}\<[E]%
\ColumnHook
\end{hscode}\resethooks

The simplest part of this expression is the \ensuremath{\Varid{\Conid{Linearly}.return}} function in the
final line. Its role is merely to package a value with the necessary linear
constraints and existential quantifiers. It has type

\begin{hscode}\SaveRestoreHook
\column{B}{@{}>{\hspre}l<{\hspost}@{}}%
\column{E}{@{}>{\hspre}l<{\hspost}@{}}%
\>[B]{}\Varid{\Conid{Linearly}.return}\mathbin{::}\constraintfont{\Conid{Q}}\Lolly \Varid{t}\to \exists\;\Varid{a}_{1}\mathbin{...}\Varid{a}_{\Varid{n}}.\Varid{t}\RLolly\constraintfont{\Conid{Q}}{}\<[E]%
\ColumnHook
\end{hscode}\resethooks

This isn't a real Haskell type, but it's very convenient to pretend that it is.
Admitting such types, questions about syntax notwithstanding, is a mild
extension of Haskell. In real Haskell you would have to rely on a family of
\textsc{gadt}s instead (one for each \ensuremath{\Varid{t}} and \ensuremath{\constraintfont{\Conid{Q}}}). Though we discuss a more ambitious extension in
\cref{sec:implicit-existentials}.

The rest of the ``do'' notation is understood by desugaring to more atomic
function. Specifically, \textsc{ghc}'s ``qualified do'' feature will translate
our example to the following:

\begin{hscode}\SaveRestoreHook
\column{B}{@{}>{\hspre}l<{\hspost}@{}}%
\column{23}{@{}>{\hspre}l<{\hspost}@{}}%
\column{E}{@{}>{\hspre}l<{\hspost}@{}}%
\>[B]{}\Varid{new}\;\mathrm{57}{}\<[23]%
\>[23]{}\Conid{Linearly}{.\!}\bind \lambda \Varid{arr}\to {}\<[E]%
\\
\>[B]{}\Varid{write}\;\mathrm{42}\;\text{\ttfamily \char34 answer\char34}\,{}\<[23]%
\>[23]{}\Conid{Linearly}{.\!}\sequ {}\<[E]%
\\
\>[B]{}\Varid{read}\;\mathrm{0}{}\<[23]%
\>[23]{}\Conid{Linearly}{.\!}\bind \lambda !(\Conid{Ur}\;\Varid{x})\to {}\<[E]%
\\
\>[B]{}\Varid{\Conid{Linearly}.return}\;\Varid{x}{}\<[E]%
\ColumnHook
\end{hscode}\resethooks

More generally, this follows the standard translation:

\begin{hscode}\SaveRestoreHook
\column{B}{@{}>{\hspre}l<{\hspost}@{}}%
\column{32}{@{}>{\hspre}c<{\hspost}@{}}%
\column{32E}{@{}l@{}}%
\column{36}{@{}>{\hspre}l<{\hspost}@{}}%
\column{E}{@{}>{\hspre}l<{\hspost}@{}}%
\>[B]{}\Conid{Linearly}.\mathbf{do}\;\{\mskip1.5mu \Varid{x}\leftarrow \Varid{u};\Varid{stmts}\mskip1.5mu\}{}\<[32]%
\>[32]{}\ensuremath{\leadsto}{}\<[32E]%
\>[36]{}\Varid{u}\,\,\Conid{Linearly}{.\!}\bind \lambda \Varid{x}\to \Conid{Linearly}.\mathbf{do}\;\{\mskip1.5mu \Varid{stmts}\mskip1.5mu\}{}\<[E]%
\\
\>[B]{}\Conid{Linearly}.\mathbf{do}\;\{\mskip1.5mu \Varid{pat}\leftarrow \Varid{u};\Varid{stmts}\mskip1.5mu\}{}\<[32]%
\>[32]{}\ensuremath{\leadsto}{}\<[32E]%
\>[36]{}\Varid{u}\,\,\Conid{Linearly}{.\!}\bind \lambda !\Varid{pat}\to \Conid{Linearly}.\mathbf{do}\;\{\mskip1.5mu \Varid{stmts}\mskip1.5mu\}{}\<[E]%
\\
\>[B]{}\Conid{Linearly}.\mathbf{do}\;\{\mskip1.5mu \Varid{u};\Varid{stmts}\mskip1.5mu\}{}\<[32]%
\>[32]{}\ensuremath{\leadsto}{}\<[32E]%
\>[36]{}\Varid{u}\,\,\Conid{Linearly}{.\!}\sequ \Conid{Linearly}.\mathbf{do}\;\{\mskip1.5mu \Varid{stmts}\mskip1.5mu\}{}\<[E]%
\ColumnHook
\end{hscode}\resethooks
Where \ensuremath{\Varid{pat}} is a non-variable pattern.

The role of \ensuremath{\Conid{Linearly}{.\!}\bind } and \ensuremath{\Conid{Linearly}{.\!}\sequ } is to pass both the
returned value and returned constraints to their continuation. Thus,
the \ensuremath{\Conid{Linearly}.\mathbf{do}} notation does not strictly sequence commands in
textual order.  It only sequences commands when the one consumes the
output constraint of another one.

\begin{hscode}\SaveRestoreHook
\column{B}{@{}>{\hspre}l<{\hspost}@{}}%
\column{17}{@{}>{\hspre}l<{\hspost}@{}}%
\column{45}{@{}>{\hspre}l<{\hspost}@{}}%
\column{65}{@{}>{\hspre}l<{\hspost}@{}}%
\column{114}{@{}>{\hspre}l<{\hspost}@{}}%
\column{E}{@{}>{\hspre}l<{\hspost}@{}}%
\>[B]{}(\Conid{Linearly}{.}\bind ){}\<[17]%
\>[17]{}\mathbin{::}(\exists\;\Varid{a}_{1}\mathbin{...}\Varid{a}_{\Varid{n}}.\,\Varid{t}{}\<[45]%
\>[45]{}\RLolly\constraintfont{\Conid{Q}}){}\<[65]%
\>[65]{}⊸(\forall\;\Varid{a}_{1}\mathbin{...}\Varid{a}_{\Varid{n}}.\,\constraintfont{\Conid{Q}}\Lolly \Varid{t}\to {}\<[114]%
\>[114]{}\Varid{s})⊸\Varid{s}{}\<[E]%
\\
\>[B]{}(\Conid{Linearly}{.}\sequ ){}\<[17]%
\>[17]{}\mathbin{::}(\exists\;\Varid{a}_{1}\mathbin{...}\Varid{a}_{\Varid{n}}.\,(){}\<[45]%
\>[45]{}\RLolly\constraintfont{\Conid{Q}}){}\<[65]%
\>[65]{}⊸(\forall\;\Varid{a}_{1}\mathbin{...}\Varid{a}_{\Varid{n}}.\,\constraintfont{\Conid{Q}}\Lolly {}\<[114]%
\>[114]{}\Varid{s})⊸\Varid{s}{}\<[E]%
\ColumnHook
\end{hscode}\resethooks

For the time being, we shall be considering these functions as primitives which
package and unpackage values with constraints and possibly existential
quantifiers. We will give them a more explicit definition in
\cref{sec:typing-rules,sec:desugaring}.

The \ensuremath{\Conid{Linearly}.\mathbf{do}} notation does not always behave exactly like the ordinary ``do''
notation for monads. In particular we will see idioms like the following:
\begin{hscode}\SaveRestoreHook
\column{B}{@{}>{\hspre}l<{\hspost}@{}}%
\column{3}{@{}>{\hspre}l<{\hspost}@{}}%
\column{E}{@{}>{\hspre}l<{\hspost}@{}}%
\>[B]{}\Conid{Linearly}.\mathbf{do}{}\<[E]%
\\
\>[B]{}\hsindent{3}{}\<[3]%
\>[3]{}\mathbin{...}{}\<[E]%
\\
\>[B]{}\hsindent{3}{}\<[3]%
\>[3]{}\Varid{foo}{}\<[E]%
\\
\>[B]{}\hsindent{3}{}\<[3]%
\>[3]{}\Varid{\Conid{Linearly}.return}\;(){}\<[E]%
\ColumnHook
\end{hscode}\resethooks
In the ``do'' notation for monads, the last line would be superfluous. But here
it lets the type system rejig the constraints. The call to \ensuremath{\Varid{foo}} will consume
\emph{some} of the linear constraints in scope, and return some new constraints.
The call to \ensuremath{\Varid{\Conid{Linearly}.return}} will package together the remaining constraints
that \ensuremath{\Varid{foo}} left untouched, and the new constraints returned by \ensuremath{\Varid{foo}}.

We can also use familiar combinators adapted to the \ensuremath{\Conid{Linearly}.\mathbf{do}}
notation. For instance we will be using:

\begin{hscode}\SaveRestoreHook
\column{B}{@{}>{\hspre}l<{\hspost}@{}}%
\column{13}{@{}>{\hspre}l<{\hspost}@{}}%
\column{16}{@{}>{\hspre}l<{\hspost}@{}}%
\column{E}{@{}>{\hspre}l<{\hspost}@{}}%
\>[B]{}\Varid{when}\mathbin{::}\constraintfont{\Conid{Q}}\Lolly \Conid{Bool}\to (\constraintfont{\Conid{Q}}\Lolly ()\RLolly\constraintfont{\Conid{Q}})\to ()\RLolly\constraintfont{\Conid{Q}}{}\<[E]%
\\
\>[B]{}\Varid{when}\;\Conid{True}\;{}\<[13]%
\>[13]{}\Varid{a}{}\<[16]%
\>[16]{}\mathrel{=}\Varid{a}{}\<[E]%
\\
\>[B]{}\Varid{when}\;\Conid{False}\;{}\<[13]%
\>[13]{}\anonymous {}\<[16]%
\>[16]{}\mathrel{=}\Varid{\Conid{Linearly}.return}\;(){}\<[E]%
\ColumnHook
\end{hscode}\resethooks
Other combinators follow a similar pattern.

\section{Applications}
\label{sec:memory-ownership}

There are many different applications of linear constraints.
In this section, we consider
borrowing slices~(\cref{sec:borrowing-slices}),
blockwise matrices~(\cref{sec:valiant}),
nested arrays~(\cref{sec:nested-arrays}) and their implementation~(\cref{sec:narray-impl}),
freezing nested structures~(\cref{sec:o1-freeze}),
splay trees~(\cref{sec:splay-trees}),
non-lexical lifetimes~(\cref{sec:nl-lifetimes}),
self-referential borrows~(\cref{sec:self-borrow}), and
typestates in monadic computations~(\cref{sec:monad-capability}).

\subsection{Borrowing slices}
\label{sec:borrowing-slices}

Managing mutuable arrays is an important use case for linearly qualified types,
since many algorithms are split into subalgorithms that work on a particular
given subarray, or \emph{slice}, before returning control to the main algorithm.

This section first covers simple slices where work only needs to
be carried out on a restricted part of an array~(\cref{sec:simple-slices}).
Then the more general case is considered, where independent parts of
an array should be worked on independently~(\cref{sec:slicing-two}).


\subsubsection{Simple slices}
\label{sec:simple-slices}

A first intuition may be to add a function that looks like this:
\begin{hscode}\SaveRestoreHook
\column{B}{@{}>{\hspre}l<{\hspost}@{}}%
\column{10}{@{}>{\hspre}l<{\hspost}@{}}%
\column{E}{@{}>{\hspre}l<{\hspost}@{}}%
\>[B]{}\Varid{constrict}\mathbin{::}{}\<[10]%
\>[10]{}\constraintfont{\Conid{RW}\;\Varid{n}}\Lolly \Conid{UArray}\;\Varid{a}\;\Varid{n}\to \Conid{Int}\to \Conid{Int}\to {}\<[E]%
\\
\>[10]{}\exists\;\Varid{p}.\Conid{Ur}\;(\Conid{UArray}\;\Varid{a}\;\Varid{p})\RLolly\constraintfont{\Conid{RW}\;\Varid{p}}{}\<[E]%
\ColumnHook
\end{hscode}\resethooks
Given an array \ensuremath{\Varid{arr}} together with and index \ensuremath{\Varid{i}} and a length \ensuremath{\Varid{m}},
the result of \ensuremath{\Varid{constrict}\;\Varid{arr}\;\Varid{i}\;\Varid{m}} is the corresponding slice of \ensuremath{\Varid{m}}
elements of the array starting at index \ensuremath{\Varid{i}}.
Referential transparency is preserved because
the slice is returned with a fresh type-level name \ensuremath{\Varid{p}},
and all the capabilities on the original array with type-level name \ensuremath{\Varid{n}}
are consumed, so that the slice on the original array cannot be written to concurrently.

There is one problem, though: this is a one way transformation since
\ensuremath{\Varid{n}} is not recovered in the output constraint. It is rare that
one wants to abandon the original array altogether when taking a slice: the goal
is usually to focus on the array for a portion of an algorithm, and return to
the global view later.

To be able to return to the global view, we only need to be able to recover
the original \ensuremath{\constraintfont{\Conid{RW}\;\Varid{n}}} capabilities. A simple modification of \ensuremath{\Varid{constrict}}
achieves this result:
\begin{hscode}\SaveRestoreHook
\column{B}{@{}>{\hspre}l<{\hspost}@{}}%
\column{9}{@{}>{\hspre}l<{\hspost}@{}}%
\column{E}{@{}>{\hspre}l<{\hspost}@{}}%
\>[B]{}\Varid{restrict}\mathbin{::}{}\<[9]%
\>[9]{}\constraintfont{\Conid{RW}\;\Varid{n}}\Lolly \Conid{UArray}\;\Varid{a}\;\Varid{n}\to \Conid{Int}\to \Conid{Int}\to {}\<[E]%
\\
\>[9]{}\exists\;\Varid{p}.(\Conid{Ur}\;(\Conid{UArray}\;\Varid{a}\;\Varid{p}),(\constraintfont{\Conid{RW}\;\Varid{p}}\Lolly ()\RLolly\constraintfont{\Conid{RW}\;\Varid{n}}))\RLolly\constraintfont{\Conid{RW}\;\Varid{p}}{}\<[E]%
\ColumnHook
\end{hscode}\resethooks
That is, \ensuremath{\Varid{restrict}} returns an additional value of type \ensuremath{\constraintfont{\Conid{RW}\;\Varid{p}}\Lolly ()\RLolly(\constraintfont{\Conid{RW}\;\Varid{n}})}. We call this additional value a \emph{release operator}. We
can then call this release operator when we're done with the slice, and return
to the original array.

For instance we can write the following in-place insertion sort function, where
\ensuremath{\Varid{insert}\;\Varid{a}\;\Varid{bs}} inserts \ensuremath{\Varid{a}} into the already-sorted array \ensuremath{\Varid{bs}} in place. To
preserve the size of the array, the smallest element of $\{\ensuremath{\Varid{a}}\} ⊎ \ensuremath{\Varid{bs}}$ is removed
from the result and returned. In the interesting case, \ensuremath{\Varid{a}} isn't the smallest
element of $\{\ensuremath{\Varid{a}}\} ⊎ \ensuremath{\Varid{bs}}$, so \ensuremath{\Varid{b}\mathrel{=}\Varid{read}\;\Varid{bs}\;\mathrm{0}} is, we return \ensuremath{\Varid{b}} and insert \ensuremath{\Varid{a}} to the
tail of \ensuremath{\Varid{bs}}, that is \ensuremath{\Varid{restrict}\;\Varid{bs}\;\mathrm{1}\;(\Varid{length}\;\Varid{bs}\mathbin{-}\mathrm{1})}.

\begin{hscode}\SaveRestoreHook
\column{B}{@{}>{\hspre}l<{\hspost}@{}}%
\column{5}{@{}>{\hspre}l<{\hspost}@{}}%
\column{7}{@{}>{\hspre}l<{\hspost}@{}}%
\column{E}{@{}>{\hspre}l<{\hspost}@{}}%
\>[B]{}\Varid{insertSort}\mathbin{::}\constraintfont{\Conid{RW}\;\Varid{n}}\Lolly \Conid{UArray}\;\Varid{a}\;\Varid{n}\to ()\RLolly\constraintfont{\Conid{RW}\;\Varid{n}}{}\<[E]%
\\
\>[B]{}\Varid{insertSort}\;\Varid{as}\mathrel{=}\Varid{when}\;(\Varid{length}\;\Varid{as}\mathbin{>}\mathrm{1})\mathbin{\$}\Conid{Linearly}.\mathbf{do}{}\<[E]%
\\
\>[B]{}\hsindent{5}{}\<[5]%
\>[5]{}\Conid{Ur}\;\Varid{a}\leftarrow \Varid{read}\;\mathrm{0}\;\Varid{as}{}\<[E]%
\\
\>[B]{}\hsindent{5}{}\<[5]%
\>[5]{}(\Conid{Ur}\;\Varid{tail},\Varid{release})\leftarrow \Varid{restrict}\;\Varid{as}\;\mathrm{1}\;(\Varid{len}\mathbin{-}\mathrm{1}){}\<[E]%
\\
\>[B]{}\hsindent{5}{}\<[5]%
\>[5]{}\Varid{insertSort}\;\Varid{tail}{}\<[E]%
\\
\>[B]{}\hsindent{5}{}\<[5]%
\>[5]{}\Conid{Ur}\;\Varid{a'}\leftarrow \Varid{insert}\;\Varid{a}\;\Varid{tail}{}\<[E]%
\\
\>[B]{}\hsindent{5}{}\<[5]%
\>[5]{}\Varid{release}{}\<[E]%
\\
\>[B]{}\hsindent{5}{}\<[5]%
\>[5]{}\Varid{write}\;\Varid{as}\;\mathrm{0}\;\Varid{a'}{}\<[E]%
\\
\>[B]{}\hsindent{5}{}\<[5]%
\>[5]{}\Varid{\Conid{Linearly}.return}\;(){}\<[E]%
\\[\blanklineskip]%
\>[B]{}\Varid{insert}\mathbin{::}\constraintfont{\Conid{RW}\;\Varid{n}}\Lolly \Varid{a}\to \Conid{UArray}\;\Varid{a}\;\Varid{n}\to \Conid{Ur}\;\Varid{a}\RLolly\constraintfont{\Conid{RW}\;\Varid{n}}{}\<[E]%
\\
\>[B]{}\Varid{insert}\;\Varid{a}\;\Varid{bs}\mathrel{=}\Varid{when}\;(\Varid{length}\;\Varid{bs}\mathbin{>}\mathrm{0})\mathbin{\$}\Conid{Linearly}.\mathbf{do}{}\<[E]%
\\
\>[B]{}\hsindent{5}{}\<[5]%
\>[5]{}\Conid{Ur}\;\Varid{b}\leftarrow \Varid{read}\;\mathrm{0}\;\Varid{bs}{}\<[E]%
\\
\>[B]{}\hsindent{5}{}\<[5]%
\>[5]{}\mathbf{if}\;\Varid{a}\leq \Varid{b}\;\mathbf{then}{}\<[E]%
\\
\>[5]{}\hsindent{2}{}\<[7]%
\>[7]{}\Varid{\Conid{Linearly}.return}\;(\Conid{Ur}\;\Varid{a}){}\<[E]%
\\
\>[B]{}\hsindent{5}{}\<[5]%
\>[5]{}\mathbf{else}\;\Conid{Linearly}.\mathbf{do}{}\<[E]%
\\
\>[5]{}\hsindent{2}{}\<[7]%
\>[7]{}(\Conid{Ur}\;\Varid{tail},\Varid{release})\leftarrow \Varid{restrict}\;\Varid{bs}\;\mathrm{1}\;(\Varid{len}\mathbin{-}\mathrm{1}){}\<[E]%
\\
\>[5]{}\hsindent{2}{}\<[7]%
\>[7]{}\Conid{Ur}\;\Varid{b'}\leftarrow \Varid{insert}\;\Varid{a}\;\Varid{tail}{}\<[E]%
\\
\>[5]{}\hsindent{2}{}\<[7]%
\>[7]{}\Varid{release}{}\<[E]%
\\
\>[5]{}\hsindent{2}{}\<[7]%
\>[7]{}\Varid{write}\;\Varid{bs}\;\mathrm{0}\;\Varid{b'}{}\<[E]%
\\
\>[5]{}\hsindent{2}{}\<[7]%
\>[7]{}\Varid{\Conid{Linearly}.return}\;(\Conid{Ur}\;\Varid{b}){}\<[E]%
\ColumnHook
\end{hscode}\resethooks

The release operator returned by \ensuremath{\Varid{restrict}} is linear, which means it must be called
(see \cref{sec:unrestricted-release} for alternative designs).
When an array has a linear release operator attached, it can't be freed (with
the \ensuremath{\Varid{free}} function of \cref{fig:constraints-interface}, for instance), because
this would violate linearity: the release operator would be left unconsumed, and the type system
would reject the program.

When a value has such a linear release operator we say that the value \emph{has
  a lifetime}. When a value doesn't have a lifetime, we say it is \emph{owned}.
In addition, we refer to any function which returns one or more
values with lifetimes as a \emph{borrow}.

\subsubsection{Independent slices}
\label{sec:slicing-two}

A more advanced form of borrowing slices of arrays is to borrow two independent
slices from the same array simultaneously. As long as two slices do not overlap,
mutating one does not affect the other, and it is therefore safe to mutate them
independently.
The ability to work on disjoint array slices is crucial for certain algorithms,
such as the \ensuremath{\Varid{merge}} function in a merge sort.

Borrowing two independent slices is achieved using \ensuremath{\Varid{slice}}, which slices
an array at a given index and returns two arrays:
\begin{hscode}\SaveRestoreHook
\column{B}{@{}>{\hspre}l<{\hspost}@{}}%
\column{11}{@{}>{\hspre}l<{\hspost}@{}}%
\column{E}{@{}>{\hspre}l<{\hspost}@{}}%
\>[B]{}\Varid{slice}\mathbin{::}{}\<[11]%
\>[11]{}\constraintfont{\Conid{RW}\;\Varid{n}}\Lolly \Conid{UArray}\;\Varid{a}\;\Varid{n}\to \Conid{Int}\to {}\<[E]%
\\
\>[11]{}\exists\;\Varid{p}\;\Varid{q}.(\Conid{Ur}\;(\Conid{UArray}\;\Varid{a}\;\Varid{p},\Conid{UArray}\;\Varid{a}\;\Varid{q}),(\constraintfont{\Conid{RW}\;\Varid{p},\Conid{RW}\;\Varid{q}}\Lolly ()\RLolly\constraintfont{\Conid{RW}\;\Varid{n}}))\RLolly\constraintfont{(\Conid{RW}\;\Varid{p},\Conid{RW}\;\Varid{q})}{}\<[E]%
\ColumnHook
\end{hscode}\resethooks
The result of \ensuremath{\Varid{slice}\;\Varid{as}\;\Varid{n}} is two slices, one with the first \ensuremath{\Varid{n}} elements of \ensuremath{\Varid{as}},
and one with the remaining \ensuremath{\Varid{length}\;\Varid{as}\mathbin{-}\Varid{n}} elements. The two slices can be used
simultaneously. Crucially, both slices must be released for the original
array to be accessible again. This is represented as a single release operator
which consumes the capabilities for both slices.

The \ensuremath{\Varid{merge}} function takes two sorted arrays \ensuremath{\Varid{ls}} and \ensuremath{\Varid{rs}}, and mutates them so that the
concatenation of \ensuremath{\Varid{ls}} and \ensuremath{\Varid{rs}} becomes sorted as a whole.%
\footnote{The implementation of \ensuremath{\Varid{merge}} is chosen for simplicity over
efficiency for the sake of this exposition. Inserting elements into \ensuremath{\Varid{rs}} with
the \ensuremath{\Varid{insert}} function of \cref{sec:simple-slices} is convenient but not
particularly efficient.
In fact, in the current implementation, \ensuremath{\Varid{insert}} is not truly in-place:
while is uses $O(1)$ heap space,  it actually uses $O(n)$ stack
space, since it is not tail recursive. The \ensuremath{\Varid{merge}} may be compiled to a tail-recursive
function if the compiler is able to optimise away the call to \ensuremath{\Varid{release}}.}
It is crucial that the arrays \ensuremath{\Varid{ls}} and \ensuremath{\Varid{rs}} be disjoint, if they are not then
referential transparency would be lost and \ensuremath{\Varid{merge}} would not be a pure function
(for instance reordering \ensuremath{\Varid{write}\;\Varid{ls}\;\mathrm{0}\;\Varid{r}} and \ensuremath{\Varid{insert}\;\Varid{l}\;\Varid{rs}}, which is a legal
transformation, would change the result).

\begin{hscode}\SaveRestoreHook
\column{B}{@{}>{\hspre}l<{\hspost}@{}}%
\column{5}{@{}>{\hspre}l<{\hspost}@{}}%
\column{15}{@{}>{\hspre}l<{\hspost}@{}}%
\column{52}{@{}>{\hspre}l<{\hspost}@{}}%
\column{E}{@{}>{\hspre}l<{\hspost}@{}}%
\>[B]{}\Varid{mergeSort}\mathbin{::}{}\<[15]%
\>[15]{}\constraintfont{\Conid{RW}\;\Varid{n}}\Lolly \Conid{UArray}\;\Varid{a}\;\Varid{n}\to ()\RLolly\constraintfont{\Conid{RW}\;\Varid{n}}{}\<[E]%
\\
\>[B]{}\Varid{mergeSort}\;\Varid{as}\mathrel{=}\Varid{when}\;(\Varid{length}\;\Varid{as}\mathbin{>}\mathrm{1})\mathbin{\$}\Conid{Linearly}.\mathbf{do}\;{}\<[52]%
\>[52]{}\mbox{\onelinecomment  \ensuremath{\constraintfont{(\Conid{RW}\;\Varid{n})}}}{}\<[E]%
\\
\>[B]{}\hsindent{5}{}\<[5]%
\>[5]{}(\Conid{Ur}\;(\Varid{ls},\Varid{rs}),\Varid{release})\leftarrow \Varid{slice}\;\Varid{as}\;(\Varid{len}\mathbin{/}\mathrm{2}){}\<[52]%
\>[52]{}\mbox{\onelinecomment  \ensuremath{\constraintfont{(\Conid{RW}\;\Varid{p},\Conid{RW}\;\Varid{q})}}}{}\<[E]%
\\
\>[B]{}\hsindent{5}{}\<[5]%
\>[5]{}\Varid{mergeSort}\;\Varid{ls}{}\<[52]%
\>[52]{}\mbox{\onelinecomment  \ensuremath{\constraintfont{(\Conid{RW}\;\Varid{p},\Conid{RW}\;\Varid{q})}}}{}\<[E]%
\\
\>[B]{}\hsindent{5}{}\<[5]%
\>[5]{}\Varid{mergeSort}\;\Varid{rs}{}\<[52]%
\>[52]{}\mbox{\onelinecomment  \ensuremath{\constraintfont{(\Conid{RW}\;\Varid{p},\Conid{RW}\;\Varid{q})}}}{}\<[E]%
\\
\>[B]{}\hsindent{5}{}\<[5]%
\>[5]{}\Varid{merge}\;\Varid{ls}\;\Varid{rs}{}\<[52]%
\>[52]{}\mbox{\onelinecomment  \ensuremath{\constraintfont{(\Conid{RW}\;\Varid{p},\Conid{RW}\;\Varid{q})}}}{}\<[E]%
\\
\>[B]{}\hsindent{5}{}\<[5]%
\>[5]{}\Varid{release}{}\<[52]%
\>[52]{}\mbox{\onelinecomment  \ensuremath{\constraintfont{(\Conid{RW}\;\Varid{n})}}}{}\<[E]%
\\
\>[B]{}\hsindent{5}{}\<[5]%
\>[5]{}\Varid{\Conid{Linear}.return}\;(){}\<[52]%
\>[52]{}\mbox{\onelinecomment  \ensuremath{\constraintfont{(\Conid{RW}\;\Varid{n})}}}{}\<[E]%
\ColumnHook
\end{hscode}\resethooks
As comments, we write the linear constraints which are in context
after the call on their left. Most notably we can see the effect of
the \ensuremath{\Varid{slice}} function (splitting the \ensuremath{\constraintfont{\Conid{RW}\;\Varid{n}}} into two), and
the reverse effect of the \ensuremath{\Varid{release}} operator. Also notable is that
other operations keep linear constraints intact.

\begin{hscode}\SaveRestoreHook
\column{B}{@{}>{\hspre}l<{\hspost}@{}}%
\column{5}{@{}>{\hspre}l<{\hspost}@{}}%
\column{7}{@{}>{\hspre}l<{\hspost}@{}}%
\column{11}{@{}>{\hspre}l<{\hspost}@{}}%
\column{E}{@{}>{\hspre}l<{\hspost}@{}}%
\>[B]{}\Varid{merge}\mathbin{::}{}\<[11]%
\>[11]{}\constraintfont{(\Conid{RW}\;\Varid{p},\Conid{RW}\;\Varid{q})}\Lolly \Conid{UArray}\;\Varid{a}\;\Varid{p}\to \Conid{UArray}\;\Varid{a}\;\Varid{q}\to ()\RLolly\constraintfont{(\Conid{RW}\;\Varid{p},\Conid{RW}\;\Varid{q})}{}\<[E]%
\\
\>[B]{}\Varid{merge}\;\Varid{ls}\;\Varid{rs}\mathrel{=}\Varid{when}\;(\Varid{length}\;\Varid{ls}\mathbin{>}\mathrm{0}\mathrel{\wedge}\Varid{length}\;\Varid{rs}\mathbin{>}\mathrm{0})\mathbin{\$}\Conid{Linearly}.\mathbf{do}{}\<[E]%
\\
\>[B]{}\hsindent{5}{}\<[5]%
\>[5]{}\Conid{Ur}\;\Varid{l}\leftarrow \Varid{read}\;\mathrm{0}\;\Varid{ls}{}\<[E]%
\\
\>[B]{}\hsindent{5}{}\<[5]%
\>[5]{}\Conid{Ur}\;\Varid{r}\leftarrow \Varid{read}\;\mathrm{0}\;\Varid{rs}{}\<[E]%
\\
\>[B]{}\hsindent{5}{}\<[5]%
\>[5]{}\mathbf{if}\;\Varid{l}\leq \Varid{r}\;\mathbf{then}\;\Conid{Linearly}.\mathbf{do}{}\<[E]%
\\
\>[5]{}\hsindent{2}{}\<[7]%
\>[7]{}(\Conid{Ur}\;(\anonymous ,\Varid{ls'}),\Varid{release})\leftarrow \Varid{slice}\;\Varid{ls}\;\mathrm{1}{}\<[E]%
\\
\>[5]{}\hsindent{2}{}\<[7]%
\>[7]{}\Varid{merge}\;\Varid{ls'}\;\Varid{rs}{}\<[E]%
\\
\>[5]{}\hsindent{2}{}\<[7]%
\>[7]{}\Varid{release}{}\<[E]%
\\
\>[5]{}\hsindent{2}{}\<[7]%
\>[7]{}\Varid{\Conid{Linear}.return}\;(){}\<[E]%
\\
\>[B]{}\hsindent{5}{}\<[5]%
\>[5]{}\mathbf{else}\;\Conid{Linearly}.\mathbf{do}{}\<[E]%
\\
\>[5]{}\hsindent{2}{}\<[7]%
\>[7]{}\Varid{write}\;\Varid{ls}\;\mathrm{0}\;\Varid{r}{}\<[E]%
\\
\>[5]{}\hsindent{2}{}\<[7]%
\>[7]{}(\Conid{Ur}\;\Varid{ls'},\Varid{release})\leftarrow \Varid{slice}\;\mathrm{1}\;\Varid{ls}{}\<[E]%
\\
\>[5]{}\hsindent{2}{}\<[7]%
\>[7]{}\Conid{Ur}\;\Varid{r'}\leftarrow \Varid{insert}\;\Varid{l}\;\Varid{rs}{}\<[E]%
\\
\>[5]{}\hsindent{2}{}\<[7]%
\>[7]{}\Varid{write}\;\Varid{rs}\;\mathrm{0}\;\Varid{r'}{}\<[E]%
\\
\>[5]{}\hsindent{2}{}\<[7]%
\>[7]{}\Varid{merge}\;\Varid{ls'}\;\Varid{rs}{}\<[E]%
\\
\>[5]{}\hsindent{2}{}\<[7]%
\>[7]{}\Varid{release}{}\<[E]%
\\
\>[5]{}\hsindent{2}{}\<[7]%
\>[7]{}\Varid{\Conid{Linear}.return}\;(){}\<[E]%
\ColumnHook
\end{hscode}\resethooks
This code requires the two slices to be accessed simultaneously, and this is
achieved using the \ensuremath{\Varid{slice}} function in \ensuremath{\Varid{mergeSort}} which then goes
on to work with \ensuremath{\Varid{merge}}.

The \ensuremath{\Varid{slice}} function is both simple and general. Indeed, it can be used to
define the \ensuremath{\Varid{restrict}} function of \cref{sec:simple-slices}.
\begin{hscode}\SaveRestoreHook
\column{B}{@{}>{\hspre}l<{\hspost}@{}}%
\column{3}{@{}>{\hspre}l<{\hspost}@{}}%
\column{9}{@{}>{\hspre}l<{\hspost}@{}}%
\column{31}{@{}>{\hspre}l<{\hspost}@{}}%
\column{33}{@{}>{\hspre}l<{\hspost}@{}}%
\column{E}{@{}>{\hspre}l<{\hspost}@{}}%
\>[B]{}\Varid{restrict}\mathbin{::}{}\<[9]%
\>[9]{}\constraintfont{\Conid{RW}\;\Varid{n}}\Lolly \Conid{UArray}\;\Varid{a}\;\Varid{n}\to \Conid{Int}\to \Conid{Int}\to {}\<[E]%
\\
\>[9]{}\exists\;\Varid{p}.(\Conid{Ur}\;(\Conid{UArray}\;\Varid{a}\;\Varid{p}),(\constraintfont{\Conid{RW}\;\Varid{p}}\Lolly ()\RLolly\constraintfont{\Conid{RW}\;\Varid{n}}))\RLolly\constraintfont{\Conid{RW}\;\Varid{p}}{}\<[E]%
\\
\>[B]{}\Varid{restrict}\;\Varid{as}\;\Varid{start}\;\Varid{len}\mathrel{=}\Conid{Linearly}.\mathbf{do}{}\<[E]%
\\
\>[B]{}\hsindent{3}{}\<[3]%
\>[3]{}(\Conid{Ur}\;(\anonymous ,\Varid{rght}),\Varid{release\char95 r})\leftarrow {}\<[33]%
\>[33]{}\Varid{slice}\;\Varid{as}\;\Varid{start}{}\<[E]%
\\
\>[B]{}\hsindent{3}{}\<[3]%
\>[3]{}(\Conid{Ur}\;(\Varid{sb},\anonymous ),\Varid{release\char95 s})\leftarrow {}\<[31]%
\>[31]{}\Varid{slice}\;\Varid{rght}\;\Varid{len}{}\<[E]%
\\
\>[B]{}\hsindent{3}{}\<[3]%
\>[3]{}\Varid{\Conid{Linearly}.return}\;(\Conid{Ur}\;\Varid{sb},\Conid{Linearly}.\mathbf{do}\;\{\mskip1.5mu \Varid{release\char95 s};\Varid{release\char95 r}\mskip1.5mu\}){}\<[E]%
\ColumnHook
\end{hscode}\resethooks
Notice how the unneeded
slices are released, by partially discharging the linear constraint of the
release operators.

Conversely, The \ensuremath{\Varid{restrict}} function from cannot be used to define \ensuremath{\Varid{slice}} since it
cannot be used to borrow two independent slices: when a slice is borrowed,
access to the original array is lost until the slice is released and so a
second slice cannot exist simultaneously.

\subsection{Blockwise matrices: in-place Valiant algorithm}
\label{sec:valiant}

We have seen that linear constraints are an effective way of defining
in-place algorithms on arrays. A question that arises naturally is
whether they apply generally on matrices (and in general
multi-dimensional arrays). We answer positively, and illustrate with
in-place modification of matrices.  Specifically, we consider the
context-free parsing algorithm invented by
\citet{valiant_general_1975}. The original motivation was to show that
context-free parsing was sub-cubic, but the algorithm has also found
practical applications
\citep{bernardy_efficient_2013,bernardy_efficient_2015}.  Valiant's
algorithm is a divide-and conquer algorithm which works on matrices,
and we use it illustrate how such corresponding ownership patterns can
be encoded with our system.

We direct the curious reader to the cited works, but at its core,
Valiant's algorithm consists in reducing context-free parsing to the
computation of a (non-associative) transitive closure of a cycle-free
directed graph. Each node corresponds to a position in the string, and
edges are labeled by the set non-terminals generating the corresponding substring. The grammar
controls whether two paths can be combined into a longer path, depending on their labels.
In the closed graph, the eventual labels of the edge
between the start node and the end node give the parsing result.  The
graph and its closure are represented by adjacency matrices which are
both upper-triangular.  We will show how to update this matrix in-place,
building up closures, in a divide and conquer pattern.
By nature of the closure, the closure is built incrementally: one add labels to the entries of the matrix, but never remove them.

In fact, Valiant's algorithm consists in two divide-and-conquer recursive functions:
\begin{hscode}\SaveRestoreHook
\column{B}{@{}>{\hspre}l<{\hspost}@{}}%
\column{7}{@{}>{\hspre}l<{\hspost}@{}}%
\column{E}{@{}>{\hspre}l<{\hspost}@{}}%
\>[B]{}\Varid{v}\mathbin{::}\constraintfont{\Conid{RW}\;\Varid{n}}\Lolly \Conid{Matrix}\;\Varid{a}\;\Varid{n}\RLolly\constraintfont{\Conid{RW}\;\Varid{n}}{}\<[E]%
\\
\>[B]{}\Varid{w}\mathbin{::}{}\<[7]%
\>[7]{}\constraintfont{(\Conid{Read}\;\Varid{p},\Conid{RW}\;\Varid{n},\Conid{Read}\;\Varid{q})}\Lolly {}\<[E]%
\\
\>[7]{}\Conid{Matrix}\;\Varid{a}\;\Varid{p}\to \Conid{Matrix}\;\Varid{a}\;\Varid{n}\to \Conid{Matrix}\;\Varid{a}\;\Varid{q}\RLolly\constraintfont{(\Conid{Read}\;\Varid{p},\Conid{RW}\;\Varid{n},\Conid{Read}\;\Varid{q})}{}\<[E]%
\ColumnHook
\end{hscode}\resethooks
Here \ensuremath{\Varid{v}} takes a writeable matrix \ensuremath{\Varid{n}}. When done \ensuremath{\Varid{n}} will be transitively
closed. The \ensuremath{\Varid{w}} function is a
helper, which takes two fully closed sub-matrices \ensuremath{\Varid{p}} and \ensuremath{\Varid{q}}, and a
partial subgraph between the nodes of \ensuremath{\Varid{p}} and the nodes of \ensuremath{\Varid{q}}, represented as a rectangular matrix \ensuremath{\Varid{n}}. When done, \ensuremath{\Varid{n}} contains
all the possible labeled paths from \ensuremath{\Varid{p}} to \ensuremath{\Varid{q}}.

\providecommand{\upperRightTriangle}[4]{
  \begin{scope}[shift={(#1 , -#1)}]
    \pgfmathsetmacro{\triSize}{#2}
    \draw[#4] (0,0) -- (\triSize,0) -- (\triSize,-\triSize) -- cycle;
    \node at (2*\triSize/3, -\triSize/3) {#3};
  \end{scope}
}
\providecommand{\matrixRectangle}[5]{
  \begin{scope}[shift={#1}]
    \draw [#5] (0,0) rectangle (#2,-#3) ;
    \node at (#2 / 2,-#3 / 2) {#4};
  \end{scope}
}

\begin{figure}
  \centering
  \begin{tikzpicture}
    \def\extraSpace{0.25}
    \def\asize{1}
    \def\bsize{1.1}
    \def\qsize{0.6}
    \pgfmathsetmacro{\psize}{\asize + \bsize}
    \pgfmathsetmacro{\bpos}{\asize}
    \pgfmathsetmacro{\qpos}{\psize + \extraSpace}
    \pgfmathsetmacro{\totsize}{\psize + \qsize}
    \upperRightTriangle{0}{\asize}{a}{white}
    \upperRightTriangle{\bpos}{\bsize}{b}{white}
    \upperRightTriangle{0}{\psize}{}{}
    \upperRightTriangle{\qpos}{\qsize}{q}{}
    \matrixRectangle{(\qpos,0)}{\qsize}{\psize}{}{}
    \matrixRectangle{(\bpos,0)}{\bsize}{\asize}{x}{dotted}
    \matrixRectangle{(\qpos,0)}{\qsize}{\asize}{y}{dotted}
    \matrixRectangle{(\qpos,-\bpos)}{\qsize}{\bsize}{z}{dotted}
  \end{tikzpicture}
  \caption{Division of the inputs of the \ensuremath{\Varid{w}} function when \ensuremath{\Varid{p}} is larger than \ensuremath{\Varid{q}}. That is, \ensuremath{\Varid{p}\mathrel{=}(\Varid{a},\Varid{x},\Varid{b})} and \ensuremath{\Varid{n}\mathrel{=}(\Varid{y},\Varid{z})}.}
  \label{fig:valiant-w}
\end{figure}
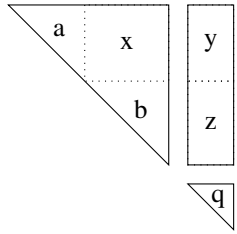

If its input is small enough, then there is nothing to do for \ensuremath{\Varid{v}}, but
otherwise it splits \ensuremath{\Varid{n}} into two new triangular matrices \ensuremath{\Varid{a}} and
\ensuremath{\Varid{b}} and the top-right matrix \ensuremath{\Varid{x}}. (The bottom left is always empty.)
The read-write ownership is transferred to each of the components.
After recursively computing the closure of \ensuremath{\Varid{a}} and \ensuremath{\Varid{b}}, the matrix is
completed using the helper function (\ensuremath{\Varid{w}\;\Varid{a}\;\Varid{x}\;\Varid{b}}).
\begin{hscode}\SaveRestoreHook
\column{B}{@{}>{\hspre}l<{\hspost}@{}}%
\column{3}{@{}>{\hspre}l<{\hspost}@{}}%
\column{5}{@{}>{\hspre}l<{\hspost}@{}}%
\column{E}{@{}>{\hspre}l<{\hspost}@{}}%
\>[B]{}\Varid{v}\;\Varid{n}\mathrel{=}{}\<[E]%
\\
\>[B]{}\hsindent{3}{}\<[3]%
\>[3]{}\mathbf{if}\;\Varid{size}\;\Varid{n}\leq \mathrm{2}{}\<[E]%
\\
\>[3]{}\hsindent{2}{}\<[5]%
\>[5]{}\Varid{return}\;(){}\<[E]%
\\
\>[B]{}\hsindent{3}{}\<[3]%
\>[3]{}\mathbf{else}\;\Conid{Linearly}.\mathbf{do}{}\<[E]%
\\
\>[3]{}\hsindent{2}{}\<[5]%
\>[5]{}\mathbf{let}\;\Varid{i}\mathrel{=}\Varid{size}\;\Varid{n}\mathbin{\Varid{`div`}}\mathrm{2}{}\<[E]%
\\
\>[3]{}\hsindent{2}{}\<[5]%
\>[5]{}(\Conid{Ur}\;(\Varid{a},\Varid{x},\Varid{b}),\Varid{release\char95 n})\leftarrow \Varid{splitUpperMatrix}\;\Varid{n}\;\Varid{i}{}\<[E]%
\\
\>[3]{}\hsindent{2}{}\<[5]%
\>[5]{}\Varid{v}\;\Varid{a}{}\<[E]%
\\
\>[3]{}\hsindent{2}{}\<[5]%
\>[5]{}\Varid{v}\;\Varid{b}{}\<[E]%
\\
\>[3]{}\hsindent{2}{}\<[5]%
\>[5]{}\Varid{w}\;\Varid{a}\;\Varid{x}\;\Varid{b}{}\<[E]%
\\
\>[3]{}\hsindent{2}{}\<[5]%
\>[5]{}\Varid{release\char95 n}{}\<[E]%
\ColumnHook
\end{hscode}\resethooks
The splitting upper triangular matrices is prodived by the function:
\begin{hscode}\SaveRestoreHook
\column{B}{@{}>{\hspre}l<{\hspost}@{}}%
\column{3}{@{}>{\hspre}l<{\hspost}@{}}%
\column{22}{@{}>{\hspre}l<{\hspost}@{}}%
\column{E}{@{}>{\hspre}l<{\hspost}@{}}%
\>[B]{}\Varid{splitUpperMatrix}\mathbin{::}{}\<[22]%
\>[22]{}\constraintfont{\Conid{RW}\;\Varid{n}}\Lolly \Conid{Matrix}\;\Varid{a}\;\Varid{n}\to \Conid{Int}\to {}\<[E]%
\\
\>[B]{}\hsindent{3}{}\<[3]%
\>[3]{}\exists\;\Varid{m}\;\Varid{l}\;\Varid{r}.(\Conid{Ur}\;(\Conid{Matrix}\;\Varid{a}\;\Varid{l},\Conid{Matrix}\;\Varid{a}\;\Varid{m},\Conid{Matrix}\;\Varid{a}\;\Varid{r}),\constraintfont{(\Conid{RW}\;\Varid{m},\Conid{RW}\;\Varid{l},\Conid{RW}\;\Varid{r})\Lolly ()\RLolly\Conid{RW}\;\Varid{n}}){}\<[E]%
\\
\>[B]{}\hsindent{3}{}\<[3]%
\>[3]{}\RLolly\constraintfont{(\Conid{RW}\;\Varid{m},\Conid{RW}\;\Varid{l},\Conid{RW}\;\Varid{r})}{}\<[E]%
\ColumnHook
\end{hscode}\resethooks
The helper \ensuremath{\Varid{w}\;\Varid{p}\;\Varid{n}\;\Varid{q}} is somewhat more involved. It won't touch the
completed matrices \ensuremath{\Varid{p}} and \ensuremath{\Varid{q}}, as declared in its signature. If both
of the upper-triangular matrices \ensuremath{\Varid{p}} and \ensuremath{\Varid{q}} are one-by-one, then
we're done. Otherwise, we split the larger of the two (say \ensuremath{\Varid{p}}) into
three parts, and \ensuremath{\Varid{n}} is split along the corresponding dimension, at
the same position, into \ensuremath{\Varid{y}} and \ensuremath{\Varid{z}}. The matrix \ensuremath{\Varid{z}} is closed by a recursive
call. Then the multiplication of \ensuremath{\Varid{x}} and \ensuremath{\Varid{z}} is performed (the
multiplication of elements, non-terminals, determines the grammar) and
added (using set union) to \ensuremath{\Varid{y}}. Finally \ensuremath{\Varid{z}} is recursively closed. The case of splitting \ensuremath{\Varid{q}} is symmetric.
\begin{hscode}\SaveRestoreHook
\column{B}{@{}>{\hspre}l<{\hspost}@{}}%
\column{3}{@{}>{\hspre}l<{\hspost}@{}}%
\column{5}{@{}>{\hspre}l<{\hspost}@{}}%
\column{7}{@{}>{\hspre}l<{\hspost}@{}}%
\column{E}{@{}>{\hspre}l<{\hspost}@{}}%
\>[B]{}\Varid{w}\;\Varid{p}\;\Varid{n}\;\Varid{q}\mathrel{=}{}\<[E]%
\\
\>[B]{}\hsindent{3}{}\<[3]%
\>[3]{}\mathbf{if}\;\Varid{size}\;\Varid{p}\leq \mathrm{1}\mathrel{\wedge}\Varid{size}\;\Varid{q}\leq \mathrm{1}{}\<[E]%
\\
\>[B]{}\hsindent{3}{}\<[3]%
\>[3]{}\mathbf{then}\;\Varid{return}\;(){}\<[E]%
\\
\>[B]{}\hsindent{3}{}\<[3]%
\>[3]{}\mathbf{else}\;\mathbf{if}\;\Varid{size}\;\Varid{p}\mathbin{>}\Varid{size}\;\Varid{q}{}\<[E]%
\\
\>[3]{}\hsindent{2}{}\<[5]%
\>[5]{}\mathbf{then}\;\Conid{Linearly}.\mathbf{do}{}\<[E]%
\\
\>[5]{}\hsindent{2}{}\<[7]%
\>[7]{}\mathbf{let}\;\Varid{i}\mathrel{=}\Varid{size}\;\Varid{p}\mathbin{\Varid{`div`}}\mathrm{2}{}\<[E]%
\\
\>[5]{}\hsindent{2}{}\<[7]%
\>[7]{}(\Conid{Ur}\;(\Varid{a},\Varid{x},\Varid{b}),\Varid{release}_{\Varid{p}})\leftarrow \Varid{splitUpperMatrix}\;\Varid{p}\;\Varid{i}{}\<[E]%
\\
\>[5]{}\hsindent{2}{}\<[7]%
\>[7]{}(\Conid{Ur}\;(\Varid{y},\Varid{z}),\Varid{release}_{\Varid{n}})\leftarrow \Varid{sliceMatrixV}\;\Varid{n}\;\Varid{i}{}\<[E]%
\\
\>[5]{}\hsindent{2}{}\<[7]%
\>[7]{}\Varid{w}\;\Varid{b}\;\Varid{z}\;\Varid{q}{}\<[E]%
\\
\>[5]{}\hsindent{2}{}\<[7]%
\>[7]{}\Varid{multiplyMatricesInto}\;\Varid{y}\;\Varid{x}\;\Varid{z}{}\<[E]%
\\
\>[5]{}\hsindent{2}{}\<[7]%
\>[7]{}\Varid{w}\;\Varid{a}\;\Varid{y}\;\Varid{q}{}\<[E]%
\\
\>[5]{}\hsindent{2}{}\<[7]%
\>[7]{}\Varid{release}_{\Varid{p}};\Varid{release}_{\Varid{n}}{}\<[E]%
\\
\>[3]{}\hsindent{2}{}\<[5]%
\>[5]{}\mathbf{else}\;\Conid{Linearly}.\mathbf{do}{}\<[E]%
\\
\>[5]{}\hsindent{2}{}\<[7]%
\>[7]{}\mathbf{let}\;\Varid{i}\mathrel{=}\Varid{size}\;\Varid{q}\mathbin{\Varid{`div`}}\mathrm{2}{}\<[E]%
\\
\>[5]{}\hsindent{2}{}\<[7]%
\>[7]{}(\Conid{Ur}\;(\Varid{a},\Varid{z},\Varid{b}),\Varid{release}_{\Varid{q}})\leftarrow \Varid{splitUpperMatrix}\;\Varid{q}\;\Varid{i}{}\<[E]%
\\
\>[5]{}\hsindent{2}{}\<[7]%
\>[7]{}(\Conid{Ur}\;(\Varid{x},\Varid{y}),\Varid{release}_{\Varid{n}})\leftarrow \Varid{sliceMatrixH}\;\Varid{n}\;\Varid{i}{}\<[E]%
\\
\>[5]{}\hsindent{2}{}\<[7]%
\>[7]{}\Varid{w}\;\Varid{p}\;\Varid{x}\;\Varid{a}{}\<[E]%
\\
\>[5]{}\hsindent{2}{}\<[7]%
\>[7]{}\Varid{multiplyMatricesInto}\;\Varid{y}\;\Varid{x}\;\Varid{z}{}\<[E]%
\\
\>[5]{}\hsindent{2}{}\<[7]%
\>[7]{}\Varid{w}\;\Varid{p}\;\Varid{y}\;\Varid{b}{}\<[E]%
\\
\>[5]{}\hsindent{2}{}\<[7]%
\>[7]{}\Varid{release}_{\Varid{q}};\Varid{release}_{\Varid{n}}{}\<[E]%
\ColumnHook
\end{hscode}\resethooks

The splitting and joining of general rectangular matrices follows the usual pattern (and in fact, can be used to implement such operations on triangular matrices).
\begin{hscode}\SaveRestoreHook
\column{B}{@{}>{\hspre}l<{\hspost}@{}}%
\column{3}{@{}>{\hspre}l<{\hspost}@{}}%
\column{5}{@{}>{\hspre}l<{\hspost}@{}}%
\column{E}{@{}>{\hspre}l<{\hspost}@{}}%
\>[3]{}\Varid{sliceMatrixH}\mathbin{::}\constraintfont{\Conid{RW}\;\Varid{n}}\Lolly \Conid{Matrix}\;\Varid{a}\;\Varid{n}\to \Conid{Int}\to {}\<[E]%
\\
\>[3]{}\hsindent{2}{}\<[5]%
\>[5]{}\exists\;\Varid{l}\;\Varid{r}.(\Conid{Ur}\;(\Conid{Matrix}\;\Varid{a}\;\Varid{l},\Conid{Matrix}\;\Varid{a}\;\Varid{r}),\constraintfont{\Conid{RW}\;\Varid{l},\Conid{RW}\;\Varid{r}}\Lolly ()\RLolly\constraintfont{\Conid{RW}\;\Varid{n}})\RLolly\constraintfont{(\Conid{RW}\;\Varid{l},\Conid{RW}\;\Varid{r})}{}\<[E]%
\\
\>[3]{}\Varid{sliceMatrixV}\mathbin{::}\constraintfont{\Conid{RW}\;\Varid{n}}\Lolly \Conid{Matrix}\;\Varid{a}\;\Varid{n}\to \Conid{Int}\to {}\<[E]%
\\
\>[3]{}\hsindent{2}{}\<[5]%
\>[5]{}\exists\;\Varid{l}\;\Varid{r}.(\Conid{Ur}\;(\Conid{Matrix}\;\Varid{a}\;\Varid{l},\Conid{Matrix}\;\Varid{a}\;\Varid{r}),\constraintfont{\Conid{RW}\;\Varid{l},\Conid{RW}\;\Varid{r}}\Lolly ()\RLolly\constraintfont{\Conid{RW}\;\Varid{n}})\RLolly\constraintfont{(\Conid{RW}\;\Varid{l},\Conid{RW}\;\Varid{r})}{}\<[E]%
\ColumnHook
\end{hscode}\resethooks
We won't detail the underlying matrix multiplication; but its signature is the following.\begin{hscode}\SaveRestoreHook
\column{B}{@{}>{\hspre}l<{\hspost}@{}}%
\column{3}{@{}>{\hspre}l<{\hspost}@{}}%
\column{5}{@{}>{\hspre}l<{\hspost}@{}}%
\column{E}{@{}>{\hspre}l<{\hspost}@{}}%
\>[3]{}\Varid{multiplyMatricesInto}\mathbin{::}\constraintfont{(\Conid{RW}\;\Varid{x},\Conid{Read}\;\Varid{y},\Conid{Read}\;\Varid{z})}\Lolly {}\<[E]%
\\
\>[3]{}\hsindent{2}{}\<[5]%
\>[5]{}\Conid{Matrix}\;\Conid{NTSet}\;\Varid{x}\to \Conid{Matrix}\;\Conid{NTSet}\;\Varid{y}\to \Conid{Matrix}\;\Conid{NTSet}\;\Varid{z}\RLolly\constraintfont{\Conid{RW}\;\Varid{x},\Conid{Read}\;\Varid{y},\Conid{Read}\;\Varid{z}}{}\<[E]%
\ColumnHook
\end{hscode}\resethooks
The structure of the algorithm follows exactly that outlined by
\citet{valiant_general_1975}. The overhead is the
explicitly calling of join functions to re-construct ownership
of the analysed matrices. In return, one gets additional static
checks: we do not demand full read-write capabilities on complete
matrices. Besides, even though the algorithm is imperative, it can be
called from pure functions, where the resulting matrix would be
\emph{frozen}, see \cref{sec:o1-freeze}.

\subsection{Matrices from first principles: a theory of nested arrays}
\label{sec:nested-arrays}

In \cref{sec:borrowing-slices}, we split arrays in two at a position, in
\cref{sec:valiant} we split matrices either at a vertical position or a
horizontal position. Higher-dimensional tensors could, in general, be split
along any dimension. The good news is that we don't have to define a new
interface from unsafe primitives for each dimension.
Capabilities-as-linear-constraints are flexible enough that we can build a
single interface which covers all dimensions.

\paragraph*{Arrays with nesting}

Specifically, we're going to represent matrices as arrays of arrays (say an
array of columns). Higher-dimensional tensors are a straightforward generalisation. To whet the
reader's appetite, slicing a matrix in two sets of columns is, rather
straightforward: just consider two halves of the root array; but slicing a
matrix in two sets of rows is much less obvious as it requires slicing each
column in two.

Let us first notice that we cannot use the \ensuremath{\Conid{UArray}} type of
\cref{sec:borrowing-slices}. Indeed, nesting \ensuremath{\Conid{UArray}} would give us a type like
\ensuremath{\Conid{UArray}\;(\Conid{UArray}\;\Varid{a}\;\Varid{p})\;\Varid{n}}. But this is not what we want: presumably \ensuremath{\Varid{p}} should be
different for each array. Nor can we use something like \ensuremath{\Conid{UArray}\;(\exists\;\Varid{p}.\Conid{UArray}\;\Varid{a}\;\Varid{p}\RLolly\Conid{RW}\;\Varid{p})\;\Varid{n}}, since \ensuremath{\Conid{RW}\;\Varid{p}} is a linear constraint, and an \ensuremath{\Conid{UArray}}
can only store linear values.

So we need to introduce a new type \ensuremath{\Conid{NArray}} (the \ensuremath{\Conid{N}} stands for ``nested'').
What we want to express, with \ensuremath{\Conid{NArray}} is that all the arrays have the same
owner, so that when we write an array \ensuremath{\Varid{bs}} in a cell of an array \ensuremath{\Varid{as}}, we
relinquish the ownership of \ensuremath{\Varid{bs}} to that of \ensuremath{\Varid{as}}. We do that by abstracting over
the type-level name of \ensuremath{\Varid{bs}}.


\begin{hscode}\SaveRestoreHook
\column{B}{@{}>{\hspre}l<{\hspost}@{}}%
\column{E}{@{}>{\hspre}l<{\hspost}@{}}%
\>[B]{}\mathbf{type}\;\Conid{NArray}\mathbin{::}(\Conid{Reg}\to \Conid{Type})\to \Conid{Reg}\to \Conid{Type}{}\<[E]%
\ColumnHook
\end{hscode}\resethooks
Notice that it is convenient for the region parameter to be the last
parameter, in order for the types to align when nesting arrays.

In consequence, we cannot have \ensuremath{\Conid{NArray}\;\Conid{Int}}: we must wrap atomic
values in a type of kind \ensuremath{\Conid{Reg}\to \Conid{Type}}. We introduce the
type constructor \ensuremath{\Conid{Val}} to that effect.

\begin{hscode}\SaveRestoreHook
\column{B}{@{}>{\hspre}l<{\hspost}@{}}%
\column{E}{@{}>{\hspre}l<{\hspost}@{}}%
\>[B]{}\mathbf{type}\;\Conid{Val}\mathbin{::}\Conid{Type}\to \Conid{Reg}\to \Conid{Type}{}\<[E]%
\\
\>[B]{}\mathbf{newtype}\;\Conid{Val}\;\Varid{a}\;\Varid{n}\mathrel{=}\Conid{MkVal}\;\Varid{a}{}\<[E]%
\\
\>[B]{}\mathbf{instance}\;\constraintfont{\Conid{R}\;(\Conid{Val}\;\Varid{a}\;\Varid{n})}{}\<[E]%
\\
\>[B]{}\mathbf{instance}\;\constraintfont{\Conid{W}\;(\Conid{Val}\;\Varid{a}\;\Varid{n})}{}\<[E]%
\ColumnHook
\end{hscode}\resethooks

Since \ensuremath{\Conid{Val}} isn't an abstract type, it is free to ignore its capabilities (which
is fine since no mutation can occur in a \ensuremath{\Conid{Val}\;\Varid{a}\;\Varid{r}}). We provide unrestricted
read and write access to \ensuremath{\Conid{Val}} since neither do anything, but they'll be needed
to interface with the array functions\info{This means that we can have both
  unrestricted and linear capabilities for \ensuremath{\Conid{Val}} in scope at the same time (the
  toplevel ones, and the one given from the borrow). So this design actually
  uses the new solver.}. With this, a matrix can have type:
\begin{hscode}\SaveRestoreHook
\column{B}{@{}>{\hspre}l<{\hspost}@{}}%
\column{E}{@{}>{\hspre}l<{\hspost}@{}}%
\>[B]{}\mathbf{type}\;\Varid Matrix \ensuremath{_0}\;\Varid{a}\;\Varid{n}\mathrel{=}\Conid{NArray}\;(\Conid{NArray}\;(\Conid{Val}\;\Varid{a}))\;\Varid{n}{}\<[E]%
\ColumnHook
\end{hscode}\resethooks
This isn't quite the type that we use to represent the matrices of
\cref{sec:valiant}, because in this section it is convenient to have a
separate type for full arrays (\ensuremath{\Conid{NArray}}) and slices (\ensuremath{\Conid{Slice}} below).

\begin{hscode}\SaveRestoreHook
\column{B}{@{}>{\hspre}l<{\hspost}@{}}%
\column{12}{@{}>{\hspre}c<{\hspost}@{}}%
\column{12E}{@{}l@{}}%
\column{16}{@{}>{\hspre}l<{\hspost}@{}}%
\column{E}{@{}>{\hspre}l<{\hspost}@{}}%
\>[B]{}\Varid{newNArray}{}\<[12]%
\>[12]{}\mathbin{::}{}\<[12E]%
\>[16]{}\constraintfont{\Conid{Linearly}}\Lolly {}\<[E]%
\\
\>[16]{}\Conid{Int}\to (\constraintfont{\Conid{Linearly}}\Lolly \Conid{Int}\to \exists\;\Varid{p}.\Varid{a}\;\Varid{p}\RLolly\constraintfont{\Conid{RW}\;\Varid{p}})\to \exists\;\Varid{n}.\Conid{NArray}\;\Varid{a}\;\Varid{n}\RLolly\constraintfont{\Conid{RW}\;\Varid{n}}{}\<[E]%
\\
\>[B]{}\Varid{borrowNA}{}\<[12]%
\>[12]{}\mathbin{::}{}\<[12E]%
\>[16]{}\constraintfont{\Conid{RW}\;\Varid{n}}\Lolly \Conid{NArray}\;\Varid{a}\;\Varid{n}\to \Conid{Int}\to \exists\;\Varid{p}.(\Conid{Ur}\;(\Varid{a}\;\Varid{p}),(\constraintfont{\Conid{RW}\;\Varid{p}}\Lolly ()\RLolly\constraintfont{\Conid{RW}\;\Varid{n}}))\RLolly\constraintfont{\Conid{RW}\;\Varid{p}}{}\<[E]%
\\
\>[B]{}\Varid{borrowNAR}{}\<[12]%
\>[12]{}\mathbin{::}{}\<[12E]%
\>[16]{}\constraintfont{\Conid{R}\;\Varid{n}}\Lolly \Conid{NArray}\;\Varid{a}\;\Varid{n}\to \Conid{Int}\to \exists\;\Varid{p}.(\Conid{Ur}\;(\Varid{a}\;\Varid{p}),(\constraintfont{\Conid{R}\;\Varid{p}}\Lolly ()\RLolly(\constraintfont{\Conid{R}\;\Varid{n}})))\RLolly\constraintfont{\Conid{R}\;\Varid{p}}{}\<[E]%
\\
\>[B]{}\Varid{writeNA}{}\<[12]%
\>[12]{}\mathbin{::}{}\<[12E]%
\>[16]{}\constraintfont{(\Conid{RW}\;\Varid{n},\Conid{RW}\;\Varid{p})}\Lolly \Conid{NArray}\;\Varid{a}\;\Varid{n}\to \Conid{Int}\to \Varid{a}\;\Varid{p}\to ()\RLolly\constraintfont{\Conid{RW}\;\Varid{n}}{}\<[E]%
\ColumnHook
\end{hscode}\resethooks
\info{Maybe something to be aware of here is that since \ensuremath{\Varid{a}\;\Varid{p}} is taken with
  multiplicity $  \multiplicityfont{ \omega }  $, any \ensuremath{\Conid{Val}\;\Varid{a}\;\Varid{p}} written in the array is
  unrestricted, hence points to an unrestricted \ensuremath{\Varid{a}}. So we have nested arrays,
  here, but they are ultimately arrays of unrestricted values. Which is why the
  types of the \ensuremath{\Varid{write}} functions work.}

To write a value \ensuremath{\Varid{u}} into an array \ensuremath{\Varid{a}}, one needs read and write
capabilities to \ensuremath{\Varid{u}}, because these capabilities will be given to
whomever already has them for \ensuremath{\Varid{a}}. For the same
reason we need to produce read and write capabilities to the elements we
initialise the array with in \ensuremath{\Varid{newNArray}}.

Accessing inner arrays, both for reading and writing, is more interesting: it
introduces a new kind of borrow, in the form of the functions \ensuremath{\Varid{borrowNA}} and
\ensuremath{\Varid{borrowNAR}}. We don't borrow slices here, we borrow inner values.
If you ignore
all the capabilities, these borrows have the type signature we'd expect of a
read function. They're borrows because they need to provide capabilities for the
inner value, and a release operator. The fresh capabilities and the release
operator prevent aliasing mutable cells.

\paragraph*{Slice type}
\label{sec:slice-type}

We introduce a separate type for slices. The intuition is that
is that a block of a matrix is a \ensuremath{\Conid{Slice}\;(\Conid{Slice}\;(\Conid{Val}\;\Varid{a}))\;\Varid{n}}. The inner \ensuremath{\Conid{Slice}\;(\Conid{Val}\;\Varid{a})} (columns, say) has a \ensuremath{\Conid{RW}} capability so that we can write its scalar,
but it can't itself be overwritten, as the slice itself isn't stored in the row.
What's actually stored in the row in the full column of the full matrix, and
overwriting it would modify the matrix outside of the current block. This would
be unsound. So \ensuremath{\Conid{Slice}} is a marker to mean that the cells of the arrays can be
mutated but not the array itself.

\begin{hscode}\SaveRestoreHook
\column{B}{@{}>{\hspre}l<{\hspost}@{}}%
\column{E}{@{}>{\hspre}l<{\hspost}@{}}%
\>[B]{}\mathbf{type}\;\Conid{Slice}\mathbin{::}(\Conid{Reg}\to \Conid{Type})\to \Conid{Reg}\to \Conid{Type}{}\<[E]%
\ColumnHook
\end{hscode}\resethooks

To create a slice, we can simply convert a regular array, forgetting about its
extra power.
\begin{hscode}\SaveRestoreHook
\column{B}{@{}>{\hspre}l<{\hspost}@{}}%
\column{E}{@{}>{\hspre}l<{\hspost}@{}}%
\>[B]{}\Varid{fullSlice}\mathbin{::}\constraintfont{\Conid{RW}\;\Varid{n}}\Lolly \Conid{NArray}\;\Varid{a}\;\Varid{n}\to \exists\;\Varid{p}.(\Conid{Ur}\;(\Conid{Slice}\;\Varid{a}\;\Varid{p}),\constraintfont{\Conid{RW}\;\Varid{p}}\Lolly ()\RLolly\constraintfont{\Conid{RW}\;\Varid{n}})\RLolly\constraintfont{\Conid{RW}\;\Varid{p}}{}\<[E]%
\ColumnHook
\end{hscode}\resethooks
A slice can be split in two, as expected.
\begin{hscode}\SaveRestoreHook
\column{B}{@{}>{\hspre}l<{\hspost}@{}}%
\column{12}{@{}>{\hspre}l<{\hspost}@{}}%
\column{E}{@{}>{\hspre}l<{\hspost}@{}}%
\>[B]{}\Varid{sliceS}\mathbin{::}{}\<[12]%
\>[12]{}\constraintfont{\Conid{RW}\;\Varid{n}}\Lolly \Conid{Slice}\;\Varid{a}\;\Varid{n}\to \Conid{Int}\to {}\<[E]%
\\
\>[12]{}\exists\;\Varid{p}\;\Varid{q}.(\Conid{Ur}\;(\Conid{Slice}\;\Varid{a}\;\Varid{p},\Conid{Slice}\;\Varid{a}\;\Varid{q}),(\constraintfont{\Conid{RW}\;\Varid{p},\Conid{RW}\;\Varid{q}}\Lolly ()\RLolly\constraintfont{\Conid{RW}\;\Varid{n}}))\RLolly\constraintfont{\Conid{RW}\;\Varid{p},\Conid{RW}\;\Varid{q}}{}\<[E]%
\ColumnHook
\end{hscode}\resethooks

\paragraph*{Implementing UArray}
\label{sec:implementing-uarray}

With these primitives in hands we can give an implementation to the \ensuremath{\Conid{UArray}}
type of \cref{sec:borrowing-slices} and up. Since in
\cref{sec:borrowing-slices}, we use the same a type for owned arrays and slice,
we implement it as a sum type.

\begin{hscode}\SaveRestoreHook
\column{B}{@{}>{\hspre}l<{\hspost}@{}}%
\column{3}{@{}>{\hspre}l<{\hspost}@{}}%
\column{E}{@{}>{\hspre}l<{\hspost}@{}}%
\>[B]{}\mathbf{data}\;\Conid{UArray}\;\Varid{a}\;\Varid{n}{}\<[E]%
\\
\>[B]{}\hsindent{3}{}\<[3]%
\>[3]{}\mathrel{=}\Conid{OwnedArray}\;(\Conid{NArray}\;(\Conid{Val}\;\Varid{a})\;\Varid{n}){}\<[E]%
\\
\>[B]{}\hsindent{3}{}\<[3]%
\>[3]{}\mid \Conid{SliceArray}\;(\Conid{Slice}\;(\Conid{Val}\;\Varid{a})\;\Varid{n}){}\<[E]%
\ColumnHook
\end{hscode}\resethooks

We can, for instance, write the \ensuremath{\Varid{write}} function

\begin{hscode}\SaveRestoreHook
\column{B}{@{}>{\hspre}l<{\hspost}@{}}%
\column{E}{@{}>{\hspre}l<{\hspost}@{}}%
\>[B]{}\Varid{write}\mathbin{::}\constraintfont{\Conid{RW}\;\Varid{n}}\Lolly \Conid{UArray}\;\Varid{a}\;\Varid{n}\to \Conid{Int}\to \Varid{a}\to ()\RLolly\constraintfont{\Conid{RW}\;\Varid{n}}{}\<[E]%
\\
\>[B]{}\Varid{write}\;(\Conid{OwnedArray}\;\Varid{as})\;\Varid{i}\;\Varid{a}\mathrel{=}\Varid{writeNA}\;\Varid{as}\;\Varid{i}\;(\Conid{Val}\;\Varid{a}){}\<[E]%
\\
\>[B]{}\Varid{write}\;(\Conid{SliceArray}\;\Varid{as})\;\Varid{i}\;\Varid{a}\mathrel{=}\Varid{writeS}\;\Varid{as}\;\Varid{i}\;(\Conid{Val}\;\Varid{a}){}\<[E]%
\ColumnHook
\end{hscode}\resethooks

The \ensuremath{\Varid{read}} function is a little more involved, since it involves a borrow. But
the borrow is degenerate since we're borrowing a value of type \ensuremath{\Conid{Val}\;\Varid{a}}.

\begin{hscode}\SaveRestoreHook
\column{B}{@{}>{\hspre}l<{\hspost}@{}}%
\column{3}{@{}>{\hspre}l<{\hspost}@{}}%
\column{E}{@{}>{\hspre}l<{\hspost}@{}}%
\>[B]{}\Varid{read}\mathbin{::}\Conid{UArray}\;\Varid{a}⊸\Conid{Int}\to (\Conid{UArray}\;\Varid{a},\Conid{Ur}\;\Varid{a}){}\<[E]%
\\
\>[B]{}\Varid{read}\;(\Conid{OwnedArray}\;\Varid{as})\;\Varid{i}\mathrel{=}\Conid{Linearly}.\mathbf{do}{}\<[E]%
\\
\>[B]{}\hsindent{3}{}\<[3]%
\>[3]{}(\Conid{Ur}\;(\Conid{Val}\;\Varid{a}),\Varid{release})\leftarrow \Varid{borrowNA}\;\Varid{as}\;\Varid{i}{}\<[E]%
\\
\>[B]{}\hsindent{3}{}\<[3]%
\>[3]{}\Varid{release}{}\<[E]%
\\
\>[B]{}\hsindent{3}{}\<[3]%
\>[3]{}\Varid{\Conid{Linear}.return}\;(\Conid{Ur}\;\Varid{a}){}\<[E]%
\\
\>[B]{}\Varid{read}\;(\Conid{SliceArray}\;\Varid{as})\;\Varid{i}\mathrel{=}\Conid{Linearly}.\mathbf{do}{}\<[E]%
\\
\>[B]{}\hsindent{3}{}\<[3]%
\>[3]{}(\Conid{Ur}\;(\Conid{Val}\;\Varid{a}),\Varid{release})\leftarrow \Varid{borrowS}\;\Varid{as}\;\Varid{i}{}\<[E]%
\\
\>[B]{}\hsindent{3}{}\<[3]%
\>[3]{}\Varid{release}{}\<[E]%
\\
\>[B]{}\hsindent{3}{}\<[3]%
\>[3]{}\Varid{\Conid{Linear}.return}\;(\Conid{Ur}\;\Varid{a}){}\<[E]%
\ColumnHook
\end{hscode}\resethooks

As a final example, here is the implementation of the \ensuremath{\Varid{slice}} function.

\begin{hscode}\SaveRestoreHook
\column{B}{@{}>{\hspre}l<{\hspost}@{}}%
\column{3}{@{}>{\hspre}l<{\hspost}@{}}%
\column{11}{@{}>{\hspre}l<{\hspost}@{}}%
\column{E}{@{}>{\hspre}l<{\hspost}@{}}%
\>[B]{}\Varid{slice}\mathbin{::}{}\<[11]%
\>[11]{}\constraintfont{\Conid{RW}\;\Varid{n}}\Lolly \Conid{UArray}\;\Varid{a}\;\Varid{n}\to \Conid{Int}\to {}\<[E]%
\\
\>[11]{}\exists\;\Varid{p}\;\Varid{q}.(\Conid{Ur}\;(\Conid{UArray}\;\Varid{a}\;\Varid{p},\Conid{UArray}\;\Varid{a}\;\Varid{q}),(\constraintfont{\Conid{RW}\;\Varid{p},\Conid{RW}\;\Varid{q}}\Lolly ()\RLolly\constraintfont{\Conid{RW}\;\Varid{n}}))\RLolly\constraintfont{(\Conid{RW}\;\Varid{p},\Conid{RW}\;\Varid{q})}{}\<[E]%
\\
\>[B]{}\Varid{slice}\;(\Conid{OwnedArray}\;\Varid{as})\;\Varid{i}\mathrel{=}\Conid{Linearly}.\mathbf{do}{}\<[E]%
\\
\>[B]{}\hsindent{3}{}\<[3]%
\>[3]{}(\Conid{Ur}\;\Varid{as'},\Varid{release\char95 top})\leftarrow \Varid{fullSlice}\;\Varid{as}{}\<[E]%
\\
\>[B]{}\hsindent{3}{}\<[3]%
\>[3]{}(\Conid{Ur}\;(\Conid{SliceArray}\;\Varid{l},\Conid{SliceArray}\;\Varid{r}),\Varid{release\char95 slices})\leftarrow \Varid{sliceS}\;\Varid{as'}\;\Varid{i}{}\<[E]%
\\
\>[B]{}\hsindent{3}{}\<[3]%
\>[3]{}\Varid{\Conid{Linearly}.return}\;(\Conid{Ur}\;(\Varid{l},\Varid{r}),\mathbf{do}\;\{\mskip1.5mu \Varid{release\char95 slices};\Varid{release\char95 top}\mskip1.5mu\}){}\<[E]%
\\
\>[B]{}\Varid{slice}\;(\Conid{SliceArray}\;\Varid{as})\;\Varid{i}\mathrel{=}\Conid{Linearly}.\mathbf{do}{}\<[E]%
\\
\>[B]{}\hsindent{3}{}\<[3]%
\>[3]{}(\Conid{Ur}\;(\Varid{l},\Varid{r}),\Varid{release\char95 slices})\leftarrow \Varid{sliceS}\;\Varid{as}\;\Varid{i}{}\<[E]%
\\
\>[B]{}\hsindent{3}{}\<[3]%
\>[3]{}\Varid{\Conid{Linearly}.return}\;(\Conid{Ur}\;(\Conid{SliceArray}\;\Varid{l},\Conid{SliceArray}\;\Varid{r}),\Varid{release\char95 slices}){}\<[E]%
\ColumnHook
\end{hscode}\resethooks

\paragraph*{Deep borrowing and slicing}
\label{sec:deep-slicing}

The whole reason for the \ensuremath{\Conid{Slice}} type, however, is that we want to borrow and slice deep
inside a structure. To this effect, we introduce a function \ensuremath{\Varid{borrowDeep}} (resp.
\ensuremath{\Varid{sliceDeep}}) which takes a borrowing function (resp. a slicing function) on the inner type of a slice, and
returns a borrowing function (resp. a slicing function) on the slice itself.

\begin{hscode}\SaveRestoreHook
\column{B}{@{}>{\hspre}l<{\hspost}@{}}%
\column{3}{@{}>{\hspre}c<{\hspost}@{}}%
\column{3E}{@{}l@{}}%
\column{7}{@{}>{\hspre}l<{\hspost}@{}}%
\column{16}{@{}>{\hspre}l<{\hspost}@{}}%
\column{17}{@{}>{\hspre}l<{\hspost}@{}}%
\column{18}{@{}>{\hspre}l<{\hspost}@{}}%
\column{19}{@{}>{\hspre}l<{\hspost}@{}}%
\column{E}{@{}>{\hspre}l<{\hspost}@{}}%
\>[B]{}\Varid{borrowDeep}\mathbin{::}{}\<[16]%
\>[16]{}\constraintfont{\Conid{RW}\;\Varid{n}}\Lolly \Conid{Slice}\;\Varid{a}\;\Varid{n}\to {}\<[E]%
\\
\>[16]{}(\forall\;\Varid{p}.\constraintfont{\Conid{RW}\;\Varid{p}}\Lolly \Varid{a}\;\Varid{p}\to \exists\;\Varid{q}.(\Conid{Ur}\;(\Varid{b}\;\Varid{p}),(\constraintfont{\Conid{RW}\;\Varid{b}}\Lolly ()\RLolly\constraintfont{\Conid{RW}\;\Varid{p}})))\to {}\<[E]%
\\
\>[16]{}\exists\;\Varid{r}.(\Conid{Ur}\;(\Conid{Slice}\;\Varid{b}\;\Varid{r}),\constraintfont{\Conid{RW}\;\Varid{r}}\Lolly ()\RLolly\constraintfont{\Conid{RW}\;\Varid{n}}){}\<[E]%
\\
\>[B]{}\Varid{borrowDeepR}\mathbin{::}{}\<[17]%
\>[17]{}\constraintfont{\Conid{Read}\;\Varid{n}}\Lolly \Conid{Slice}\;\Varid{a}\;\Varid{n}\to {}\<[E]%
\\
\>[17]{}(\forall\;\Varid{p}.\constraintfont{\Conid{Read}\;\Varid{p}}\Lolly \Varid{a}\;\Varid{p}\to \exists\;\Varid{q}.(\Conid{Ur}\;(\Varid{b}\;\Varid{p}),(\constraintfont{\Conid{Read}\;\Varid{b}}\Lolly ()\RLolly\constraintfont{\Conid{Read}\;\Varid{p}})))\to {}\<[E]%
\\
\>[17]{}\exists\;\Varid{r}.(\Conid{Ur}\;(\Conid{Slice}\;\Varid{b}\;\Varid{r}),\constraintfont{\Conid{Read}\;\Varid{r}}\Lolly ()\RLolly\constraintfont{\Conid{Read}\;\Varid{n}}){}\<[E]%
\\
\>[B]{}\Varid{sliceDeep}{}\<[E]%
\\
\>[B]{}\hsindent{3}{}\<[3]%
\>[3]{}\mathbin{::}{}\<[3E]%
\>[7]{}\constraintfont{\Conid{RW}\;\Varid{n}}\Lolly \Conid{Slice}\;\Varid{a}\;\Varid{n}\to {}\<[E]%
\\
\>[7]{}(\forall\;\Varid{p}.{}\<[19]%
\>[19]{}\constraintfont{\Conid{RW}\;\Varid{p}}\Lolly \Varid{a}\;\Varid{p}\to {}\<[E]%
\\
\>[19]{}\exists\;\Varid{r}\;\Varid{s}.\constraintfont{(\Conid{RW}\;\Varid{r},\Conid{RW}\;\Varid{s})}\Lolly (\Conid{Ur}\;(\Varid{a}\;\Varid{r},\Varid{a}\;\Varid{s}),\constraintfont{(\Conid{RW}\;\Varid{r},\Conid{RW}\;\Varid{s})}\Lolly ()\RLolly\constraintfont{\Conid{RW}\;\Varid{p}}))\to {}\<[E]%
\\
\>[7]{}\exists\;\Varid{q}.{}\<[18]%
\>[18]{}\constraintfont{(\Conid{RW}\;\Varid{o},\Conid{RW}\;\Varid{q})}\Lolly {}\<[E]%
\\
\>[18]{}(\Conid{Ur}\;(\Conid{Slice}\;\Varid{a}\;\Varid{o},\Conid{Slice}\;\Varid{a}\;\Varid{q}),\constraintfont{(\Conid{RW}\;\Varid{o},\Conid{RW}\;\Varid{q})}\Lolly ()\RLolly\constraintfont{\Conid{RW}\;\Varid{n}})\RLolly\constraintfont{(\Conid{RW}\;\Varid{o},\Conid{RW}\;\Varid{q})}{}\<[E]%
\\
\>[B]{}\Varid{sliceDeepR}{}\<[E]%
\\
\>[B]{}\hsindent{3}{}\<[3]%
\>[3]{}\mathbin{::}{}\<[3E]%
\>[7]{}\constraintfont{\Conid{Read}\;\Varid{n}}\Lolly \Conid{Slice}\;\Varid{a}\;\Varid{n}\to {}\<[E]%
\\
\>[7]{}(\forall\;\Varid{p}.{}\<[19]%
\>[19]{}\constraintfont{\Conid{Read}\;\Varid{p}}\Lolly \Varid{a}\;\Varid{p}\to {}\<[E]%
\\
\>[19]{}\exists\;\Varid{r}\;\Varid{s}.\constraintfont{(\Conid{Read}\;\Varid{r},\Conid{Read}\;\Varid{s})}\Lolly (\Conid{Ur}\;(\Varid{a}\;\Varid{r},\Varid{a}\;\Varid{s}),\constraintfont{(\Conid{Read}\;\Varid{r},\Conid{Read}\;\Varid{s})}\Lolly ()\RLolly\constraintfont{\Conid{Read}\;\Varid{p}}))\to {}\<[E]%
\\
\>[7]{}\exists\;\Varid{q}.{}\<[18]%
\>[18]{}\constraintfont{(\Conid{Read}\;\Varid{o},\Conid{Read}\;\Varid{q})}\Lolly {}\<[E]%
\\
\>[18]{}(\Conid{Ur}\;(\Conid{Slice}\;\Varid{a}\;\Varid{o},\Conid{Slice}\;\Varid{a}\;\Varid{q}),\constraintfont{(\Conid{Read}\;\Varid{o},\Conid{Read}\;\Varid{q})}\Lolly ()\RLolly\constraintfont{\Conid{Read}\;\Varid{n}})\RLolly\constraintfont{(\Conid{Read}\;\Varid{o},\Conid{Read}\;\Varid{q})}{}\<[E]%
\ColumnHook
\end{hscode}\resethooks
Note that the type
of the inner slicing function, in \ensuremath{\Varid{sliceDeep}} is slightly abstracted compared to the type of
\ensuremath{\Varid{sliceS}}: it doesn't take an integer argument, instead it assumes that the
integer is captured by the inner slicing function's closure.

The key, here, is that \ensuremath{\Varid{borrowDeep}} and \ensuremath{\Varid{sliceDeep}} can be nested,
so that we can borrow (or slice) at an arbitrary depth.

\paragraph*{Implementing Matrix}
\label{sec:implementing-matrix}

\begin{figure}
  \maybesmall
\begin{hscode}\SaveRestoreHook
\column{B}{@{}>{\hspre}l<{\hspost}@{}}%
\column{3}{@{}>{\hspre}l<{\hspost}@{}}%
\column{7}{@{}>{\hspre}l<{\hspost}@{}}%
\column{12}{@{}>{\hspre}l<{\hspost}@{}}%
\column{15}{@{}>{\hspre}l<{\hspost}@{}}%
\column{16}{@{}>{\hspre}l<{\hspost}@{}}%
\column{17}{@{}>{\hspre}l<{\hspost}@{}}%
\column{18}{@{}>{\hspre}l<{\hspost}@{}}%
\column{19}{@{}>{\hspre}l<{\hspost}@{}}%
\column{E}{@{}>{\hspre}l<{\hspost}@{}}%
\>[B]{}\mathbf{type}\;\Conid{NArray}\mathbin{::}(\Conid{Reg}\to \Conid{Type})\to \Conid{Reg}\to \Conid{Type}{}\<[E]%
\\
\>[B]{}\Varid{newNArray}{}\<[12]%
\>[12]{}\mathbin{::}\constraintfont{\Conid{Linearly}}\Lolly \Conid{Int}\to (\constraintfont{\Conid{Linearly}}\Lolly \Conid{Int}\to \exists\;\Varid{p}.\Varid{a}\;\Varid{p}\RLolly\constraintfont{\Conid{RW}\;\Varid{p}})\to \exists\;\Varid{n}.\Conid{NArray}\;\Varid{a}\;\Varid{n}\RLolly\constraintfont{\Conid{RW}\;\Varid{n}}{}\<[E]%
\\
\>[B]{}\Varid{borrowNA}{}\<[12]%
\>[12]{}\mathbin{::}\constraintfont{\Conid{RW}\;\Varid{n}}\Lolly \Conid{NArray}\;\Varid{a}\;\Varid{n}\to \Conid{Int}\to \exists\;\Varid{p}.(\Conid{Ur}\;(\Varid{a}\;\Varid{p}),(\constraintfont{\Conid{RW}\;\Varid{p}}\Lolly ()\RLolly\constraintfont{\Conid{RW}\;\Varid{n}}))\RLolly\constraintfont{\Conid{RW}\;\Varid{p}}{}\<[E]%
\\
\>[B]{}\Varid{borrowNAR}{}\<[12]%
\>[12]{}\mathbin{::}\constraintfont{\Conid{R}\;\Varid{n}}\Lolly \Conid{NArray}\;\Varid{a}\;\Varid{n}\to \Conid{Int}\to \exists\;\Varid{p}.(\Conid{Ur}\;(\Varid{a}\;\Varid{p}),(\constraintfont{\Conid{R}\;\Varid{p}}\Lolly ()\RLolly(\constraintfont{\Conid{R}\;\Varid{n}})))\RLolly\constraintfont{\Conid{R}\;\Varid{p}}{}\<[E]%
\\
\>[B]{}\Varid{writeNA}{}\<[12]%
\>[12]{}\mathbin{::}\constraintfont{(\Conid{RW}\;\Varid{n},\Conid{RW}\;\Varid{p})}\Lolly \Conid{NArray}\;\Varid{a}\;\Varid{n}\to \Conid{Int}\to \Varid{a}\;\Varid{p}\to ()\RLolly\constraintfont{\Conid{RW}\;\Varid{n}}{}\<[E]%
\\[\blanklineskip]%
\>[B]{}\mathbf{type}\;\Conid{Val}\mathbin{::}\Conid{Type}\to \Conid{Reg}\to \Conid{Type}{}\<[E]%
\\
\>[B]{}\mathbf{newtype}\;\Conid{Val}\;\Varid{a}\;\Varid{n}\mathrel{=}\Conid{MkVal}\;\Varid{a}{}\<[E]%
\\
\>[B]{}\mathbf{instance}\;\constraintfont{\Conid{R}\;(\Conid{Val}\;\Varid{a}\;\Varid{n})}{}\<[E]%
\\
\>[B]{}\mathbf{instance}\;\constraintfont{\Conid{W}\;(\Conid{Val}\;\Varid{a}\;\Varid{n})}{}\<[E]%
\\[\blanklineskip]%
\>[B]{}\mathbf{type}\;\Conid{Slice}\mathbin{::}(\Conid{Reg}\to \Conid{Type})\to \Conid{Reg}\to \Conid{Type}{}\<[E]%
\\
\>[B]{}\Varid{fullSlice}\mathbin{::}\constraintfont{\Conid{RW}\;\Varid{n}}\Lolly \Conid{NArray}\;\Varid{a}\;\Varid{n}\to \exists\;\Varid{p}.(\Conid{Ur}\;(\Conid{Slice}\;\Varid{a}\;\Varid{p}),\constraintfont{\Conid{RW}\;\Varid{p}}\Lolly ()\RLolly\constraintfont{\Conid{RW}\;\Varid{n}})\RLolly\constraintfont{\Conid{RW}\;\Varid{p}}{}\<[E]%
\\
\>[B]{}\Varid{sliceS}\mathbin{::}{}\<[12]%
\>[12]{}\constraintfont{\Conid{RW}\;\Varid{n}}\Lolly \Conid{Slice}\;\Varid{a}\;\Varid{n}\to \Conid{Int}\to {}\<[E]%
\\
\>[12]{}\exists\;\Varid{p}\;\Varid{q}.(\Conid{Ur}\;(\Conid{Slice}\;\Varid{a}\;\Varid{p},\Conid{Slice}\;\Varid{a}\;\Varid{q}),(\constraintfont{\Conid{RW}\;\Varid{p},\Conid{RW}\;\Varid{q}}\Lolly ()\RLolly\constraintfont{\Conid{RW}\;\Varid{n}}))\RLolly\constraintfont{\Conid{RW}\;\Varid{p},\Conid{RW}\;\Varid{q}}{}\<[E]%
\\
\>[B]{}\Varid{borrowDeep}\mathbin{::}{}\<[16]%
\>[16]{}\constraintfont{\Conid{RW}\;\Varid{n}}\Lolly \Conid{Slice}\;\Varid{a}\;\Varid{n}\to {}\<[E]%
\\
\>[16]{}(\forall\;\Varid{p}.\constraintfont{\Conid{RW}\;\Varid{p}}\Lolly \Varid{a}\;\Varid{p}\to \exists\;\Varid{q}.(\Conid{Ur}\;(\Varid{b}\;\Varid{p}),(\constraintfont{\Conid{RW}\;\Varid{b}}\Lolly ()\RLolly\constraintfont{\Conid{RW}\;\Varid{p}})))\to {}\<[E]%
\\
\>[16]{}\exists\;\Varid{r}.(\Conid{Ur}\;(\Conid{Slice}\;\Varid{b}\;\Varid{r}),\constraintfont{\Conid{RW}\;\Varid{r}}\Lolly ()\RLolly\constraintfont{\Conid{RW}\;\Varid{n}}){}\<[E]%
\\
\>[B]{}\Varid{borrowDeepR}\mathbin{::}{}\<[17]%
\>[17]{}\constraintfont{\Conid{Read}\;\Varid{n}}\Lolly \Conid{Slice}\;\Varid{a}\;\Varid{n}\to {}\<[E]%
\\
\>[17]{}(\forall\;\Varid{p}.\constraintfont{\Conid{Read}\;\Varid{p}}\Lolly \Varid{a}\;\Varid{p}\to \exists\;\Varid{q}.(\Conid{Ur}\;(\Varid{b}\;\Varid{p}),(\constraintfont{\Conid{Read}\;\Varid{b}}\Lolly ()\RLolly\constraintfont{\Conid{Read}\;\Varid{p}})))\to {}\<[E]%
\\
\>[17]{}\exists\;\Varid{r}.(\Conid{Ur}\;(\Conid{Slice}\;\Varid{b}\;\Varid{r}),\constraintfont{\Conid{Read}\;\Varid{r}}\Lolly ()\RLolly\constraintfont{\Conid{Read}\;\Varid{n}}){}\<[E]%
\\
\>[B]{}\Varid{sliceDeep}{}\<[E]%
\\
\>[B]{}\hsindent{3}{}\<[3]%
\>[3]{}\mathbin{::}{}\<[7]%
\>[7]{}\constraintfont{\Conid{RW}\;\Varid{n}}\Lolly \Conid{Slice}\;\Varid{a}\;\Varid{n}\to {}\<[E]%
\\
\>[7]{}(\forall\;\Varid{p}.{}\<[19]%
\>[19]{}\constraintfont{\Conid{RW}\;\Varid{p}}\Lolly \Varid{a}\;\Varid{p}\to {}\<[E]%
\\
\>[19]{}\exists\;\Varid{r}\;\Varid{s}.\constraintfont{(\Conid{RW}\;\Varid{r},\Conid{RW}\;\Varid{s})}\Lolly (\Conid{Ur}\;(\Varid{a}\;\Varid{r},\Varid{a}\;\Varid{s}),\constraintfont{(\Conid{RW}\;\Varid{r},\Conid{RW}\;\Varid{s})}\Lolly ()\RLolly\constraintfont{\Conid{RW}\;\Varid{p}}))\to {}\<[E]%
\\
\>[7]{}\exists\;\Varid{q}.{}\<[18]%
\>[18]{}\constraintfont{(\Conid{RW}\;\Varid{o},\Conid{RW}\;\Varid{q})}\Lolly {}\<[E]%
\\
\>[18]{}(\Conid{Ur}\;(\Conid{Slice}\;\Varid{a}\;\Varid{o},\Conid{Slice}\;\Varid{a}\;\Varid{q}),\constraintfont{(\Conid{RW}\;\Varid{o},\Conid{RW}\;\Varid{q})}\Lolly ()\RLolly\constraintfont{\Conid{RW}\;\Varid{n}})\RLolly\constraintfont{(\Conid{RW}\;\Varid{o},\Conid{RW}\;\Varid{q})}{}\<[E]%
\\
\>[B]{}\Varid{sliceDeepR}{}\<[E]%
\\
\>[B]{}\hsindent{3}{}\<[3]%
\>[3]{}\mathbin{::}{}\<[7]%
\>[7]{}\constraintfont{\Conid{Read}\;\Varid{n}}\Lolly \Conid{Slice}\;\Varid{a}\;\Varid{n}\to {}\<[E]%
\\
\>[B]{}\hsindent{3}{}\<[3]%
\>[3]{}(\forall\;\Varid{p}.{}\<[15]%
\>[15]{}\constraintfont{\Conid{Read}\;\Varid{p}}\Lolly \Varid{a}\;\Varid{p}\to {}\<[E]%
\\
\>[15]{}\exists\;\Varid{r}\;\Varid{s}.\constraintfont{(\Conid{Read}\;\Varid{r},\Conid{Read}\;\Varid{s})}\Lolly (\Conid{Ur}\;(\Varid{a}\;\Varid{r},\Varid{a}\;\Varid{s}),\constraintfont{(\Conid{Read}\;\Varid{r},\Conid{Read}\;\Varid{s})}\Lolly ()\RLolly\constraintfont{\Conid{Read}\;\Varid{p}}))\to {}\<[E]%
\\
\>[3]{}\hsindent{4}{}\<[7]%
\>[7]{}\exists\;\Varid{q}.{}\<[18]%
\>[18]{}\constraintfont{(\Conid{Read}\;\Varid{o},\Conid{Read}\;\Varid{q})}\Lolly {}\<[E]%
\\
\>[18]{}(\Conid{Ur}\;(\Conid{Slice}\;\Varid{a}\;\Varid{o},\Conid{Slice}\;\Varid{a}\;\Varid{q}),\constraintfont{(\Conid{Read}\;\Varid{o},\Conid{Read}\;\Varid{q})}\Lolly ()\RLolly\constraintfont{\Conid{Read}\;\Varid{n}})\RLolly\constraintfont{(\Conid{Read}\;\Varid{o},\Conid{Read}\;\Varid{q})}{}\<[E]%
\ColumnHook
\end{hscode}\resethooks
  \caption{Nested array \textsc{api}}
  \label{fig:narray-api}
\end{figure}

\unsure{In the figure: A \ensuremath{\Conid{SliceSR}} function is missing. We don't show its use, but should it be
  there for symmetry? Functions such as \ensuremath{\Conid{WriteS}} are also missing}
This \textsc{api}, which we repeat in full in \cref{fig:narray-api},
is enough to implement the \ensuremath{\Conid{Matrix}} type from \cref{sec:valiant}.
Like arrays, matrices are implemented as a disjunction
\begin{hscode}\SaveRestoreHook
\column{B}{@{}>{\hspre}l<{\hspost}@{}}%
\column{3}{@{}>{\hspre}l<{\hspost}@{}}%
\column{E}{@{}>{\hspre}l<{\hspost}@{}}%
\>[B]{}\mathbf{data}\;\Conid{Matrix}\;\Varid{a}\;\Varid{n}{}\<[E]%
\\
\>[B]{}\hsindent{3}{}\<[3]%
\>[3]{}\mathrel{=}\Conid{OwnedMatrix}\;(\Conid{NArray}\;(\Conid{NArray}\;(\Conid{Val}\;\Varid{a}))\;\Varid{n}){}\<[E]%
\\
\>[B]{}\hsindent{3}{}\<[3]%
\>[3]{}\mid \Conid{SliceMatrix}\;(\Conid{Slice}\;(\Conid{Slice}\;(\Conid{Val}\;\Varid{a}))\;\Varid{n}){}\<[E]%
\ColumnHook
\end{hscode}\resethooks

We can see the deep borrowing and slicing functions at work, for instance, in
the \ensuremath{\Varid{sliceMatrixV}} function, which splits a matrix vertically into a top matrix
and a bottom matrix.
\begin{hscode}\SaveRestoreHook
\column{B}{@{}>{\hspre}l<{\hspost}@{}}%
\column{3}{@{}>{\hspre}l<{\hspost}@{}}%
\column{5}{@{}>{\hspre}l<{\hspost}@{}}%
\column{18}{@{}>{\hspre}l<{\hspost}@{}}%
\column{30}{@{}>{\hspre}l<{\hspost}@{}}%
\column{E}{@{}>{\hspre}l<{\hspost}@{}}%
\>[B]{}\Varid{sliceMatrixV}\mathbin{::}{}\<[18]%
\>[18]{}\constraintfont{\Conid{RW}\;\Varid{n}}\Lolly \Conid{Matrix}\;\Varid{a}\;\Varid{n}\to \Conid{Int}\to {}\<[E]%
\\
\>[18]{}\exists\;\Varid{p}\;\Varid{q}.(\Conid{Ur}\;(\Conid{Matrix}\;\Varid{p},\Conid{Matrix}\;\Varid{q}),\constraintfont{(\Conid{RW}\;\Varid{p},\Conid{RW}\;\Varid{q})}\Lolly {}\<[E]%
\\
\>[18]{}\hsindent{12}{}\<[30]%
\>[30]{}()\RLolly\constraintfont{\Conid{RW}\;\Varid{n}})\RLolly\constraintfont{(\Conid{RW}\;\Varid{p},\Conid{RW}\;\Varid{q})}{}\<[E]%
\\
\>[B]{}\Varid{sliceMatrixV}\;(\Conid{OwnedMatrix}\;\Varid{m})\;\Varid{row\char95 num}\mathrel{=}\Conid{Linearly}.\mathbf{do}{}\<[E]%
\\
\>[B]{}\hsindent{3}{}\<[3]%
\>[3]{}(\Conid{Ur}\;\Varid{m'},\Varid{release\char95 top})\leftarrow \Varid{fullSlice}\;\Varid{m}{}\<[E]%
\\
\>[B]{}\hsindent{3}{}\<[3]%
\>[3]{}(\Conid{Ur}\;\Varid{m''},\Varid{release\char95 top\char95 deep})\leftarrow \Varid{borrowDeep}\;\Varid{m'}\;\Varid{fullSlice}{}\<[E]%
\\
\>[B]{}\hsindent{3}{}\<[3]%
\>[3]{}(\Conid{Ur}\;(\Varid{t},\Varid{b}),\Varid{release\char95 slices})\leftarrow \Varid{sliceDeep}\;\Varid{m''}\;(\lambda \Varid{col}\to \Varid{slice}\;\Varid{col}\;\Varid{row\char95 num}){}\<[E]%
\\
\>[B]{}\hsindent{3}{}\<[3]%
\>[3]{}\Varid{\Conid{Linearly}.return}\;({}\<[E]%
\\
\>[3]{}\hsindent{2}{}\<[5]%
\>[5]{}\Conid{Ur}\;(\Conid{SliceMatrix}\;\Varid{t},\Conid{SliceMatrix}\;\Varid{b}),{}\<[E]%
\\
\>[3]{}\hsindent{2}{}\<[5]%
\>[5]{}\Conid{Linearly}.\mathbf{do}\;\{\mskip1.5mu \Varid{release\char95 slices};\Varid{release\char95 top\char95 deep};\Varid{release\char95 top}\mskip1.5mu\}){}\<[E]%
\\
\>[B]{}\Varid{sliceMatrixV}\;(\Conid{SliceMatrix}\;\Varid{m})\;\Varid{row\char95 num}\mathrel{=}\Conid{Linearly}.\mathbf{do}{}\<[E]%
\\
\>[B]{}\hsindent{3}{}\<[3]%
\>[3]{}(\Conid{Ur}\;(\Varid{t},\Varid{b}),\Varid{release\char95 slices})\leftarrow \Varid{sliceDeep}\;\Varid{m}\;(\lambda \Varid{col}\to \Varid{slice}\;\Varid{col}\;\Varid{row\char95 num}){}\<[E]%
\\
\>[B]{}\hsindent{3}{}\<[3]%
\>[3]{}\Varid{\Conid{Linearly}.return}\;(\Conid{Ur}\;(\Conid{SliceMatrix}\;\Varid{t},\Conid{SliceMatrix}\;\Varid{b}),\Varid{release\char95 slices}){}\<[E]%
\ColumnHook
\end{hscode}\resethooks
Notice how we use \ensuremath{\Varid{borrowDeep}} in the \ensuremath{\Conid{OwnedMatrix}} case to convert an \ensuremath{\Conid{Array}\;(\Conid{Array}\;\anonymous )} into a \ensuremath{\Conid{Slice}\;(\Conid{Slice}\;\anonymous )}. The rest of the matrix functions follow
easily.

As promised our nested array \textsc{api} is general enough to support slicing
along both dimension of matrices. Of course, there's nothing special about
matrices, higher dimension tensors would work as well. The key to all this is
that all this flexibility follows from the theory of linear constraints, which
is a natural extension of Haskell's constraints. Constraints can be captured in
closure, and in existential types (or \textsc{gadt}s in real-life Haskell),
which affords us a lot of programming techniques. We don't have to develop any
\emph{ad hoc} theory of capabilities.

\subsection{Behind the curtain: implementing slices}
\label{sec:narray-impl}

One may naturally ask how a library providing the capabilities such as the
nested array \textsc{api} of \cref{fig:narray-api} can be implemented. In this
section we will do exactly that.

Capabilities like \ensuremath{\Conid{Read}} and \ensuremath{\Conid{Write}} carry no intrinsic meaning for the
compiler. They are simply constraints, much like type-class constraints. Their
interpretation is determined by the library that implements them. To users of
the library these constraints remain abstract, but now we look inside the
library to see how to implement their concrete definitions. In particular, we
will require the following primitives:

\begin{hscode}\SaveRestoreHook
\column{B}{@{}>{\hspre}l<{\hspost}@{}}%
\column{E}{@{}>{\hspre}l<{\hspost}@{}}%
\>[B]{}\Varid{unsafeRead}\mathbin{::}()\RLolly\constraintfont{\Conid{Read}\;\Varid{n}}{}\<[E]%
\\
\>[B]{}\Varid{unsafeWrite}\mathbin{::}()\RLolly\constraintfont{\Conid{Write}\;\Varid{n}}{}\<[E]%
\\
\>[B]{}\Varid{unsafeRW}\mathbin{::}()\RLolly\constraintfont{\Conid{RW}\;\Varid{n}}{}\<[E]%
\\
\>[B]{}\Varid{unsafeRW}\mathrel{=}\Conid{Linearly}.\mathbf{do}\;\{\mskip1.5mu \Varid{unsafeRead};\Varid{unsafeWrite};\Varid{\Conid{Linear}.return}\;()\mskip1.5mu\}{}\<[E]%
\\[\blanklineskip]%
\>[B]{}\Varid{unsafeConsumeRead}\mathbin{::}\constraintfont{\Conid{Read}\;\Varid{n}}\Lolly (){}\<[E]%
\\
\>[B]{}\Varid{unsafeConsumeWrite}\mathbin{::}\constraintfont{\Conid{Write}\;\Varid{n}}\Lolly (){}\<[E]%
\\
\>[B]{}\Varid{unsafeConsumeRW}\mathbin{::}\constraintfont{\Conid{RW}\;\Varid{n}}\Lolly (){}\<[E]%
\\
\>[B]{}\Varid{unsafeConsumeRW}\mathrel{=}\Conid{Linearly}.\mathbf{do}\;\{\mskip1.5mu \Varid{unsafeConsumeRead};\Varid{unsafeConsumeWrite};\Varid{\Conid{Linear}.return}\;()\mskip1.5mu\}{}\<[E]%
\ColumnHook
\end{hscode}\resethooks

In this section, we will assume the functions \ensuremath{\Varid{newNArray}}, \ensuremath{\Varid{borrowNA}},
\ensuremath{\Varid{borrowNAR}}, and \ensuremath{\Varid{writeNA}}, and will implement the rest of
\cref{fig:narray-api}.

To implement slices, we will use an object-oriented definition functions are a
powerful and convenient abstraction. An implementation aiming for
efficiency would have to use a more specialised implementation (like above,
\ensuremath{\Conid{UnsafeMkSlice}} isn't part of the public interface).
\begin{hscode}\SaveRestoreHook
\column{B}{@{}>{\hspre}l<{\hspost}@{}}%
\column{3}{@{}>{\hspre}l<{\hspost}@{}}%
\column{12}{@{}>{\hspre}c<{\hspost}@{}}%
\column{12E}{@{}l@{}}%
\column{16}{@{}>{\hspre}l<{\hspost}@{}}%
\column{E}{@{}>{\hspre}l<{\hspost}@{}}%
\>[B]{}\mathbf{type}\;\Conid{Slice}\mathbin{::}(\Conid{Reg}\to \Conid{Type})\to \Conid{Reg}\to \Conid{Type}{}\<[E]%
\\
\>[B]{}\mathbf{data}\;\Conid{Slice}\;\Varid{a}\;\Varid{n}\mathrel{=}\Conid{UnsafeMkSlice}\;\{\mskip1.5mu {}\<[E]%
\\
\>[B]{}\hsindent{3}{}\<[3]%
\>[3]{}\Varid{borrowS}{}\<[12]%
\>[12]{}\mathbin{::}{}\<[12E]%
\>[16]{}\constraintfont{\Conid{RW}\;\Varid{n}}\Lolly \Conid{Int}\to \exists\;\Varid{p}.(\Conid{Ur}\;(\Varid{a}\;\Varid{p}),(\constraintfont{\Conid{RW}\;\Varid{p}}\Lolly ()\RLolly\constraintfont{\Conid{RW}\;\Varid{n}}))\RLolly\constraintfont{\Conid{RW}\;\Varid{p}},{}\<[E]%
\\
\>[B]{}\hsindent{3}{}\<[3]%
\>[3]{}\Varid{writeS}{}\<[12]%
\>[12]{}\mathbin{::}{}\<[12E]%
\>[16]{}\forall\;\Varid{p}.\constraintfont{(\Conid{RW}\;\Varid{n},\Conid{RW}\;\Varid{p})}\Lolly \Conid{Int}\to \Varid{a}\;\Varid{p}\to ()\RLolly\constraintfont{\Conid{RW}\;\Varid{n}}\mskip1.5mu\}{}\<[E]%
\ColumnHook
\end{hscode}\resethooks
Converting an array into a slice is straightforward.
\begin{hscode}\SaveRestoreHook
\column{B}{@{}>{\hspre}l<{\hspost}@{}}%
\column{3}{@{}>{\hspre}c<{\hspost}@{}}%
\column{3E}{@{}l@{}}%
\column{6}{@{}>{\hspre}l<{\hspost}@{}}%
\column{E}{@{}>{\hspre}l<{\hspost}@{}}%
\>[B]{}\Varid{fullSlice}\mathbin{::}\constraintfont{\Conid{RW}\;\Varid{n}}\Lolly \Conid{NArray}\;\Varid{a}\;\Varid{n}\to \exists\;\Varid{p}.(\Conid{Ur}\;(\Conid{Slice}\;\Varid{a}\;\Varid{p}),\constraintfont{\Conid{RW}\;\Varid{p}}\Lolly ()\RLolly\constraintfont{\Conid{RW}\;\Varid{n}})\RLolly\constraintfont{\Conid{RW}\;\Varid{p}}{}\<[E]%
\\
\>[B]{}\Varid{fullSlice}\;\Varid{as}\mathrel{=}\Varid{\Conid{Linearly}.return}{}\<[E]%
\\
\>[B]{}\hsindent{3}{}\<[3]%
\>[3]{}({}\<[3E]%
\>[6]{}(\Conid{UnsafeMkSlice}\;\{\mskip1.5mu \Varid{borrowS}\mathrel{=}\Varid{borrowNA}\;\Varid{as},\Varid{writeS}\mathrel{=}\Varid{writeNA}\;\Varid{as}\mskip1.5mu\}),{}\<[E]%
\\
\>[6]{}\Varid{\Conid{Linearly}.return}\;()){}\<[E]%
\ColumnHook
\end{hscode}\resethooks
In this implementation $\ensuremath{\Varid{p}}=\ensuremath{\Varid{n}}$, but this information is, crucially,
unavailable to client code, because \ensuremath{\Varid{p}} is existentially quantified. This
explains why the release function is trivial.

The implementation of \ensuremath{\Varid{sliceS}} is more interesting\unsure{The type applications
  are a bit of a lie here. But this is already quite a complicated function, so
  I don't know whether we should go even more in the weeds. In fact, I don't
  even think the text of this function really helps at anything. It was meant as
  evidence that, yes, this can be done, but I don't like it anymore.}
\begin{hscode}\SaveRestoreHook
\column{B}{@{}>{\hspre}l<{\hspost}@{}}%
\column{4}{@{}>{\hspre}l<{\hspost}@{}}%
\column{7}{@{}>{\hspre}l<{\hspost}@{}}%
\column{9}{@{}>{\hspre}l<{\hspost}@{}}%
\column{11}{@{}>{\hspre}l<{\hspost}@{}}%
\column{12}{@{}>{\hspre}l<{\hspost}@{}}%
\column{13}{@{}>{\hspre}l<{\hspost}@{}}%
\column{14}{@{}>{\hspre}l<{\hspost}@{}}%
\column{15}{@{}>{\hspre}l<{\hspost}@{}}%
\column{E}{@{}>{\hspre}l<{\hspost}@{}}%
\>[B]{}\Varid{sliceS}\mathbin{::}{}\<[12]%
\>[12]{}\constraintfont{\Conid{RW}\;\Varid{n}}\Lolly \Conid{Slice}\;\Varid{a}\;\Varid{n}\to \Conid{Int}\to {}\<[E]%
\\
\>[12]{}\exists\;\Varid{p}\;\Varid{q}.(\Conid{Ur}\;(\Conid{Slice}\;\Varid{a}\;\Varid{p},\Conid{Slice}\;\Varid{a}\;\Varid{q}),(\constraintfont{\Conid{RW}\;\Varid{p},\Conid{RW}\;\Varid{q}}\Lolly ()\RLolly\constraintfont{\Conid{RW}\;\Varid{n}}))\RLolly\constraintfont{\Conid{RW}\;\Varid{p},\Conid{RW}\;\Varid{q}}{}\<[E]%
\\
\>[B]{}\Varid{sliceS}\;\Varid{as}\;\Varid{i}\mathrel{=}\Varid{\Conid{Linearly}.return}{}\<[E]%
\\
\>[B]{}({}\<[4]%
\>[4]{}\Conid{Ur}{}\<[E]%
\\
\>[4]{}(\Conid{UnsafeMkSlice}\;\{\mskip1.5mu {}\<[E]%
\\
\>[4]{}\hsindent{7}{}\<[11]%
\>[11]{}\Varid{borrowS}\mathrel{=}(\lambda \Varid{j}\to {}\<[E]%
\\
\>[11]{}\hsindent{2}{}\<[13]%
\>[13]{}\mathbf{if}\;\Varid{j}\mathbin{>}\Varid{i}\;\mathbf{then}{}\<[E]%
\\
\>[13]{}\hsindent{2}{}\<[15]%
\>[15]{}\Varid{error}\;\text{\ttfamily \char34 Index~out~of~bound\char34}{}\<[E]%
\\
\>[11]{}\hsindent{2}{}\<[13]%
\>[13]{}\mathbf{else}\;\Conid{Linearly}.\mathbf{do}{}\<[E]%
\\
\>[13]{}\hsindent{2}{}\<[15]%
\>[15]{}\Varid{unsafeRw}\mathord{@}\Varid{n}{}\<[E]%
\\
\>[13]{}\hsindent{2}{}\<[15]%
\>[15]{}\Varid{borrowS}\;\Varid{as}\;\Varid{j}),{}\<[E]%
\\
\>[4]{}\hsindent{7}{}\<[11]%
\>[11]{}\Varid{writeS}\mathrel{=}(\lambda \Varid{j}\;\Varid{v}\to {}\<[E]%
\\
\>[11]{}\hsindent{2}{}\<[13]%
\>[13]{}\mathbf{if}\;\Varid{j}\mathbin{>}\Varid{i}\;\mathbf{then}{}\<[E]%
\\
\>[13]{}\hsindent{2}{}\<[15]%
\>[15]{}\Varid{error}\;\text{\ttfamily \char34 Index~out~of~bound\char34}{}\<[E]%
\\
\>[11]{}\hsindent{2}{}\<[13]%
\>[13]{}\mathbf{else}{}\<[E]%
\\
\>[13]{}\hsindent{2}{}\<[15]%
\>[15]{}\Varid{unsafeRw}\mathord{@}\Varid{n}{}\<[E]%
\\
\>[13]{}\hsindent{2}{}\<[15]%
\>[15]{}\Varid{writeS}\;\Varid{as}\;\Varid{j}\;\Varid{v})\mskip1.5mu\},{}\<[E]%
\\
\>[4]{}\hsindent{5}{}\<[9]%
\>[9]{}\Conid{UnsafeMkSlice}\;\{\mskip1.5mu {}\<[E]%
\\
\>[9]{}\hsindent{2}{}\<[11]%
\>[11]{}\Varid{borrowS}\mathrel{=}(\lambda \Varid{j}\to \Conid{Linearly}.\mathbf{do}{}\<[E]%
\\
\>[11]{}\hsindent{3}{}\<[14]%
\>[14]{}\Varid{unsafeRW}\mathord{@}\Varid{n}{}\<[E]%
\\
\>[11]{}\hsindent{3}{}\<[14]%
\>[14]{}\Varid{borrowS}\;\Varid{as}\;(\Varid{i}\mathbin{+}\Varid{j})),{}\<[E]%
\\
\>[9]{}\hsindent{2}{}\<[11]%
\>[11]{}\Varid{writeS}\mathrel{=}(\lambda \Varid{j}\;\Varid{v}\to \Conid{Linearly}.\mathbf{do}{}\<[E]%
\\
\>[11]{}\hsindent{2}{}\<[13]%
\>[13]{}\Varid{unsafeRW}\mathord{@}\Varid{n}{}\<[E]%
\\
\>[11]{}\hsindent{2}{}\<[13]%
\>[13]{}\Varid{writeS}\;\Varid{as}\;(\Varid{i}\mathbin{+}\Varid{j})\;\Varid{v})\mskip1.5mu\}),{}\<[E]%
\\
\>[4]{}\hsindent{3}{}\<[7]%
\>[7]{}(\Conid{Linearly}.\mathbf{do}\;\{\mskip1.5mu \Varid{unsafeConsumeRW}\mathord{@}\Varid{p};\Varid{unsafeConsumeRW}\mathord{@}\Varid{q};\Varid{\Conid{Linearly}.return}\;()\mskip1.5mu\})){}\<[E]%
\ColumnHook
\end{hscode}\resethooks
Note that the implementation uses unsafe creation and consumption of
capabilities, since it creates a new access pattern. As the library author, we
must, in exchange, ensure that this access pattern doesn't create aliases.

The deep borrows is where we really leverage the object-oriented definition of
slices. The idea is
simple: we delay actually calling the inner borrow function until we borrow an
inner value from the slice.
\begin{hscode}\SaveRestoreHook
\column{B}{@{}>{\hspre}l<{\hspost}@{}}%
\column{3}{@{}>{\hspre}l<{\hspost}@{}}%
\column{7}{@{}>{\hspre}l<{\hspost}@{}}%
\column{9}{@{}>{\hspre}l<{\hspost}@{}}%
\column{10}{@{}>{\hspre}l<{\hspost}@{}}%
\column{16}{@{}>{\hspre}l<{\hspost}@{}}%
\column{17}{@{}>{\hspre}l<{\hspost}@{}}%
\column{28}{@{}>{\hspre}c<{\hspost}@{}}%
\column{28E}{@{}l@{}}%
\column{32}{@{}>{\hspre}l<{\hspost}@{}}%
\column{E}{@{}>{\hspre}l<{\hspost}@{}}%
\>[B]{}\Varid{borrowDeep}\mathbin{::}{}\<[16]%
\>[16]{}\constraintfont{\Conid{RW}\;\Varid{n}}\Lolly \Conid{Slice}\;\Varid{a}\;\Varid{n}\to {}\<[E]%
\\
\>[16]{}(\forall\;\Varid{p}.\constraintfont{\Conid{RW}\;\Varid{p}}\Lolly \Varid{a}\;\Varid{p}\to \exists\;\Varid{q}.(\Conid{Ur}\;(\Varid{b}\;\Varid{p}),(\constraintfont{\Conid{RW}\;\Varid{b}}\Lolly ()\RLolly\constraintfont{\Conid{RW}\;\Varid{p}})))\to {}\<[E]%
\\
\>[16]{}\exists\;\Varid{r}.(\Conid{Ur}\;(\Conid{Slice}\;\Varid{b}\;\Varid{r}),\constraintfont{\Conid{RW}\;\Varid{r}}\Lolly ()\RLolly\constraintfont{\Conid{RW}\;\Varid{n}}){}\<[E]%
\\
\>[B]{}\Varid{borrowDeep}\;\Varid{as}\;\Varid{brw}\mathrel{=}\Varid{\Conid{Linearly}.return}\;({}\<[E]%
\\
\>[B]{}\hsindent{3}{}\<[3]%
\>[3]{}\Conid{UnsafeMkSlice}\;\{\mskip1.5mu {}\<[E]%
\\
\>[3]{}\hsindent{4}{}\<[7]%
\>[7]{}\Varid{borrowS}\mathrel{=}\lambda \Varid{i}\to \Conid{Linearly}.\mathbf{do}{}\<[E]%
\\
\>[7]{}\hsindent{2}{}\<[9]%
\>[9]{}(\Conid{Ur}\;\Varid{a},\Varid{release\char95 a}){}\<[28]%
\>[28]{}\leftarrow {}\<[28E]%
\>[32]{}\Varid{borrowS}\;\Varid{as}\;\Varid{i}{}\<[E]%
\\
\>[7]{}\hsindent{2}{}\<[9]%
\>[9]{}(\Conid{Ur}\;\Varid{b},\Varid{release\char95 b}){}\<[28]%
\>[28]{}\leftarrow {}\<[28E]%
\>[32]{}\Varid{brw}\;\Varid{a}{}\<[E]%
\\
\>[7]{}\hsindent{2}{}\<[9]%
\>[9]{}\Varid{\Conid{Linearly}.return}\;(\Conid{Ur}\;\Varid{b},\Conid{Linearly}.\mathbf{do}\;\{\mskip1.5mu \Varid{release\char95 b};\Varid{release\char95 a}\mskip1.5mu\}),{}\<[E]%
\\
\>[3]{}\hsindent{4}{}\<[7]%
\>[7]{}\Varid{writeS}\mathrel{=}{}\<[E]%
\\
\>[7]{}\hsindent{3}{}\<[10]%
\>[10]{}\Varid{error}\;\text{\ttfamily \char34 Attempt~to~write~to~a~borrowed~space\char34}\mskip1.5mu\},{}\<[E]%
\\
\>[B]{}\hsindent{3}{}\<[3]%
\>[3]{}(\Varid{\Conid{Linearly}.return}\;())){}\<[E]%
\\[\blanklineskip]%
\>[B]{}\Varid{borrowDeepR}\mathbin{::}{}\<[17]%
\>[17]{}\constraintfont{\Conid{Read}\;\Varid{n}}\Lolly \Conid{Slice}\;\Varid{a}\;\Varid{n}\to {}\<[E]%
\\
\>[17]{}(\forall\;\Varid{p}.\constraintfont{\Conid{Read}\;\Varid{p}}\Lolly \Varid{a}\;\Varid{p}\to \exists\;\Varid{q}.(\Conid{Ur}\;(\Varid{b}\;\Varid{p}),(\constraintfont{\Conid{Read}\;\Varid{b}}\Lolly ()\RLolly\constraintfont{\Conid{Read}\;\Varid{p}})))\to {}\<[E]%
\\
\>[17]{}\exists\;\Varid{r}.(\Conid{Ur}\;(\Conid{Slice}\;\Varid{b}\;\Varid{r}),\constraintfont{\Conid{Read}\;\Varid{r}}\Lolly ()\RLolly\constraintfont{\Conid{Read}\;\Varid{n}}){}\<[E]%
\\
\>[B]{}\Varid{borrowDeepR}\mathrel{=}\mathbin{…}\mbox{\onelinecomment  Mutatis mutandis}{}\<[E]%
\ColumnHook
\end{hscode}\resethooks
Note how the write method is an error after a deep borrow. It doesn't make sense
to try to write a \ensuremath{\Varid{b}} in a cell that really holds an \ensuremath{\Varid{a}}. In our case \ensuremath{\Varid{b}} will
always be \ensuremath{\Conid{Slice}\;\Varid{c}} for some \ensuremath{\Varid{c}}, so \ensuremath{\Varid{writeS}} would have to replace a slice of
an array wholesale, this isn't an operation which can be performed without
copying the inner array, while \ensuremath{\Varid{writeS}} is supposed to be of $O(1)$ write
operation. Fortunately, slices are always borrowed, and attempting to write a
slice to an array cell would result in a static error: the release operator
hasn't been called.

Similarly we can lift a slicing function on the inner \ensuremath{\Varid{a}} with the following
functions
\begin{hscode}\SaveRestoreHook
\column{B}{@{}>{\hspre}l<{\hspost}@{}}%
\column{3}{@{}>{\hspre}l<{\hspost}@{}}%
\column{5}{@{}>{\hspre}l<{\hspost}@{}}%
\column{7}{@{}>{\hspre}l<{\hspost}@{}}%
\column{9}{@{}>{\hspre}l<{\hspost}@{}}%
\column{11}{@{}>{\hspre}l<{\hspost}@{}}%
\column{15}{@{}>{\hspre}l<{\hspost}@{}}%
\column{16}{@{}>{\hspre}l<{\hspost}@{}}%
\column{18}{@{}>{\hspre}l<{\hspost}@{}}%
\column{19}{@{}>{\hspre}l<{\hspost}@{}}%
\column{39}{@{}>{\hspre}l<{\hspost}@{}}%
\column{E}{@{}>{\hspre}l<{\hspost}@{}}%
\>[B]{}\Varid{sliceDeep}{}\<[E]%
\\
\>[B]{}\hsindent{3}{}\<[3]%
\>[3]{}\mathbin{::}{}\<[7]%
\>[7]{}\constraintfont{\Conid{RW}\;\Varid{n}}\Lolly \Conid{Slice}\;\Varid{a}\;\Varid{n}\to {}\<[E]%
\\
\>[7]{}(\forall\;\Varid{p}.{}\<[19]%
\>[19]{}\constraintfont{\Conid{RW}\;\Varid{p}}\Lolly \Varid{a}\;\Varid{p}\to {}\<[E]%
\\
\>[19]{}\exists\;\Varid{r}\;\Varid{s}.\constraintfont{(\Conid{RW}\;\Varid{r},\Conid{RW}\;\Varid{s})}\Lolly (\Conid{Ur}\;(\Varid{a}\;\Varid{r},\Varid{a}\;\Varid{s}),\constraintfont{(\Conid{RW}\;\Varid{r},\Conid{RW}\;\Varid{s})}\Lolly ()\RLolly\constraintfont{\Conid{RW}\;\Varid{p}}))\to {}\<[E]%
\\
\>[7]{}\exists\;\Varid{q}.{}\<[18]%
\>[18]{}\constraintfont{(\Conid{RW}\;\Varid{o},\Conid{RW}\;\Varid{q})}\Lolly {}\<[E]%
\\
\>[18]{}(\Conid{Ur}\;(\Conid{Slice}\;\Varid{a}\;\Varid{o},\Conid{Slice}\;\Varid{a}\;\Varid{q}),\constraintfont{(\Conid{RW}\;\Varid{o},\Conid{RW}\;\Varid{q})}\Lolly ()\RLolly\constraintfont{\Conid{RW}\;\Varid{n}})\RLolly\constraintfont{(\Conid{RW}\;\Varid{o},\Conid{RW}\;\Varid{q})}{}\<[E]%
\\
\>[B]{}\Varid{sliceDeep}\;\Varid{as}\;\Varid{slc}\mathrel{=}\Varid{\Conid{Linearly}.return}\;({}\<[E]%
\\
\>[B]{}\hsindent{3}{}\<[3]%
\>[3]{}(\Conid{Ur}{}\<[E]%
\\
\>[3]{}\hsindent{2}{}\<[5]%
\>[5]{}(\Conid{UnsafeMkSlice}\;{}\<[E]%
\\
\>[5]{}\hsindent{2}{}\<[7]%
\>[7]{}\{\mskip1.5mu \Varid{borrowS}\mathrel{=}(\lambda \Varid{rwn'}\;\Varid{i}\to \Conid{Linearly}.\mathbf{do}{}\<[E]%
\\
\>[7]{}\hsindent{4}{}\<[11]%
\>[11]{}(\Conid{Ur}\;\Varid{a},\Varid{release\char95 a})\leftarrow \Varid{borrowS}\;\Varid{as}\;\Varid{rwn'}\;\Varid{i}{}\<[E]%
\\
\>[7]{}\hsindent{4}{}\<[11]%
\>[11]{}(\Conid{Ur}\;(\Varid{l},\anonymous ),\Varid{release\char95 lr})\leftarrow \Varid{slc}\;\Varid{a}{}\<[E]%
\\
\>[7]{}\hsindent{4}{}\<[11]%
\>[11]{}\Varid{\Conid{Linearly}.return}\;(\Conid{Ur}\;\Varid{l},\Conid{Linearly}.\mathbf{do}\;\{\mskip1.5mu \Varid{release\char95 lr};\Varid{release\char95 a}\mskip1.5mu\})),{}\<[E]%
\\
\>[7]{}\hsindent{2}{}\<[9]%
\>[9]{}\Varid{writeS}\mathrel{=}{}\<[E]%
\\
\>[9]{}\hsindent{2}{}\<[11]%
\>[11]{}\Varid{error}\;\text{\ttfamily \char34 Attempt~to~write~to~a~borrowed~space\char34}\mskip1.5mu\}){}\<[E]%
\\
\>[3]{}\hsindent{2}{}\<[5]%
\>[5]{}(\Conid{UnsafeMkSlice}\;{}\<[E]%
\\
\>[5]{}\hsindent{2}{}\<[7]%
\>[7]{}\{\mskip1.5mu \Varid{borrowS}\mathrel{=}(\lambda \Varid{rwn'}\;\Varid{i}\to \Conid{Linearly}.\mathbf{do}{}\<[E]%
\\
\>[7]{}\hsindent{4}{}\<[11]%
\>[11]{}(\Conid{Ur}\;\Varid{a},\Varid{release\char95 a})\leftarrow \Varid{borrowS}\;\Varid{as}\;\Varid{rwn'}\;\Varid{i}{}\<[E]%
\\
\>[7]{}\hsindent{4}{}\<[11]%
\>[11]{}(\Conid{Ur}\;(\anonymous ,\Varid{r}),\Varid{release\char95 lr})\leftarrow \Varid{slc}\;\Varid{a}{}\<[E]%
\\
\>[7]{}\hsindent{4}{}\<[11]%
\>[11]{}\Varid{\Conid{Linearly}.return}\;(\Conid{Ur}\;\Varid{r},\Conid{Linearly}.\mathbf{do}\;\{\mskip1.5mu \Varid{release\char95 lr};\Varid{release\char95 a}\mskip1.5mu\})),{}\<[E]%
\\
\>[7]{}\hsindent{2}{}\<[9]%
\>[9]{}\Varid{writeS}\mathrel{=}{}\<[E]%
\\
\>[9]{}\hsindent{2}{}\<[11]%
\>[11]{}\Varid{error}\;\text{\ttfamily \char34 Attempt~to~write~to~a~borrowed~space\char34}\mskip1.5mu\}){}\<[E]%
\\
\>[B]{}\hsindent{3}{}\<[3]%
\>[3]{}),{}\<[E]%
\\
\>[B]{}\hsindent{3}{}\<[3]%
\>[3]{}(\Conid{Linearly}.\mathbf{do}\;\{\mskip1.5mu \Varid{unsafeConsumeRW}\mathord{@}\Varid{p};{}\<[39]%
\>[39]{}\Varid{unsafeConsumeRW}\mathord{@}\Varid{q};\Varid{\Conid{Linearly}.return}\;()\mskip1.5mu\})){}\<[E]%
\\[\blanklineskip]%
\>[B]{}\Varid{sliceDeepR}\mathbin{::}{}\<[16]%
\>[16]{}\constraintfont{\Conid{Read}\;\Varid{n}}\Lolly \Conid{Slice}\;\Varid{a}\;\Varid{n}\to {}\<[E]%
\\
\>[B]{}\hsindent{3}{}\<[3]%
\>[3]{}\mathbin{::}{}\<[7]%
\>[7]{}\constraintfont{\Conid{Read}\;\Varid{n}}\Lolly \Conid{Slice}\;\Varid{a}\;\Varid{n}\to {}\<[E]%
\\
\>[B]{}\hsindent{3}{}\<[3]%
\>[3]{}(\forall\;\Varid{p}.{}\<[15]%
\>[15]{}\constraintfont{\Conid{Read}\;\Varid{p}}\Lolly \Varid{a}\;\Varid{p}\to {}\<[E]%
\\
\>[15]{}\exists\;\Varid{r}\;\Varid{s}.\constraintfont{(\Conid{Read}\;\Varid{r},\Conid{Read}\;\Varid{s})}\Lolly (\Conid{Ur}\;(\Varid{a}\;\Varid{r},\Varid{a}\;\Varid{s}),\constraintfont{(\Conid{Read}\;\Varid{r},\Conid{Read}\;\Varid{s})}\Lolly ()\RLolly\constraintfont{\Conid{Read}\;\Varid{p}}))\to {}\<[E]%
\\
\>[3]{}\hsindent{4}{}\<[7]%
\>[7]{}\exists\;\Varid{q}.{}\<[18]%
\>[18]{}\constraintfont{(\Conid{Read}\;\Varid{o},\Conid{Read}\;\Varid{q})}\Lolly {}\<[E]%
\\
\>[18]{}(\Conid{Ur}\;(\Conid{Slice}\;\Varid{a}\;\Varid{o},\Conid{Slice}\;\Varid{a}\;\Varid{q}),\constraintfont{(\Conid{Read}\;\Varid{o},\Conid{Read}\;\Varid{q})}\Lolly ()\RLolly\constraintfont{\Conid{Read}\;\Varid{n}})\RLolly\constraintfont{(\Conid{Read}\;\Varid{o},\Conid{Read}\;\Varid{q})}{}\<[E]%
\\
\>[B]{}\Varid{sliceDeepR}\mathrel{=}\mathbin{…}\mbox{\onelinecomment  Mutatis mutandis}{}\<[E]%
\ColumnHook
\end{hscode}\resethooks
The idea is the same as deep borrowing: the slicing function is
suspended until a borrows is called. But when we borrow, we only need one of the
two produce slices, so we drop the other one immediately. Note that the \ensuremath{\Conid{Read}}
and \ensuremath{\Conid{Write}} capabilities for the dropped slice is captured by the corresponding
release function (that is the constraints of \ensuremath{\Varid{release\char95 lr}} are partially discharged).

\subsection{Freezing nested structures}
\label{sec:o1-freeze}

When working with matrices in a functional language the most typical pattern is
to use immutable matrices. Some algorithms, though, like Valiant's algorithm in
\cref{sec:valiant}, need to mutate matrices. In which case a mutable
matrix is created temporarily, then \emph{frozen} into an immutable matrix when
the algorithm is over.

In Haskell, this is embodied by the following (simplified) \textsc{api}.

\begin{hscode}\SaveRestoreHook
\column{B}{@{}>{\hspre}l<{\hspost}@{}}%
\column{E}{@{}>{\hspre}l<{\hspost}@{}}%
\>[B]{}\Varid{new}\mathbin{::}\Conid{Int}\to \Varid{a}\to \Conid{ST}\;\Varid{s}\;(\Conid{MArray}\;\Varid{s}\;\Varid{a}){}\<[E]%
\\
\>[B]{}\Varid{read}\mathbin{::}\Conid{MArray}\;\Varid{s}\;\Varid{a}\to \Conid{Int}\to \Conid{ST}\;\Varid{s}\;\Varid{a}{}\<[E]%
\\
\>[B]{}\Varid{write}\mathbin{::}\Conid{MArray}\;\Varid{s}\;\Varid{a}\to \Conid{Int}\to \Varid{a}\to \Conid{ST}\;\Varid{s}\;(){}\<[E]%
\\
\>[B]{}\Varid{unsafeFreeze}\mathbin{::}\Conid{MArray}\;\Varid{s}\;\Varid{a}\to \Conid{ST}\;\Varid{s}\;(\Conid{Array}\;\Varid{a}){}\<[E]%
\ColumnHook
\end{hscode}\resethooks

For instance the function \ensuremath{\Varid{array}} to build an immutable array from an
associating list looks like this:

\begin{hscode}\SaveRestoreHook
\column{B}{@{}>{\hspre}l<{\hspost}@{}}%
\column{3}{@{}>{\hspre}l<{\hspost}@{}}%
\column{5}{@{}>{\hspre}l<{\hspost}@{}}%
\column{E}{@{}>{\hspre}l<{\hspost}@{}}%
\>[B]{}\Varid{array}\mathbin{::}\Conid{Int}\to [\mskip1.5mu (\Conid{Int},\Varid{a})\mskip1.5mu]\to \Conid{Array}\;\Varid{a}{}\<[E]%
\\
\>[B]{}\Varid{array}\;\Varid{n}\;\Varid{assocs}\mathrel{=}\Varid{runST}\mathbin{\$}\mathbf{do}{}\<[E]%
\\
\>[B]{}\hsindent{3}{}\<[3]%
\>[3]{}\Varid{buffer}\leftarrow \Varid{newArray}\;\Varid{n}\;\bot {}\<[E]%
\\
\>[B]{}\hsindent{3}{}\<[3]%
\>[3]{}\Varid{forM\char95 }\;\Varid{assocs}\mathbin{\$}\lambda (\Varid{i},\Varid{a})\to {}\<[E]%
\\
\>[3]{}\hsindent{2}{}\<[5]%
\>[5]{}\Varid{write}\;\Varid{buffer}\;\Varid{i}\;\Varid{a}{}\<[E]%
\\
\>[B]{}\hsindent{3}{}\<[3]%
\>[3]{}\Varid{unsafeFreeze}\;\Varid{buffer}{}\<[E]%
\ColumnHook
\end{hscode}\resethooks

In \ensuremath{\Varid{array}} we see the pattern, albeit on arrays rather than matrices: create a
mutable \ensuremath{\Varid{buffer}}, mutate it as needed then freeze it into an immutable array.
Freezing is $O(1)$ because the frozen immutable array is the same memory block
as the input mutable array.

This has two shortcomings, however:
\begin{itemize}
  \item The freezing function is unsafe, in that there's nothing preventing you
        from writing \ensuremath{\Varid{buffer}} to buffer after calling \ensuremath{\Varid{unsafeFreeze}}, in which
        case you'd mutate the immutable array which shares its memory.
  \item It doesn't give a $O(1)$ way to freeze arrays of arrays. Such as our
        matrices, as described in \cref{sec:implementing-matrix}. If you have an
        \ensuremath{\Conid{MArray}\;\Varid{s}\;(\Conid{MArray}\;\Varid{s}\;\Varid{a})}, then you need to first map over the outer array
        to freeze the inner arrays.
\end{itemize}

Traditional linear types give a good solution to unsafe
freeze~\citep{LinearHaskell}. With linear constraints, we can solve both
problems handily (we compare linear solutions with and without linear
constraints in \cref{sec:freezing-in-details}). We can provide a
primitive:

\begin{hscode}\SaveRestoreHook
\column{B}{@{}>{\hspre}l<{\hspost}@{}}%
\column{E}{@{}>{\hspre}l<{\hspost}@{}}%
\>[B]{}\Varid{freezeNA}\mathbin{::}\constraintfont{(\Conid{Read}\;\Varid{n},\Conid{Write}\;\Varid{n})}\FatArrow \Conid{NArray}\;\Varid{a}\;\Varid{n}\to ()\RLolly\constraintfont{(\Conid{Ur}\;(\Conid{Read}\;\Varid{n}))}{}\<[E]%
\ColumnHook
\end{hscode}\resethooks

After calling \ensuremath{\Varid{freeze}}, we no longer have a \ensuremath{\Conid{Write}} capability on the array: it
has become immutable. In exchange we gain an unrestricted \ensuremath{\Conid{Read}} access meaning
that the array can safely be shared.

If we are willing to fix a particular nesting, we can even hide the capabilities
altogether and use a traditional Haskell style. For instance for matrices as
defined in \cref{sec:implementing-matrix}:

\begin{hscode}\SaveRestoreHook
\column{B}{@{}>{\hspre}l<{\hspost}@{}}%
\column{3}{@{}>{\hspre}l<{\hspost}@{}}%
\column{E}{@{}>{\hspre}l<{\hspost}@{}}%
\>[B]{}\mathbf{data}\;\Conid{FMatrix}\;\Varid{a}\;\mathbf{where}\;\{\mskip1.5mu \Conid{MkMatrix}\mathbin{::}\constraintfont{\Conid{Read}\;\Varid{n}}\FatArrow \Conid{Matrix}\;\Varid{a}\;\Varid{n}\to \Conid{FMatrix}\;\Varid{a}\mskip1.5mu\}{}\<[E]%
\\[\blanklineskip]%
\>[B]{}\Varid{readF}\mathbin{::}\Conid{FMatrix}\to \Conid{Int}\to \Conid{Int}\to \Conid{Ur}\;\Varid{a}{}\<[E]%
\\
\>[B]{}\Varid{readF}\;(\Conid{MkMatrix}\;\Varid{m})\;\Varid{row\char95 num}\;\Varid{col\char95 num}\mathrel{=}\mathbf{do}{}\<[E]%
\\
\>[B]{}\hsindent{3}{}\<[3]%
\>[3]{}(\Conid{Ur}\;\Varid{col},\Varid{release})\leftarrow \Varid{borrowNA}\;\Varid{m}\;\Varid{col\char95 num}{}\<[E]%
\\
\>[B]{}\hsindent{3}{}\<[3]%
\>[3]{}(\Conid{Ur}\;(\Conid{Val}\;\Varid{a}),\Varid{release\char95 col})\leftarrow \Varid{borrowNA}\;\Varid{col}\;\Varid{row\char95 num}{}\<[E]%
\\
\>[B]{}\hsindent{3}{}\<[3]%
\>[3]{}\Varid{release\char95 col}{}\<[E]%
\\
\>[B]{}\hsindent{3}{}\<[3]%
\>[3]{}\Varid{release}{}\<[E]%
\\
\>[B]{}\hsindent{3}{}\<[3]%
\>[3]{}\Conid{Ur}\;\Varid{a}{}\<[E]%
\ColumnHook
\end{hscode}\resethooks

\subsection{Beyond arrays: splay Trees}
\label{sec:splay-trees}

Another application of in-place updates are \emph{splay trees}. A
splay tree acts like a binary search tree, but at every access, the
tree structure is modified so that the recently accessed elements are
moved towards the root of the tree. This way, recently accessed
elements are faster to reach in subsequent lookups.

A tree associated with a region \ensuremath{\Varid{n}} is defined as follows:
\begin{hscode}\SaveRestoreHook
\column{B}{@{}>{\hspre}l<{\hspost}@{}}%
\column{3}{@{}>{\hspre}l<{\hspost}@{}}%
\column{E}{@{}>{\hspre}l<{\hspost}@{}}%
\>[3]{}\mathbf{data}\;\Conid{Node}\;\Varid{n}\mathrel{=}\Conid{Node}\;(\Conid{Ref}\;\Conid{Node}\;\Varid{n})\;\Conid{Int}\;(\Conid{Ref}\;\Conid{Node}\;\Varid{n})\mid \Conid{Leaf}{}\<[E]%
\ColumnHook
\end{hscode}\resethooks
The noteworthy feature is that each child is accessible via an
updateable reference. To access a tree, we need a borrowing function. We'll
assume a borrowing function which decomposes the structure as follows:\begin{hscode}\SaveRestoreHook
\column{B}{@{}>{\hspre}l<{\hspost}@{}}%
\column{3}{@{}>{\hspre}l<{\hspost}@{}}%
\column{14}{@{}>{\hspre}l<{\hspost}@{}}%
\column{25}{@{}>{\hspre}l<{\hspost}@{}}%
\column{E}{@{}>{\hspre}l<{\hspost}@{}}%
\>[3]{}\Varid{splitTree}{}\<[14]%
\>[14]{}\mathbin{::}\constraintfont{\Conid{RW}\;\Varid{n}}\FatArrow \Conid{Ref}\;\Conid{Node}\;\Varid{n}{}\<[E]%
\\
\>[14]{}\to \mathbin{∃}\Varid{p}\;\Varid{q}.{}\<[25]%
\>[25]{}(\Conid{Ref}\;\Conid{Node}\;\Varid{p},\Conid{Int},\Conid{Ref}\;\Conid{Node}\;\Varid{q},{}\<[E]%
\\
\>[25]{}\constraintfont{(\Conid{RW}\;\Varid{p},\Conid{RW}\;\Varid{q})}\Lolly ()\RLolly\constraintfont{\Conid{RW}\;\Varid{n}})\RLolly\constraintfont{(\Conid{RW}\;\Varid{p},\Conid{RW}\;\Varid{q})}{}\<[E]%
\ColumnHook
\end{hscode}\resethooks
We will regard the type \ensuremath{\Conid{Ref}} as being defined merely as an array whose length
is known to be one:\begin{hscode}\SaveRestoreHook
\column{B}{@{}>{\hspre}l<{\hspost}@{}}%
\column{3}{@{}>{\hspre}l<{\hspost}@{}}%
\column{13}{@{}>{\hspre}l<{\hspost}@{}}%
\column{E}{@{}>{\hspre}l<{\hspost}@{}}%
\>[3]{}\mathbf{type}\;\Conid{Ref}{}\<[13]%
\>[13]{}\mathbin{::}(\Conid{Reg}\to \Conid{Type})\to \Conid{Reg}\to \Conid{Type}{}\<[E]%
\\
\>[3]{}\mathbf{newtype}\;\Conid{Ref}\;\Varid{f}\;\Varid{n}\mathrel{=}\Conid{UnsafeMkRef}\;(\Conid{NArray}\;\Varid{f}\;\Varid{n}){}\<[E]%
\ColumnHook
\end{hscode}\resethooks
In this section we will be using the following functions on references:\begin{hscode}\SaveRestoreHook
\column{B}{@{}>{\hspre}l<{\hspost}@{}}%
\column{3}{@{}>{\hspre}l<{\hspost}@{}}%
\column{13}{@{}>{\hspre}l<{\hspost}@{}}%
\column{E}{@{}>{\hspre}l<{\hspost}@{}}%
\>[3]{}\Varid{newRef}{}\<[13]%
\>[13]{}\mathbin{::}\constraintfont{(\constraintfont{\Conid{Linearly}},\Conid{RW}\;\Varid{p})}\Lolly \Varid{f}\;\Varid{p}\to \exists\;\Varid{n}.\Conid{Ref}\;\Varid{f}\;\Varid{n}\RLolly\constraintfont{\Conid{RW}\;\Varid{n}}{}\<[E]%
\\
\>[3]{}\Varid{swapRefs}{}\<[13]%
\>[13]{}\mathbin{::}\constraintfont{(\Conid{RW}\;\Varid{p},\Conid{RW}\;\Varid{q})}\Lolly \Conid{Ref}\;\Varid{f}\;\Varid{p}\to \Conid{Ref}\;\Varid{f}\;\Varid{q}\to ()\RLolly\constraintfont{(\Conid{RW}\;\Varid{p},\Conid{RW}\;\Varid{q})}{}\<[E]%
\\
\>[3]{}\Varid{freeRefs}{}\<[13]%
\>[13]{}\mathbin{::}\constraintfont{\Conid{RW}\;\Varid{n}}\Lolly \Conid{Ref}\;\Varid{f}\;\Varid{n}\to (){}\<[E]%
\ColumnHook
\end{hscode}\resethooks
The implementation of \ensuremath{\Varid{newRef}} and \ensuremath{\Varid{freeRefs}} are straightforward wrappers
around \ensuremath{\Varid{newNArray}} (\cref{fig:narray-api}) and \ensuremath{\Varid{freezeNA}} (\cref{sec:o1-freeze}) respectively.
On the other hand, \ensuremath{\Varid{swapRefs}} is new: \ensuremath{\Varid{swapRefs}} exchanges the contents of two
references. It can't be implemented in terms of borrows and writes, because
writes consume a \ensuremath{\Conid{RW}\;\Varid{p}} capability but borrows require the same \ensuremath{\Conid{RW}\;\Varid{p}} to be
passed to their release function.

When accessing a node in the splay tree, it is brought towards the
root by repeated use of \ensuremath{\Varid{rotate}} operations which effectively exchange
a node with either its left or right child. There follows a schematic
representation of the rotation with the left child:
\begin{center}
  \begin{tikzpicture}[baseline=(current bounding box.center),
    level distance=.8cm,
    level 1/.style={sibling distance=.8cm},
    level 2/.style={sibling distance=.8cm}]
    \node {y}
    child {node {x}
      child {node {A}}
      child {node {B}}
    }
    child {node {C}};
  \end{tikzpicture}
  \(\quad \Rightarrow \quad\)
  \begin{tikzpicture}[baseline=(current bounding box.center),
    level distance=.8cm,
    level 1/.style={sibling distance=.8cm},
    level 2/.style={sibling distance=.8cm}]
    \node {x}
    child {node {A}}
    child {node {y}
      child {node {B}}
      child {node {C}}
    };
  \end{tikzpicture}
\end{center}

Using our ownership system, the implementation of the above rotation would
be:
\begin{hscode}\SaveRestoreHook
\column{B}{@{}>{\hspre}l<{\hspost}@{}}%
\column{3}{@{}>{\hspre}l<{\hspost}@{}}%
\column{E}{@{}>{\hspre}l<{\hspost}@{}}%
\>[B]{}\Varid{rotate}\mathbin{::}\constraintfont{(\Conid{Linearly},\Conid{RW}\;\Varid{n})}\Lolly \Conid{Ref}\;\Conid{Node}\;\Varid{n}\to ()\RLolly\constraintfont{\Conid{RW}\;\Varid{n}}{}\<[E]%
\\
\>[B]{}\Varid{rotate}\;\Varid{root}\mathrel{=}\Conid{Linearly}.\mathbf{do}{}\<[E]%
\\
\>[B]{}\hsindent{3}{}\<[3]%
\>[3]{}\mbox{\onelinecomment  create a new temporary tree for the exchange}{}\<[E]%
\\
\>[B]{}\hsindent{3}{}\<[3]%
\>[3]{}\Varid{newRoot}\leftarrow \Varid{newRef}\;\Conid{Leaf}{}\<[E]%
\\[\blanklineskip]%
\>[B]{}\hsindent{3}{}\<[3]%
\>[3]{}\mbox{\onelinecomment  Move $x$'s subtree to the new root}{}\<[E]%
\\
\>[B]{}\hsindent{3}{}\<[3]%
\>[3]{}(\Varid{yl},\Varid{y},\Varid{yr},\Varid{release\char95 root})\leftarrow \Varid{splitTree}\;\Varid{root}{}\<[E]%
\\
\>[B]{}\hsindent{3}{}\<[3]%
\>[3]{}\Varid{swapRefs}\;\Varid{newRoot}\;\Varid{l}{}\<[E]%
\\[\blanklineskip]%
\>[B]{}\hsindent{3}{}\<[3]%
\>[3]{}\mbox{\onelinecomment  Move $B$ to $y$'s left child}{}\<[E]%
\\
\>[B]{}\hsindent{3}{}\<[3]%
\>[3]{}(\Varid{xl},\Varid{x},\Varid{xr},\Varid{release\char95 new})\leftarrow \Varid{splitTree}\;\Varid{newRoot}{}\<[E]%
\\
\>[B]{}\hsindent{3}{}\<[3]%
\>[3]{}\Varid{swapRefs}\;\Varid{xr}\;\Varid{yl}{}\<[E]%
\\[\blanklineskip]%
\>[B]{}\hsindent{3}{}\<[3]%
\>[3]{}\mbox{\onelinecomment  Move $y$'s subtree to $x$'s right child}{}\<[E]%
\\
\>[B]{}\hsindent{3}{}\<[3]%
\>[3]{}\Varid{release\char95 root}{}\<[E]%
\\
\>[B]{}\hsindent{3}{}\<[3]%
\>[3]{}\Varid{swapRefs}\;\Varid{xr}\;\Varid{root}{}\<[E]%
\\[\blanklineskip]%
\>[B]{}\hsindent{3}{}\<[3]%
\>[3]{}\mbox{\onelinecomment  cleanup}{}\<[E]%
\\
\>[B]{}\hsindent{3}{}\<[3]%
\>[3]{}\Varid{release\char95 new}{}\<[E]%
\\
\>[B]{}\hsindent{3}{}\<[3]%
\>[3]{}\Varid{swapRefs}\;\Varid{root}\;\Varid{newRoot}{}\<[E]%
\\
\>[B]{}\hsindent{3}{}\<[3]%
\>[3]{}\Varid{freeRef}\;\Varid{newRoot}{}\<[E]%
\ColumnHook
\end{hscode}\resethooks

The above code mixes aspects of functional and imperative styles. Like in a
functional language, we're explicitly pattern matching on nodes structures when
accessing them, using \ensuremath{\Varid{splitTree}}. Like in an imperative programing language, we
perform assignments (via \ensuremath{\Varid{swapRefs}}) rather than allocating new data.

\subsection{Non-lexical lifetimes}
\label{sec:nl-lifetimes}
In early versions of the Rust programming language, borrows were, for all
intents and purposes, bound to lexical scopes. This turned out to be too coarse
grained, rejecting natural sound programs. This became known as the problem of
non-lexical lifetimes (see \emph{e.g.}~\cite{non-lexical-lifetime-blog}). Rust
has since implemented non-lexical lifetime, however in capability-based systems,
including capabilities encoded as linear constraints, non-lexical lifetimes are
for free.

A typical example of non-lexical lifetimes consists in the following: we have a
map attaching references to keys, if a key is present we want to update the
corresponding reference, otherwise we want to create a new reference at that
key.

We can model this problem with the following \textsc{api} (where \ensuremath{\Conid{Ref}} is the
type of mutable references from \cref{sec:splay-trees}):

\begin{hscode}\SaveRestoreHook
\column{B}{@{}>{\hspre}l<{\hspost}@{}}%
\column{9}{@{}>{\hspre}l<{\hspost}@{}}%
\column{E}{@{}>{\hspre}l<{\hspost}@{}}%
\>[B]{}\mathbf{type}\;\Conid{Map}\mathbin{::}\Conid{Type}\to \Conid{Reg}\to \Conid{Type}{}\<[E]%
\\
\>[B]{}\Varid{get}\mathbin{::}{}\<[9]%
\>[9]{}\constraintfont{\Conid{RW}\;\Varid{s}}\Lolly \Conid{Map}\;\Varid{v}\;\Varid{s}\to \Conid{Int}\to {}\<[E]%
\\
\>[9]{}\Conid{Either}\;(()\RLolly\constraintfont{\Conid{RW}\;\Varid{s}})\;(\exists\;\Varid{r}.(\Conid{Ref}\;(\Conid{Val}\;\Varid{v})\;\Varid{r},\constraintfont{\Conid{RW}\;\Varid{r}}\Lolly ()\RLolly\constraintfont{\Conid{RW}\;\Varid{s}})\RLolly\constraintfont{\Conid{RW}\;\Varid{r}}){}\<[E]%
\\
\>[B]{}\Varid{add}\mathbin{::}\constraintfont{\Conid{RW}\;\Varid{s}}\Lolly \Conid{Map}\;\Varid{v}\;\Varid{s}\to \Conid{Int}\to \Varid{v}\to ()\RLolly\constraintfont{\Conid{RW}\;\Varid{s}}{}\<[E]%
\ColumnHook
\end{hscode}\resethooks

The \ensuremath{\Varid{get}} function is the most interesting: if the key (here an \ensuremath{\Conid{Int}} is
absent), \ensuremath{\Varid{get}} returns the read and write capabilities to the map, otherwise, it
borrows the corresponding cell. With this, we can easily write a \ensuremath{\Varid{set}} function
with the above specification.

\begin{hscode}\SaveRestoreHook
\column{B}{@{}>{\hspre}l<{\hspost}@{}}%
\column{3}{@{}>{\hspre}l<{\hspost}@{}}%
\column{E}{@{}>{\hspre}l<{\hspost}@{}}%
\>[B]{}\Varid{set}\mathbin{::}\constraintfont{\Conid{RW}\;\Varid{s}}\Lolly \Conid{Map}\;\Varid{v}\;\Varid{s}\to \Conid{Int}\to \Varid{v}\to ()\RLolly\constraintfont{\Conid{RW}\;\Varid{s}}{}\<[E]%
\\
\>[B]{}\Varid{set}\;\Varid{vs}\;\Varid{i}\;\Varid{v}\mathrel{=}\mathbf{case}\;\Varid{get}\;\Varid{vs}\;\Varid{i}\;\mathbf{of}{}\<[E]%
\\
\>[B]{}\hsindent{3}{}\<[3]%
\>[3]{}\Conid{Right}\;\Varid{ev}\to \Conid{Linearly}.\mathbf{do}\;\{\mskip1.5mu (\Varid{rv},\Varid{release})\leftarrow \Varid{ev};\Varid{write}\;\Varid{rv}\;\mathrm{0}\;\Varid{v};\Varid{release}\mskip1.5mu\}{}\<[E]%
\\
\>[B]{}\hsindent{3}{}\<[3]%
\>[3]{}\Conid{Left}\;\Varid{ev}\to \Conid{Linearly}.\mathbf{do}\;\{\mskip1.5mu \Varid{ev};\Varid{add}\;\Varid{vs}\;\Varid{i}\;\Varid{v}\mskip1.5mu\}{}\<[E]%
\ColumnHook
\end{hscode}\resethooks

The difficulty, when lifetimes are lexical, is that the borrow from \ensuremath{\Varid{vs}}, which
comes from the \ensuremath{\mathbf{case}} scrutinee, remains in the \ensuremath{\Conid{Right}} branch (hence \ensuremath{\Varid{vs}} is
inaccessible here), but is written to directly in the \ensuremath{\Conid{Left}} branch. There is no
lexical scope which encompasses the scrutinee and the \ensuremath{\Conid{Right}} branch but not the
\ensuremath{\Conid{Left}} branch. With linear constraints and capabilities, it is explicit in the
type of \ensuremath{\Varid{get}} that the \ensuremath{\Conid{Left}} branch doesn't borrow, so we get this non-lexical
pattern at no cost.

\subsection{Self-referential borrows}
\label{sec:self-borrow}

Another, this one current, pattern which Rust doesn't support is what has been
known as ``self-referential borrows'' by the community. The problem is as
follow: borrowed value \ensuremath{\Varid{b}} can't outlive the (owned) value \ensuremath{\Varid{o}} it borrows from, but as
long as \ensuremath{\Varid{o}} and \ensuremath{\Varid{b}} are paired together in a record, we know that \ensuremath{\Varid{o}} doesn't
die, so \ensuremath{\Varid{b}} remains valid. A typical example is an array \ensuremath{\Varid{o}} paired with a
slice of \ensuremath{\Varid{o}}. Such a pair shouldn't have restrictions on its lifetime, but Rust
can't express the relationship between the fields.

In our presentation, this \emph{can't-outlive} requirement is enforced by the release
operator: as the release operator is linear we must use the release operator to
return, say, read and write capabilities to \ensuremath{\Varid{o}}, which must then be disposed of
some way. Until that happens, \ensuremath{\Varid{o}} lives. Making a self-referential borrow is as
easy as storing not only the original value and the borrow, but also the release
operator.

\begin{hscode}\SaveRestoreHook
\column{B}{@{}>{\hspre}l<{\hspost}@{}}%
\column{3}{@{}>{\hspre}l<{\hspost}@{}}%
\column{E}{@{}>{\hspre}l<{\hspost}@{}}%
\>[B]{}\mathbf{data}\;\Conid{Buffer}\;\Varid{s}\;\Varid{a}\;\Varid{n}\mathrel{=}{}\<[E]%
\\
\>[B]{}\hsindent{3}{}\<[3]%
\>[3]{}\Conid{MkBuffer}\mathbin{::}\Conid{NArray}\;\Varid{a}\;\Varid{s}\to \Conid{Slice}\;\Varid{a}\;\Varid{n}\to (\constraintfont{\Conid{RW}\;\Varid{n}}\Lolly ()\RLolly\constraintfont{\Conid{RW}\;\Varid{s}})⊸\Conid{Buffer}\;\Varid{s}\;\Varid{a}\;\Varid{n}{}\<[E]%
\ColumnHook
\end{hscode}\resethooks
It may be convenient to existentially quantify over the region argument \ensuremath{\Varid{s}} of
the \ensuremath{\Conid{NArray}}, making a \ensuremath{\Conid{Buffer}} look very similar to a slice.

Note that buffers define in this manner are always going to be linear values,
rather than unrestricted value with linear constraints like our arrays and
matrices so far. Here both the value and the constraints are linear. This is
because a \ensuremath{\Conid{Buffer}} contains a release operator, which is always linear. It may
make \ensuremath{\Conid{Buffer}} less convenient to use than it should, we revisit this aspect
in~\cref{sec:unrestricted-release}.

\subsection{Typestates: capabilities in monadic computations}
\label{sec:monad-capability}

So far, we have used linear constraints in conjunction with pure
algorithms, but they are just as effective to annotate effectful code.
To be able to do this, we must use a linear monad. For instance linear
\ensuremath{\Conid{IO}}, whose \ensuremath{\Varid{bind}} and \ensuremath{\Varid{return}} have the following signatures:

\begin{hscode}\SaveRestoreHook
\column{B}{@{}>{\hspre}l<{\hspost}@{}}%
\column{E}{@{}>{\hspre}l<{\hspost}@{}}%
\>[B]{}\Varid{return}\mathbin{::}\Varid{a}\mathbin{⊸}\Conid{L}\;\Varid{a}{}\<[E]%
\\
\>[B]{}\Varid{bind}\mathbin{::}\Conid{L}\;\Varid{a}\mathbin{⊸}(\Varid{a}\mathbin{⊸}\Conid{L}\;\Varid{b})\mathbin{⊸}\Conid{L}\;\Varid{b}{}\<[E]%
\ColumnHook
\end{hscode}\resethooks

In what follows, we illustrate effectful use of linear constraints
with a modified the API for a ATM machine presented by
\citet{DBLP:conf/ecoop/Brady21}. This API has three states, and
function calls can effect a transition between those, as schematized
below:
\begin{center}
  \begin{tikzpicture}[minimum width=2cm, minimum height=1cm,
    every state/.style={rectangle}]
    
    \draw (0,0) node (ready) [state] {Ready};
    \draw (10,0) node (cardinserted) [state] {CardInserted};
    \draw (5,-5) node (session) [state] {Session};

    \draw[->] (ready) --  node[above] {insertCard} (cardinserted);
    \draw[->] (cardinserted) to[bend left] node[above] {checkPIN (incorrect)} (ready);
    \draw[->] (cardinserted) -- node[right] {checkPIN (correct)} (session);
    \draw[->] (session) edge[loop above] node {dispense} ();
    \draw[->] (cardinserted) -- node[below] {ejectCard} (ready);
  \end{tikzpicture}
\end{center}
While Brady's API uses linear types, we can capture the (effectful)
transitions using linear constraints, as follows:
\begin{hscode}\SaveRestoreHook
\column{B}{@{}>{\hspre}l<{\hspost}@{}}%
\column{14}{@{}>{\hspre}l<{\hspost}@{}}%
\column{40}{@{}>{\hspre}l<{\hspost}@{}}%
\column{60}{@{}>{\hspre}l<{\hspost}@{}}%
\column{E}{@{}>{\hspre}l<{\hspost}@{}}%
\>[B]{}\Varid{initATM}\mathbin{::}\Conid{L}\;(\mathbin{∃}\Varid{s}.\Conid{ATM}\;\Varid{s}\RLolly\constraintfont{\Conid{Ready}\;\Varid{s}}){}\<[E]%
\\
\>[B]{}\Varid{shutDown}\mathbin{::}\constraintfont{\Conid{Ready}\;\Varid{s}}\Lolly \Conid{ATM}\;\Varid{s}\to \Conid{L}\;(){}\<[E]%
\\
\>[B]{}\Varid{insertCard}\mathbin{::}\constraintfont{\Conid{Ready}\;\Varid{s}}\Lolly \Conid{ATM}\;\Varid{s}\to \Conid{L}\;(()\RLolly\constraintfont{\Conid{CardInserted}\;\Varid{s}}){}\<[E]%
\\
\>[B]{}\Varid{checkPIN}\mathbin{::}{}\<[14]%
\>[14]{}\constraintfont{\Conid{CardInserted}\;\Varid{s}}\Lolly {}\<[E]%
\\
\>[14]{}\Conid{ATM}\;\Varid{s}\to \Conid{Int}\to \Conid{L}\;(\Conid{Either}\;(()\RLolly\Conid{Session}\;\Varid{s})\;{}\<[60]%
\>[60]{}(()\RLolly\constraintfont{\Conid{CardInserted}})){}\<[E]%
\\
\>[B]{}\Varid{dispense}\mathbin{::}\constraintfont{\Conid{Session}\;\Varid{s}}\Lolly {}\<[40]%
\>[40]{}\Conid{ATM}\;\Varid{s}\to \Conid{L}\;(()\RLolly\constraintfont{\Conid{Session}\;\Varid{s}}){}\<[E]%
\\
\>[B]{}\Varid{ejectCard}\mathbin{::}\constraintfont{\Conid{CardInserted}\;\Varid{s}}\Lolly \Conid{L}\;(()\RLolly\constraintfont{\Conid{Ready}\;\Varid{s}}){}\<[E]%
\\
\>[B]{}\Varid{terminateSession}\mathbin{::}\constraintfont{\Conid{Session}\;\Varid{s}}\Lolly \Conid{L}\;(()\RLolly\constraintfont{\Conid{CardInserted}\;\Varid{s}}){}\<[E]%
\\
\>[B]{}\Varid{message}\mathbin{::}\Conid{ATM}\;\Varid{s}\to \Conid{String}\to \Conid{L}\;(){}\<[E]%
\ColumnHook
\end{hscode}\resethooks
One possible use of the above API is the following:
\begin{hscode}\SaveRestoreHook
\column{B}{@{}>{\hspre}l<{\hspost}@{}}%
\column{14}{@{}>{\hspre}l<{\hspost}@{}}%
\column{16}{@{}>{\hspre}l<{\hspost}@{}}%
\column{E}{@{}>{\hspre}l<{\hspost}@{}}%
\>[B]{}\Varid{runATM}\mathbin{::}\Conid{L}\;(){}\<[E]%
\\
\>[B]{}\Varid{runATM}\mathrel{=}\mathbf{do}\;{}\<[14]%
\>[14]{}\Varid{m}\leftarrow \Varid{initATM}{}\<[E]%
\\
\>[14]{}\Varid{insertCard}\;\Varid{m}{}\<[E]%
\\
\>[14]{}\Varid{ok}\leftarrow \Varid{checkPIN}\;\Varid{m}\;\mathrm{1234}{}\<[E]%
\\
\>[14]{}\Varid{message}\;\Varid{m}\;\text{\ttfamily \char34 Checking~PIN\char34}{}\<[E]%
\\
\>[14]{}\mathbf{case}\;\Varid{ok}\;\mathbf{of}{}\<[E]%
\\
\>[14]{}\hsindent{2}{}\<[16]%
\>[16]{}\Conid{Left}\;()\to \Varid{dispense}\;\Varid{m}{}\<[E]%
\\
\>[14]{}\hsindent{2}{}\<[16]%
\>[16]{}\Conid{Right}\;()\to \Varid{terminateSession}\;\Varid{m}{}\<[E]%
\\
\>[14]{}\Varid{ejectCard}\;\Varid{m}{}\<[E]%
\\
\>[14]{}\Varid{shutDown}\;\Varid{m}{}\<[E]%
\ColumnHook
\end{hscode}\resethooks
Compared to the (plain) use of linear types of Brady, linear
constraints allows one to use a single value of type \ensuremath{\Conid{ATM}}, instead of
consuming and recreating for each call.  Note that for Brady,
\ensuremath{\Varid{ejectCard}} can work both in \ensuremath{\constraintfont{\Conid{Session}}} and \ensuremath{\constraintfont{\Conid{CardInserted}}} state. One
idea would be add an extra state \ensuremath{\constraintfont{\Conid{Ejectable}}} which is the union of
those.  This is possible for classes in Haskell; i.e. we can have
instances \ensuremath{\constraintfont{\Conid{Session}}\FatArrow \constraintfont{\Conid{Ejectable}}} and \ensuremath{\constraintfont{\Conid{CardInserted}}\FatArrow \constraintfont{\Conid{Ejectable}}}, but
such instances won't be used by our system to resolve constraints.
\section{A qualified type system for linear constraints}
\label{sec:qualified-type-system}

We now present our design for a qualified type system~\cite{QualifiedTypes} that supports
linear constraints. Our design, based on the work of \citet{OutsideIn}, is compatible with Haskell and \textsc{ghc}.

\subsection{Multiplicities}

The first concept to define formally is that of \emph{multiplicity}. Multiplicity let us generalise over the linear (\(1\)) and unrestricted ($\omega$) cases.
In general, given a set of multiplicities, the desired
(sub)structural rules can be obtained by endowing multiplicities with the appropriate
semiring structure~\citep{abel_unified_2020}. In
this paper, we use the same multiplicity structure as Linear Haskell:\footnote{Even though linear Haskell additionally supports multiplicity
polymorphism, we do not support multiplicity polymorphism on
constraint arguments.  Linear Haskell takes advantage of multiplicity
polymorphism to avoid duplication of higher-order functions. The
prototypical example is \ensuremath{\Varid{map}\mathbin{::}(\Varid{a}\;\mathop{\to_{\multiplicityfont{m}}}\;\Varid{b})\to [\mskip1.5mu \Varid{a}\mskip1.5mu]\;\mathop{\to_{\multiplicityfont{m}}}\;[\mskip1.5mu \Varid{b}\mskip1.5mu]}, where \ensuremath{\mathop{\to_{\multiplicityfont{m}}}} is the notation for a function arrow of
multiplicity \ensuremath{\multiplicityfont{\Varid{m}}}.  First-order functions, on the other
hand, do not need multiplicity polymorphism, because linear functions
can be $\eta$-expanded into unrestricted functions as explained in
\cref{sec:linear-types}. Higher-order functions whose arguments are
themselves constrained functions are rare, so we do not yet see the
need to extend multiplicity polymorphism to apply to constraints.
Futhermore, it is not clear how to extend the constraint solver of
\cref{sec:constraint-solver} to support multiplicity-polymorphic
constraints.}${}^,$\footnote{Here and in the rest of the paper we adopt the convention that
equations defining a function by pattern matching are marked with a
$\left\{\right.$ to their left.}
$$
\begin{array}{lcll}
    \multiplicityfont{ \pi }  ,   \multiplicityfont{ \rho }   & \bnfeq &   \multiplicityfont{ \ottsym{1} }   \bnfor   \multiplicityfont{ \omega }   & \text{Multiplicities}
\end{array}
$$
$$
\begin{array}{c@{\qquad\qquad}c}
\left\{
  \begin{array}{lcl}
      \multiplicityfont{ \pi  \ottsym{+}  \rho }   & = &   \multiplicityfont{ \omega }  
  \end{array}
\right.
&
\left\{
  \begin{array}{lcl}
      \multiplicityfont{  \ottsym{1} {⋅} \pi  }   & = &   \multiplicityfont{ \pi }   \\
      \multiplicityfont{  \omega {⋅} \pi  }   & = &   \multiplicityfont{ \omega }  
  \end{array}
\right.
\end{array}
$$

\subsection{Simple Constraints and Entailment}
\label{sec:constraint-domain}

We call constraints such as
\ensuremath{\constraintfont{\Conid{Read}\;\Varid{n}}} or \ensuremath{\constraintfont{\Conid{Write}\;\Varid{n}}} \emph{atomic
  constraints}. The set of atomic constraints is a parameter of our
qualified type system.

\begin{definition}[Atomic constraints]
  The qualified type system is parameterised by a set, whose elements
  are called \emph{atomic constraints}. We use the variable $  \constraintfont{ \ottnt{q} }  $
  to denote atomic constraints.
\end{definition}
Atomic constraints are assembled into \emph{simple constraints}
$  \constraintfont{ \ottnt{Q} }  $, which play the hybrid role of constraint contexts and
(linear) logic formulae. We define a simple constraint to be a pair of
a set of unrestricted constraints $  \constraintfont{ \ottnt{U} }  $ and a multiset of
linear constraints $  \constraintfont{ \ottnt{L} }  $.  The linear constraints must be
stored in a multiset, because assuming the same constraint twice is
distinct from assuming it only once.
\begin{definition}[Simple constraints]
$$
\begin{array}{rcll}
    \constraintfont{ \ottnt{U} }   & \bnfeq & \ldots & \text{set of atomic constraints $  \constraintfont{ \ottnt{q} }  $} \\
    \constraintfont{ \ottnt{L} }   & \bnfeq & \ldots & \text{multiset of atomic constraints $  \constraintfont{ \ottnt{q} }  $} \\
    \constraintfont{ \ottnt{Q} }   & \bnfeq &   \constraintfont{ \ottsym{(}  \ottnt{U}  \ottsym{,}  \ottnt{L}  \ottsym{)} }   & \text{simple constraints}
\end{array}
$$
\end{definition}

We won't manipulate simple constraints directly, but rather
\textit{via} the following constructions:
\begin{description}
\item[Scaled atomic constraints] $  \constraintfont{   \multiplicityfont{ \pi }  \scale \ottnt{q}  }  $ is a simple constraint,
  where $  \multiplicityfont{ \pi }  $ specifies whether $  \constraintfont{ \ottnt{q} }  $ is to be used linearly
  or not.
\item[Conjunction] Two simple constraints can be paired up
  $  \constraintfont{ \ottnt{Q_{{\mathrm{1}}}}  \qtensor  \ottnt{Q_{{\mathrm{2}}}} }  $. Semantically, this corresponds to the multiplicative
  conjunction of linear logic. Tensor products represent pairs of
  constraints such as \ensuremath{\constraintfont{(\Conid{Read}\;\Varid{n},\Conid{Write}\;\Varid{n})}} from Haskell.
\item[Empty conjunction] The empty conjunction, $  \constraintfont{  \mathbf{\varepsilon}  }  $, is used
  to represent functions which don't require any constraints. It is
  the neutral element of conjunction.
\end{description}

The above constructions admit straightforward definitions:
\begin{definition}
  
$$
\begin{array}{ccc}
  \begin{array}{r@{\;}c@{\;}l}
      \constraintfont{  \mathbf{\varepsilon}  }   &=&   \constraintfont{ \ottsym{(}  \emptyset  \ottsym{,}  \emptyset  \ottsym{)} }  
  \end{array}
  &
    \left\{
    \begin{array}{r@{\;}c@{\;}l}
        \constraintfont{   \multiplicityfont{ \ottsym{1} }  \scale \ottnt{q}  }   &=&   \constraintfont{ \ottsym{(}  \emptyset  \ottsym{,}  \ottnt{q}  \ottsym{)} }   \\
        \constraintfont{   \multiplicityfont{ \omega }  \scale \ottnt{q}  }   &=&   \constraintfont{ \ottsym{(}  \ottnt{q}  \ottsym{,}  \emptyset  \ottsym{)} }  
    \end{array}
                        \right.
  &
    \begin{array}{r@{\;}c@{\;}l}
        \constraintfont{ \ottsym{(}  \ottnt{U}_{{\mathrm{1}}}  \ottsym{,}  \ottnt{L}_{{\mathrm{1}}}  \ottsym{)}  \qtensor  \ottsym{(}  \ottnt{U}_{{\mathrm{2}}}  \ottsym{,}  \ottnt{L}_{{\mathrm{2}}}  \ottsym{)} }   &=&   \constraintfont{ \ottsym{(}   \ottnt{U}_{{\mathrm{1}}}  \cup  \ottnt{U}_{{\mathrm{2}}}   \ottsym{,}   \ottnt{L}_{{\mathrm{1}}}  \uplus  \ottnt{L}_{{\mathrm{2}}}   \ottsym{)} }  
    \end{array}
\end{array}
$$
\end{definition}

They satisfy the following laws:
\begin{align*}
    \constraintfont{ \ottnt{Q_{{\mathrm{1}}}}  \qtensor  \ottnt{Q_{{\mathrm{2}}}} }   & =   \constraintfont{ \ottnt{Q_{{\mathrm{2}}}}  \qtensor  \ottnt{Q_{{\mathrm{1}}}} }   \\
    \constraintfont{ \ottsym{(}  \ottnt{Q_{{\mathrm{1}}}}  \qtensor  \ottnt{Q_{{\mathrm{2}}}}  \ottsym{)}  \qtensor  \ottnt{Q_{{\mathrm{3}}}} }   & =   \constraintfont{ \ottnt{Q_{{\mathrm{1}}}}  \qtensor  \ottsym{(}  \ottnt{Q_{{\mathrm{2}}}}  \qtensor  \ottnt{Q_{{\mathrm{3}}}}  \ottsym{)} }   \\
    \constraintfont{   \multiplicityfont{ \omega }  \scale \ottnt{Q}   \qtensor    \multiplicityfont{ \omega }  \scale \ottnt{Q}  }   & =   \constraintfont{   \multiplicityfont{ \omega }  \scale \ottnt{Q}  }   \\
    \constraintfont{ \ottnt{Q}  \qtensor   \mathbf{\varepsilon}  }   & =   \constraintfont{ \ottnt{Q} }  \\
    \constraintfont{   \multiplicityfont{ \pi }  \scale \ottsym{(}    \multiplicityfont{ \rho }  \scale \ottnt{q}   \ottsym{)}  }   &=   \constraintfont{   \multiplicityfont{ \ottsym{(}   \pi {⋅} \rho   \ottsym{)} }  \scale \ottnt{q}  }  
\end{align*}

We overload the scaling notation to apply not just to atomic, but also simple constraints, and define
$  \constraintfont{   \multiplicityfont{ \pi }  \scale \ottnt{Q}  }  $ as:
$$
                      \left\{
                      \begin{array}{r@{\;}c@{\;}l}
                          \constraintfont{   \multiplicityfont{ \ottsym{1} }  \scale \ottsym{(}  \ottnt{U}  \ottsym{,}  \ottnt{L}  \ottsym{)}  }   &=&   \constraintfont{ \ottsym{(}  \ottnt{U}  \ottsym{,}  \ottnt{L}  \ottsym{)} }   \\
                          \constraintfont{   \multiplicityfont{ \omega }  \scale \ottsym{(}  \ottnt{U}  \ottsym{,}  \ottnt{L}  \ottsym{)}  }   &=&   \constraintfont{ \ottsym{(}   \ottnt{U}  \cup  \ottnt{L}   \ottsym{,}  \emptyset  \ottsym{)} }  
                      \end{array}
                                                    \right.
$$

The semantics of simple constraints (and, indeed, of atomic
constraints) is given by an \emph{entailment relation}. Just like the
set of atomic constraints, the entailment relation is a parameter of
our system.

\begin{figure}
  \maybesmall
  \begin{enumerate}
  \item $ \constraintfont{ \ottnt{Q} }   \Vdash   \constraintfont{ \ottnt{Q} } $.
  \item If $ \constraintfont{ \ottnt{Q_{{\mathrm{1}}}} }   \Vdash   \constraintfont{ \ottnt{Q_{{\mathrm{2}}}} } $  and $ \constraintfont{ \ottnt{Q_{{\mathrm{2}}}} }   \Vdash   \constraintfont{ \ottnt{Q_{{\mathrm{3}}}} } $, then $ \constraintfont{ \ottnt{Q_{{\mathrm{1}}}} }   \Vdash   \constraintfont{ \ottnt{Q_{{\mathrm{3}}}} } $.
  \item If $ \constraintfont{ \ottnt{Q_{{\mathrm{1}}}} }   \Vdash   \constraintfont{ \ottnt{Q'_{{\mathrm{1}}}} } $ and $ \constraintfont{ \ottnt{Q_{{\mathrm{2}}}} }   \Vdash   \constraintfont{ \ottnt{Q'_{{\mathrm{2}}}} } $, then $ \constraintfont{ \ottnt{Q_{{\mathrm{1}}}}  \qtensor  \ottnt{Q_{{\mathrm{2}}}} }   \Vdash   \constraintfont{ \ottnt{Q'_{{\mathrm{1}}}}  \qtensor  \ottnt{Q'_{{\mathrm{2}}}} } $.
  \item If $ \constraintfont{ \ottnt{Q_{{\mathrm{1}}}} }   \Vdash   \constraintfont{ \ottnt{Q_{{\mathrm{2}}}} } $, then $ \constraintfont{   \multiplicityfont{ \pi }  \scale \ottnt{Q_{{\mathrm{1}}}}  }   \Vdash   \constraintfont{   \multiplicityfont{ \pi }  \scale \ottnt{Q_{{\mathrm{2}}}}  } $.
  \item If $ \constraintfont{   \multiplicityfont{ \omega }  \scale \ottnt{Q}  }   \Vdash   \constraintfont{ \ottnt{Q} } $.
  \item If $ \constraintfont{   \multiplicityfont{ \omega }  \scale \ottnt{Q}  }   \Vdash   \constraintfont{  \mathbf{\varepsilon}  } $.
  \item If $  \constraintfont{ \ottnt{q} }   \in  \constraintfont{\mathcal{D} } $, then $ \constraintfont{   \multiplicityfont{ \ottsym{1} }  \scale \ottnt{q}  }   \Vdash   \constraintfont{   \multiplicityfont{ \ottsym{1} }  \scale \ottnt{q}   \qtensor    \multiplicityfont{ \ottsym{1} }  \scale \ottnt{q}  } $.
  \item If $  \constraintfont{ \ottnt{q} }   \in  \constraintfont{\mathcal{D} } $, then $ \constraintfont{   \multiplicityfont{ \ottsym{1} }  \scale \ottnt{q}  }   \Vdash   \constraintfont{  \mathbf{\varepsilon}  } $.
  \end{enumerate}
\caption{Requirements for the entailment relation $ \constraintfont{ \ottnt{Q_{{\mathrm{1}}}} }   \Vdash   \constraintfont{ \ottnt{Q_{{\mathrm{2}}}} } $}
\label{fig:entailment-relation}
\end{figure}
\begin{definition}[Entailment relation]
  \label{def:entailment-relation}
  The qualified type system is parameterised by a relation
  $ \constraintfont{ \ottnt{Q_{{\mathrm{1}}}} }   \Vdash   \constraintfont{ \ottnt{Q_{{\mathrm{2}}}} } $ between two simple constraints, as well as by a
  distinguished set $ \constraintfont{\mathcal{D} } $ of duplicable atomic constraints.

  We write, abusing notation, $  \constraintfont{ \ottnt{Q} }   \in  \constraintfont{\mathcal{D} } $ for a simple constraint
  $  \constraintfont{ \ottnt{Q} }  =  \constraintfont{ \ottsym{(}  \ottnt{U}  \ottsym{,}  \ottnt{L}  \ottsym{)} }  $ if for all $  \constraintfont{ \ottnt{q} }  \in  \constraintfont{ \ottnt{L} }  $ we
  have $  \constraintfont{ \ottnt{q} }   \in  \constraintfont{\mathcal{D} } $. (This isn't a plain inclusion because $  \constraintfont{ \ottnt{L} }  $ is a multiset and $ \constraintfont{\mathcal{D} } $.)

  The entailment relation must obey the laws listed in
  \cref{fig:entailment-relation}. Note that $ \constraintfont{\mathcal{D} } $ being a set of atomic
  constraints, the rules pertaining to $ \constraintfont{\mathcal{D} } $ are in terms of atomic
  constraints rather than of arbitrary simple constraints.
\end{definition}
\info{See Fig 3, p14 of OutsideIn\cite{OutsideIn}.}
The set $ \constraintfont{\mathcal{D} } $ is a set of constraints which can be duplicated and
discarded (see~\cref{fig:entailment-relation}). We use $ \constraintfont{\mathcal{D} } $ to
model the \ensuremath{\constraintfont{\constraintfont{\Conid{Linearly}}}} constraint. Crucially, it is \emph{not
  the case} that $ \constraintfont{   \multiplicityfont{ \ottsym{1} }  \scale \ottnt{q}  }   \Vdash   \constraintfont{   \multiplicityfont{ \omega }  \scale \ottnt{q}  } $ for $  \constraintfont{ \ottnt{q} }   \in  \constraintfont{\mathcal{D} } $. While it may seem
counter-intuitive, there is nothing in linear logic mandating that a
formula that can be duplicated and discarded be (equivalent to) an
unrestricted formula. This observation has been exploited, for
instance, to introduce so-called
subexponentials~\cite{subexponentials}. For our use case, it lets the
typechecker dispatch \ensuremath{\constraintfont{\constraintfont{\Conid{Linearly}}}} constraints (using
duplication), but prevents the result of constrained functions to be used
unrestrictedly.

\subsection{Typing Rules}
\label{sec:typing-rules}

With this material in place, we can now present our type system. The
grammar is given in \cref{fig:declarative:grammar}, which also
includes the definitions of scaling on contexts $   \multiplicityfont{ \pi }  \scale \Gamma  $ and addition
of contexts $ \Gamma_{{\mathrm{1}}}  \ottsym{+}  \Gamma_{{\mathrm{2}}} $. Note that addition on contexts is actually
a partial function, as it requires that, if a variable $ \ottmv{x} $ is bound
in both $ \Gamma_{{\mathrm{1}}} $ and $ \Gamma_{{\mathrm{2}}} $, then $ \ottmv{x} $ is assigned the same
type in both (but perhaps different multiplicities). This partiality
is not a problem in practice, as the required condition for combining
contexts is always satisfied.

\begin{figure}
  \maybesmall
  $$
  \begin{array}[b]{lcll}
     \ottmv{a} ,  \ottmv{b}  & \bnfeq & \ldots & \text{Type variables} \\
     \ottmv{x} ,  \ottmv{y}  & \bnfeq & \ldots & \text{Expression variables} \\
    \ottmv{T} & \bnfeq & \ldots & \text{Type constructors} \\
     \ottmv{K}  & \bnfeq & \ldots & \text{Data constructors} \\
     \sigma  & \bnfeq &   \forall   \overline{\ottmv{a} } .   \constraintfont{ \ottnt{Q} }   \Lolly  \tau   & \text{Type schemes} \\
     \tau ,  \upsilon  & \bnfeq &  \ottmv{a}  \bnfor   \exists   \overline{\ottmv{a} } .  \tau  \RLolly   \constraintfont{ \ottnt{Q} }    \bnfor   \tau_{{\mathrm{1}}}  \to_{  \multiplicityfont{ \pi }  }  \tau_{{\mathrm{2}}}  
                            \bnfor  \ottmv{T} \, \overline{\tau}  & \text{Types} \\
     \Gamma ,  \Delta  & \bnfeq &  ∙  \bnfor  \Gamma  \ottsym{,}   \ottmv{x} {:}_{  \multiplicityfont{ \pi }  } \sigma   &
                                                              \text{Contexts} \\
     \ottnt{e}  & \bnfeq &  \ottmv{x}  \bnfor  \ottmv{K}  \bnfor  \lambda  \ottmv{x}  \ottsym{.}  \ottnt{e}  \bnfor \ottnt{e_{{\mathrm{1}}}} \, \ottnt{e_{{\mathrm{2}}}} \bnfor  \packbox \, \ottnt{e}  & \text{Expressions}\\
                 &\bnfor &  \klet\ \packbox  \ottmv{x}  =  \ottnt{e_{{\mathrm{1}}}}  \  \ottkw{in}  \  \ottnt{e_{{\mathrm{2}}}}  \bnfor   \kcase_  \multiplicityfont{ \pi }   \, \ottnt{e} \, \ottkw{of} \, \ottsym{\{}  \overline{\ottmv{K}_i\ \overline{\ottmv{x}_i } \to \ottnt{e}_i }  \ottsym{\}}  &\\
                 &\bnfor &  \klet_  \multiplicityfont{ \pi }   \, \ottmv{x}  \ottsym{=}  \ottnt{e_{{\mathrm{1}}}} \, \ottkw{in} \, \ottnt{e_{{\mathrm{2}}}} \bnfor   \klet_  \multiplicityfont{ \pi }   \, \ottmv{x}  \ottsym{:}  \sigma  \ottsym{=}  \ottnt{e_{{\mathrm{1}}}} \, \ottkw{in} \, \ottnt{e_{{\mathrm{2}}}}  &
  \end{array}
  $$
  Context scaling $   \multiplicityfont{ \pi }  \scale \Gamma  $ and addition of contexts $ \Gamma_{{\mathrm{1}}}  \ottsym{+}  \Gamma_{{\mathrm{2}}} $ is defined as follows:
  $$
  \begin{array}{cc}
  \left\{
  \begin{array}{r@{\;}c@{\;}l}
     \multiplicityfont{ \pi }  \scale ∙   &=&  ∙  \\
     \multiplicityfont{ \pi }  \scale \ottsym{(}  \Gamma  \ottsym{,}   \ottmv{x} {:}_{  \multiplicityfont{ \rho }  } \sigma   \ottsym{)}   &=&    \multiplicityfont{ \pi }  \scale \Gamma   \ottsym{,}   \ottmv{x} {:}_{  \multiplicityfont{ \ottsym{(}   \pi {⋅} \rho   \ottsym{)} }  } \sigma  
  \end{array}
  \right.
  &
  \left\{
  \begin{array}{r@{\;}c@{\;}lll}
   \ottsym{(}  \Gamma_{{\mathrm{1}}}  \ottsym{,}   \ottmv{x} {:}_{  \multiplicityfont{ \pi }  } \sigma   \ottsym{)}  \ottsym{+}  \Gamma_{{\mathrm{2}}}  &=&  \Gamma_{{\mathrm{1}}}  \ottsym{+}  \Gamma'_{{\mathrm{2}}}  \ottsym{,}   \ottmv{x} {:}_{  \multiplicityfont{ \ottsym{(}  \pi  \ottsym{+}  \rho  \ottsym{)} }  } \sigma   & \text{where} & \Gamma_{{\mathrm{2}}}  \ottsym{=}   \ottsym{\{}   \ottmv{x} {:}_{  \multiplicityfont{ \rho }  } \sigma   \ottsym{\}}  \cup  \Gamma'_{{\mathrm{2}}}  \\
  &&&&  \ottmv{x}  \notin  \Gamma'_{{\mathrm{2}}}  \\
   \ottsym{(}  \Gamma_{{\mathrm{1}}}  \ottsym{,}   \ottmv{x} {:}_{  \multiplicityfont{ \pi }  } \sigma   \ottsym{)}  \ottsym{+}  \Gamma_{{\mathrm{2}}}  &=&  \Gamma_{{\mathrm{1}}}  \ottsym{+}  \Gamma_{{\mathrm{2}}}  \ottsym{,}   \ottmv{x} {:}_{  \multiplicityfont{ \pi }  } \sigma   & \text{where} &  \ottmv{x}  \notin  \Gamma_{{\mathrm{2}}}  \\
   ∙  \ottsym{+}  \Gamma_{{\mathrm{2}}}  &=&  \Gamma_{{\mathrm{2}}} 
  \end{array}
  \right.
  \end{array}
  $$
  \caption{Grammar of the qualified type system}
  \label{fig:declarative:grammar}
\end{figure}

\begin{figure}
  \centering
  \drules[E]{$ \constraintfont{ \ottnt{Q} }   \ottsym{;}  \Gamma  \vdash  \ottnt{e}  \ottsym{:}  \tau$}{Expression
    typing}{Var,Abs,App,Pack,Unpack,Let,LetSig,Case,Sub}
  \caption{Qualified type system}
  \label{fig:typing-rules}
\end{figure}

The typing rules are in \cref{fig:typing-rules}.
A qualified type system~\cite{QualifiedTypes} such as ours introduces a
judgement of the form $ \constraintfont{ \ottnt{Q} }   \ottsym{;}  \Gamma  \vdash  \ottnt{e}  \ottsym{:}  \tau$, where $ \Gamma $ is a standard
type context, and $  \constraintfont{ \ottnt{Q} }  $ is a constraint we have assumed to be true.
$  \constraintfont{ \ottnt{Q} }  $ behaves
much like $ \Gamma $, which will be instrumental for
desugaring in \cref{sec:desugaring}; the main difference is
that $ \Gamma $ is addressed explicitly, whereas $  \constraintfont{ \ottnt{Q} }  $
is used implicitly in \rref{E-Var}.

Because constraints are used implicitly, if there are several instances
of the same $  \constraintfont{ \ottnt{q} }  $, it is non-deterministic which one is used in
which instance of \rref*{E-Var}. How do we
reconcile this non-determinism with the use of linear constraints, in
\cref{sec:memory-ownership} to thread mutations? We certainly don't want type
inference to non-deterministically reorder a \ensuremath{\Varid{readRef}} and a
\ensuremath{\Varid{writeRef}}! The solution is that the \textsc{api} is arranged so that
only a single instance of \ensuremath{\constraintfont{\Conid{RW}\;\Varid{n}}} is ever
provided. Therefore there is a single possible threading of the reads
and writes.

The type system of \cref{fig:typing-rules} is purely
declarative: note, for example, that \rref{E-App} does not describe
how to break the typing assumptions into constraints
$  \constraintfont{ \ottnt{Q_{{\mathrm{1}}}} }  $,$  \constraintfont{ \ottnt{Q_{{\mathrm{2}}}} }  $ and contexts $ \Gamma_{{\mathrm{1}}} $,$ \Gamma_{{\mathrm{2}}} $. We will see
how to infer constraints in \cref{sec:type-inference}. Yet,
this system is our ground truth: a system with a simple enough
definition that programmers can reason about typing. As is standard in
the qualified-type literature (since the seminal work of \citet{QualifiedTypes}), we do not
directly give a dynamic
semantics to this language; instead, we will give it meaning via
desugaring to a simpler core language in \cref{sec:desugaring}.

We survey several distinctive features of our qualified type system below:

\info{See Fig 10, p25 of OutsideIn\cite{OutsideIn}.}
\paragraph*{Linear functions.}
The type of linear functions is written $  \ottmv{a}  \to_{  \multiplicityfont{ \ottsym{1} }  }  \ottmv{b}  $.
  Despite our focus on linear constraints,
  we still need linearity in ordinary arguments.
  Indeed, the linearity of arrows interacts in interesting
  ways with linear constraints: If $\ottmv{f}  \ottsym{:}   \ottmv{a}  \to_{  \multiplicityfont{ \omega }  }  \ottmv{b} $ and
  $\ottmv{x}  \ottsym{:}   \constraintfont{   \multiplicityfont{ \ottsym{1} }  \scale \ottnt{q}  }   \Lolly  \ottmv{a}$, then calling $ \ottmv{f} \, \ottmv{x} $ would actually use $  \constraintfont{ \ottnt{q} }  $
  many times. We must make sure it is impossible to derive
  $ \constraintfont{   \multiplicityfont{ \ottsym{1} }  \scale \ottnt{q}  }   \ottsym{;}   \ottmv{f} {:}_{  \multiplicityfont{ \omega }  }  \ottmv{a}  \to_{  \multiplicityfont{ \omega }  }  \ottmv{b}    \ottsym{,}   \ottmv{x} {:}_{  \multiplicityfont{ \omega }  }  \constraintfont{   \multiplicityfont{ \ottsym{1} }  \scale \ottnt{q}  }   \Lolly  \ottmv{a}   \vdash  \ottmv{f} \, \ottmv{x}  \ottsym{:}  \ottmv{b}$.
  Otherwise we could make, for instance, the \ensuremath{\Varid{overusing}} function from
  \cref{sec:overusing}.
  One can check that $ \constraintfont{   \multiplicityfont{ \ottsym{1} }  \scale \ottnt{q}  }   \ottsym{;}   \ottmv{f} {:}_{  \multiplicityfont{ \omega }  }  \ottmv{a}  \to_{  \multiplicityfont{ \omega }  }  \ottmv{b}    \ottsym{,}   \ottmv{x} {:}_{  \multiplicityfont{ \omega }  }  \constraintfont{   \multiplicityfont{ \ottsym{1} }  \scale \ottnt{q}  }   \Lolly  \ottmv{a}   \vdash  \ottmv{f} \, \ottmv{x}  \ottsym{:}  \ottmv{b}$
  indeed does not
  type check, because the scaling of $  \constraintfont{ \ottnt{Q_{{\mathrm{2}}}} }  $ in \rref{E-App} ensures that
  the constraint would be $  \constraintfont{   \multiplicityfont{ \omega }  \scale \ottnt{q}  }  $ instead. On the other hand,
  it is perfectly fine to have $ \constraintfont{   \multiplicityfont{ \ottsym{1} }  \scale \ottnt{q}  }   \ottsym{;}   \ottmv{f} {:}_{  \multiplicityfont{ \omega }  }  \ottmv{a}  \to_{  \multiplicityfont{ \ottsym{1} }  }  \ottmv{b}    \ottsym{,}   \ottmv{x} {:}_{  \multiplicityfont{ \omega }  }  \constraintfont{   \multiplicityfont{ \ottsym{1} }  \scale \ottnt{q}  }   \Lolly  \ottmv{a}   \vdash  \ottmv{f} \, \ottmv{x}  \ottsym{:}  \ottmv{b}$ when $ \ottmv{f} $ is a linear function.

\paragraph*{Variables.}
As is standard, the \rref{E-Var} rule works in a context containing more
than just the used binding for $ \ottmv{x} $. However, crucially,
our rule allows only
\emph{unrestricted} variables to be discarded; linear variables \emph{must} be
used. We can see this in the rule by noticing that the context has an unrestricted
component $   \multiplicityfont{ \omega }  \scale \Gamma_{{\mathrm{2}}}  $. The $ \Gamma_{{\mathrm{1}}} $ component might be restricted or might not,
allowing this rule to apply both for restricted and unrestricted $ \ottmv{x} $
(leveraging the equality $  \multiplicityfont{ \ottsym{1}  \ottsym{+}  \omega }  =  \multiplicityfont{ \omega }  $).

\paragraph*{Data constructors.}
Data constructors $ \ottmv{K} $ don't have a dedicated typing
rule. Instead they are typed using the \rref{E-Var}, where they are
treated as if they were unrestricted variables.

\paragraph*{Let-bindings.}
Bindings in a \ensuremath{\mathbf{let}} may be for either linear or unrestricted variables.
  We could require all bindings to be linear and to implement unrestricted
  information only using \ensuremath{\Conid{Ur}}, but it is very easy to add a multiplicity
  annotation on \ensuremath{\mathbf{let}}, and so we do.

\paragraph*{Local assumptions.}
\Rref{E-Let} includes support for local
  assumptions. We thus have the ability to generalise a subset of
  the constraints needed by $ \ottnt{e_{{\mathrm{1}}}} $ (but not the type variables---no
  \ensuremath{\mathbf{let}}-generalisation here, though it could be added). The inference algorithm of
  \cref{sec:type-inference} will not make use of this
  possibility. 

\paragraph*{Existentials.}
 We include $  \exists   \overline{\ottmv{a} } .  \tau  \RLolly   \constraintfont{ \ottnt{Q} }   $, as
  introduced in \cref{sec:what-it-looks-like}, together with
  the $\packbox$ constructor. See rules~\rref*{E-Pack} and
  \rref*{E-Unpack}. We use these rules are the type system's counterpart to our
  \ensuremath{\Conid{Linearly}.\mathbf{do}} notation (see \cref{sec:packing-unpacking-do}):
  $$
  \begin{array}{lcl}
    \ensuremath{\Varid{\Conid{Linearly}.return}} &=&  \lambda  \ottmv{x}  \ottsym{.}  \packbox \, \ottmv{x} \\
    \ensuremath{(\Conid{Linearly}{.}\bind )} &=&  \lambda  \ottmv{x}  \ottsym{.}  \lambda  \ottmv{f}  \ottsym{.}   \klet\ \packbox  \ottmv{x'}  =  \ottmv{x}  \  \ottkw{in}  \  \ottmv{f}  \, \ottmv{x'} \\
    \ensuremath{(\Conid{Linearly}{.}\sequ )} &=&  \lambda  \ottmv{x}  \ottsym{.}  \lambda  \ottmv{y}  \ottsym{.}   \klet\ \packbox  \ottmv{x'}  =  \ottmv{x}  \  \ottkw{in}  \   \kcase_  \multiplicityfont{ \ottsym{1} }   \, \ottmv{x} \, \ottkw{of} \, \ottsym{\{}  \ottsym{()}  \to  \ottmv{y}  \ottsym{\}}  
  \end{array}
  $$
\info{No substitution on $  \constraintfont{ \ottnt{Q_{{\mathrm{1}}}} }  $ in the E-Unpack rule, because there is
  only existential quantification.}

\section{Constraint inference}
\label{sec:type-inference}

The type system of \cref{fig:typing-rules} gives a declarative description
of what programs are acceptable. We now present the algorithmic counterpart to
this system. Our algorithm is structured, unsurprisingly, around generating and
solving constraints, broadly following the template of
\citet{essence-of-ml-type-inference}.
That is, our algorithm takes a pass over the abstract syntax tree, generating constraints as it goes. Then, separately, we solve those
constraints (that is, try to satisfy them) in the presence of a set of assumptions,
or we determine that the assumptions do not imply that the constraints hold. In the
latter case, we issue an error to the programmer.

The procedure is responsible for inferring both \emph{types} and \emph{constraints}.
For our type system, type inference can be done independently from constraint
inference. Thus, we focus on the latter, and defer type inference to
an external oracle~\cite[e.g.][]{linear-types-inference}.
That is, we assume an algorithm that produces typing derivations for the
judgement $\Gamma  \vdash  \ottnt{e}  \ottsym{:}  \tau$, ignoring all the constraints. Then, we describe a
constraint generation algorithm that passes over these typing derivations.
%
%
We can make this simplification for two reasons:
\begin{itemize}
\item We do not formalise type equality constraints, and our implementation
  in \textsc{ghc} (\cref{sec:equality-constraints}) takes care to disallow linear equality constraints to influence type inference.
  Indeed, a typical treatment of unification
  would be unsound for linear equalities, because it reuses the same
  equality many times (or none at all). Linear equalities make sense
  (\citet{shulman2018linear} puts linear
  equalities to great use), but they do not seem to lend themselves to
  automation.
\item We do not support, or intend to support, multiplicity
  polymorphism in constraint arrows. That is, the multiplicity of a
  constraint is always syntactically known to be either linear or
  unrestricted. This way, no equality constraints (which might, conceivably,
  relate multiplicity variables) can interfere with
  constraint resolution.
\end{itemize}
%

\subsection{Wanted Constraints}
\label{sec:wanteds}

The constraints $  \constraintfont{ \constraintfont{C} }  $ generated in our system have a richer
logical structure than the simple constraints $  \constraintfont{ \ottnt{Q} }  $, above. Following
\textsc{ghc} and echoing \citet{OutsideIn}, we call these \emph{wanted constraints}:
they are constraints which the constraint solver \emph{wants} to prove.
An unproved wanted constraint results in a type error reported to the programmer.
$$
\begin{array}{lcll}
    \constraintfont{ \constraintfont{C} }   & \bnfeq &   \constraintfont{ \ottnt{Q} }   \bnfor   \constraintfont{ \constraintfont{C_{{\mathrm{1}}}}  \qtensor  \constraintfont{C_{{\mathrm{2}}}} }   \bnfor   \constraintfont{ \constraintfont{C_{{\mathrm{1}}}}  \aand  \constraintfont{C_{{\mathrm{2}}}} }   \bnfor   \constraintfont{   \multiplicityfont{ \pi }  \scale ( \ottnt{Q}  \Lolly  \constraintfont{C} )  }  &
                                                                \text{Wanted constraints}
\end{array}
$$
A simple constraint is a valid wanted constraint, and we have two forms of
conjunction for wanted constraints:
the new
$  \constraintfont{ \constraintfont{C_{{\mathrm{1}}}}  \aand  \constraintfont{C_{{\mathrm{2}}}} }  $ construction (read $  \constraintfont{ \constraintfont{C_{{\mathrm{1}}}} }  $ \emph{with} $  \constraintfont{ \constraintfont{C_{{\mathrm{2}}}} }  $), alongside
the more typical $  \constraintfont{ \constraintfont{C_{{\mathrm{1}}}}  \qtensor  \constraintfont{C_{{\mathrm{2}}}} }  $. These are
connectives from linear logic: $  \constraintfont{ \constraintfont{C_{{\mathrm{1}}}}  \qtensor  \constraintfont{C_{{\mathrm{2}}}} }  $ is the
\emph{multiplicative} conjunction, and $  \constraintfont{ \constraintfont{C_{{\mathrm{1}}}}  \aand  \constraintfont{C_{{\mathrm{2}}}} }  $ is the \emph{additive}
conjunction. Both connectives are conjunctions, but they differ
in meaning. To satisfy $  \constraintfont{ \constraintfont{C_{{\mathrm{1}}}}  \qtensor  \constraintfont{C_{{\mathrm{2}}}} }  $ one consumes the (linear)
assumptions consumed by satisfying $  \constraintfont{ \constraintfont{C_{{\mathrm{1}}}} }  $ and those consumed by $  \constraintfont{ \constraintfont{C_{{\mathrm{2}}}} }  $;
if an assumed linear constraint is needed to prove both $  \constraintfont{ \constraintfont{C_{{\mathrm{1}}}} }  $ and $  \constraintfont{ \constraintfont{C_{{\mathrm{2}}}} }  $,
then $  \constraintfont{ \constraintfont{C_{{\mathrm{1}}}}  \qtensor  \constraintfont{C_{{\mathrm{2}}}} }  $ will not be provable, because that linear assumption cannot
be used twice. On the
other hand, satisfying $  \constraintfont{ \constraintfont{C_{{\mathrm{1}}}}  \aand  \constraintfont{C_{{\mathrm{2}}}} }  $ requires that satisfying $  \constraintfont{ \constraintfont{C_{{\mathrm{1}}}} }  $
and $  \constraintfont{ \constraintfont{C_{{\mathrm{2}}}} }  $ must each
consume the \emph{same} assumptions, which $  \constraintfont{ \constraintfont{C_{{\mathrm{1}}}}  \aand  \constraintfont{C_{{\mathrm{2}}}} }  $ consumes as well.
Thus, if $  \constraintfont{ \constraintfont{C} }  $ is assumed linearly (and we have no other assumptions),
then $  \constraintfont{ \constraintfont{C}  \qtensor  \constraintfont{C} }  $ is not provable, while $  \constraintfont{ \constraintfont{C}  \aand  \constraintfont{C} }  $ is.
The intuition, here, is that in $  \constraintfont{ \constraintfont{C_{{\mathrm{1}}}}  \aand  \constraintfont{C_{{\mathrm{2}}}} }  $, only
one of $  \constraintfont{ \constraintfont{C_{{\mathrm{1}}}} }  $ or $  \constraintfont{ \constraintfont{C_{{\mathrm{2}}}} }  $ will be eventually used. ``With'' constraints
arise from the branches in a $\kcase$-expression.

The last form of wanted constraint $  \constraintfont{ \constraintfont{C} }  $ is an implication
$  \constraintfont{   \multiplicityfont{ \pi }  \scale ( \ottnt{Q}  \Lolly  \constraintfont{C} )  }  $. If $  \multiplicityfont{ \pi }  $ is 1, the rule is unsurprising. The more interesting case is $  \constraintfont{   \multiplicityfont{ \omega }  \scale ( \ottnt{Q}  \Lolly  \constraintfont{C} )  }  $:
to prove $  \constraintfont{   \multiplicityfont{ \omega }  \scale ( \ottnt{Q}  \Lolly  \constraintfont{C} )  }  $, you need to prove $  \constraintfont{ \constraintfont{C} }  $ under the
\emph{linear} assumption $  \constraintfont{ \ottnt{Q} }  $, but without using any other linear
assumptions.

These implications arise when we unpack an existential
package that contains a linear constraint and also when checking a \ensuremath{\mathbf{let}}-binding.
We can define scaling over wanted constraints by recursion as follows, where we
use scaling over simple constraints in the simple-constraint case:
$$
\left\{
  \begin{array}{lcl}
      \constraintfont{   \multiplicityfont{ \pi }  \scale \ottsym{(}  \constraintfont{C_{{\mathrm{1}}}}  \qtensor  \constraintfont{C_{{\mathrm{2}}}}  \ottsym{)}  }   & = &   \constraintfont{   \multiplicityfont{ \pi }  \scale \constraintfont{C_{{\mathrm{1}}}}   \qtensor    \multiplicityfont{ \pi }  \scale \constraintfont{C_{{\mathrm{2}}}}  }   \\
      \constraintfont{   \multiplicityfont{ \ottsym{1} }  \scale \ottsym{(}  \constraintfont{C_{{\mathrm{1}}}}  \aand  \constraintfont{C_{{\mathrm{2}}}}  \ottsym{)}  }   & = &   \constraintfont{ \constraintfont{C_{{\mathrm{1}}}}  \aand  \constraintfont{C_{{\mathrm{2}}}} }   \\
      \constraintfont{   \multiplicityfont{ \omega }  \scale \ottsym{(}  \constraintfont{C_{{\mathrm{1}}}}  \aand  \constraintfont{C_{{\mathrm{2}}}}  \ottsym{)}  }   & = &   \constraintfont{   \multiplicityfont{ \omega }  \scale \constraintfont{C_{{\mathrm{1}}}}   \qtensor    \multiplicityfont{ \omega }  \scale \constraintfont{C_{{\mathrm{2}}}}  }   \\
      \constraintfont{   \multiplicityfont{ \pi }  \scale \ottsym{(}    \multiplicityfont{ \rho }  \scale ( \ottnt{Q}  \Lolly  \constraintfont{C} )   \ottsym{)}  }   & = &   \constraintfont{   \multiplicityfont{ \ottsym{(}   \pi {⋅} \rho   \ottsym{)} }  \scale ( \ottnt{Q}  \Lolly  \constraintfont{C} )  }  
  \end{array}
\right.
$$
For the most part, scaling of wanted constraints is straightforward. The only
peculiar case is when we scale the additive conjunction $  \constraintfont{ \constraintfont{C_{{\mathrm{1}}}}  \aand  \constraintfont{C_{{\mathrm{2}}}} }  $ by
$  \multiplicityfont{ \omega }  $, the result is a multiplicative conjunction. The intuition here is
that when if we have both $  \constraintfont{   \multiplicityfont{ \omega }  \scale \constraintfont{C_{{\mathrm{1}}}}  }  $ and $  \constraintfont{   \multiplicityfont{ \omega }  \scale \constraintfont{C_{{\mathrm{2}}}}  }  $, then
a choice between $  \constraintfont{ \constraintfont{C_{{\mathrm{1}}}} }  $ and $  \constraintfont{ \constraintfont{C_{{\mathrm{2}}}} }  $ can be made $  \multiplicityfont{ \omega }  $ times.

We define an entailment relation over wanteds in \cref{fig:wanted:entailment}.
Note that this relation uses only simple constraints $  \constraintfont{ \ottnt{Q} }  $ as assumptions, as
there is no way to assume the more elaborate $  \constraintfont{ \constraintfont{C} }  $\footnote{Allowing the full wanted-constraint syntax
in assumptions is the subject of work by \citet{quantified-constraints}.}.
\begin{figure}
  \maybesmall
  \centering
  \drules[C]{$ \constraintfont{ \ottnt{Q} }   \vdash   \constraintfont{ \constraintfont{C} } $} {Wanted-constraint entailment}
  {Dom,Id,Tensor,With,Impl}
  \caption{Wanted-constraint entailment}
  \label{fig:wanted:entailment}
\end{figure}

\begin{toappendix}
  \label{sec:appendix:proofs-lemmas}
\end{toappendix}

Before we move on to constraint generation proper, let us highlight a few
technical, yet essential, lemmas about the wanted-constraint
entailment relation. Proofs of these lemmas (and others) appear in \cref{sec:appendix:proofs-lemmas}.

\begin{lemmarep}[Inversion]
  \label{lem:inversion}
  The inference rules of $ \constraintfont{ \ottnt{Q} }   \vdash   \constraintfont{ \constraintfont{C} } $ can be read bottom-up (up to the
  set $ \constraintfont{\mathcal{D} } $) as well
  as top-down, as is required of $ \constraintfont{ \ottnt{Q_{{\mathrm{1}}}} }   \Vdash   \constraintfont{ \ottnt{Q_{{\mathrm{2}}}} } $ in
  \cref{fig:entailment-relation}. That is:
  \begin{itemize}
  \item If $ \constraintfont{ \ottnt{Q} }   \vdash   \constraintfont{ \constraintfont{C_{{\mathrm{1}}}}  \qtensor  \constraintfont{C_{{\mathrm{2}}}} } $, then there exists $  \constraintfont{ \ottnt{Q_{{\mathrm{1}}}} }  $ and $  \constraintfont{ \ottnt{Q_{{\mathrm{2}}}} }  $
    such that $ \constraintfont{ \ottnt{Q_{{\mathrm{1}}}} }   \vdash   \constraintfont{ \constraintfont{C_{{\mathrm{1}}}} } $, $ \constraintfont{ \ottnt{Q_{{\mathrm{2}}}} }   \vdash   \constraintfont{ \constraintfont{C_{{\mathrm{2}}}} } $, and
    $ \constraintfont{ \ottnt{Q} }   \Vdash   \constraintfont{ \ottnt{Q_{{\mathrm{1}}}}  \qtensor  \ottnt{Q_{{\mathrm{2}}}} } $.
  \item If $ \constraintfont{ \ottnt{Q} }   \vdash   \constraintfont{ \constraintfont{C_{{\mathrm{1}}}}  \aand  \constraintfont{C_{{\mathrm{2}}}} } $, then $ \constraintfont{ \ottnt{Q} }   \vdash   \constraintfont{ \constraintfont{C_{{\mathrm{1}}}} } $ and $ \constraintfont{ \ottnt{Q} }   \vdash   \constraintfont{ \constraintfont{C_{{\mathrm{2}}}} } $.
  \item If $ \constraintfont{ \ottnt{Q} }   \vdash   \constraintfont{   \multiplicityfont{ \pi }  \scale ( \ottnt{Q_{{\mathrm{2}}}}  \Lolly  \constraintfont{C} )  } $, then there exists $  \constraintfont{ \ottnt{Q_{{\mathrm{1}}}} }  $ such
    that $ \constraintfont{ \ottnt{Q_{{\mathrm{1}}}}  \qtensor  \ottnt{Q_{{\mathrm{2}}}} }   \vdash   \constraintfont{ \constraintfont{C} } $ and  $ \constraintfont{ \ottnt{Q} }   \Vdash   \constraintfont{   \multiplicityfont{ \pi }  \scale \ottnt{Q_{{\mathrm{1}}}}  } $
  \end{itemize}
\end{lemmarep}
\begin{proof}
  We prove each of these statement by induction
  \begin{itemize}
  \item Suppose $ \constraintfont{ \ottnt{Q} }   \vdash   \constraintfont{ \constraintfont{C_{{\mathrm{1}}}}  \aand  \constraintfont{C_{{\mathrm{2}}}} } $, then there are two cases
    \begin{itemize}
    \item either it is the conclusion of a \rref*{C-With} rule,
      and the result is immediate.
    \item or it is the result of a \rref*{C-Dom} rule, then, there
      exists $  \constraintfont{ \ottnt{Q'} }  $, such that $ \constraintfont{ \ottnt{Q} }   \Vdash   \constraintfont{ \ottnt{Q'} } $ and
      $ \constraintfont{ \ottnt{Q'} }   \vdash   \constraintfont{ \constraintfont{C_{{\mathrm{1}}}}  \aand  \constraintfont{C_{{\mathrm{2}}}} } $.

      By induction $ \constraintfont{ \ottnt{Q'} }   \vdash   \constraintfont{ \constraintfont{C_{{\mathrm{2}}}} } $ and $ \constraintfont{ \ottnt{Q'} }   \vdash   \constraintfont{ \constraintfont{C_{{\mathrm{2}}}} } $, applying
      \rref*{C-Dom} to both gives $ \constraintfont{ \ottnt{Q} }   \vdash   \constraintfont{ \constraintfont{C_{{\mathrm{2}}}} } $ and $ \constraintfont{ \ottnt{Q} }   \vdash   \constraintfont{ \constraintfont{C_{{\mathrm{2}}}} } $ as
      required.
    \end{itemize}

  \item Suppose $ \constraintfont{ \ottnt{Q} }   \vdash   \constraintfont{   \multiplicityfont{ \pi }  \scale ( \ottnt{Q_{{\mathrm{2}}}}  \Lolly  \constraintfont{C} )  } $, then there are two cases
    \begin{itemize}
    \item either it is the conclusion of a \rref*{C-Impl} rule,
      and the result is immediate.
    \item or it is the result of a \rref*{C-Dom} rule, then, there
      exists $  \constraintfont{ \ottnt{Q'} }  $, such that $ \constraintfont{ \ottnt{Q} }   \Vdash   \constraintfont{ \ottnt{Q'} } $ and
      $ \constraintfont{ \ottnt{Q'} }   \vdash   \constraintfont{   \multiplicityfont{ \pi }  \scale ( \ottnt{Q_{{\mathrm{2}}}}  \Lolly  \constraintfont{C} )  } $.

      By induction, there exists $  \constraintfont{ \ottnt{Q_{{\mathrm{1}}}} }  $ such that $ \constraintfont{ \ottnt{Q_{{\mathrm{1}}}}  \qtensor  \ottnt{Q_{{\mathrm{2}}}} }   \vdash   \constraintfont{ \constraintfont{C} } $ and $ \constraintfont{ \ottnt{Q'} }   \Vdash   \constraintfont{   \multiplicityfont{ \pi }  \scale \ottnt{Q_{{\mathrm{1}}}}  } $, then we conclude by \rref*{C-Dom}.
    \end{itemize}
  \item Suppose $ \constraintfont{ \ottnt{Q} }   \vdash   \constraintfont{ \constraintfont{C_{{\mathrm{1}}}}  \qtensor  \constraintfont{C_{{\mathrm{2}}}} } $ we have the following cases:
  \begin{itemize}
  \item either it is the conclusion of a \rref*{C-Tensor} rule, and
    the result is immediate
  \item or it is the result of a \rref*{C-Id} rule, in which case
    $  \constraintfont{ \ottnt{Q} }   =   \constraintfont{ \constraintfont{C_{{\mathrm{1}}}}  \qtensor  \constraintfont{C_{{\mathrm{2}}}} }  $, which proves the result
  \item or it is the result of a \rref*{C-Dom} rule, in which case
    there is $  \constraintfont{ \ottnt{Q'} }  $ such that $ \constraintfont{ \ottnt{Q} }   \Vdash   \constraintfont{ \ottnt{Q'} } $ and $ \constraintfont{ \ottnt{Q'} }   \vdash   \constraintfont{ \constraintfont{C_{{\mathrm{1}}}}  \qtensor  \constraintfont{C_{{\mathrm{2}}}} } $.

    By induction, there exist $  \constraintfont{ \ottnt{Q'_{{\mathrm{1}}}} }  $, and
    $  \constraintfont{ \ottnt{Q'_{{\mathrm{2}}}} }  $, such that $ \constraintfont{ \ottnt{Q'_{{\mathrm{1}}}} }   \vdash   \constraintfont{ \constraintfont{C_{{\mathrm{1}}}} } $,
    $ \constraintfont{ \ottnt{Q'_{{\mathrm{2}}}} }   \vdash   \constraintfont{ \constraintfont{C_{{\mathrm{2}}}} } $, and $ \constraintfont{ \ottnt{Q'} }   \Vdash   \constraintfont{ \ottnt{Q'_{{\mathrm{1}}}}  \qtensor  \ottnt{Q'_{{\mathrm{2}}}} } $.

    By transitivity of entailment, $ \constraintfont{ \ottnt{Q} }   \Vdash   \constraintfont{ \ottnt{Q'_{{\mathrm{1}}}}  \qtensor  \ottnt{Q'_{{\mathrm{2}}}} } $, as required.
  \end{itemize}
  \end{itemize}
\end{proof}

\begin{lemmarep}[Scaling]
  \label{lem:wanted:promote}
  If $ \constraintfont{ \ottnt{Q} }   \vdash   \constraintfont{ \constraintfont{C} } $, then $ \constraintfont{   \multiplicityfont{ \pi }  \scale \ottnt{Q}  }   \vdash   \constraintfont{   \multiplicityfont{ \pi }  \scale \constraintfont{C}  } $.
\end{lemmarep}
\begin{proof}
  By induction on the derivation of $ \constraintfont{ \ottnt{Q} }   \vdash   \constraintfont{ \constraintfont{C} } $
  \begin{itemize}
    \item \Rref*{C-Dom}: there exists $  \constraintfont{ \ottnt{Q'} }  $ such that $ \constraintfont{ \ottnt{Q'} }   \vdash   \constraintfont{ \constraintfont{C} } $
          and $ \constraintfont{ \ottnt{Q} }   \Vdash   \constraintfont{ \ottnt{Q'} } $. By induction $ \constraintfont{   \multiplicityfont{ \pi }  \scale \ottnt{Q'}  }   \vdash   \constraintfont{   \multiplicityfont{ \pi }  \scale \constraintfont{C}  } $, and by
          \cref{fig:entailment-relation} $ \constraintfont{   \multiplicityfont{ \pi }  \scale \ottnt{Q}  }   \Vdash   \constraintfont{   \multiplicityfont{ \pi }  \scale \ottnt{Q'}  } $, then by
          \rref*{C-Dom}, we conclude that $ \constraintfont{   \multiplicityfont{ \pi }  \scale \ottnt{Q}  }   \vdash   \constraintfont{   \multiplicityfont{ \pi }  \scale \constraintfont{C}  } $.
    \item \Rref*{C-Id}: by \rref*{C-Id}, $ \constraintfont{   \multiplicityfont{ \pi }  \scale \ottnt{Q}  }   \vdash   \constraintfont{   \multiplicityfont{ \pi }  \scale \ottnt{Q}  } $.
    \item \Rref*{C-Tensor}: $  \constraintfont{ \ottnt{Q} }  =  \constraintfont{ \ottnt{Q_{{\mathrm{1}}}}  \qtensor  \ottnt{Q_{{\mathrm{2}}}} }  $ and $  \constraintfont{ \constraintfont{C} }  =  \constraintfont{ \constraintfont{C_{{\mathrm{1}}}}  \qtensor  \constraintfont{C_{{\mathrm{2}}}} }  $, with
          $ \constraintfont{ \ottnt{Q_{{\mathrm{1}}}} }   \vdash   \constraintfont{ \constraintfont{C_{{\mathrm{1}}}} } $ and $ \constraintfont{ \ottnt{Q_{{\mathrm{2}}}} }   \vdash   \constraintfont{ \constraintfont{C_{{\mathrm{2}}}} } $. By induction,
          $ \constraintfont{   \multiplicityfont{ \pi }  \scale \ottnt{Q_{{\mathrm{1}}}}  }   \vdash   \constraintfont{   \multiplicityfont{ \pi }  \scale \constraintfont{C_{{\mathrm{1}}}}  } $ and $ \constraintfont{   \multiplicityfont{ \pi }  \scale \ottnt{Q_{{\mathrm{2}}}}  }   \vdash   \constraintfont{   \multiplicityfont{ \pi }  \scale \constraintfont{C_{{\mathrm{2}}}}  } $, then by
          \rref*{C-Tensor},
          $  \constraintfont{   \multiplicityfont{ \pi }  \scale \ottnt{Q}  }  = \constraintfont{   \multiplicityfont{ \pi }  \scale \ottnt{Q_{{\mathrm{1}}}}   \qtensor    \multiplicityfont{ \pi }  \scale \ottnt{Q_{{\mathrm{2}}}}  }   \vdash   \constraintfont{   \multiplicityfont{ \pi }  \scale \constraintfont{C_{{\mathrm{1}}}}   \qtensor    \multiplicityfont{ \pi }  \scale \constraintfont{C_{{\mathrm{2}}}}  } =  \constraintfont{   \multiplicityfont{ \pi }  \scale \constraintfont{C}  }  $.
    \item \Rref*{C-With}: by case of $  \multiplicityfont{ \pi }  $, if $  \multiplicityfont{ \pi }  =  \multiplicityfont{ \ottsym{1} }  $, then the
          result holds by definition. If $  \multiplicityfont{ \pi }  =  \multiplicityfont{ \omega }  $, then and
          $  \constraintfont{ \constraintfont{C} }  =  \constraintfont{ \constraintfont{C_{{\mathrm{1}}}}  \aand  \constraintfont{C_{{\mathrm{2}}}} }  $ with $ \constraintfont{ \ottnt{Q} }   \vdash   \constraintfont{ \constraintfont{C_{{\mathrm{1}}}} } $ and $ \constraintfont{ \ottnt{Q} }   \vdash   \constraintfont{ \constraintfont{C_{{\mathrm{2}}}} } $. By
          induction, $ \constraintfont{   \multiplicityfont{ \omega }  \scale \ottnt{Q}  }   \vdash   \constraintfont{   \multiplicityfont{ \omega }  \scale \constraintfont{C_{{\mathrm{1}}}}  } $ and $ \constraintfont{   \multiplicityfont{ \omega }  \scale \ottnt{Q}  }   \vdash   \constraintfont{   \multiplicityfont{ \omega }  \scale \constraintfont{C_{{\mathrm{2}}}}  } $,
          and we conclude with \rref*{C-Tensor} (not \rref*{C-With}, due to the
          definition of $  \constraintfont{   \multiplicityfont{ \omega }  \scale \ottsym{(}  \constraintfont{C_{{\mathrm{1}}}}  \aand  \constraintfont{C_{{\mathrm{2}}}}  \ottsym{)}  }  $)): 
          $  \constraintfont{   \multiplicityfont{ \omega }  \scale \ottnt{Q}  }  = \constraintfont{   \multiplicityfont{ \omega }  \scale \ottnt{Q}   \qtensor    \multiplicityfont{ \omega }  \scale \ottnt{Q}  }   \vdash   \constraintfont{   \multiplicityfont{ \omega }  \scale \constraintfont{C_{{\mathrm{1}}}}   \qtensor    \multiplicityfont{ \omega }  \scale \constraintfont{C_{{\mathrm{2}}}}  } =  \constraintfont{   \multiplicityfont{ \omega }  \scale \ottsym{(}  \constraintfont{C_{{\mathrm{1}}}}  \aand  \constraintfont{C_{{\mathrm{2}}}}  \ottsym{)}  }  =  \constraintfont{   \multiplicityfont{ \omega }  \scale \constraintfont{C}  }  $.
    \item \Rref*{C-Impl}: straightforward as scaling is built in the rule.
  \end{itemize}
\end{proof}

\begin{lemmarep}[Inversion of scaling]
  \label{lem:wanted:demote}
  If $ \constraintfont{ \ottnt{Q} }   \vdash   \constraintfont{   \multiplicityfont{ \pi }  \scale \constraintfont{C}  } $ then $ \constraintfont{ \ottnt{Q'} }   \vdash   \constraintfont{ \constraintfont{C} } $ and $ \constraintfont{ \ottnt{Q} }   \Vdash   \constraintfont{   \multiplicityfont{ \pi }  \scale \ottnt{Q'}  } $ for some $  \constraintfont{ \ottnt{Q'} }  $.
\end{lemmarep}
\begin{proof}
  By induction on the proof of of $ \constraintfont{ \ottnt{Q} }   \vdash   \constraintfont{   \multiplicityfont{ \pi }  \scale \constraintfont{C}  } $
  \begin{description}
    \item[\Rref*{C-Dom}] We have $ \constraintfont{ \ottnt{Q_{{\mathrm{2}}}} }   \vdash   \constraintfont{   \multiplicityfont{ \pi }  \scale \constraintfont{C}  } $ and $ \constraintfont{ \ottnt{Q} }   \Vdash   \constraintfont{ \ottnt{Q_{{\mathrm{2}}}} } $. By induction
          hypothesis there exists $ \constraintfont{ \ottnt{Q'} }   \vdash   \constraintfont{ \constraintfont{C} } $ and $ \constraintfont{ \ottnt{Q_{{\mathrm{2}}}} }   \Vdash   \constraintfont{ \ottnt{Q'} } $. Then, by
          transitivity of the entailment relation $ \constraintfont{ \ottnt{Q} }   \Vdash   \constraintfont{ \ottnt{Q'} } $ as required.
    \item[\Rref*{C-Id}] We have $  \constraintfont{   \multiplicityfont{ \pi }  \scale \constraintfont{C}  }   =   \constraintfont{   \multiplicityfont{ \pi }  \scale \ottnt{Q'}  }  =  \constraintfont{ \ottnt{Q} }  $, so the result holds by
          reflexivity of the entailment relation.
    \item[\Rref*{C-Tensor}] The conclusion is of the form $  \constraintfont{ \ottnt{Q} }  =  \constraintfont{ \ottnt{Q_{{\mathrm{1}}}}  \qtensor  \ottnt{Q_{{\mathrm{2}}}} }  $,
          $ \constraintfont{ \ottnt{Q_{{\mathrm{1}}}} }   \vdash   \constraintfont{   \multiplicityfont{ \pi }  \scale \constraintfont{C_{{\mathrm{1}}}}   \qtensor    \multiplicityfont{ \pi }  \scale \constraintfont{C_{{\mathrm{2}}}}  } $, and we have $ \constraintfont{ \ottnt{Q_{{\mathrm{1}}}} }   \vdash   \constraintfont{   \multiplicityfont{ \pi }  \scale \constraintfont{C_{{\mathrm{1}}}}  } $ and
          $ \constraintfont{ \ottnt{Q_{{\mathrm{2}}}} }   \vdash   \constraintfont{   \multiplicityfont{ \pi }  \scale \constraintfont{C_{{\mathrm{2}}}}  } $ as the premises.

          By induction we have $ \constraintfont{ \ottnt{Q'_{{\mathrm{1}}}} }   \vdash   \constraintfont{ \constraintfont{C_{{\mathrm{1}}}} } $ and $ \constraintfont{ \ottnt{Q'_{{\mathrm{2}}}} }   \vdash   \constraintfont{ \constraintfont{C_{{\mathrm{2}}}} } $ with
          $ \constraintfont{ \ottnt{Q_{{\mathrm{1}}}} }   \Vdash   \constraintfont{   \multiplicityfont{ \pi }  \scale \ottnt{Q'_{{\mathrm{1}}}}  } $ and $ \constraintfont{ \ottnt{Q_{{\mathrm{2}}}} }   \Vdash   \constraintfont{   \multiplicityfont{ \pi }  \scale \ottnt{Q'_{{\mathrm{2}}}}  } $. by \Rref*{C-Tensor}
          $ \constraintfont{ \ottnt{Q'_{{\mathrm{1}}}}  \qtensor  \ottnt{Q'_{{\mathrm{2}}}} }   \vdash   \constraintfont{ \constraintfont{C_{{\mathrm{1}}}}  \qtensor  \constraintfont{C_{{\mathrm{2}}}} } $ and by tensoring of the entailment relation
          (\cref{fig:entailment-relation}) $ \constraintfont{ \ottnt{Q_{{\mathrm{1}}}}  \qtensor  \ottnt{Q_{{\mathrm{2}}}} }   \Vdash   \constraintfont{   \multiplicityfont{ \pi }  \scale \ottnt{Q_{{\mathrm{1}}}}   \qtensor    \multiplicityfont{ \pi }  \scale \ottnt{Q_{{\mathrm{2}}}}  } $ as
          required.
    \item[\Rref*{C-With}] We have that $  \multiplicityfont{ \pi }  =  \multiplicityfont{ \ottsym{1} }  $ since $  \constraintfont{ \constraintfont{C_{{\mathrm{1}}}}  \aand  \constraintfont{C_{{\mathrm{2}}}} }  $ is
          not in the image of scaling by $  \multiplicityfont{ \omega }  $. Therefore we can choose
          $\ottnt{Q'}=\ottnt{Q}$ and the result holds trivially.
    \item[\Rref*{C-Impl}] In the conclusion of the rule, $\ottnt{Q}$ is already of
          the form $  \multiplicityfont{ \pi }  \scale \ottnt{Q'} $, so the result holds trivially
  \end{description}
\end{proof}

\subsection{Constraint Generation}
\label{sec:constraint-generation}
\label{sec:constraint-generation-soundness}

The process of inferring constraints is split into two parts: generating
constraints, which we do in this section, then solving them in
\cref{sec:constraint-solver}. Constraint generation is described by
the judgement $\Gamma  \vdashi  \ottnt{e}  \ottsym{:}  \tau  \leadsto   \constraintfont{ \constraintfont{C} } $ (defined in
\cref{fig:constraint-generation}) which outputs a constraint $  \constraintfont{ \constraintfont{C} }  $
required to make $\ottnt{e}$ typecheck.
The definition
$\Gamma  \vdashi  \ottnt{e}  \ottsym{:}  \tau  \leadsto   \constraintfont{ \constraintfont{C} } $ is syntax directed, so it can directly be read as an
algorithm, taking as input a \emph{typing derivation} for $\Gamma  \vdash  \ottnt{e}  \ottsym{:}  \tau$
(produced by an external type inference oracle as discussed above). Notably, the
algorithm has access to the context splitting from the (previously computed)
typing derivation, and is
thus indeed syntax directed.
\info{See Fig.13, p39 of OutsideIn~\cite{OutsideIn}}

\info{Not caring about inferences simplifies $\packbox$ quite a bit, we
  are using the pseudo-inferred type to generate constraint. In a real
  system, we would need $\packbox$ to know its type (\emph{e.g.} using
  bidirectional type checking).}
\begin{figure}
  \maybesmall
  \centering
  \drules[G]{$\Gamma  \vdashi  \ottnt{e}  \ottsym{:}  \tau  \leadsto   \constraintfont{ \constraintfont{C} } $}{Constraint generation}{Var, Abs,
    App, Pack, Unpack, Case, Let, LetSig}

  \caption{Constraint generation}
  \label{fig:constraint-generation}
\end{figure}

The rules of \cref{fig:constraint-generation} constitute a mostly
unsurprising translation of the rules of \cref{fig:typing-rules},
except for the following points of interest:

\emph{Case expressions.}
Note the use of $ \aand $ in the conclusion of \rref{G-Case}.
We require that each branch of a $\kcase$ expression use the exact
same (linear) assumptions; this is enforced by combining the
emitted constraints with $ \aand $, not $ \qtensor $.
  This can also be understood in terms of the array example of
  \cref{sec:introduction}:
  if an array is freed in one branch of a $\kcase$, we require it to be freed (or frozen) in
the other branches too.
  Otherwise, the array's state will be unknown to the type system
  after the $\kcase$.

\emph{Implications.} The introduction of constraints local to a
  definition (\rref{G-LetSig}) corresponds to
  emitting an implication constraint.

\emph{Unannotated \ensuremath{\mathbf{let}}.}
 However, the~\rref*{G-Let} rule does not produce an implication
  constraint, as we do not model \ensuremath{\mathbf{let}}-generalisation~\cite{let-should-not-be-generalised}.

\vspace{1ex}
The key property of the constraint-generation algorithm is that,
if the generated constraint is solvable, then we can indeed type the
term in the qualified type system of
\cref{sec:qualified-type-system}. That is,
these rules are simply an implementation of our declarative qualified
type system.

\begin{lemmarep}[Soundness of constraint generation]\label{lem:generation-soundness}
  For all $  \constraintfont{ \ottnt{Q_{\ottmv{g}}} }  $, if $\Gamma  \vdashi  \ottnt{e}  \ottsym{:}  \tau  \leadsto   \constraintfont{ \constraintfont{C} } $ and $ \constraintfont{ \ottnt{Q_{\ottmv{g}}} }   \vdash   \constraintfont{ \constraintfont{C} } $ then
  $ \constraintfont{ \ottnt{Q_{\ottmv{g}}} }   \ottsym{;}  \Gamma  \vdash  \ottnt{e}  \ottsym{:}  \tau$.
\end{lemmarep}
\begin{proof}
  By induction on $\Gamma  \vdashi  \ottnt{e}  \ottsym{:}  \tau  \leadsto   \constraintfont{ \constraintfont{C} } $
  \begin{description}
  \item[\rref*{G-Var}] We have
    \begin{itemize}
    \item $\Gamma_{{\mathrm{1}}}  \ottsym{=}   \ottmv{x} {:}_{  \multiplicityfont{ \ottsym{1} }  }  \forall   \overline{\ottmv{a} } .   \constraintfont{ \ottnt{Q} }   \Lolly  \upsilon  $
    \item $\Gamma_{{\mathrm{1}}}  \ottsym{+}    \multiplicityfont{ \omega }  \scale \Gamma_{{\mathrm{2}}}   \vdashi  \ottmv{x}  \ottsym{:}  \upsilon  \ottsym{[}  \overline{\tau}  \ottsym{/}  \overline{\ottmv{a} }  \ottsym{]}  \leadsto   \constraintfont{ \ottnt{Q}  \ottsym{[}  \overline{\tau}  \ottsym{/}  \overline{\ottmv{a} }  \ottsym{]} } $
    \item $ \constraintfont{ \ottnt{Q_{\ottmv{g}}} }   \vdash   \constraintfont{ \ottnt{Q}  \ottsym{[}  \overline{\tau}  \ottsym{/}  \overline{\ottmv{a} }  \ottsym{]} } $
    \end{itemize}
    Therefore, by rules~\rref*{E-Var} and~\rref*{E-Sub}, it follows
    immediately that $ \constraintfont{ \ottnt{Q_{\ottmv{g}}} }   \ottsym{;}  \Gamma_{{\mathrm{1}}}  \ottsym{+}    \multiplicityfont{ \omega }  \scale \Gamma_{{\mathrm{2}}}   \vdash  \ottmv{x}  \ottsym{:}  \upsilon  \ottsym{[}  \overline{\tau}  \ottsym{/}  \overline{\ottmv{a} }  \ottsym{]}$
  \item[\rref*{G-Abs}] We have
    \begin{itemize}
    \item $\Gamma  \vdashi  \lambda  \ottmv{x}  \ottsym{.}  \ottnt{e}  \ottsym{:}   \tau_{{\mathrm{0}}}  \to_{  \multiplicityfont{ \pi }  }  \tau   \leadsto   \constraintfont{ \constraintfont{C} } $
    \item $ \constraintfont{ \ottnt{Q_{\ottmv{g}}} }   \vdash   \constraintfont{ \constraintfont{C} } $
    \item $\Gamma  \ottsym{,}   \ottmv{x} {:}_{  \multiplicityfont{ \pi }  } \tau_{{\mathrm{0}}}   \vdashi  \ottnt{e}  \ottsym{:}  \tau  \leadsto   \constraintfont{ \constraintfont{C} } $
    \end{itemize}
    By induction hypothesis we have
    \begin{itemize}
    \item $ \constraintfont{ \ottnt{Q_{\ottmv{g}}} }   \ottsym{;}  \Gamma  \ottsym{,}   \ottmv{x} {:}_{  \multiplicityfont{ \pi }  } \tau_{{\mathrm{0}}}   \vdash  \ottnt{e}  \ottsym{:}  \tau$
    \end{itemize}
    From which follows that $ \constraintfont{ \ottnt{Q_{\ottmv{g}}} }   \ottsym{;}  \Gamma  \vdash  \lambda  \ottmv{x}  \ottsym{.}  \ottnt{e}  \ottsym{:}   \tau_{{\mathrm{0}}}  \to_{  \multiplicityfont{ \pi }  }  \tau $.
  \item[\rref*{G-Let}] We have
    \begin{itemize}
    \item $  \multiplicityfont{ \pi }  \scale \Gamma_{{\mathrm{1}}}   \ottsym{+}  \Gamma_{{\mathrm{2}}}  \vdashi   \klet_  \multiplicityfont{ \pi }   \, \ottmv{x}  \ottsym{=}  \ottnt{e_{{\mathrm{1}}}} \, \ottkw{in} \, \ottnt{e_{{\mathrm{2}}}}  \ottsym{:}  \tau  \leadsto   \constraintfont{   \multiplicityfont{ \pi }  \scale \constraintfont{C_{{\mathrm{1}}}}   \qtensor  \constraintfont{C_{{\mathrm{2}}}} } $
    \item $ \constraintfont{ \ottnt{Q_{\ottmv{g}}} }   \vdash   \constraintfont{   \multiplicityfont{ \pi }  \scale \constraintfont{C_{{\mathrm{1}}}}   \qtensor  \constraintfont{C_{{\mathrm{2}}}} } $
    \item $\Gamma_{{\mathrm{2}}}  \ottsym{,}   \ottmv{x} {:}_{  \multiplicityfont{ \pi }  } \tau_{{\mathrm{1}}}   \vdashi  \ottnt{e_{{\mathrm{2}}}}  \ottsym{:}  \tau  \leadsto   \constraintfont{ \constraintfont{C_{{\mathrm{2}}}} } $
    \item $\Gamma_{{\mathrm{1}}}  \vdashi  \ottnt{e_{{\mathrm{1}}}}  \ottsym{:}  \tau_{{\mathrm{1}}}  \leadsto   \constraintfont{ \constraintfont{C_{{\mathrm{1}}}} } $
    \end{itemize}
    By \cref{lem:inversion,lem:wanted:demote}, there exist $  \constraintfont{ \ottnt{Q_{{\mathrm{1}}}} }  $
    and $  \constraintfont{ \ottnt{Q_{{\mathrm{2}}}} }  $ such that
    \begin{itemize}
    \item $ \constraintfont{ \ottnt{Q_{{\mathrm{1}}}} }   \vdash   \constraintfont{ \constraintfont{C_{{\mathrm{1}}}} } $
    \item $ \constraintfont{ \ottnt{Q_{{\mathrm{2}}}} }   \vdash   \constraintfont{ \constraintfont{C_{{\mathrm{2}}}} } $
    \item $ \constraintfont{ \ottnt{Q_{\ottmv{g}}} }   \Vdash   \constraintfont{   \multiplicityfont{ \pi }  \scale \ottnt{Q_{{\mathrm{1}}}}   \qtensor  \ottnt{Q_{{\mathrm{2}}}} } $
    \end{itemize}
    By induction hypothesis we have
    \begin{itemize}
    \item $ \constraintfont{ \ottnt{Q_{{\mathrm{1}}}} }   \ottsym{;}  \Gamma_{{\mathrm{1}}}  \vdash  \ottnt{e_{{\mathrm{1}}}}  \ottsym{:}  \tau_{{\mathrm{1}}}$
    \item $ \constraintfont{ \ottnt{Q_{{\mathrm{2}}}} }   \ottsym{;}  \Gamma_{{\mathrm{2}}}  \ottsym{,}   \ottmv{x} {:}_{  \multiplicityfont{ \pi }  } \tau_{{\mathrm{1}}}   \vdash  \ottnt{e_{{\mathrm{1}}}}  \ottsym{:}  \tau_{{\mathrm{1}}}$
    \end{itemize}
    From which follows (by \rref*{E-Let} and \rref*{E-Sub}) that $ \constraintfont{ \ottnt{Q_{\ottmv{g}}} }   \ottsym{;}    \multiplicityfont{ \pi }  \scale \Gamma_{{\mathrm{1}}}   \ottsym{+}  \Gamma_{{\mathrm{2}}}  \vdash   \klet_  \multiplicityfont{ \pi }   \, \ottmv{x}  \ottsym{=}  \ottnt{e_{{\mathrm{1}}}} \, \ottkw{in} \, \ottnt{e_{{\mathrm{2}}}}  \ottsym{:}  \tau$.
  \item[\rref*{G-LetSig}] We have
    \begin{itemize}
    \item $  \multiplicityfont{ \pi }  \scale \Gamma_{{\mathrm{1}}}   \ottsym{+}  \Gamma_{{\mathrm{2}}}  \vdashi   \klet_  \multiplicityfont{ \pi }   \, \ottmv{x}  \ottsym{:}   \forall   \overline{\ottmv{a} } .   \constraintfont{ \ottnt{Q} }   \Lolly  \tau_{{\mathrm{1}}}   \ottsym{=}  \ottnt{e_{{\mathrm{1}}}} \, \ottkw{in} \, \ottnt{e_{{\mathrm{2}}}}  \ottsym{:}  \tau  \leadsto   \constraintfont{ \constraintfont{C_{{\mathrm{2}}}}  \qtensor    \multiplicityfont{ \pi }  \scale ( \ottnt{Q}  \Lolly  \constraintfont{C_{{\mathrm{1}}}} )  } $
    \item $ \constraintfont{ \ottnt{Q_{\ottmv{g}}} }   \vdash   \constraintfont{ \constraintfont{C_{{\mathrm{2}}}}  \qtensor    \multiplicityfont{ \pi }  \scale ( \ottnt{Q}  \Lolly  \constraintfont{C_{{\mathrm{1}}}} )  } $
    \item $\Gamma_{{\mathrm{1}}}  \vdashi  \ottnt{e_{{\mathrm{1}}}}  \ottsym{:}  \tau_{{\mathrm{1}}}  \leadsto   \constraintfont{ \constraintfont{C_{{\mathrm{1}}}} } $
    \item $\Gamma_{{\mathrm{2}}}  \ottsym{,}   \ottmv{x} {:}_{  \multiplicityfont{ \pi }  }  \forall   \overline{\ottmv{a} } .   \constraintfont{ \ottnt{Q} }   \Lolly  \tau_{{\mathrm{1}}}    \vdashi  \ottnt{e_{{\mathrm{2}}}}  \ottsym{:}  \tau  \leadsto   \constraintfont{ \constraintfont{C_{{\mathrm{2}}}} } $
    \end{itemize}
    By \cref{lem:inversion,lem:wanted:demote}, there exist $  \constraintfont{ \ottnt{Q_{{\mathrm{1}}}} }  $,
    $  \constraintfont{ \ottnt{Q_{{\mathrm{2}}}} }  $ such
    that
    \begin{itemize}
    \item $ \constraintfont{ \ottnt{Q_{{\mathrm{2}}}} }   \vdash   \constraintfont{ \constraintfont{C_{{\mathrm{2}}}} } $
    \item $ \constraintfont{ \ottnt{Q_{{\mathrm{1}}}}  \qtensor  \ottnt{Q} }   \vdash   \constraintfont{ \constraintfont{C_{{\mathrm{1}}}} } $
    \item $ \constraintfont{ \ottnt{Q_{\ottmv{g}}} }   \Vdash   \constraintfont{   \multiplicityfont{ \pi }  \scale \ottnt{Q_{{\mathrm{1}}}}   \qtensor  \ottnt{Q_{{\mathrm{2}}}} } $
    \end{itemize}
    By induction hypothesis
    \begin{itemize}
    \item $ \constraintfont{ \ottnt{Q_{{\mathrm{1}}}}  \qtensor  \ottnt{Q} }   \ottsym{;}  \Gamma_{{\mathrm{1}}}  \vdash  \ottnt{e_{{\mathrm{1}}}}  \ottsym{:}  \tau_{{\mathrm{1}}}$
    \item $ \constraintfont{ \ottnt{Q_{{\mathrm{2}}}} }   \ottsym{;}  \Gamma_{{\mathrm{2}}}  \ottsym{,}   \ottmv{x} {:}_{  \multiplicityfont{ \pi }  }  \forall   \overline{\ottmv{a} } .   \constraintfont{ \ottnt{Q} }   \Lolly  \tau_{{\mathrm{1}}}    \vdash  \ottnt{e_{{\mathrm{2}}}}  \ottsym{:}  \tau$
    \end{itemize}
    Hence (by \rref*{E-LetSig} and \rref*{E-Sub}) $ \constraintfont{ \ottnt{Q_{\ottmv{g}}} }   \ottsym{;}    \multiplicityfont{ \pi }  \scale \Gamma_{{\mathrm{1}}}   \ottsym{+}  \Gamma_{{\mathrm{2}}}  \vdash   \klet_  \multiplicityfont{ \pi }   \, \ottmv{x}  \ottsym{:}   \forall   \overline{\ottmv{a} } .   \constraintfont{ \ottnt{Q} }   \Lolly  \tau_{{\mathrm{1}}}   \ottsym{=}  \ottnt{e_{{\mathrm{1}}}} \, \ottkw{in} \, \ottnt{e_{{\mathrm{2}}}}  \ottsym{:}  \tau$
  \item[\rref*{G-App}] \info{Most of the linearity problems are in the App
      rule. Unpack is also relevant.}
    We have
    \begin{itemize}
    \item $\Gamma_{{\mathrm{1}}}  \ottsym{+}    \multiplicityfont{ \pi }  \scale \Gamma_{{\mathrm{2}}}   \vdashi  \ottnt{e_{{\mathrm{1}}}} \, \ottnt{e_{{\mathrm{2}}}}  \ottsym{:}  \tau  \leadsto   \constraintfont{ \constraintfont{C_{{\mathrm{1}}}}  \qtensor    \multiplicityfont{ \pi }  \scale \constraintfont{C_{{\mathrm{2}}}}  } $
    \item $ \constraintfont{ \ottnt{Q_{\ottmv{g}}} }   \vdash   \constraintfont{ \constraintfont{C_{{\mathrm{1}}}}  \qtensor    \multiplicityfont{ \pi }  \scale \constraintfont{C_{{\mathrm{2}}}}  } $
    \item $\Gamma_{{\mathrm{1}}}  \vdashi  \ottnt{e_{{\mathrm{1}}}}  \ottsym{:}   \tau_{{\mathrm{2}}}  \to_{  \multiplicityfont{ \pi }  }  \tau   \leadsto   \constraintfont{ \constraintfont{C_{{\mathrm{1}}}} } $
    \item $\Gamma_{{\mathrm{2}}}  \vdashi  \ottnt{e_{{\mathrm{2}}}}  \ottsym{:}  \tau_{{\mathrm{2}}}  \leadsto   \constraintfont{ \constraintfont{C_{{\mathrm{2}}}} } $
    \end{itemize}
    By \cref{lem:inversion,lem:wanted:demote}, there exist
    $  \constraintfont{ \ottnt{Q_{{\mathrm{1}}}} }  $, $  \constraintfont{ \ottnt{Q_{{\mathrm{2}}}} }  $ such that
    \begin{itemize}
    \item $ \constraintfont{ \ottnt{Q_{{\mathrm{1}}}} }   \vdash   \constraintfont{ \constraintfont{C_{{\mathrm{1}}}} } $
    \item $ \constraintfont{ \ottnt{Q_{{\mathrm{2}}}} }   \vdash   \constraintfont{ \constraintfont{C_{{\mathrm{2}}}} } $
    \item $ \constraintfont{ \ottnt{Q_{\ottmv{g}}} }   \Vdash   \constraintfont{ \ottnt{Q_{{\mathrm{1}}}}  \qtensor    \multiplicityfont{ \pi }  \scale \ottnt{Q_{{\mathrm{2}}}}  } $
    \end{itemize}
    By induction hypothesis
    \begin{itemize}
    \item $ \constraintfont{ \ottnt{Q_{{\mathrm{1}}}} }   \ottsym{;}  \Gamma_{{\mathrm{1}}}  \vdash  \ottnt{e_{{\mathrm{1}}}}  \ottsym{:}   \tau_{{\mathrm{2}}}  \to_{  \multiplicityfont{ \pi }  }  \tau $
    \item $ \constraintfont{ \ottnt{Q_{{\mathrm{2}}}} }   \ottsym{;}  \Gamma_{{\mathrm{2}}}  \vdash  \ottnt{e_{{\mathrm{2}}}}  \ottsym{:}  \tau_{{\mathrm{2}}}$
    \end{itemize}
    Hence (by \rref*{E-App} and \rref*{E-Sub}) $ \constraintfont{ \ottnt{Q_{\ottmv{g}}} }   \ottsym{;}  \Gamma_{{\mathrm{1}}}  \ottsym{+}    \multiplicityfont{ \pi }  \scale \Gamma_{{\mathrm{2}}}   \vdash  \ottnt{e_{{\mathrm{1}}}} \, \ottnt{e_{{\mathrm{2}}}}  \ottsym{:}  \tau$.
  \item[\rref*{G-Pack}] We have
    \begin{itemize}
    \item $\Gamma  \vdashi  \packbox \, \ottnt{e}  \ottsym{:}   \exists   \overline{\ottmv{a} } .  \tau  \RLolly   \constraintfont{ \ottnt{Q} }    \leadsto   \constraintfont{ \constraintfont{C}  \qtensor  \ottnt{Q}  \ottsym{[}  \overline{\upsilon}  \ottsym{/}  \overline{\ottmv{a} }  \ottsym{]} } $
    \item $ \constraintfont{ \ottnt{Q_{\ottmv{g}}} }   \vdash   \constraintfont{ \constraintfont{C}  \qtensor  \ottnt{Q}  \ottsym{[}  \overline{\upsilon}  \ottsym{/}  \overline{\ottmv{a} }  \ottsym{]} } $
    \item $\Gamma  \vdashi  \ottnt{e}  \ottsym{:}  \tau  \ottsym{[}  \overline{\upsilon}  \ottsym{/}  \overline{\ottmv{a} }  \ottsym{]}  \leadsto   \constraintfont{ \constraintfont{C} } $
    \end{itemize}
    By \cref{lem:inversion}, there exist $  \constraintfont{ \ottnt{Q_{{\mathrm{1}}}} }  $, $  \constraintfont{ \ottnt{Q_{{\mathrm{2}}}} }  $
    such that
    \begin{itemize}
    \item $ \constraintfont{ \ottnt{Q_{{\mathrm{1}}}} }   \vdash   \constraintfont{ \constraintfont{C} } $
    \item $ \constraintfont{ \ottnt{Q_{{\mathrm{2}}}} }   \vdash   \constraintfont{ \ottnt{Q}  \ottsym{[}  \overline{\upsilon}  \ottsym{/}  \overline{\ottmv{a} }  \ottsym{]} } $
    \item $ \constraintfont{ \ottnt{Q_{\ottmv{g}}} }   \Vdash   \constraintfont{ \ottnt{Q_{{\mathrm{1}}}}  \qtensor  \ottnt{Q_{{\mathrm{2}}}} } $
    \end{itemize}
    By induction hypothesis
    \begin{itemize}
    \item $ \constraintfont{ \ottnt{Q_{{\mathrm{1}}}} }   \ottsym{;}  \Gamma  \vdash  \ottnt{e}  \ottsym{:}  \tau  \ottsym{[}  \overline{\upsilon}  \ottsym{/}  \overline{\ottmv{a} }  \ottsym{]}$
    \end{itemize}
    So, by \rref{E-Pack}, we have $ \constraintfont{ \ottnt{Q_{{\mathrm{1}}}}  \qtensor  \ottnt{Q}  \ottsym{[}  \overline{\upsilon}  \ottsym{/}  \overline{\ottmv{a} }  \ottsym{]} }   \ottsym{;}  \Gamma  \vdash  \packbox \, \ottnt{e}  \ottsym{:}   \exists   \overline{\ottmv{a} } .  \tau  \RLolly   \constraintfont{ \ottnt{Q} }  $. By \rref{E-Sub}, we conclude
    $ \constraintfont{ \ottnt{Q_{\ottmv{g}}} }   \ottsym{;}    \multiplicityfont{ \omega }  \scale \Gamma   \vdash  \packbox \, \ottnt{e}  \ottsym{:}   \exists   \overline{\ottmv{a} } .  \tau  \RLolly   \constraintfont{ \ottnt{Q} }  $.
  \item[\rref*{G-Unpack}] We have
    \begin{itemize}
    \item $\Gamma_{{\mathrm{1}}}  \ottsym{+}  \Gamma_{{\mathrm{2}}}  \vdashi   \klet\ \packbox  \ottmv{x}  =  \ottnt{e_{{\mathrm{1}}}}  \  \ottkw{in}  \  \ottnt{e_{{\mathrm{2}}}}   \ottsym{:}  \tau  \leadsto   \constraintfont{ \constraintfont{C_{{\mathrm{1}}}}  \qtensor    \multiplicityfont{ \ottsym{1} }  \scale ( \ottnt{Q'}  \Lolly  \constraintfont{C_{{\mathrm{2}}}} )  } $
    \item $ \constraintfont{ \ottnt{Q_{\ottmv{g}}} }   \vdash   \constraintfont{ \constraintfont{C_{{\mathrm{1}}}}  \qtensor    \multiplicityfont{ \ottsym{1} }  \scale ( \ottnt{Q'}  \Lolly  \constraintfont{C_{{\mathrm{2}}}} )  } $
    \item $\Gamma_{{\mathrm{1}}}  \vdashi  \ottnt{e_{{\mathrm{1}}}}  \ottsym{:}   \exists   \overline{\ottmv{a} } .  \tau_{{\mathrm{1}}}  \RLolly   \constraintfont{ \ottnt{Q'} }    \leadsto   \constraintfont{ \constraintfont{C_{{\mathrm{1}}}} } $
    \item $\Gamma_{{\mathrm{2}}}  \ottsym{,}   \ottmv{x} {:}_{  \multiplicityfont{ \pi }  } \tau_{{\mathrm{1}}}   \vdashi  \ottnt{e_{{\mathrm{2}}}}  \ottsym{:}  \tau  \leadsto   \constraintfont{ \constraintfont{C_{{\mathrm{2}}}} } $
    \end{itemize}
    By \cref{lem:inversion}, there exist $  \constraintfont{ \ottnt{Q_{{\mathrm{1}}}} }  $, $  \constraintfont{ \ottnt{Q_{{\mathrm{2}}}} }  $
    such that
    \begin{itemize}
    \item $ \constraintfont{ \ottnt{Q_{{\mathrm{1}}}} }   \vdash   \constraintfont{ \constraintfont{C_{{\mathrm{1}}}} } $
    \item $ \constraintfont{ \ottnt{Q_{{\mathrm{2}}}}  \qtensor  \ottnt{Q'} }   \vdash   \constraintfont{ \constraintfont{C_{{\mathrm{2}}}} } $
    \item $ \constraintfont{ \ottnt{Q_{\ottmv{g}}} }   \Vdash   \constraintfont{ \ottnt{Q_{{\mathrm{1}}}}  \qtensor  \ottnt{Q_{{\mathrm{2}}}} } $
    \end{itemize}
    By induction hypothesis
    \begin{itemize}
    \item $ \constraintfont{ \ottnt{Q_{{\mathrm{1}}}} }   \ottsym{;}  \Gamma_{{\mathrm{1}}}  \vdash  \ottnt{e_{{\mathrm{1}}}}  \ottsym{:}   \exists   \overline{\ottmv{a} } .  \tau_{{\mathrm{1}}}  \RLolly   \constraintfont{ \ottnt{Q'} }  $
    \item $ \constraintfont{ \ottnt{Q_{{\mathrm{2}}}}  \qtensor  \ottnt{Q} }   \ottsym{;}  \Gamma_{{\mathrm{2}}}  \vdash  \ottnt{e_{{\mathrm{2}}}}  \ottsym{:}  \tau$
    \end{itemize}
    Therefore (by \rref*{E-Unpack} and \rref*{E-Sub}) $ \constraintfont{ \ottnt{Q_{\ottmv{g}}} }   \ottsym{;}  \Gamma_{{\mathrm{1}}}  \ottsym{+}  \Gamma_{{\mathrm{2}}}  \vdash   \klet\ \packbox  \ottmv{x}  =  \ottnt{e_{{\mathrm{1}}}}  \  \ottkw{in}  \  \ottnt{e_{{\mathrm{2}}}}   \ottsym{:}  \tau$.
  \item[\rref*{G-Case}] We have
    \begin{itemize}
    \item $  \multiplicityfont{ \pi }  \scale \Gamma   \ottsym{+}  \Delta  \vdashi   \kcase_  \multiplicityfont{ \pi }   \, \ottnt{e} \, \ottkw{of} \, \ottsym{\{}  \overline{\ottmv{K}_i\ \overline{\ottmv{x}_i } \to \ottnt{e}_i }  \ottsym{\}}  \ottsym{:}  \tau  \leadsto   \constraintfont{   \multiplicityfont{ \pi }  \scale \constraintfont{C}   \qtensor  \bigaand  \constraintfont{C_{\ottmv{i}}} } $
    \item $ \constraintfont{ \ottnt{Q_{\ottmv{g}}} }   \vdash   \constraintfont{   \multiplicityfont{ \pi }  \scale \constraintfont{C}   \qtensor  \bigaand  \constraintfont{C_{\ottmv{i}}} } $
    \item $\Gamma  \vdashi  \ottnt{e}  \ottsym{:}  \ottmv{T} \, \overline{\sigma}  \leadsto   \constraintfont{ \constraintfont{C} } $
    \item For each $i$, $\Delta  \ottsym{,}   \overline{  \ottmv{x_{\ottmv{i}}} {:}_{  \multiplicityfont{ \ottsym{(}   \pi {⋅} \pi_{\ottmv{i}}   \ottsym{)} }  } \upsilon_{\ottmv{i}}  \ottsym{[}  \overline{\sigma}  \ottsym{/}  \overline{\ottmv{a} }  \ottsym{]}  }   \vdashi  \ottnt{e_{\ottmv{i}}}  \ottsym{:}  \tau  \leadsto   \constraintfont{ \constraintfont{C_{\ottmv{i}}} } $
    \end{itemize}
    By repeated uses of \cref{lem:inversion} as well as \cref{lem:wanted:demote}, there exist
    $  \constraintfont{ \ottnt{Q} }  $, $  \constraintfont{ \ottnt{Q'} }  $ such that
    \begin{itemize}
    \item $ \constraintfont{ \ottnt{Q} }   \vdash   \constraintfont{ \constraintfont{C} } $
    \item For each $i$, $ \constraintfont{ \ottnt{Q'} }   \vdash   \constraintfont{ \constraintfont{C_{\ottmv{i}}} } $
    \item $ \constraintfont{ \ottnt{Q_{\ottmv{g}}} }   \Vdash   \constraintfont{   \multiplicityfont{ \pi }  \scale \ottnt{Q}   \qtensor  \ottnt{Q'} } $
    \end{itemize}
    By induction hypothesis
    \begin{itemize}
    \item $ \constraintfont{ \ottnt{Q} }   \ottsym{;}  \Gamma  \vdash  \ottnt{e}  \ottsym{:}  \ottmv{T} \, \overline{\sigma}$
    \item For each $i$, $ \constraintfont{ \ottnt{Q'} }   \ottsym{;}  \Delta  \ottsym{,}   \overline{  \ottmv{x_{\ottmv{i}}} {:}_{  \multiplicityfont{ \ottsym{(}   \pi {⋅} \pi_{\ottmv{i}}   \ottsym{)} }  } \upsilon_{\ottmv{i}}  \ottsym{[}  \overline{\sigma}  \ottsym{/}  \overline{\ottmv{a} }  \ottsym{]}  }   \vdash  \ottnt{e_{\ottmv{i}}}  \ottsym{:}  \tau$
    \end{itemize}
    Therefore (by \rref*{E-Case} and \rref*{E-Sub}) $ \constraintfont{ \ottnt{Q_{\ottmv{g}}} }   \ottsym{;}    \multiplicityfont{ \pi }  \scale \Gamma   \ottsym{+}  \Delta  \vdash   \kcase_  \multiplicityfont{ \pi }   \, \ottnt{e} \, \ottkw{of} \, \ottsym{\{}  \overline{\ottmv{K}_i\ \overline{\ottmv{x}_i } \to \ottnt{e}_i }  \ottsym{\}}  \ottsym{:}  \tau$.
  \end{description}
\end{proof}

\subsection{Constraint Solving}
\label{sec:constraint-solver}

In this section, we build a \emph{constraint solver} that decides whether or not
$ \constraintfont{ \ottnt{Q_{\ottmv{g}}} }   \vdash   \constraintfont{ \constraintfont{C} } $ holds, as required by \cref{lem:generation-soundness}.

So far, the constraint domain has been kept abstract. For the sake of
concreteness let us fix a particular constraint domain for the duration of this
section. As atomic constraints, we take an arbitrary countable set. We will only
compare atomic constraints for equality, so the only structure that we require
is that equality of atomic constraints is decidable equality. Furthermore, we fix an
arbitrary decidable subset $ \constraintfont{\mathcal{D} } $ of duplicable atomic constraints (to which
the \ensuremath{\constraintfont{\Conid{Linearly}}} constraint must belong). This is one of the simplest possible
domain but it is sufficient to support our examples throughout this paper. The
entailment relation for this domain is given by a simple recursive definition
shown in \cref{fig:concrete-domain}. The intention for our definition relation
is to mean that a constraint $  \constraintfont{ \ottnt{q} }  $ is entailed if and only if it is already
assumed; with the appropriate linearity restrictions.

\begin{figure}
  \maybesmall
  \centering
  \drules[Q]{$\ottnt{QF_{{\mathrm{1}}}}  \Vdash  \ottnt{QF_{{\mathrm{2}}}}$}{Entailment relation}{Linear, DupOne, DupNone, Ur}

  \caption{Concrete constraint entailment}
  \label{fig:concrete-domain}
\end{figure}

The constraint solver itself is represented by the following judgement:
$$
\vdashs   \constraintfont{ \constraintfont{C} }   \leadsto   \constraintfont{ \ottnt{Q} } 
$$
The judgement doesn't take any context as an input. Instead the judgement outputs a simple constraint
$  \constraintfont{ \ottnt{Q} }  $, which we want to read as a context which entails $  \constraintfont{ \constraintfont{C} }  $ (see
\cref{lem:solver-soundness}).

Returning a context rather than taking it as an argument lets us effectively
count the number of occurrences of each atomic constraint in the solution, which we can then
reject if it is not compatible with the multiplicity of the given constraint.
This is essentially the same technique that \textsc{ghc} uses to implement
linearity of regular terms in its typechecker. Note that we return the whole
context here, whereas \textsc{ghc}'s typechecker only returns multiplicity,
which is more suited for a concrete implementation.

If the constraint solver finds a solution then $\vdashs   \constraintfont{ \constraintfont{C} }   \leadsto   \constraintfont{ \ottnt{Q} } $ then
$  \constraintfont{ \ottnt{Q} }  $ entails $  \constraintfont{ \constraintfont{C} }  $. In order to check that $ \constraintfont{ \ottnt{Q_{\ottmv{g}}} }   \vdash   \constraintfont{ \constraintfont{C} } $, it therefore suffices to
check that $\vdashs   \constraintfont{   \multiplicityfont{ \ottsym{1} }  \scale ( \ottnt{Q_{\ottmv{g}}}  \Lolly  \constraintfont{C} )  }   \leadsto   \constraintfont{  \mathbf{\varepsilon}  } $.

\begin{toappendix}
  \begin{lemma}[Soundness of $\vdashd   \constraintfont{ \ottnt{Q_{\ottmv{i}}} }   \setminus   \constraintfont{ \ottnt{Q_{\ottmv{b}}} }   \leadsto   \constraintfont{ \ottnt{Q_{\ottmv{o}}} } $]\label{l:diff-soundness}
  If $\vdashd   \constraintfont{ \ottnt{Q_{\ottmv{i}}} }   \setminus   \constraintfont{ \ottnt{Q_{\ottmv{b}}} }   \leadsto   \constraintfont{ \ottnt{Q_{\ottmv{o}}} } $ is derivable (see \cref{fig:constraint-solver})),
  then $ \constraintfont{ \ottnt{Q_{\ottmv{o}}}  \qtensor  \ottnt{Q_{\ottmv{b}}} }   \Vdash   \constraintfont{ \ottnt{Q_{\ottmv{i}}} } $.
\end{lemma}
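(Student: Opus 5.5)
This goes by induction on the derivation of $\vdashd \constraintfont{Q_i} \setminus \constraintfont{Q_b} \leadsto \constraintfont{Q_o}$. Since the rules of \cref{fig:constraint-solver} are not visible at this point, I assume they are syntax-directed on the shape of the inputs. Concretely, I expect them to peel atomic constraints off $\constraintfont{Q_b}$, or off $\constraintfont{Q_i}$, one at a time, with a base case in which $\constraintfont{Q_b}$ is empty and $\constraintfont{Q_o}=\constraintfont{Q_i}$. The invariant to carry through every case is exactly the statement: $\constraintfont{Q_o}\qtensor\constraintfont{Q_b}\Vdash\constraintfont{Q_i}$. Each case should then combine the induction hypothesis with the structural laws of \cref{fig:entailment-relation}, with the concrete rules of \cref{fig:concrete-domain} supplying the atomic steps.

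The base case is reflexivity (law 1), using the neutral element $\constraintfont{Q}\qtensor\mathbf{\varepsilon}=\constraintfont{Q}$. For a step that removes a linear atom $\constraintfont{1\scale q}$ present in both $\constraintfont{Q_i}$ and $\constraintfont{Q_b}$, I write $\constraintfont{Q_i}=\constraintfont{1\scale q}\qtensor\constraintfont{Q_i'}$ and $\constraintfont{Q_b}=\constraintfont{1\scale q}\qtensor\constraintfont{Q_b'}$. The induction hypothesis gives $\constraintfont{Q_o}\qtensor\constraintfont{Q_b'}\Vdash\constraintfont{Q_i'}$. Tensoring this with $\constraintfont{1\scale q}\Vdash\constraintfont{1\scale q}$ (law 3), and reassociating with commutativity and associativity, yields the goal.

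For an unrestricted atom $\constraintfont{\omega\scale q}$ in $\constraintfont{Q_b}$ that discharges one or more occurrences of $\constraintfont{q}$ in $\constraintfont{Q_i}$, I would use idempotence $\constraintfont{\omega\scale Q}\qtensor\constraintfont{\omega\scale Q}=\constraintfont{\omega\scale Q}$ to make a copy. Law 5 turns that copy into $\constraintfont{\pi\scale q}$ ($\omega\scale q$ itself if $\pi=\omega$), law 6 discards any leftovers, and law 3 recombines. Where $\constraintfont{Q_b}$ contains a duplicable atom $\constraintfont{q}\in\constraintfont{\mathcal{D}}$, I use laws 7 and 8 to duplicate or discard it as required before tensoring. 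If instead a rule leaves an atom of $\constraintfont{Q_i}$ unmatched and moves it into $\constraintfont{Q_o}$, the hypothesis is simply tensored with identity on that atom.

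The main obstacle is bookkeeping the multiset and set structure of $\constraintfont{(U,L)}$. One issue is that a linear atom of $\constraintfont{Q_i}$ discharged by an unrestricted assumption must be justified by law 5 applied to $\constraintfont{\omega\scale q}$ rather than by set membership. Another is being careful that no linear atom of $\constraintfont{Q_b}$ is silently dropped unless it lies in $\constraintfont{\mathcal{D}}$. In each such case I would first derive a small auxiliary fact, for instance $\constraintfont{\omega\scale q}\Vdash\constraintfont{\omega\scale q}\qtensor\constraintfont{\pi\scale q}$, and then apply it uniformly. Everything else is transitivity (law 2) chaining the induction hypothesis with these atomic entailments.
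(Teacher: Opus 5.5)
Your proposal is correct and follows the paper's proof: induction on the $\vdashd$ derivation, where each case reduces to an atomic entailment that is combined with the induction hypothesis by tensoring and transitivity. One difference is that the actual rules keep $Q_i$ fixed and put the removed copies of $q$ into the premise's output (D-Linear has premise $\vdashd Q_i \setminus (U_b,L_b) \leadsto Q_o \otimes 1\cdot q$). So the linear case is literally the induction hypothesis by the definition of $\otimes$, while D-Dup and D-Ur reduce to exactly your auxiliary facts $1\cdot q \Vdash \overline{1\cdot q}$ for $q\in\mathcal{D}$ and $\omega\cdot q \Vdash \pi_1\cdot q\otimes\cdots\otimes\pi_n\cdot q$.
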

\begin{nestedproof}
  By induction on the derivation of $\vdashd   \constraintfont{ \ottnt{Q_{\ottmv{i}}} }   \setminus   \constraintfont{ \ottnt{Q_{\ottmv{b}}} }   \leadsto   \constraintfont{ \ottnt{Q_{\ottmv{o}}} } $.
  \begin{description}
    \item[\rref*{D-Linear}] We have
          \begin{itemize}
            \item $  \constraintfont{ \ottnt{Q_{\ottmv{b}}} }  =  \constraintfont{ \ottsym{(}  \ottnt{U}_{\ottmv{b}}  \ottsym{,}   \ottnt{L}_{\ottmv{b}}  \uplus  \ottnt{q}   \ottsym{)} }  $
            \item $\vdashd   \constraintfont{ \ottnt{Q_{\ottmv{i}}} }   \setminus   \constraintfont{ \ottsym{(}  \ottnt{U}_{\ottmv{b}}  \ottsym{,}  \ottnt{L}_{\ottmv{b}}  \ottsym{)} }   \leadsto   \constraintfont{ \ottnt{Q_{\ottmv{o}}}  \qtensor    \multiplicityfont{ \ottsym{1} }  \scale \ottnt{q}  } $ hence, by
            induction hypothesis $ \constraintfont{ \ottnt{Q_{\ottmv{o}}}  \qtensor    \multiplicityfont{ \ottsym{1} }  \scale \ottnt{q}   \qtensor  \ottsym{(}  \ottnt{U}_{\ottmv{b}}  \ottsym{,}  \ottnt{L}_{\ottmv{b}}  \ottsym{)} }   \Vdash   \constraintfont{ \ottnt{Q_{\ottmv{i}}} } $
          \end{itemize}
          We need to prove that $ \constraintfont{ \ottnt{Q_{\ottmv{o}}}  \qtensor  \ottsym{(}  \ottnt{U}_{\ottmv{b}}  \ottsym{,}   \ottnt{L}_{\ottmv{b}}  \uplus  \ottnt{q}   \ottsym{)} }   \Vdash   \constraintfont{ \ottnt{Q_{\ottmv{i}}} } $, but
          by definition of the tensor product
          $  \constraintfont{ \ottsym{(}  \ottnt{U}_{\ottmv{b}}  \ottsym{,}   \ottnt{L}_{\ottmv{b}}  \uplus  \ottnt{q}   \ottsym{)} }   =   \constraintfont{   \multiplicityfont{ \ottsym{1} }  \scale \ottnt{q}   \qtensor  \ottsym{(}  \ottnt{U}_{\ottmv{b}}  \ottsym{,}  \ottnt{L}_{\ottmv{b}}  \ottsym{)} }  $, so the
          induction hypothesis proves this case.
    \item[\rref*{D-Dup}] We have
          \begin{itemize}
            \item $  \constraintfont{ \ottnt{q} }   \in  \constraintfont{\mathcal{D} } $
            \item $  \constraintfont{ \ottnt{Q_{\ottmv{b}}} }  =  \constraintfont{ \ottsym{(}  \ottnt{U}_{\ottmv{b}}  \ottsym{,}   \ottnt{L}_{\ottmv{b}}  \uplus  \ottnt{q}   \ottsym{)} }  $
            \item $\vdashd   \constraintfont{ \ottnt{Q_{\ottmv{i}}} }   \setminus   \constraintfont{ \ottsym{(}  \ottnt{U}_{\ottmv{b}}  \ottsym{,}  \ottnt{L}_{\ottmv{b}}  \ottsym{)} }   \leadsto   \constraintfont{ \ottnt{Q_{\ottmv{o}}}  \qtensor   \overline{   \multiplicityfont{ \ottsym{1} }  \scale \ottnt{q}  }  } $ hence, by
            induction hypothesis $ \constraintfont{ \ottnt{Q_{\ottmv{o}}}  \qtensor   \overline{   \multiplicityfont{ \ottsym{1} }  \scale \ottnt{q}  }   \qtensor  \ottsym{(}  \ottnt{U}_{\ottmv{b}}  \ottsym{,}  \ottnt{L}_{\ottmv{b}}  \ottsym{)} }   \Vdash   \constraintfont{ \ottnt{Q_{\ottmv{i}}} } $
          \end{itemize}
          We need to prove that $ \constraintfont{ \ottnt{Q_{\ottmv{o}}}  \qtensor  \ottsym{(}  \ottnt{U}_{\ottmv{b}}  \ottsym{,}   \ottnt{L}_{\ottmv{b}}  \uplus  \ottnt{q}   \ottsym{)} }   \Vdash   \constraintfont{ \ottnt{Q_{\ottmv{i}}} } $,
          which, by definition of the tensor product is the same as
          $ \constraintfont{ \ottnt{Q_{\ottmv{o}}}  \qtensor    \multiplicityfont{ \ottsym{1} }  \scale \ottnt{q}   \qtensor  \ottsym{(}  \ottnt{U}_{\ottmv{b}}  \ottsym{,}  \ottnt{L}_{\ottmv{b}}  \ottsym{)} }   \Vdash   \constraintfont{ \ottnt{Q_{\ottmv{i}}} } $. By \rref{D-Dom} it is hence
          sufficient to prove that $ \constraintfont{   \multiplicityfont{ \ottsym{1} }  \scale \ottnt{q}  }   \Vdash   \constraintfont{  \overline{   \multiplicityfont{ \ottsym{1} }  \scale \ottnt{q}  }  } $, which is precisely
          what $  \constraintfont{ \ottnt{q} }   \in  \constraintfont{\mathcal{D} } $ gives us.
    \item[\rref*{D-Ur}] We have
          \begin{itemize}
            \item $  \constraintfont{ \ottnt{Q_{\ottmv{b}}} }   =   \constraintfont{ \ottsym{(}   \ottnt{U}_{\ottmv{b}}  \cup  \ottnt{q}   \ottsym{,}  \ottnt{L}_{\ottmv{b}}  \ottsym{)} }  $
            \item
            $\vdashd   \constraintfont{ \ottnt{Q_{\ottmv{i}}} }   \setminus   \constraintfont{ \ottsym{(}  \ottnt{U}_{\ottmv{b}}  \ottsym{,}  \ottnt{L}_{\ottmv{b}}  \ottsym{)} }   \leadsto   \constraintfont{ \ottnt{Q_{\ottmv{o}}}  \qtensor    \multiplicityfont{ \pi_{{\mathrm{1}}} }  \scale \ottnt{q}   \qtensor \, ... \, \qtensor    \multiplicityfont{ \pi_{\ottmv{n}} }  \scale \ottnt{q}  } $
            hence, by induction hypothesis $ \constraintfont{ \ottnt{Q_{\ottmv{o}}}  \qtensor    \multiplicityfont{ \pi_{{\mathrm{1}}} }  \scale \ottnt{q}   \qtensor \, ... \, \qtensor    \multiplicityfont{ \pi_{\ottmv{n}} }  \scale \ottnt{q}   \qtensor  \ottsym{(}  \ottnt{U}_{\ottmv{b}}  \ottsym{,}  \ottnt{L}_{\ottmv{b}}  \ottsym{)} }   \Vdash   \constraintfont{ \ottnt{Q_{\ottmv{i}}} } $ 
          \end{itemize}
          We need to prove that $ \constraintfont{ \ottnt{Q_{\ottmv{o}}}  \qtensor  \ottsym{(}   \ottnt{U}_{\ottmv{b}}  \cup  \ottnt{q}   \ottsym{,}  \ottnt{L}_{\ottmv{b}}  \ottsym{)} }   \Vdash   \constraintfont{ \ottnt{Q_{\ottmv{i}}} } $,
          which, by definition of the tensor product is the same as
          $ \constraintfont{ \ottnt{Q_{\ottmv{o}}}  \qtensor    \multiplicityfont{ \omega }  \scale \ottnt{q}   \qtensor  \ottsym{(}  \ottnt{U}_{\ottmv{b}}  \ottsym{,}  \ottnt{L}_{\ottmv{b}}  \ottsym{)} }   \Vdash   \constraintfont{ \ottnt{Q_{\ottmv{i}}} } $. By \rref{D-Dom} it is
          hence sufficient to prove that $ \constraintfont{   \multiplicityfont{ \omega }  \scale \ottnt{q}  }   \Vdash   \constraintfont{   \multiplicityfont{ \pi_{{\mathrm{1}}} }  \scale \ottnt{q}   \qtensor \, ... \, \qtensor    \multiplicityfont{ \pi_{\ottmv{n}} }  \scale \ottnt{q}  } $
          which follows from the conditions in \cref{fig:entailment-relation}.
  \end{description}
\end{nestedproof}
\begin{lemma}[Soundness of $  \constraintfont{ \ottnt{Q_{{\mathrm{1}}}}  \wedge  \ottnt{Q_{{\mathrm{2}}}} }  $]\label{l:meet-soundness}
  For any $  \constraintfont{ \ottnt{Q_{{\mathrm{1}}}} }  $, $  \constraintfont{ \ottnt{Q_{{\mathrm{2}}}} }  $, we have $ \constraintfont{ \ottnt{Q_{{\mathrm{1}}}}  \wedge  \ottnt{Q_{{\mathrm{2}}}} }   \Vdash   \constraintfont{ \ottnt{Q_{{\mathrm{1}}}} } $ and $ \constraintfont{ \ottnt{Q_{{\mathrm{1}}}}  \wedge  \ottnt{Q_{{\mathrm{2}}}} }   \Vdash   \constraintfont{ \ottnt{Q_{{\mathrm{2}}}} } $
\end{lemma}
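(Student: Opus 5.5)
The plan is to unfold the definition of $  \constraintfont{ \ottnt{Q_{{\mathrm{1}}}}  \wedge  \ottnt{Q_{{\mathrm{2}}}} }  $ from \cref{fig:constraint-solver} and reduce each of the two entailments to the structural laws of \cref{fig:entailment-relation}. By symmetry of the definition, it suffices to prove $ \constraintfont{ \ottnt{Q_{{\mathrm{1}}}}  \wedge  \ottnt{Q_{{\mathrm{2}}}} }   \Vdash   \constraintfont{ \ottnt{Q_{{\mathrm{1}}}} } $. The second entailment then follows by the same argument with the roles exchanged, or from commutativity of the meet if it holds syntactically.

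\emph{Decomposing the goal.} Write $  \constraintfont{ \ottnt{Q_{{\mathrm{1}}}} }  =  \constraintfont{ \ottsym{(}  \ottnt{U}_{{\mathrm{1}}}  \ottsym{,}  \ottnt{L}_{{\mathrm{1}}}  \ottsym{)} }  $ and $  \constraintfont{ \ottnt{Q_{{\mathrm{2}}}} }  =  \constraintfont{ \ottsym{(}  \ottnt{U}_{{\mathrm{2}}}  \ottsym{,}  \ottnt{L}_{{\mathrm{2}}}  \ottsym{)} }  $. I expect the meet to have these properties:
\begin{itemize}
\item its unrestricted part is $ \ottnt{U}_{{\mathrm{1}}}  \cup  \ottnt{U}_{{\mathrm{2}}} $, possibly also absorbing linear atoms that occur in only one side;
\item its linear part is a multiset $ \ottnt{L} $ built from the linear atoms common to $ \ottnt{L}_{{\mathrm{1}}} $ and $ \ottnt{L}_{{\mathrm{2}}} $, possibly keeping a single copy of duplicable atoms $  \constraintfont{ \ottnt{q} }   \in  \constraintfont{\mathcal{D} } $.
\end{itemize}
By the definition of $ \qtensor $ on pairs, both sides split as a tensor of $ \multiplicityfont{ \omega } $-scaled atoms and $ \multiplicityfont{ \ottsym{1} } $-scaled atoms. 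For the meet this reads $ \constraintfont{   \multiplicityfont{ \omega }  \scale \ottnt{U}_{{\mathrm{1}}}   \qtensor    \multiplicityfont{ \omega }  \scale \ottnt{R}   \qtensor  \ottsym{(}  \emptyset  \ottsym{,}  \ottnt{L}  \ottsym{)} } $, where $ \ottnt{R} $ is the extra unrestricted material. By congruence of $ \Vdash $ under $ \qtensor $ (law 3) and transitivity (law 2), it suffices to match the pieces one at a time against the pieces of $  \constraintfont{ \ottnt{Q_{{\mathrm{1}}}} }  $.

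\emph{Matching the pieces.} Each piece is discharged by one of the laws:
\begin{itemize}
\item $ \constraintfont{   \multiplicityfont{ \omega }  \scale \ottnt{U}_{{\mathrm{1}}}  } $ matches itself by reflexivity (law 1).
\item The extra material $ \constraintfont{   \multiplicityfont{ \omega }  \scale \ottnt{R}  } $ is discarded to $  \constraintfont{  \mathbf{\varepsilon}  }  $ by law 6.
\item A linear atom $  \constraintfont{ \ottnt{q} }  $ of $ \ottnt{L}_{{\mathrm{1}}} $ that the meet kept only in its unrestricted part is recovered by law 5, $ \constraintfont{   \multiplicityfont{ \omega }  \scale \ottnt{q}  }   \Vdash   \constraintfont{   \multiplicityfont{ \ottsym{1} }  \scale \ottnt{q}  } $. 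If the same atom is needed several times, first use the idempotence law $  \constraintfont{   \multiplicityfont{ \omega }  \scale \ottnt{Q}   \qtensor    \multiplicityfont{ \omega }  \scale \ottnt{Q}  }  =  \constraintfont{   \multiplicityfont{ \omega }  \scale \ottnt{Q}  }  $ to duplicate the $ \multiplicityfont{ \omega } $ copy.
\item For duplicable atoms $  \constraintfont{ \ottnt{q} }   \in  \constraintfont{\mathcal{D} } $ whose linear multiplicities differ between $ \ottnt{L}_{{\mathrm{1}}} $ and $ \ottnt{L} $, adjust the count by repeated use of law 7 (duplication) or law 8 (discarding to $  \constraintfont{  \mathbf{\varepsilon}  }  $).
\end{itemize}
Finally, reassemble the pieces with law 3 and the unit law $  \constraintfont{ \ottnt{Q}  \qtensor   \mathbf{\varepsilon}  }  =  \constraintfont{ \ottnt{Q} }  $.

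\emph{Main obstacle.} The hard step is the side-condition bookkeeping. The meet is presumably defined only when the non-duplicable linear atoms of $ \ottnt{L}_{{\mathrm{1}}} $ and $ \ottnt{L}_{{\mathrm{2}}} $ agree, or when the mismatched ones are covered by unrestricted assumptions. I would need to verify, directly from the definition in \cref{fig:constraint-solver}, that every atom of $ \ottnt{L}_{{\mathrm{1}}} $ is either present linearly in the meet, present in its unrestricted part, or duplicable. If the definition is given by inference rules rather than as a closed formula, I would instead proceed by induction on its derivation. In that case each rule adds one atom to both sides in a compatible way, and the step case is a single application of law 3 together with one of laws 1, 5, 6, 7 or 8, in the same style as the proof of \cref{l:diff-soundness}.
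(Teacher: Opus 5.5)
Your fallback plan is exactly the paper's proof: the meet is defined by the rules Inf-Ur, Inf-Match, Inf-DiffL/R and Inf-DiffDL/R, and the paper inducts on that derivation, closing each rule with law 3 plus one structural law. Inf-Ur uses law 6, Inf-Match uses the induction hypothesis, Inf-DiffL/R (which promote a one-sided linear atom to $\omega$) use laws 5 and 6, and Inf-DiffDL/R (which keep a one-sided duplicable atom linear) use law 8, so the duplication law 7 you list is never needed.
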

\begin{nestedproof}
  By induction on the derivation of $ \constraintfont{ \ottnt{Q_{{\mathrm{1}}}} }   \wedge   \constraintfont{ \ottnt{Q_{{\mathrm{2}}}} }   \ottsym{=}   \constraintfont{ \ottnt{Q} } $
  \begin{description}
    \item[\rref*{Inf-Ur}] We have
          \begin{itemize}
            \item $  \constraintfont{ \ottnt{Q_{{\mathrm{1}}}} }   =   \constraintfont{ \ottsym{(}  \ottnt{U}_{{\mathrm{1}}}  \ottsym{,}  \emptyset  \ottsym{)} }  $ in other words,
            $  \constraintfont{ \ottnt{Q_{{\mathrm{1}}}} }   =   \constraintfont{   \multiplicityfont{ \omega }  \scale \ottnt{Q'_{{\mathrm{1}}}}  }  $ for some $  \constraintfont{ \ottnt{Q'_{{\mathrm{1}}}} }  $.
            \item $  \constraintfont{ \ottnt{Q_{{\mathrm{2}}}} }   =   \constraintfont{ \ottsym{(}  \ottnt{U}_{{\mathrm{2}}}  \ottsym{,}  \emptyset  \ottsym{)} }  $ in other words,
            $  \constraintfont{ \ottnt{Q_{{\mathrm{2}}}} }   =   \constraintfont{   \multiplicityfont{ \omega }  \scale \ottnt{Q'_{{\mathrm{2}}}}  }  $ for some $  \constraintfont{ \ottnt{Q'_{{\mathrm{2}}}} }  $.
            \item
            $  \constraintfont{ \ottnt{Q_{{\mathrm{1}}}}  \wedge  \ottnt{Q_{{\mathrm{2}}}} }   =   \constraintfont{ \ottsym{(}   \ottnt{U}_{{\mathrm{1}}}  \cup  \ottnt{U}_{{\mathrm{2}}}   \ottsym{,}  \emptyset  \ottsym{)} }   =   \constraintfont{   \multiplicityfont{ \omega }  \scale \ottnt{Q'_{{\mathrm{1}}}}   \qtensor    \multiplicityfont{ \omega }  \scale \ottnt{Q'_{{\mathrm{2}}}}  }  $.
          \end{itemize}
          It is the case, as required, that
          $ \constraintfont{   \multiplicityfont{ \omega }  \scale \ottnt{Q'_{{\mathrm{1}}}}   \qtensor    \multiplicityfont{ \omega }  \scale \ottnt{Q'_{{\mathrm{2}}}}  }   \Vdash   \constraintfont{   \multiplicityfont{ \omega }  \scale \ottnt{Q'_{{\mathrm{1}}}}  } $ and
          $ \constraintfont{   \multiplicityfont{ \omega }  \scale \ottnt{Q'_{{\mathrm{1}}}}   \qtensor    \multiplicityfont{ \omega }  \scale \ottnt{Q'_{{\mathrm{2}}}}  }   \Vdash   \constraintfont{   \multiplicityfont{ \omega }  \scale \ottnt{Q'_{{\mathrm{2}}}}  } $ by the rules of
          \cref{fig:entailment-relation}.
    \item[\rref*{D-Match}] We have
          \begin{itemize}
            \item $  \constraintfont{ \ottnt{Q_{{\mathrm{1}}}} }   =   \constraintfont{ \ottsym{(}  \ottnt{U}_{{\mathrm{1}}}  \ottsym{,}   \ottnt{L}_{{\mathrm{1}}}  \uplus  \ottnt{q}   \ottsym{)} }   =   \constraintfont{ \ottnt{Q'_{{\mathrm{1}}}}  \qtensor    \multiplicityfont{ \ottsym{1} }  \scale \ottnt{q}  }  $
            \item $  \constraintfont{ \ottnt{Q_{{\mathrm{2}}}} }   =   \constraintfont{ \ottsym{(}  \ottnt{U}_{{\mathrm{2}}}  \ottsym{,}   \ottnt{L}_{{\mathrm{2}}}  \uplus  \ottnt{q}   \ottsym{)} }   =   \constraintfont{ \ottnt{Q'_{{\mathrm{2}}}}  \qtensor    \multiplicityfont{ \ottsym{1} }  \scale \ottnt{q}  }  $
            \item $  \constraintfont{ \ottnt{Q_{{\mathrm{1}}}}  \wedge  \ottnt{Q_{{\mathrm{2}}}} }   =   \constraintfont{ \ottsym{(}  \ottnt{Q'_{{\mathrm{1}}}}  \wedge  \ottnt{Q'_{{\mathrm{2}}}}  \ottsym{)}  \qtensor    \multiplicityfont{ \ottsym{1} }  \scale \ottnt{q}  }  $
          \end{itemize}
          By induction hypothesis, $ \constraintfont{ \ottnt{Q'_{{\mathrm{1}}}}  \wedge  \ottnt{Q'_{{\mathrm{2}}}} }   \Vdash   \constraintfont{ \ottnt{Q'_{{\mathrm{1}}}} } $ and
          $ \constraintfont{ \ottnt{Q'_{{\mathrm{1}}}}  \wedge  \ottnt{Q'_{{\mathrm{2}}}} }   \Vdash   \constraintfont{ \ottnt{Q'_{{\mathrm{2}}}} } $. Which, tensoring with $  \constraintfont{   \multiplicityfont{ \ottsym{1} }  \scale \ottnt{q}  }  $ according
          to the rules of \cref{fig:entailment-relation}, gives
          $ \constraintfont{ \ottsym{(}  \ottnt{Q'_{{\mathrm{1}}}}  \wedge  \ottnt{Q'_{{\mathrm{2}}}}  \ottsym{)}  \qtensor    \multiplicityfont{ \ottsym{1} }  \scale \ottnt{q}  }   \Vdash   \constraintfont{ \ottnt{Q'_{{\mathrm{1}}}}  \qtensor    \multiplicityfont{ \ottsym{1} }  \scale \ottnt{q}  } $ and
          $ \constraintfont{ \ottsym{(}  \ottnt{Q'_{{\mathrm{1}}}}  \wedge  \ottnt{Q'_{{\mathrm{2}}}}  \ottsym{)}  \qtensor    \multiplicityfont{ \ottsym{1} }  \scale \ottnt{q}  }   \Vdash   \constraintfont{ \ottnt{Q'_{{\mathrm{2}}}}  \qtensor    \multiplicityfont{ \ottsym{1} }  \scale \ottnt{q}  } $. We conclude by definition of
          $  \constraintfont{ \ottnt{Q_{{\mathrm{1}}}}  \wedge  \ottnt{Q_{{\mathrm{2}}}} }  $.
    \item[\rref*{Inf-DiffL} and \rref*{Inf-DiffR}] These two cases are
          symmetric, so we'll prove the case of \rref*{Inf-DiffL},
          \rref*{Inf-DiffL} is similar. That is we have
          \begin{itemize}
            \item $  \constraintfont{ \ottnt{Q_{{\mathrm{1}}}} }   =   \constraintfont{ \ottsym{(}  \ottnt{U}_{{\mathrm{1}}}  \ottsym{,}   \ottnt{L}_{{\mathrm{1}}}  \uplus  \ottnt{q}   \ottsym{)} }   =   \constraintfont{ \ottnt{Q'_{{\mathrm{1}}}}  \qtensor    \multiplicityfont{ \ottsym{1} }  \scale \ottnt{q}  }  $
            \item $  \constraintfont{ \ottnt{Q_{{\mathrm{1}}}}  \wedge  \ottnt{Q_{{\mathrm{2}}}} }   =   \constraintfont{ \ottsym{(}  \ottnt{Q'_{{\mathrm{1}}}}  \wedge  \ottnt{Q_{{\mathrm{2}}}}  \ottsym{)}  \qtensor    \multiplicityfont{ \omega }  \scale \ottnt{q}  }  $
          \end{itemize}
          By induction hypothesis $ \constraintfont{ \ottnt{Q'_{{\mathrm{1}}}}  \wedge  \ottnt{Q_{{\mathrm{2}}}} }   \Vdash   \constraintfont{ \ottnt{Q'_{{\mathrm{1}}}} } $ and
          $ \constraintfont{ \ottnt{Q'_{{\mathrm{1}}}}  \wedge  \ottnt{Q_{{\mathrm{2}}}} }   \Vdash   \constraintfont{ \ottnt{Q_{{\mathrm{2}}}} } $. Now $ \constraintfont{ \ottsym{(}  \ottnt{Q'_{{\mathrm{1}}}}  \wedge  \ottnt{Q_{{\mathrm{2}}}}  \ottsym{)}  \qtensor    \multiplicityfont{ \omega }  \scale \ottnt{q}  }   \Vdash   \constraintfont{ \ottnt{Q_{{\mathrm{2}}}} } $ by the
          weakening rule of \cref{fig:entailment-relation}. And
          $ \constraintfont{ \ottsym{(}  \ottnt{Q'_{{\mathrm{1}}}}  \wedge  \ottnt{Q_{{\mathrm{2}}}}  \ottsym{)}  \qtensor    \multiplicityfont{ \omega }  \scale \ottnt{q}  }   \Vdash   \constraintfont{ \ottnt{Q'_{{\mathrm{1}}}}  \qtensor    \multiplicityfont{ \ottsym{1} }  \scale \ottnt{q}  } $ by the rules of
          \cref{fig:entailment-relation}.
    \item[\rref*{Inf-DiffDL} and \rref*{Inf-DiffDR}] These two cases are
          symmetric so we'll only prove the case of \rref*{Inf-DiffDL},
          \rref*{Inf-DiffDR} is similar. That is we have
          \begin{itemize}
            \item $  \constraintfont{ \ottnt{q} }   \in  \constraintfont{\mathcal{D} } $
            \item $  \constraintfont{ \ottnt{Q_{{\mathrm{1}}}} }   =   \constraintfont{ \ottsym{(}  \ottnt{U}_{{\mathrm{1}}}  \ottsym{,}   \ottnt{L}_{{\mathrm{1}}}  \uplus  \ottnt{q}   \ottsym{)} }   =   \constraintfont{ \ottnt{Q'_{{\mathrm{1}}}}  \qtensor    \multiplicityfont{ \ottsym{1} }  \scale \ottnt{q}  }  $
            \item $  \constraintfont{ \ottnt{Q_{{\mathrm{1}}}}  \wedge  \ottnt{Q_{{\mathrm{2}}}} }   =   \constraintfont{ \ottsym{(}  \ottnt{Q'_{{\mathrm{1}}}}  \wedge  \ottnt{Q_{{\mathrm{2}}}}  \ottsym{)}  \qtensor    \multiplicityfont{ \ottsym{1} }  \scale \ottnt{q}  }  $
          \end{itemize}
          By induction hypothesis $ \constraintfont{ \ottnt{Q'_{{\mathrm{1}}}}  \wedge  \ottnt{Q_{{\mathrm{2}}}} }   \Vdash   \constraintfont{ \ottnt{Q'_{{\mathrm{1}}}} } $ and
          $ \constraintfont{ \ottnt{Q'_{{\mathrm{1}}}}  \wedge  \ottnt{Q_{{\mathrm{2}}}} }   \Vdash   \constraintfont{ \ottnt{Q_{{\mathrm{2}}}} } $. Now $ \constraintfont{ \ottsym{(}  \ottnt{Q'_{{\mathrm{1}}}}  \wedge  \ottnt{Q_{{\mathrm{2}}}}  \ottsym{)}  \qtensor    \multiplicityfont{ \ottsym{1} }  \scale \ottnt{q}  }   \Vdash   \constraintfont{ \ottnt{Q_{{\mathrm{2}}}} } $ because
          $  \constraintfont{ \ottnt{q} }   \in  \constraintfont{\mathcal{D} } $ therefore $ \constraintfont{   \multiplicityfont{ \ottsym{1} }  \scale \ottnt{q}  }   \Vdash   \constraintfont{  \mathbf{\varepsilon}  } $. Finally
          $ \constraintfont{ \ottsym{(}  \ottnt{Q'_{{\mathrm{1}}}}  \wedge  \ottnt{Q_{{\mathrm{2}}}}  \ottsym{)}  \qtensor    \multiplicityfont{ \ottsym{1} }  \scale \ottnt{q}  }   \Vdash   \constraintfont{ \ottnt{Q'_{{\mathrm{1}}}}  \qtensor    \multiplicityfont{ \ottsym{1} }  \scale \ottnt{q}  } $ by the rules of
          \cref{fig:entailment-relation}.
  \end{description}
\end{nestedproof}

\end{toappendix}

\begin{lemmarep}[Constraint solver soundness]
  \label{lem:solver-soundness}
  If $\vdashs   \constraintfont{ \constraintfont{C} }   \leadsto   \constraintfont{ \ottnt{Q} } $, then $ \constraintfont{ \ottnt{Q} }   \vdash   \constraintfont{ \constraintfont{C} } $
\end{lemmarep}
\begin{proof}
  By induction on $\vdashs   \constraintfont{ \constraintfont{C_{\ottmv{w}}} }   \leadsto   \constraintfont{ \ottnt{Q} } $
  \begin{description}
  \item[\rref*{S-Atom}] We need to prove that $ \constraintfont{   \multiplicityfont{ \pi }  \scale \ottnt{q}  }   \vdash   \constraintfont{   \multiplicityfont{ \pi }  \scale \ottnt{q}  } $ which holds by definition
  \item[\rref*{S-ImplOne}] We have
          \begin{itemize}
            \item $\vdashs   \constraintfont{ \constraintfont{C} }   \leadsto   \constraintfont{ \ottnt{Q_{\ottmv{i}}} } $, hence $ \constraintfont{ \ottnt{Q_{\ottmv{i}}} }   \vdash   \constraintfont{ \constraintfont{C} } $ by induction hypothesis
            \item $\vdashd   \constraintfont{ \ottnt{Q_{\ottmv{i}}} }   \setminus   \constraintfont{ \ottnt{Q_{\ottmv{b}}} }   \leadsto   \constraintfont{ \ottnt{Q_{\ottmv{o}}} } $
          \end{itemize}
          We need to prove that $ \constraintfont{   \multiplicityfont{ \pi }  \scale \ottnt{Q_{\ottmv{o}}}  }   \vdash   \constraintfont{   \multiplicityfont{ \pi }  \scale ( \ottnt{Q_{\ottmv{b}}}  \Lolly  \constraintfont{C} )  } $.

          By \rref{C-Impl}, it suffices that  $ \constraintfont{ \ottnt{Q_{\ottmv{o}}}  \qtensor  \ottnt{Q_{\ottmv{b}}} }   \vdash   \constraintfont{ \constraintfont{C} } $. Since
          $\vdashd   \constraintfont{ \ottnt{Q_{\ottmv{i}}} }   \setminus   \constraintfont{ \ottnt{Q_{\ottmv{b}}} }   \leadsto   \constraintfont{ \ottnt{Q_{\ottmv{o}}} } $, by \cref{l:diff-soundness},
          $ \constraintfont{ \ottnt{Q_{\ottmv{o}}}  \qtensor  \ottnt{Q_{\ottmv{b}}} }   \Vdash   \constraintfont{ \ottnt{Q_{\ottmv{i}}} } $, together with the induction
          hypothesis and \rref{C-Dom}, we get $ \constraintfont{ \ottnt{Q_{\ottmv{o}}}  \qtensor  \ottnt{Q_{\ottmv{b}}} }   \vdash   \constraintfont{ \constraintfont{C} } $ as required.
  \item[\rref*{S-Tensor}] We have
          \begin{itemize}
            \item $\vdashs   \constraintfont{ \constraintfont{C_{{\mathrm{1}}}} }   \leadsto   \constraintfont{ \ottnt{Q_{{\mathrm{1}}}} } $, hence $ \constraintfont{ \ottnt{Q_{{\mathrm{1}}}} }   \vdash   \constraintfont{ \constraintfont{C_{{\mathrm{1}}}} } $ by induction hypothesis
            \item $\vdashs   \constraintfont{ \constraintfont{C_{{\mathrm{2}}}} }   \leadsto   \constraintfont{ \ottnt{Q_{{\mathrm{2}}}} } $, hence $ \constraintfont{ \ottnt{Q_{{\mathrm{2}}}} }   \vdash   \constraintfont{ \constraintfont{C_{{\mathrm{2}}}} } $ by induction hypothesis
          \end{itemize}

          We need to prove that $ \constraintfont{ \ottnt{Q_{{\mathrm{1}}}}  \qtensor  \ottnt{Q_{{\mathrm{2}}}} }   \vdash   \constraintfont{ \constraintfont{C_{{\mathrm{1}}}}  \qtensor  \constraintfont{C_{{\mathrm{2}}}} } $, which follows from
          \rref{C-Tensor}.
    \item[\rref*{S-With}] We have
          \begin{itemize}
            \item $\vdashs   \constraintfont{ \constraintfont{C_{{\mathrm{1}}}} }   \leadsto   \constraintfont{ \ottnt{Q_{{\mathrm{1}}}} } $, hence $ \constraintfont{ \ottnt{Q_{{\mathrm{1}}}} }   \vdash   \constraintfont{ \constraintfont{C_{{\mathrm{1}}}} } $ by induction hypothesis
            \item $\vdashs   \constraintfont{ \constraintfont{C_{{\mathrm{2}}}} }   \leadsto   \constraintfont{ \ottnt{Q_{{\mathrm{2}}}} } $, hence $ \constraintfont{ \ottnt{Q_{{\mathrm{2}}}} }   \vdash   \constraintfont{ \constraintfont{C_{{\mathrm{2}}}} } $ by induction hypothesis
          \end{itemize}

          We need to prove that $ \constraintfont{ \ottnt{Q_{{\mathrm{1}}}}  \wedge  \ottnt{Q_{{\mathrm{2}}}} }   \vdash   \constraintfont{ \constraintfont{C_{{\mathrm{1}}}}  \aand  \constraintfont{C_{{\mathrm{2}}}} } $. By \rref{C-With}, it
          suffices that both
          \begin{itemize}
            \item $ \constraintfont{ \ottnt{Q_{{\mathrm{1}}}}  \wedge  \ottnt{Q_{{\mathrm{2}}}} }   \vdash   \constraintfont{ \constraintfont{C_{{\mathrm{1}}}} } $, and
            \item $ \constraintfont{ \ottnt{Q_{{\mathrm{1}}}}  \wedge  \ottnt{Q_{{\mathrm{2}}}} }   \vdash   \constraintfont{ \constraintfont{C_{{\mathrm{2}}}} } $
          \end{itemize}

          By \cref{l:meet-soundness}, we know that $ \constraintfont{ \ottnt{Q_{{\mathrm{1}}}}  \wedge  \ottnt{Q_{{\mathrm{2}}}} }   \Vdash   \constraintfont{ \ottnt{Q_{{\mathrm{1}}}} } $ and
          $ \constraintfont{ \ottnt{Q_{{\mathrm{1}}}}  \wedge  \ottnt{Q_{{\mathrm{2}}}} }   \Vdash   \constraintfont{ \ottnt{Q_{{\mathrm{2}}}} } $. Therefore by \rref{C-Dom}
          $ \constraintfont{ \ottnt{Q_{{\mathrm{1}}}}  \wedge  \ottnt{Q_{{\mathrm{2}}}} }   \vdash   \constraintfont{ \constraintfont{C_{{\mathrm{1}}}} } $ and $ \constraintfont{ \ottnt{Q_{{\mathrm{1}}}}  \wedge  \ottnt{Q_{{\mathrm{2}}}} }   \vdash   \constraintfont{ \constraintfont{C_{{\mathrm{2}}}} } $ follow from
          $ \constraintfont{ \ottnt{Q_{{\mathrm{1}}}} }   \vdash   \constraintfont{ \constraintfont{C_{{\mathrm{1}}}} } $ and $ \constraintfont{ \ottnt{Q_{{\mathrm{2}}}} }   \vdash   \constraintfont{ \constraintfont{C_{{\mathrm{2}}}} } $ respectively.
  \end{description}
\end{proof}


Building on this atomic-constraint solver, we use a linear proof
search algorithm based on the recipe given
by~\citet{resource-management-for-ll-proof-search}, with one caveat that we
discuss in \cref{sec:solver-design}. \Cref{fig:constraint-solver}
presents the rules of the constraint solver.

\begin{figure}
  \maybesmall
  \centering
  \drules[S]{$\vdashs   \constraintfont{ \constraintfont{C_{\ottmv{w}}} }   \leadsto   \constraintfont{ \ottnt{Q} } $}{Constraint solving}{Atom, Tensor, Impl, With}
  \drules[D]{$\vdashd   \constraintfont{ \ottnt{Q_{\ottmv{i}}} }   \setminus   \constraintfont{ \ottnt{Q_{\ottmv{b}}} }   \leadsto   \constraintfont{ \ottnt{Q_{\ottmv{o}}} } $}{Check bounds}{Linear, Dup, Ur}
  \drules[Inf]{$ \constraintfont{ \ottnt{Q_{{\mathrm{1}}}} }   \wedge   \constraintfont{ \ottnt{Q_{{\mathrm{2}}}} }   \ottsym{=}   \constraintfont{ \ottnt{Q} } $}{Meet}{Ur, Match, DiffL, DiffR, DiffDL, DiffDR}
  \caption{Constraint solver}
  \label{fig:constraint-solver}
\end{figure}

\begin{itemize}
  \item The~\rref*{S-Tensor} rule proceeds by solving both sides then combining
  the output context with a linear product.
  \item The~\rref*{S-With} rule handles additive conjunction. Additive
        conjunction is generated from $\kcase$ expressions, only one of them is
        actually going to be executed, therefore both branches must consume
        exactly the same resources. This is handled by the \emph{join}
        $  \constraintfont{ \ottnt{Q_{{\mathrm{1}}}}  \wedge  \ottnt{Q_{{\mathrm{2}}}} }  $, which computes a context which is compatible with
        $  \constraintfont{ \ottnt{Q_{{\mathrm{1}}}} }  $ and $  \constraintfont{ \ottnt{Q_{{\mathrm{2}}}} }  $. That is $ \constraintfont{ \ottnt{Q_{{\mathrm{1}}}}  \wedge  \ottnt{Q_{{\mathrm{2}}}} }   \Vdash   \constraintfont{ \ottnt{Q_{{\mathrm{1}}}} } $ and
        $ \constraintfont{ \ottnt{Q_{{\mathrm{1}}}}  \wedge  \ottnt{Q_{{\mathrm{2}}}} }   \Vdash   \constraintfont{ \ottnt{Q_{{\mathrm{2}}}} } $.
  \item The~\rref*{S-Impl} rule handles implications. What implications
        $  \constraintfont{   \multiplicityfont{ \pi }  \scale ( \ottnt{Q_{\ottmv{b}}}  \Lolly  \constraintfont{C} )  }  $ do is limit the scope of the constraints in
        $  \constraintfont{ \ottnt{Q_{\ottmv{b}}} }  $. This is achieved via the judgement
        $\vdashd   \constraintfont{ \ottnt{Q_{\ottmv{i}}} }   \setminus   \constraintfont{ \ottnt{Q_{\ottmv{b}}} }   \leadsto   \constraintfont{ \ottnt{Q_{\ottmv{o}}} } $, the rules for which are quite formalistic,
        but are easily expressed in English: for each atomic constraint
        $  \constraintfont{ \ottnt{q} }  $ in $  \constraintfont{ \ottnt{Q_{\ottmv{b}}} }  $, there must be a single occurrence of $  \constraintfont{ \ottnt{q} }  $
        in $  \constraintfont{ \ottnt{Q_{\ottmv{b}}} }  $, and all occurrences of $  \constraintfont{ \ottnt{q} }  $ must then removed from
        $  \constraintfont{ \ottnt{Q_{\ottmv{i}}} }  $ to yield $  \constraintfont{ \ottnt{Q_{\ottmv{o}}} }  $. Then a linear atomic constraint
        $ \constraintfont{   \multiplicityfont{ \ottsym{1} }  \scale \ottnt{q}  }   \notin \, \constraintfont{\mathcal{D} }$ removes a single linear constraint from
        $  \constraintfont{ \ottnt{Q_{\ottmv{i}}} }  $, a linear atomic constraint $  \constraintfont{   \multiplicityfont{ \ottsym{1} }  \scale \ottnt{q}  }   \in  \constraintfont{\mathcal{D} } $ removes any
        number of linear constraints (but no unrestricted constraints), and an
        unrestricted constraint $  \constraintfont{   \multiplicityfont{ \omega }  \scale \ottnt{q}  }  $ removes any number of
        constraints with arbitrary multiplicity. If this fails to remove all
        occurrences of $  \constraintfont{ \ottnt{q} }  $, then the solver fails.

\end{itemize}

The solving strategy is mainly embodied by the~\rref*{S-Impl} rule. It makes
strong assumptions in order to avoid guesses. Avoiding guesses is a key property
of \textsc{ghc}'s solver~\cite[Section~6.4]{OutsideIn}, one we must maintain if
we are to be compatible with \textsc{ghc}.

One consequence of our strategy is that the solver can only ever use
the occurrence of $  \constraintfont{   \multiplicityfont{ \ottsym{1} }  \scale \ottnt{q}  }  $ introduced by the innermost implication.
Even though there could be valid solutions using more outer
implications. In particular the solver will fail on constraints of the form
$  \constraintfont{   \multiplicityfont{ \ottsym{1} }  \scale (   \multiplicityfont{ \ottsym{1} }  \scale \ottnt{q}   \Lolly    \multiplicityfont{ \ottsym{1} }  \scale (   \multiplicityfont{ \ottsym{1} }  \scale \ottnt{q}   \Lolly    \multiplicityfont{ \ottsym{1} }  \scale \ottnt{q}   \qtensor    \multiplicityfont{ \ottsym{1} }  \scale \ottnt{q}  )  )  }  $ even though it has a valid
solution. For instance, the following would fail to typecheck with our solver.

\begin{hscode}\SaveRestoreHook
\column{B}{@{}>{\hspre}l<{\hspost}@{}}%
\column{3}{@{}>{\hspre}l<{\hspost}@{}}%
\column{E}{@{}>{\hspre}l<{\hspost}@{}}%
\>[B]{}\mathbf{class}\;\constraintfont{\Conid{C}}{}\<[E]%
\\
\>[B]{}\Varid{giveC}\mathbin{::}(\constraintfont{\Conid{C}}\Lolly \Conid{Int})\to \Conid{Int}{}\<[E]%
\\[\blanklineskip]%
\>[B]{}\Varid{counting}\mathbin{::}(\constraintfont{(\Conid{C},\Conid{C})}\Lolly \Conid{Int})\to \Conid{Int}{}\<[E]%
\\
\>[B]{}\Varid{counting}\;\Varid{take\char95 two}\mathrel{=}{}\<[E]%
\\
\>[B]{}\hsindent{3}{}\<[3]%
\>[3]{}\Varid{giveC}\mathbin{\$}{}\<[E]%
\\
\>[B]{}\hsindent{3}{}\<[3]%
\>[3]{}\Varid{giveC}\mathbin{\$}\mbox{\onelinecomment  Multiplicity error: this linear \ensuremath{\constraintfont{\Conid{C}}} is used several times.}{}\<[E]%
\\
\>[B]{}\hsindent{3}{}\<[3]%
\>[3]{}\Varid{take\char95 two}{}\<[E]%
\ColumnHook
\end{hscode}\resethooks

Likewise, if there are several occurrences of the same $  \constraintfont{   \multiplicityfont{ \ottsym{1} }  \scale \ottnt{q}  }  $ in
$  \constraintfont{ \ottnt{Q_{\ottmv{b}}} }  $, then the solver considers this an ambiguity and fails. For instance

\begin{hscode}\SaveRestoreHook
\column{B}{@{}>{\hspre}l<{\hspost}@{}}%
\column{3}{@{}>{\hspre}l<{\hspost}@{}}%
\column{5}{@{}>{\hspre}l<{\hspost}@{}}%
\column{E}{@{}>{\hspre}l<{\hspost}@{}}%
\>[B]{}\mathbf{class}\;\constraintfont{\Conid{C}}{}\<[E]%
\\
\>[B]{}\Varid{giveTwoC}\mathbin{::}(\constraintfont{(\Conid{C},\Conid{C})}\Lolly \Conid{Int})\to \Conid{Int}{}\<[E]%
\\[\blanklineskip]%
\>[B]{}\Varid{repeating}\mathbin{::}(\constraintfont{\Conid{C}}\Lolly \Conid{Int})\to \Conid{Int}{}\<[E]%
\\
\>[B]{}\Varid{repeating}\;\Varid{take\char95 one}\mathrel{=}{}\<[E]%
\\
\>[B]{}\hsindent{3}{}\<[3]%
\>[3]{}\Varid{giveTwoC}\mathbin{\$}{}\<[E]%
\\
\>[3]{}\hsindent{2}{}\<[5]%
\>[5]{}\Varid{take\char95 one}\mathbin{+}\Varid{take\char95 one}\mbox{\onelinecomment  Ambiguity error: there are two different \ensuremath{\constraintfont{\Conid{C}}} to choose from}{}\<[E]%
\ColumnHook
\end{hscode}\resethooks

Both examples illustrate ways in which the solver is incomplete, which it has to
be to be guess free. Indeed, they can be ascribed a type in the declarative type
system of \cref{sec:typing-rules}. This solver is, nevertheless, enough to
handle all our motivating examples (\cref{sec:memory-ownership})

\section{Desugaring}
\label{sec:desugaring}

The semantics of our language is given by desugaring it into
a simpler core language: a variant of the $λ^q$
calculus~\cite{LinearHaskell}. We
define the core language's type system here; its operational semantics
is the same, \emph{mutatis mutandis}, as that of Linear Haskell.

\subsection{The Core Calculus}
\label{sec:core-calculus}
\label{sec:ds:inferred-constraints}

\begin{figure}
  \maybesmall
  \centering
  $$
  \begin{array}{lcll}
     \ottmv{a} ,  \ottmv{b}  & \bnfeq & \ldots & \text{Type variables} \\
     \ottmv{x} ,  \ottmv{y}  & \bnfeq & \ldots & \text{Expression variables} \\
    \ottmv{T} & \bnfeq & \ldots & \text{Type constructors} \\
     \ottmv{K}  & \bnfeq & \ldots & \text{Data constructors} \\
     \sigma  & \bnfeq &   \forall   \overline{\ottmv{a} } .  \tau   & \text{Type schemes} \\
     \tau ,  \upsilon  & \bnfeq &  \ottmv{a}  \bnfor   \exists   \overline{\ottmv{a} } .  \tau  \otimes  \upsilon   \bnfor   \tau_{{\mathrm{1}}}  \to_{  \multiplicityfont{ \pi }  }  \tau_{{\mathrm{2}}}  
                            \bnfor  \ottmv{T} \, \overline{\tau}  & \text{Types} \\
     \Gamma ,  \Delta  & \bnfeq &  ∙  \bnfor  \Gamma  \ottsym{,}   \ottmv{x} {:}_{  \multiplicityfont{ \pi }  } \sigma   &
                                                              \text{Contexts} \\
     \ottnt{e}  & \bnfeq &  \ottmv{x}  \bnfor  \ottmv{K}  \bnfor  \lambda  \ottmv{x}  \ottsym{.}  \ottnt{e}  \bnfor \ottnt{e_{{\mathrm{1}}}} \, \ottnt{e_{{\mathrm{2}}}} \bnfor  \packbox \, \ottsym{(}  \ottnt{e_{{\mathrm{1}}}}  \ottsym{,}  \ottnt{e_{{\mathrm{2}}}}  \ottsym{)}  & \text{Expressions}\\
                 &\bnfor &  \klet\ \packbox (  \ottmv{y}  ,  \ottmv{x}  ) =  \ottnt{e_{{\mathrm{1}}}}  \ \kin \  \ottnt{e_{{\mathrm{2}}}}  \bnfor   \kcase_  \multiplicityfont{ \pi }   \, \ottnt{e} \, \ottkw{of} \, \ottsym{\{}  \overline{\ottmv{K}_i\ \overline{\ottmv{x}_i } \to \ottnt{e}_i }  \ottsym{\}}  &\\
                 &\bnfor &  \klet_  \multiplicityfont{ \pi }   \, \ottmv{x}  \ottsym{=}  \ottnt{e_{{\mathrm{1}}}} \, \ottkw{in} \, \ottnt{e_{{\mathrm{2}}}} \bnfor   \klet_  \multiplicityfont{ \pi }   \, \ottmv{x}  \ottsym{:}  \sigma  \ottsym{=}  \ottnt{e_{{\mathrm{1}}}} \, \ottkw{in} \, \ottnt{e_{{\mathrm{2}}}}  &
  \end{array}
  $$

  \drules[L]{$\Gamma  \vdash  \ottnt{e}  \ottsym{:}  \tau$}{Core language
    typing}{Var,Abs,App,Pack,Unpack,Let,Case}
  \caption{Core calculus (subset)}
  \label{fig:core-typing-rules}\label{fig:core-grammar}
\end{figure}

The core calculus is a variant of the type system defined in
\cref{sec:qualified-type-system}, but without constraints. That is, the evidence for constraints is passed explicitly in this core calculus.
Following $λ^q$, we assume the existence of the following data types:
\begin{itemize}
\item $ \tau_{{\mathrm{1}}}  \otimes  \tau_{{\mathrm{2}}} $ with sole constructor
  $ ({,})   \ottsym{:}    \forall   \ottmv{a} \, \ottmv{b} .  \ottmv{a}   \to_{  \multiplicityfont{ \ottsym{1} }  }   \ottmv{b}  \to_{  \multiplicityfont{ \ottsym{1} }  }  \ottmv{a}    \otimes  \ottmv{b}$. We will write $\ottsym{(}  \ottnt{e_{{\mathrm{1}}}}  \ottsym{,}  \ottnt{e_{{\mathrm{2}}}}  \ottsym{)}$ for $  ({,})  \, \ottnt{e_{{\mathrm{1}}}} \, \ottnt{e_{{\mathrm{2}}}} $.
\item $  \mathbf{1}  $ with sole constructor $\ottsym{()}  \ottsym{:}   \mathbf{1} $.
\item $ \ottkw{Ur} \, \tau $ with sole constructor $\ottkw{Ur}  \ottsym{:}   \forall   \ottmv{a} .   \ottmv{a}  \to_{  \multiplicityfont{ \omega }  }  \ottkw{Ur} \, \ottmv{a}  $
\end{itemize}
The core calculus is described in \cref{fig:core-typing-rules}.
The main differences with the qualified system are as follows:
\begin{itemize}
  \item Type schemes $ \sigma $ do not support qualified types.
  \item Existentially quantified types ($  \exists   \overline{\ottmv{a} } .  \tau  \RLolly   \constraintfont{ \ottnt{Q} }   $) are now represented as an (existentially quantified, linear) pair of values ($  \exists   \overline{\ottmv{a} } .  \tau_{{\mathrm{2}}}  \otimes  \tau_{{\mathrm{1}}}  $).
Accordingly, $\packbox$ operates on pairs.
\end{itemize}
The differences between our core calculus and $λ^q$ are as follows:
\begin{itemize}
\item We do not support multiplicity polymorphism.
\item On the other hand, we do include type polymorphism.
\item Polymorphism is implicit rather than explicit. This is not an
  essential difference, but it simplifies the presentation. We could,
for example, include more details in the terms in order to make type-checking
more obvious; this amounts essentially to an encoding of typing derivations
in the terms\footnote{See, for example, \citet{weirich-icfp17} and their
comparison between an implicit core language D and an explicit one DC.}.
\item We have existential types. These can be realised in regular Haskell as a
  family of datatypes.
\end{itemize}

Using \cref{lem:generation-soundness} together with
\cref{lem:solver-soundness} we know that if
$\Gamma  \vdashi  \ottnt{e}  \ottsym{:}  \tau  \leadsto   \constraintfont{ \constraintfont{C} } $ and $\vdashs   \constraintfont{ \constraintfont{C} }   \leadsto   \constraintfont{ \ottnt{Q} } $, then
$ \constraintfont{ \ottnt{Q} }   \ottsym{;}  \Gamma  \vdash  \ottnt{e}  \ottsym{:}  \tau$.
As such we only need to desugar derivations of $ \constraintfont{ \ottnt{Q} }   \ottsym{;}  \Gamma  \vdash  \ottnt{e}  \ottsym{:}  \tau$ into the
core calculus.

\subsection{From Qualified to Core}
\label{sec:ds:from-qualified-core}

\subsubsection{Evidence}
In order to desugar derivations of the qualified system to the core calculus,
we pass evidence explicitly\footnote{This technique is also often called
  dictionary-passing style \citep{type-classes-impl} because, in the case of type classes, evidences are
  dictionaries, and because type classes were the original form of constraints
  in Haskell.}.
To do so, we require some more material from
constraints. Namely, we assume a type $  \dsevidence{\constraintfont{ \ottnt{q} } }  $ for each atomic
constraint $  \constraintfont{ \ottnt{q} }  $,
defined in \cref{fig:evidence}.  The $ \dsevidence{\constraintfont{ \_ } } $ operation
extends to simple constraints as
$  \dsevidence{\constraintfont{ \ottnt{Q} } }  $.
Furthermore, we require that for every $  \constraintfont{ \ottnt{Q_{{\mathrm{1}}}} }  $ and $  \constraintfont{ \ottnt{Q_{{\mathrm{2}}}} }  $
such that $ \constraintfont{ \ottnt{Q_{{\mathrm{1}}}} }   \Vdash   \constraintfont{ \ottnt{Q_{{\mathrm{2}}}} } $, there is a (linear) function
$ \dsevidence{\constraintfont{ \ottnt{Q_{{\mathrm{1}}}} }  \Vdash  \constraintfont{ \ottnt{Q_{{\mathrm{2}}}} } }   \ottsym{:}    \dsevidence{\constraintfont{ \ottnt{Q_{{\mathrm{1}}}} } }   \to_{  \multiplicityfont{ \ottsym{1} }  }   \dsevidence{\constraintfont{ \ottnt{Q_{{\mathrm{2}}}} } }  $.

Let us now define a family of functions $  \dstype{ \_ }  $ to translate
the type schemes, types, contexts, and typing derivations of the qualified system into the
types, type schemes, contexts, and terms of the core calculus.

\subsubsection{Translating Types}
Type schemes $ \sigma $ are translated by turning the implicit argument $  \constraintfont{ \ottnt{Q} }  $
into an explicit one of type $  \dsevidence{\constraintfont{ \ottnt{Q} } }  $. Translating types $ \tau $
and contexts $ \Gamma $ proceeds as
expected.

\begin{minipage}{0.5\linewidth}
$$
\left\{
  \begin{array}{lcl}
      \dstype{  \forall   \overline{\ottmv{a} } .   \constraintfont{ \ottnt{Q} }   \Lolly  \tau  }   & = &    \forall   \overline{\ottmv{a} } .   \dsevidence{\constraintfont{ \ottnt{Q} } }    \to_{  \multiplicityfont{ \ottsym{1} }  }   \dstype{ \tau }    \\
  \end{array}
\right.
$$
$$
\left\{
  \begin{array}{lcl}
      \dstype{  \tau_{{\mathrm{1}}}  \to_{  \multiplicityfont{ \pi }  }  \tau_{{\mathrm{2}}}  }   & = &    \dstype{ \tau_{{\mathrm{1}}} }   \to_{  \multiplicityfont{ \pi }  }   \dstype{ \tau_{{\mathrm{2}}} }    \\
      \dstype{  \exists   \overline{\ottmv{a} } .  \tau  \RLolly   \constraintfont{ \ottnt{Q} }   }   & = &   \exists   \overline{\ottmv{a} } .   \dstype{ \tau }   \otimes   \dsevidence{\constraintfont{ \ottnt{Q} } }   
  \end{array}
\right.
$$
\end{minipage}%
\begin{minipage}{0.5\linewidth}
$$
\left\{
  \begin{array}{lcl}
      \dstype{ ∙ }   &=&  ∙  \\
      \dstype{ \Gamma  \ottsym{,}   \ottmv{x} {:}_{  \multiplicityfont{ \pi }  } \tau  }   &=&   \dstype{ \Gamma }   \ottsym{,}   \ottmv{x} {:}_{  \multiplicityfont{ \pi }  }  \dstype{ \tau }   
  \end{array}
\right.
$$
\end{minipage}

\subsubsection{Translating Terms}
Given a derivation $ \constraintfont{ \ottnt{Q} }   \ottsym{;}  \Gamma  \vdash  \ottnt{e}  \ottsym{:}  \tau$, we can build an expression
$ \dsterm{ \ottmv{z} }{  \constraintfont{ \ottnt{Q} }   \ottsym{;}  \Gamma  \vdash  \ottnt{e}  \ottsym{:}  \tau } $, such that
$ \dstype{ \Gamma }   \ottsym{,}   \ottmv{z} {:}_{  \multiplicityfont{ \ottsym{1} }  }  \dsevidence{\constraintfont{ \ottnt{Q} } }    \vdash   \dsterm{ \ottmv{z} }{  \constraintfont{ \ottnt{Q} }   \ottsym{;}  \Gamma  \vdash  \ottnt{e}  \ottsym{:}  \tau }   \ottsym{:}   \dstype{ \tau } $ (for some fresh variable
$ \ottmv{z} $). Even though we abbreviate the derivation as only its
concluding judgement, the translation is defined recursively on the
whole typing derivation: in particular, we have access to typing rule
premises in the body of the definition.
We present some of the interesting cases in \cref{fig:desugaring}, the complete
definition of desugaring can be found in \cref{sec:appendix:desugaring}.
\begin{figure}
    \maybesmall
\centering
  \begin{subfigure}{0.3\linewidth}%
$$
\left\{
  \begin{array}{lcl}
      \dsevidence{\constraintfont{   \multiplicityfont{ \ottsym{1} }  \scale \ottnt{q}  } }   & = &   \dsevidence{\constraintfont{ \ottnt{q} } }   \\
      \dsevidence{\constraintfont{   \multiplicityfont{ \omega }  \scale \ottnt{q}  } }   & = &  \ottkw{Ur} \, \ottsym{(}   \dsevidence{\constraintfont{ \ottnt{q} } }   \ottsym{)}  \\
      \dsevidence{\constraintfont{  \mathbf{\varepsilon}  } }   & = &   \mathbf{1}   \\
      \dsevidence{\constraintfont{ \ottnt{Q_{{\mathrm{1}}}}  \qtensor  \ottnt{Q_{{\mathrm{2}}}} } }   & = &   \dsevidence{\constraintfont{ \ottnt{Q_{{\mathrm{1}}}} } }   \otimes   \dsevidence{\constraintfont{ \ottnt{Q_{{\mathrm{2}}}} } }  
  \end{array}
\right.
$$
  \caption{Evidence passing}
  \label{fig:evidence}
  \end{subfigure}\hfill
  \begin{subfigure}{0.7\linewidth}%
$$
\left\{
  \;
  \begin{minipage}{0.5\linewidth}
\begin{hscode}\SaveRestoreHook
\column{B}{@{}>{\hspre}l<{\hspost}@{}}%
\column{3}{@{}>{\hspre}l<{\hspost}@{}}%
\column{5}{@{}>{\hspre}l<{\hspost}@{}}%
\column{22}{@{}>{\hspre}c<{\hspost}@{}}%
\column{22E}{@{}l@{}}%
\column{25}{@{}>{\hspre}l<{\hspost}@{}}%
\column{52}{@{}>{\hspre}c<{\hspost}@{}}%
\column{52E}{@{}l@{}}%
\column{E}{@{}>{\hspre}l<{\hspost}@{}}%
\>[B]{}\dsterm{\ottmv{\Varid{z}}}{  \constraintfont{ \ottnt{Q} }  ; \Gamma  \vdash \Varid{x}\mathbin{:} \upsilon [ \overline{\tau} / \overline{\ottmv{a} } ]}\mathrel{=}\Varid{x}\;\Varid{z}{}\<[E]%
\\
\>[B]{}\dsterm{\ottmv{\Varid{z}}}{  \constraintfont{ \ottnt{Q} }   \qtensor   \constraintfont{ \ottnt{Q_{{\mathrm{1}}}} }  [ \overline{\upsilon} / \overline{\ottmv{a} } ]; \Gamma  \vdash \packbox\Varid{e}\mathbin{:}\exists\; \overline{\ottmv{a} } .\tau\RLolly  \constraintfont{ \ottnt{Q_{{\mathrm{1}}}} }  }\mathrel{=}{}\<[E]%
\\
\>[B]{}\hsindent{3}{}\<[3]%
\>[3]{} \kcase_  \multiplicityfont{ \ottsym{1} }  \;\Varid{z}\;\mathbf{of}\;\{\mskip1.5mu (\Varid{z'},\Varid{z''})\to {}\<[E]%
\\
\>[3]{}\hsindent{2}{}\<[5]%
\>[5]{}\packbox(\Varid{z''},\dsterm{\ottmv{\Varid{z'}}}{  \constraintfont{ \ottnt{Q} }  ; \Gamma  \vdash \Varid{e}\mathbin{:}\tau[ \overline{\upsilon} / \overline{\ottmv{a} } ]})\mskip1.5mu\}{}\<[E]%
\\
\>[B]{}\dsterm{\ottmv{\Varid{z}}}{  \constraintfont{ \ottnt{Q_{{\mathrm{1}}}} }   \qtensor   \constraintfont{ \ottnt{Q_{{\mathrm{2}}}} }  ; \Gamma_{{\mathrm{1}}} \ottsym{+} \Gamma_{{\mathrm{2}}}  \vdash \klet\ \packbox \Varid{x}\mathrel{=} \ottnt{e_{{\mathrm{1}}}} \;\mathbf{in}\; \ottnt{e_{{\mathrm{2}}}} \mathbin{:}\tau}{}\<[52]%
\>[52]{}\mathrel{=}{}\<[52E]%
\\
\>[B]{}\hsindent{3}{}\<[3]%
\>[3]{} \kcase_  \multiplicityfont{ \ottsym{1} }  \;\Varid{z}\;\mathbf{of}\;\{\mskip1.5mu ( \ottmv{z_{{\mathrm{1}}}} , \ottmv{z_{{\mathrm{2}}}} )\to {}\<[E]%
\\
\>[3]{}\hsindent{2}{}\<[5]%
\>[5]{}\klet\ \packbox \Varid{z'},\Varid{x}\mathrel{=}\dsterm{\ottmv{ \ottmv{z_{{\mathrm{1}}}} }}{  \constraintfont{ \ottnt{Q_{{\mathrm{1}}}} }  ; \Gamma_{{\mathrm{1}}}  \vdash  \ottnt{e_{{\mathrm{1}}}} \mathbin{:}\exists\; \overline{\ottmv{a} } . \tau_{{\mathrm{1}}} \RLolly  \constraintfont{ \ottnt{Q} }  }\;\mathbf{in}{}\<[E]%
\\
\>[3]{}\hsindent{2}{}\<[5]%
\>[5]{} \klet_  \multiplicityfont{ \ottsym{1} }  \; \ottmv{z_{{\mathrm{2}}}} '\mathrel{=}( \ottmv{z_{{\mathrm{2}}}} ,\Varid{z'})\;\mathbf{in}{}\<[E]%
\\
\>[3]{}\hsindent{2}{}\<[5]%
\>[5]{}\dsterm{\ottmv{ \ottmv{z_{{\mathrm{2}}}} '}}{  \constraintfont{ \ottnt{Q_{{\mathrm{2}}}} }   \qtensor   \constraintfont{ \ottnt{Q} }  ; \Gamma_{{\mathrm{2}}} ,\Varid{x}\mathop{:_{1}} \tau_{{\mathrm{1}}}  \vdash  \ottnt{e_{{\mathrm{2}}}} \mathbin{:}\tau}\mskip1.5mu\}{}\<[E]%
\\
\>[B]{}\dsterm{\ottmv{\Varid{z}}}{  \constraintfont{ \ottnt{Q} }  ; \Gamma  \vdash \Varid{e}\mathbin{:}\tau}{}\<[22]%
\>[22]{}\mathrel{=}{}\<[22E]%
\>[25]{}\mbox{\onelinecomment  \rref{E-Sub}}{}\<[E]%
\\
\>[B]{}\hsindent{3}{}\<[3]%
\>[3]{} \klet_  \multiplicityfont{ \ottsym{1} }  \;\Varid{z'}\mathrel{=}\dsevidence{  \constraintfont{ \ottnt{Q} }   \Vdash   \constraintfont{ \ottnt{Q_{{\mathrm{1}}}} }  }\;\Varid{z}\;\mathbf{in}\;\dsterm{\ottmv{\Varid{z'}}}{  \constraintfont{ \ottnt{Q_{{\mathrm{1}}}} }  ; \Gamma  \vdash \Varid{e}\mathbin{:}\tau}{}\<[E]%
\\
\>[B]{}\mathbin{...}{}\<[E]%
\ColumnHook
\end{hscode}\resethooks
  \end{minipage}
\right.
$$
  \caption{Desugaring (subset)}
  \label{fig:desugaring}
  \end{subfigure}
  \caption{Evidence passing and desugaring}
\end{figure}

The cases correspond to the~\rref*{E-Var},~\rref*{E-Unpack}\footnote{The attentive
reader may note that the case for $\kunpack$ extracts out $  \constraintfont{ \ottnt{Q_{{\mathrm{1}}}} }  $ and $  \constraintfont{ \ottnt{Q_{{\mathrm{2}}}} }  $
from the provided simple constraint. Given that simple constraints $  \constraintfont{ \ottnt{Q} }  $ have no
internal ordering and allow duplicates (in the non-linear component), this splitting
is not well defined. To fix this, an implementation would have to \emph{name} individual
components of $  \constraintfont{ \ottnt{Q} }  $, and then the typing derivation can indicate which constraints
go with which sub-expression. Happily, \textsc{ghc} \emph{already} names its constraints,
and so this approach fits easily in the implementation. We could also augment our formalism
here with these details, but they add clutter with little insight.}, and~\rref*{E-Sub} rules, respectively.
Variables are stored with qualified types in the environment, so they get
translated to functions that take the evidence as argument. Accordingly, the evidence
is inserted by passing $ \ottmv{z} $ as an argument.
Handling \rref*{E-Unpack} requires splitting the context into two: $ \ottnt{e_{{\mathrm{1}}}} $ is desugared as a pair, and the evidence
it contains is passed to $ \ottnt{e_{{\mathrm{2}}}} $. Finally, subsumption summons the function corresponding to the entailment relation $ \constraintfont{ \ottnt{Q} }   \Vdash   \constraintfont{ \ottnt{Q_{{\mathrm{1}}}} } $
and applies it to $ \ottmv{z} $ : $  \dsevidence{\constraintfont{ \ottnt{Q} } }  $ then proceeds to desugar $ \ottnt{e} $ with the resulting evidence for $  \constraintfont{ \ottnt{Q_{{\mathrm{1}}}} }  $.
Crucially, since $  \dsterm{ \ottmv{z} }{ \_ }  $ is defined on \emph{derivations}, we can access the premises used in the rule.
Namely, $  \constraintfont{ \ottnt{Q_{{\mathrm{1}}}} }  $ is available in this last case from the~\rref*{E-Sub} rule's premise.

It is straightforward by induction, to verify that desugaring is correct:
\begin{theorem}[Desugaring]
If $ \constraintfont{ \ottnt{Q} }   \ottsym{;}  \Gamma  \vdash  \ottnt{e}  \ottsym{:}  \tau$, then
$ \dstype{ \Gamma }   \ottsym{,}   \ottmv{z} {:}_{  \multiplicityfont{ \ottsym{1} }  }  \dsevidence{\constraintfont{ \ottnt{Q} } }    \vdash   \dsterm{ \ottmv{z} }{  \constraintfont{ \ottnt{Q} }   \ottsym{;}  \Gamma  \vdash  \ottnt{e}  \ottsym{:}  \tau }   \ottsym{:}   \dstype{ \tau } $, for any fresh
variable $ \ottmv{z} $.
\end{theorem}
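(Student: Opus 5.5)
The proof goes by induction on the derivation of $ \constraintfont{ \ottnt{Q} }   \ottsym{;}  \Gamma  \vdash  \ottnt{e}  \ottsym{:}  \tau$, following the definition of $\dsterm{\ottmv{z}}{\_}$, which is itself given by recursion on derivations. The claim is strengthened slightly so that it applies to every fresh $\ottmv{z}$; this is needed because several cases call the translation recursively with new evidence variables ($\ottmv{z'}$, $\ottmv{z_{1}}$, $\ottmv{z_{2}}'$). Before the case analysis I would establish three auxiliary facts about the core calculus. The first is that translation commutes with scaling and context addition: $\dstype{\pi\cdot\Gamma} = \pi\cdot\dstype{\Gamma}$ and $\dstype{\Gamma_1+\Gamma_2}=\dstype{\Gamma_1}+\dstype{\Gamma_2}$. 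The second is that translation commutes with type substitution, $\dstype{\tau[\overline{\upsilon}/\overline{a}]} = \dstype{\tau}[\overline{\upsilon}/\overline{a}]$, and likewise for evidence types. The third is that evidence behaves well under the algebra of simple constraints: $\dsevidence{\constraintfont{Q_1\otimes Q_2}}$ is a linear pair, and, up to the entailment coercions, $\dsevidence{\constraintfont{\omega\cdot Q}}$ is built from $\ottkw{Ur}$. The standard weakening lemma for unrestricted variables in $\lambda^q$ also comes in.

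The cases then run as follows. \rref*{E-Var}: $\ottmv{x}$ has the translated type $\forall\overline{a}.\dsevidence{Q}\to_1\dstype{\upsilon}$, so $\ottmv{x}\,\ottmv{z}$ has type $\dstype{\upsilon}[\overline\tau/\overline a]$ by \rref*{L-Var} (with the unrestricted $\omega\cdot\Gamma_2$ discarded) and \rref*{L-App} at multiplicity $1$, which consumes $\ottmv{z}$ exactly once. \rref*{E-Abs} follows directly from the induction hypothesis and \rref*{L-Abs}. \rref*{E-Sub}: $\dsevidence{Q\Vdash Q_1}\,\ottmv{z}$ has type $\dsevidence{Q_1}$ and uses $\ottmv{z}$ linearly; we bind it with $\klet_1$ and apply the induction hypothesis. \rref*{E-Pack} and \rref*{E-Unpack}: we split $\ottmv{z}$ with $\kcase_1$ on the linear pair, thread each component into exactly one subderivation, and rebuild or destruct the existential pair with \rref*{L-Pack}/\rref*{L-Unpack}. \rref*{E-Let} and \rref*{E-LetSig} go the same way; in \rref*{E-LetSig} the bound term becomes $\lambda \ottmv{z}_0.\,\dsterm{\ottmv{z}_0'}{\dots}$, which abstracts the local evidence of $Q$ and pairs it with the outer evidence.

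The main obstacle I expect is the cases where constraints are scaled by $\pi$, namely \rref*{E-App}, \rref*{E-Let} and \rref*{E-Case} with $\pi=\omega$. The qualified rule splits the evidence as $\constraintfont{Q_1\otimes\pi\cdot Q_2}$, and the argument $\ottnt{e_2}$ is used $\omega$ times. In the core we have only $\ottmv{z_2}:_1\dsevidence{\omega\cdot Q_2}$, but the argument position requires evidence available at multiplicity $\omega$. My plan is to prove a small lemma: from $\ottmv{w}:_1\dsevidence{\omega\cdot Q}$ one can build a linear term of type $\ottkw{Ur}\,\dsevidence{Q}$. The construction goes by induction on $Q$, by cases on unrestricted atoms, using $\ottkw{Ur}(\dsevidence{q})$ and the fact that $\ottkw{Ur}$ distributes over $\otimes$. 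We then pattern-match $\kcase_1$ on that $\ottkw{Ur}$ to obtain an $\omega$-bound evidence variable and translate $\ottnt{e_2}$ under it, so that scaling the resulting context by $\omega$ is sound in \rref*{L-App}. \rref*{E-Case} additionally has to share the single branch evidence $Q'$ across all branches. This is allowed because \rref*{L-Case} uses the same $\Delta$, extended with the evidence variable, in every branch.
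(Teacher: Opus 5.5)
Your route is the paper's. The paper only says the theorem is straightforward by induction on the derivation, and your lemma turning $z{:}_{1}\,\dsevidence{\omega\cdot Q}$ into a term of type $\ottkw{Ur}\,\dsevidence{Q}$ is exactly the $\underline{e}_{Q}$ device of the appendix. Your let cases, which abstract over the local evidence and pair it with the outer evidence, are more careful than the appendix, which omits that abstraction.

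However, your E-Var and E-Abs cases cannot both go through, and this is a genuine hole that the paper's translation shares. A lambda-, case- or unpack-bound variable carries the degenerate scheme $\varepsilon \Lolly \tau_1$. There are two ways to translate that entry, and each breaks one of your cases:
\begin{itemize}
\item If contexts are translated uniformly, as your E-Var case assumes, the entry becomes $x{:}_{\pi}\,\mathbf{1}\to_{1}\dstype{\tau_1}$. Then $\lambda x.\,\dsterm{z}{\ldots}$ has type $(\mathbf{1}\to_{1}\dstype{\tau_1})\to_{\pi}\dstype{\tau_2}\neq\dstype{\tau_1\to_{\pi}\tau_2}$, so E-Abs fails.
\item If the entry becomes $x{:}_{\pi}\,\dstype{\tau_1}$, then $x\,z$ is ill-typed, so E-Var fails. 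This is the convention that E-Abs, E-Case and the paper's unpack translation (which binds $x$ straight out of the core pair) all require.
\end{itemize}
Concretely, $\varepsilon;\bullet\vdash\lambda x.\,x:\tau\to_{1}\tau$ desugars to $\lambda x.\,x\,z$ with $z{:}_{1}\,\mathbf{1}$, which has no core type $\dstype{\tau}\to_{1}\dstype{\tau}$. Data constructors fail in the same way, since their core types take no evidence argument. The fix is to split the variable case:
\begin{itemize}
\item for entries with a genuine scheme, keep $x\,z$;
\item for monomorphic entries and constructors, E-Var's constraint is $\varepsilon$, so $z{:}_{1}\,\mathbf{1}$, and the occurrence should translate to $\kcase_{1}\,z\,\ottkw{of}\,\{()\to x\}$.
\end{itemize}

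Your third auxiliary fact is also false as stated. $\dsevidence{\cdot}$ is defined on presentations of constraints, not on simple constraints, and it does not respect their laws. For example, $\omega\cdot q\otimes\omega\cdot q=\omega\cdot q$, yet $\ottkw{Ur}\,\dsevidence{q}\otimes\ottkw{Ur}\,\dsevidence{q}\neq\ottkw{Ur}\,\dsevidence{q}$; and $Q\otimes\varepsilon=Q$, yet $\dsevidence{Q}\otimes\mathbf{1}\neq\dsevidence{Q}$. So the splits $\kcase_{1}\,z\,\ottkw{of}\,\{(z_1,z_2)\to\ldots\}$ are well typed only if the induction runs over derivations carrying syntactic (named) constraints. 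The paper concedes this in its footnote on the unpack case but never carries it out, and you must.

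Finally, the substitution lemma should read $\dstype{\tau[\overline{\upsilon}/\overline{a}]}=\dstype{\tau}[\dstype{\overline{\upsilon}}/\overline{a}]$. Its evidence half, $\dsevidence{q[\overline{\upsilon}/\overline{a}]}=\dsevidence{q}[\dstype{\overline{\upsilon}}/\overline{a}]$, cannot be proved; it has to be assumed of the constraint domain.
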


Thanks to the desugaring machinery, the semantics of a language with linear
constraints can be understood in terms of a simple core language with linear
types, such as $λ^q$, or indeed, \textsc{ghc} Core.

\section{Integrating into \textsc{ghc}}

One of the guiding principles behind our design was ease of integration with
modern Haskell. In this section we describe some of the particulars of adding
linear constraints to \textsc{ghc}.

\subsection{Implementation}
\label{sec:implementation}

We have written a prototype implementation~\citep{prototype} of linear constraints on top of \textsc{ghc} 9.1, a version that already
ships with the \text{\ttfamily LinearTypes} extension. Function arrows (\ensuremath{\to }) and context arrows
(\ensuremath{\FatArrow }) share the same internal representation in the typechecker, differentiated
only by a boolean flag. Thus, the \text{\ttfamily LinearTypes} implementation effort has already
laid down the bureaucratic ground work of annotating these arrows with
multiplicity information.

The key changes affect constraint generation and constraint solving. Constraints
are now annotated with a multiplicity, according to the context from
which they arise. With \text{\ttfamily LinearTypes}, \textsc{ghc} already scales the usage
of term variables. We simply modified the scaling function to capture all the
generated constraints and re-emit a scaled version of them, which is a fairly local
change.

The constraint solver maintains a set of given constraints (the \emph{inert set}
in \textsc{ghc} jargon), which corresponds to the $  \constraintfont{ \ottnt{U} }  $, $  \constraintfont{ \ottnt{D} }  $, and $  \constraintfont{ \ottnt{L} }  $
contexts in our solver judgements in \cref{sec:constraint-solver}. When
the solver goes under an implication, the assumptions of the implication are
added to set of givens. When a new given is added, we record the \emph{level} of
the implication (how many implications deep the constraint arises from) along
with the constraint. So that in case there are multiple matching
givens, the constraint solver selects the innermost one
(in \cref{sec:constraint-solver} we use an ordered list
for this purpose).

As constraint solving proceeds, the compiler pipeline constructs a
term in a typed language known as \textsc{ghc} Core~\citep{system-fc}.
In Core, type class constraints are turned into explicit evidence (see
\cref{sec:desugaring}). Thanks to being fully annotated, Core has
decidable typechecking, which is used to find and fix bugs in
the compiler (the Haskell type checker finds mistakes in user
programs). Thus, the Core typechecker verifies that the desugaring
procedure produced a linearity-respecting program before code
generation occurs.

\subsection{Interaction with Other Features}

Since constraints play an important role in \textsc{ghc}'s type system, we must
pay close attention to the interaction of linearity with other language features
related to constraints. Of these, we point out two that require some extra care.
\info{There isn't room to properly explain how we can implement
\ensuremath{\constraintfont{\Conid{Linearly}}} constraints. We don't already speak of
desugaring in the implementation section, so I'd need to give a bit of
context about wrappers or something.

The right approach is to count the number of `Linearly` used. We need
wrappers at the toplevel of definitions, and at each branch of a match
(even if it doesn't introduce an implication, as we may need to adjust
the number of `Linearly` in some branches (presumably by
weakening)). And we do the appropriate amount of
duplications/discards in these wrappers.}

\subsubsection{Superclasses}

Haskell's type classes can have \emph{superclasses}, which place constraints on
all of the instances of that class. For example, the \ensuremath{\Conid{Ord}} class is defined as
\begin{hscode}\SaveRestoreHook
\column{B}{@{}>{\hspre}l<{\hspost}@{}}%
\column{E}{@{}>{\hspre}l<{\hspost}@{}}%
\>[B]{}\mathbf{class}\;\constraintfont{\Conid{Eq}\;\Varid{a}}\FatArrow \constraintfont{\Conid{Ord}\;\Varid{a}}\;\mathbf{where}\mathbin{...}{}\<[E]%
\ColumnHook
\end{hscode}\resethooks
which means that every ordered type must also support equality. Such superclass
declarations extend the entailment relation: if we know that a type is ordered,
we also know that it supports equality. Namely, such a superclass declaration
adds an axiom $\ensuremath{\constraintfont{\Conid{Ord}\;\Varid{a}}} \Vdash \ensuremath{\constraintfont{\Conid{Eq}\;\Varid{a}}}$ to the entailment
relation, and a corresponding axiom in the solver.

In practice, the solver saturates the given constraints with the superclasses of
the declared given constraints. This is a departure from the otherwise
goal-oriented discipline of the solver.

Saturation isn't really compatible with linear constraints, however. Since if we
have a linear \ensuremath{\Conid{Ord}\;\Varid{a}} we would have to choose whether to leave it alone or
replace it with an \ensuremath{\Conid{Eq}\;\Varid{a}}. Short of backtracking, the constraint solver
needs to make a guess, which \textsc{ghc} never does.

Our solution is to consider that superclass declaration add axioms of the form

$$
  \multiplicityfont{ \omega }  .\ensuremath{\constraintfont{\Conid{Ord}\;\Varid{a}}} \Vdash   \multiplicityfont{ \omega }  .\ensuremath{\constraintfont{\Conid{Eq}\;\Varid{a}}}
$$

That is, only unrestricted given constraints expose their superclasses. Saturation
is restricted to unrestricted given constraints avoiding any complication.

\subsubsection{Equality Constraints}
\label{sec:equality-constraints}

In \cref{sec:type-inference} we argued that \emph{type} inference and
\emph{constraint} inference can be performed independently. However, this is not
the case for \textsc{ghc}'s constraint domain, because it supports equality
constraints, which allows unification problems to be deferred, and potentially
be solvable only after solving other constraints first.

They key is to represent unification problems as \emph{unrestricted} equality
constraints, so a given linear equality constraint cannot be used during type
inference.  This way, linear equalities require no
special treatment, and are harmless.

\subsection{Inferring Packing and Unpacking}
\label{sec:implicit-existentials}

Recent work~\cite{existentials} describes an algorithm (call it \textsc{edwl}, after the
authors' names) that
can infer the location of the pack and unpack annotations (our $\packbox$ and $\kunpack$)
in a program.%
\footnote{Actually, \citet{existentials} use an $\ottkw{open}$ construct instead of $\kunpack$
to access the contents of an existential package, but that distinction does
not affect our usage of existentials with linear constraints.}
In Section~9.2 of that paper,
the authors extend their system to include class constraints,
much as we allow our existential packages to carry linear constraints.

Accordingly, \textsc{edwl} would work well for us here and remove the need for
these annotations in the type system of \cref{sec:typing-rules}.
The \textsc{edwl} algorithm is only a small change on the way some types are treated during
bidirectional type-checking. Though the presentation of linear constraints is not
written using a bidirectional algorithm, our implementation in \textsc{ghc}
is indeed bidirectional (as \textsc{ghc}'s existing type inference algorithm
is bidirectional, as described by \citet{practical-type-inference} and
\citet{visible-type-application}) and produces constraints much like we
have presented here, formally. None of this would change in adapting \textsc{edwl}.
Indeed, it would seem that the two extensions are orthogonal in implementation,
though avoiding the need for explicit packing and unpacking would
make linear constraints easier to use.

On the Haskell side, \textsc{edwl} would correspondingly remove the need for the \ensuremath{\Conid{Linearly}.\mathbf{do}}
notation as well. Since the types involved in the \ensuremath{\Conid{Linearly}.\mathbf{do}} notation cannot
properly be written in Haskell, it would alleviate the need for \emph{ad hoc} extensions.

\section{Design considerations}
\label{sec:design-considerations}

In this section we discuss the designs described throughout this paper. Why we
chose these particular designs and what could be potential alternative.
Concretely, we discuss the motivations behind and alternatives designs for the
\ensuremath{\constraintfont{\Conid{Linearly}}} constraint (\cref{sec:linearly-vs-sticky}), nested freezing
(\cref{sec:freezing-in-details}), release functions in borrowing
(\cref{sec:unrestricted-release}), and our constraint solving algorithm
(\cref{sec:solver-design}).

\subsection{Linearly vs sticky scopes}
\label{sec:linearly-vs-sticky}

\unsure{Maybe this should cite the blog post, for completeness \url{https://www.tweag.io/blog/2023-03-23-linear-constraints-linearly/}}
In \cref{sec:Unique-constraint}, we introduced the \ensuremath{\constraintfont{\Conid{Linearly}}} constraint as a
means of creating new linear values (in our case linear constraints). The
\ensuremath{\constraintfont{\Conid{Linearly}}} constraint is the motivation for introducing the set $ \constraintfont{\mathcal{D} } $ of
duplicable atomic constraints. So it is natural to ask whether it is worth the
complexity.

Let's compare our \ensuremath{\Varid{new}} function with a \ensuremath{\Varid{newScoped}} function in the style
advertised in the Linear Haskell paper~\cite{LinearHaskell}, that is \ensuremath{\Varid{newScoped}}
takes as an additional argument a continuation which represents the scope in
which the new array is live.
\begin{hscode}\SaveRestoreHook
\column{B}{@{}>{\hspre}l<{\hspost}@{}}%
\column{6}{@{}>{\hspre}l<{\hspost}@{}}%
\column{E}{@{}>{\hspre}l<{\hspost}@{}}%
\>[B]{}\Varid{new}{}\<[6]%
\>[6]{}\mathbin{::}\constraintfont{\constraintfont{\Conid{Linearly}}}\Lolly \Conid{Int}\to \exists\;\Varid{n}.\Conid{Ur}\;(\Conid{UArray}\;\Varid{a}\;\Varid{n})\RLolly\constraintfont{\Conid{RW}\;\Varid{n}}{}\<[E]%
\\
\>[B]{}\Varid{newScoped}\mathbin{::}\Conid{Int}\to (\forall\;\Varid{n}.\constraintfont{\Conid{RW}\;\Varid{n}}\Lolly \Conid{UArray}\;\Varid{a}\;\Varid{n}\to \Conid{Ur}\;\Varid{r})⊸\Conid{Ur}\;\Varid{r}{}\<[E]%
\ColumnHook
\end{hscode}\resethooks

A first observation, maybe, is that \ensuremath{\Varid{new}} is more primitive. It is easy to
implement \ensuremath{\Varid{newScoped}} in terms of \ensuremath{\Varid{new}}:
\begin{hscode}\SaveRestoreHook
\column{B}{@{}>{\hspre}l<{\hspost}@{}}%
\column{3}{@{}>{\hspre}l<{\hspost}@{}}%
\column{E}{@{}>{\hspre}l<{\hspost}@{}}%
\>[B]{}\Varid{newScoped}\mathbin{::}\Conid{Int}\to (\forall\;\Varid{n}.\constraintfont{\Conid{RW}\;\Varid{n}}\Lolly \Conid{UArray}\;\Varid{a}\;\Varid{n}\to \Conid{Ur}\;\Varid{r})⊸\Conid{Ur}\;\Varid{r}{}\<[E]%
\\
\>[B]{}\Varid{newScoped}\;\Varid{lgth}\;\Varid{scope}\mathrel{=}\Varid{linearly}\mathbin{\$}\mathbf{do}{}\<[E]%
\\
\>[B]{}\hsindent{3}{}\<[3]%
\>[3]{}\Conid{Ur}\;\Varid{arr}\leftarrow \Varid{new}\;\Varid{lgth}{}\<[E]%
\\
\>[B]{}\hsindent{3}{}\<[3]%
\>[3]{}\Varid{scope}\;\Varid{arr}{}\<[E]%
\ColumnHook
\end{hscode}\resethooks

Using scope functions like \ensuremath{\Varid{newScoped}} is going to be more verbose whenever
there's more than one scope involved. Scopes, by themselves, aren't a problem,
the verbosity may be acceptable. In the Rust programming language, by
comparison, values are scoped as well. Scopes are represented with the
\text{\ttfamily \char123{}\char46{}\char46{}\char46{}\char125{}} syntax, instead of functions, but in a first approximation these
are the same.

Where the difference is significant, is that with the \ensuremath{\constraintfont{\Conid{Linearly}}} constraint and
the \ensuremath{\Varid{new}} function there can be several arrays in one scope, like in Rust,
rather than a single one as in \ensuremath{\Varid{newScoped}}. With \ensuremath{\Varid{new}} the \ensuremath{\Varid{linearly}} function is
the only one in charge of introducing scopes.

However, there is a way in which the scoped style of introducing linear values
is harmful to expressiveness. It is due to how we prevent linear constraints
from escaping scopes by making sure that the scope returns an \ensuremath{\Conid{Ur}\;\Varid{r}}.

Suppose we're given a function
\begin{hscode}\SaveRestoreHook
\column{B}{@{}>{\hspre}l<{\hspost}@{}}%
\column{E}{@{}>{\hspre}l<{\hspost}@{}}%
\>[B]{}\Varid{sum}\mathbin{::}\constraintfont{\Conid{RW}\;\Varid{n}}\Lolly \Conid{UArray}\;\Conid{Int}\;\Varid{n}\to \Conid{Ur}\;\Conid{Int}{}\<[E]%
\ColumnHook
\end{hscode}\resethooks
And we want to use it to compute the sum of the first two elements of an array.
It would look a like this:
\begin{hscode}\SaveRestoreHook
\column{B}{@{}>{\hspre}l<{\hspost}@{}}%
\column{4}{@{}>{\hspre}l<{\hspost}@{}}%
\column{6}{@{}>{\hspre}l<{\hspost}@{}}%
\column{E}{@{}>{\hspre}l<{\hspost}@{}}%
\>[B]{}\Varid{newScoped}\;\mathrm{57}\mathbin{\$}\lambda \Varid{arr}\to {}\<[E]%
\\
\>[B]{}\hsindent{4}{}\<[4]%
\>[4]{}\mathbin{...}\mbox{\onelinecomment  some initialisation/computation involving arr}{}\<[E]%
\\
\>[B]{}\hsindent{4}{}\<[4]%
\>[4]{}\mathbf{let}!(\Conid{Ur}\;\Varid{s})\mathrel{=}\Varid{newScoped}\;\mathrm{2}\mathbin{\$}\lambda \Varid{prefix}\to \mathbf{do}{}\<[E]%
\\
\>[4]{}\hsindent{2}{}\<[6]%
\>[6]{}\Conid{Ur}\;\Varid{a0}\leftarrow \Varid{read}\;\Varid{arr}\;\mathrm{0}{}\<[E]%
\\
\>[4]{}\hsindent{2}{}\<[6]%
\>[6]{}\Conid{Ur}\;\Varid{a}_{1}\leftarrow \Varid{read}\;\Varid{arr}\;\mathrm{1}{}\<[E]%
\\
\>[4]{}\hsindent{2}{}\<[6]%
\>[6]{}\Varid{write}\;\Varid{prefix}\;\mathrm{0}\;\Varid{a0}{}\<[E]%
\\
\>[4]{}\hsindent{2}{}\<[6]%
\>[6]{}\Varid{write}\;\Varid{prefix}\;\mathrm{1}\;\Varid{a}_{1}{}\<[E]%
\\
\>[4]{}\hsindent{2}{}\<[6]%
\>[6]{}\Varid{sum}\;\Varid{prefix2}{}\<[E]%
\\
\>[B]{}\hsindent{4}{}\<[4]%
\>[4]{}\mathbf{in}{}\<[E]%
\\
\>[B]{}\hsindent{4}{}\<[4]%
\>[4]{}\mathbin{...}\mbox{\onelinecomment  more writing to arr}{}\<[E]%
\ColumnHook
\end{hscode}\resethooks

However, this example doesn't typecheck. The read and write capabilities for \ensuremath{\Varid{arr}} are captured
by the scope of the \ensuremath{\Varid{prefix}} array, and they're not allowed to escape that
scope. That's because \ensuremath{\Conid{Ur}} is a blunt instrument: if it prevents some linear
constraints from escaping (the capabilities for \ensuremath{\Varid{prefix}}) then it prevents all
capabilities from escaping.

So the \ensuremath{\Varid{prefix}} scope demands that \ensuremath{\Varid{arr}} be frozen before exiting the scope, and
we aren't allowed to write after exiting the scope. The only solution to make
this example work is to extend the scope of \ensuremath{\Varid{prefix}} all the way to the end of
the scope of \ensuremath{\Varid{arr}}.

So the end of \ensuremath{\Conid{Ur}}-terminated scope tend to stick together, which is
detrimental to the modularity of programs. Duplicable constraints is a
low price to pay to lift this limitation.

\subsection{Further considerations on freezing}
\label{sec:freezing-in-details}

\Cref{sec:o1-freeze} showed how linear constraints can be used to make a system
of nested mutable data structures which can be safely frozen. That is, which can
be turned into in an immutable data structure in $O(1)$.

Linear types on their own already give a solution~\citep{LinearHaskell} to safe
freezing of non-nested data structures, the \textsc{api} is the following
variant of \cref{fig:linear-interface}.

\begin{hscode}\SaveRestoreHook
\column{B}{@{}>{\hspre}l<{\hspost}@{}}%
\column{6}{@{}>{\hspre}l<{\hspost}@{}}%
\column{E}{@{}>{\hspre}l<{\hspost}@{}}%
\>[B]{}\Varid{new}{}\<[6]%
\>[6]{}\mathbin{::}\Conid{Int}\to (\Conid{MArray}\;\Varid{a}⊸\Conid{Ur}\;\Varid{r})⊸\Conid{Ur}\;\Varid{r}{}\<[E]%
\\
\>[B]{}\Varid{write}\mathbin{::}\Conid{MArray}\;\Varid{a}⊸\Conid{Int}\to \Varid{a}\to \Conid{MArray}\;\Varid{a}{}\<[E]%
\\
\>[B]{}\Varid{read}\mathbin{::}\Conid{MArray}\;\Varid{a}⊸\Conid{Int}\to (\Conid{MArray}\;\Varid{a},\Conid{Ur}\;\Varid{a}){}\<[E]%
\\
\>[B]{}\Varid{freeze}\mathbin{::}\Conid{MArray}\;\Varid{a}⊸\Conid{Ur}\;(\Conid{Array}\;\Varid{a}){}\<[E]%
\\
\>[B]{}\Varid{readI}\mathbin{::}\Conid{Array}\;\Varid{a}\to \Conid{Int}\to \Varid{a}{}\<[E]%
\ColumnHook
\end{hscode}\resethooks

In this \textsc{api}, we can only read and write unrestricted values into an
array. It has to be that way, otherwise it would be possible to turn any
linear value into an unrestricted linear value:

\begin{hscode}\SaveRestoreHook
\column{B}{@{}>{\hspre}l<{\hspost}@{}}%
\column{3}{@{}>{\hspre}l<{\hspost}@{}}%
\column{5}{@{}>{\hspre}l<{\hspost}@{}}%
\column{E}{@{}>{\hspre}l<{\hspost}@{}}%
\>[B]{}\Varid{write'}\mathbin{::}\Conid{MArray}\;\Varid{a}⊸\Conid{Int}\to \Varid{a}⊸\Conid{MArray}\;\Varid{a}{}\<[E]%
\\
\>[B]{}\Varid{read'}\mathbin{::}\Conid{MArray}\;\Varid{a}⊸\Conid{Int}\to (\Conid{MArray}\;\Varid{a},\Varid{a}){}\<[E]%
\\[\blanklineskip]%
\>[B]{}\Varid{badUr}\mathbin{::}\Varid{a}⊸\Conid{Ur}\;\Varid{a}{}\<[E]%
\\
\>[B]{}\Varid{badUr}\;\Varid{x}\mathrel{=}\Varid{newArray}\;\mathrm{1}\mathbin{\$}\lambda \Varid{arr}\to {}\<[E]%
\\
\>[B]{}\hsindent{3}{}\<[3]%
\>[3]{}\mathbf{case}\;\Varid{freeze}\;(\Varid{write}\;\mathrm{0}\;\Varid{x})\;\mathbf{of}{}\<[E]%
\\
\>[3]{}\hsindent{2}{}\<[5]%
\>[5]{}\Conid{Ur}\;\Varid{arr'}\to \Conid{Ur}\;(\Varid{readI}\;\Varid{arr'}\;\mathrm{0}){}\<[E]%
\ColumnHook
\end{hscode}\resethooks

In particular there's no hope of ever having an \ensuremath{\Conid{MArray}\;(\Conid{MArray}\;\Varid{a})} value: we
can construct this type, but it is not inhabited.
In order to build a nested mutable array, we seem to have to give up \ensuremath{\Varid{freeze}}.

At the heart of this problem is the solution to change the type upon freezing.
This is a rather standard solution in Haskell: mutable arrays and immutable
arrays are distinguished by having different types. For \ensuremath{\Conid{ST}} arrays, for
instance, you can read of an immutable array in pure code, while mutable array
reads can only be done in the \ensuremath{\Conid{ST}} monad. In the linearly typed solution,
similarly, immutable arrays must always be used linearly, while immutable arrays
are typically unrestricted.

So if we are to implement a freeze function for nested linear arrays, presumably
it should change the type of the elements of the array too.

\begin{hscode}\SaveRestoreHook
\column{B}{@{}>{\hspre}l<{\hspost}@{}}%
\column{E}{@{}>{\hspre}l<{\hspost}@{}}%
\>[B]{}\Varid{freezeNested0}\mathbin{::}\Conid{MArray}\;\Varid{a}\to \Conid{Ur}\;(\Conid{Array}\,\mathbin{??}){}\<[E]%
\ColumnHook
\end{hscode}\resethooks

There's no uniform way to choose the type \ensuremath{\mathbin{??}}, but we could use a type class.
We can take our inspiration from Haskell's safe \ensuremath{\Conid{Coerce}} type
class~\citep{safe-coercions}: \ensuremath{\Conid{Coerce}} is a type class to safely convert the
type of a value to another type with the same runtime representation,
implemented as a no-op. This is what we want our freeze function to do, it just
needs more restriction. Here's what it could look like

\begin{hscode}\SaveRestoreHook
\column{B}{@{}>{\hspre}l<{\hspost}@{}}%
\column{E}{@{}>{\hspre}l<{\hspost}@{}}%
\>[B]{}\mathbf{class}\;\Conid{Freeze}\;\Varid{a}\;\Varid{b}{}\<[E]%
\\
\>[B]{}\mathbf{instance}\;\Conid{Freeze}\;(\Conid{Ur}\;\Varid{a})\;(\Conid{Ur}\;\Varid{a}){}\<[E]%
\\
\>[B]{}\mathbf{instance}\;\constraintfont{\Conid{Freeze}\;\Varid{a}\;\Varid{b}}\FatArrow \Conid{Freeze}\;(\Conid{MArray}\;\Varid{a})\;(\Conid{Array}\;\Varid{b}){}\<[E]%
\\[\blanklineskip]%
\>[B]{}\Varid{freeze}\mathbin{::}\constraintfont{\Conid{Freeze}\;\Varid{a}\;\Varid{b}}\FatArrow \Varid{a}\to \Conid{Ur}\;\Varid{b}{}\<[E]%
\ColumnHook
\end{hscode}\resethooks

Note that, since \ensuremath{\Conid{Freeze}} cannot be reflexive, a nested array structure must
have leaves of type \ensuremath{\Conid{Ur}\;\Varid{a}}. Here, \ensuremath{\Conid{Ur}} plays the same role as \ensuremath{\Conid{Val}} in
\cref{sec:nested-arrays,sec:narray-impl,sec:o1-freeze}.

An issue with this approach is that since \ensuremath{\Conid{Freeze}} must be abstract, for safety,
it is hard to extend this type class. For instance, if one would make a
\begin{hscode}\SaveRestoreHook
\column{B}{@{}>{\hspre}l<{\hspost}@{}}%
\column{E}{@{}>{\hspre}l<{\hspost}@{}}%
\>[B]{}\mathbf{newtype}\;\Conid{Matrix}\;\Varid{a}\mathrel{=}\Conid{MkMatrix}\;(\Conid{MArray}\;(\Conid{MArray}\;\Varid{a})){}\<[E]%
\ColumnHook
\end{hscode}\resethooks
then \ensuremath{\Conid{Matrix}} would not implement \ensuremath{\Conid{Freeze}}.

Instead of going down this path, linear constraints suggest to rethink the
initial postulate: that freezing an array requires changing its type. Instead we
have a single type of arrays whose usage is constrained by a capability. For a
nested data structure the capability applies to the entire array. Then, only
this capability's type needs to change to freeze the entire nested data
structure. Since the capability is computationally trivial, freezing is $O(1)$
by construction. This solution is akin to Rust's ownership, where freezing
corresponds to storing an owned value \text{\ttfamily A} inside an \text{\ttfamily RC\char60{}A\char62{}} or a similar
structure. This is precisely what we propose in \cref{sec:o1-freeze}: the key is
that freezing arrays doesn't change their type.

\subsection{Variations on release operators}
\label{sec:unrestricted-release}

In \cref{sec:borrowing-slices}, we introduced how linear constraints can be
leveraged to create an ownership discipline where sub-structures (there array
slices) can be borrowed. Borrowing functions return the borrowed sub-structure
together with what we've been calling a release operator. \emph{E.g.} in

\begin{hscode}\SaveRestoreHook
\column{B}{@{}>{\hspre}l<{\hspost}@{}}%
\column{14}{@{}>{\hspre}l<{\hspost}@{}}%
\column{E}{@{}>{\hspre}l<{\hspost}@{}}%
\>[B]{}\Varid{restrict}\mathbin{::}{}\<[14]%
\>[14]{}\constraintfont{\Conid{RW}\;\Varid{n}}\Lolly \Conid{UArray}\;\Varid{a}\;\Varid{n}\to \Conid{Int}\to \Conid{Int}\to {}\<[E]%
\\
\>[14]{}\exists\;\Varid{p}.(\Conid{Ur}\;(\Conid{UArray}\;\Varid{a}\;\Varid{p}),(\constraintfont{\Conid{RW}\;\Varid{p}}\Lolly ()\RLolly\constraintfont{\Conid{RW}\;\Varid{n}}))\RLolly\constraintfont{\Conid{RW}\;\Varid{p}}{}\<[E]%
\ColumnHook
\end{hscode}\resethooks

We borrow an \ensuremath{\Conid{UArray}} with a new type-level name \ensuremath{\Varid{p}}, and we also get a release
operator of type \ensuremath{\constraintfont{\Conid{RW}\;\Varid{p}}\Lolly ()\RLolly\constraintfont{\Conid{RW}\;\Varid{n}}}. In this section
we explore variations on their design.

\subsubsection{Unrestricted release operator}

We've made it so that the release operator is linear, so it must be called
exactly once. It doesn't change the design a lot to return an unrestricted
release operator \ensuremath{\Conid{Ur}\;(\constraintfont{\Conid{RW}\;\Varid{p}}\Lolly ()\RLolly\constraintfont{\Conid{RW}\;\Varid{n}})}. After all,
the \ensuremath{\Conid{RW}\;\Varid{p}} constraint is linear as well, so you always must use the release
operator exactly once whether it is linear or unrestricted. Maybe it would be
more convenient for the release operators to be unrestricted.

Except, while this holds for the simple \ensuremath{\Conid{UArray}}, it is not entirely true of the
nested \ensuremath{\Conid{NArray}} where there's another way to consume an \ensuremath{\Conid{RW}\;\Varid{p}} constraint in our
\textsc{api}: the \ensuremath{\Varid{write}} function.

\begin{hscode}\SaveRestoreHook
\column{B}{@{}>{\hspre}l<{\hspost}@{}}%
\column{E}{@{}>{\hspre}l<{\hspost}@{}}%
\>[B]{}\Varid{borrowNA}\mathbin{::}\constraintfont{\Conid{RW}\;\Varid{n}}\Lolly \Conid{NArray}\;\Varid{a}\;\Varid{n}\to \Conid{Int}\to \exists\;\Varid{p}.(\Conid{Ur}\;(\Varid{a}\;\Varid{p}),(\constraintfont{\Conid{RW}\;\Varid{p}}\Lolly ()\RLolly\constraintfont{\Conid{RW}\;\Varid{n}}))\RLolly\constraintfont{\Conid{RW}\;\Varid{p}}{}\<[E]%
\\
\>[B]{}\Varid{writeNA}\mathbin{::}\constraintfont{(\Conid{RW}\;\Varid{n},\Conid{RW}\;\Varid{p})}\Lolly \Conid{NArray}\;\Varid{a}\;\Varid{n}\to \Conid{Int}\to \Varid{a}\;\Varid{p}\to ()\RLolly\constraintfont{\Conid{RW}\;\Varid{n}}{}\<[E]%
\ColumnHook
\end{hscode}\resethooks

If the release operator returned by \ensuremath{\Varid{borrowNA}} was unrestricted, then it would
be possible to borrow a slice and never return it. Instead it could be stored in
another \ensuremath{\Conid{NArray}}.

This is semantically dubious, as it means never giving control back to the
original owner, which remains unavailable for the rest of the computation. If
the owner has responsibilities (\emph{e.g.} freeing memory), then they won't be
performed. So linear release operator, or rather lack of a linear release
operator, acts as a proof of ownership.

In the case of our arrays there is no problem as significant as not freeing
memory, however, in \cref{sec:nested-arrays}, linear release operators prevent
writing borrowed slice inside deep borrows, which would be a runtime error.

So if we were to have release operator be unrestricted, we would presumably need
to keep track of ownership another way. For instance by adding another
linear-constraint capability for owned structure which would be consumed by
\ensuremath{\Varid{writeNA}} but borrowed slice wouldn't have.

\subsubsection{Release function as a constraint}

An orthogonal variation that we could make to the release operator is turn it
into a constraint. That is borrowing functions would look like

\begin{hscode}\SaveRestoreHook
\column{B}{@{}>{\hspre}l<{\hspost}@{}}%
\column{14}{@{}>{\hspre}l<{\hspost}@{}}%
\column{E}{@{}>{\hspre}l<{\hspost}@{}}%
\>[B]{}\Varid{restrict}\mathbin{::}{}\<[14]%
\>[14]{}\constraintfont{\Conid{RW}\;\Varid{n}}\Lolly \Conid{UArray}\;\Varid{a}\;\Varid{n}\to \Conid{Int}\to \Conid{Int}\to {}\<[E]%
\\
\>[14]{}\exists\;\Varid{p}.\Conid{Ur}\;(\Conid{UArray}\;\Varid{a}\;\Varid{p})\RLolly\constraintfont{(\Conid{RW}\;\Varid{p},(\constraintfont{\Conid{RW}\;\Varid{p}}\Lolly \constraintfont{\Conid{RW}\;\Varid{n}}))}{}\<[E]%
\ColumnHook
\end{hscode}\resethooks

The constraint solver that we've described in \cref{sec:constraint-solver},
cannot handle such implication given constraints. But \textsc{ghc} can, via a
feature called ``quantified constraints'' \citep{quantified-constraints}. In
this article we didn't include quantified constraints because they add quite a
bit of complication which we considered to be a distraction for the sake of this
article. But we expect linear constraint and quantified constraints to integrate
without an issue, and an integration into \textsc{ghc} would automatically
have qualified linear constraint.

If we were to use such release constraints instead of release operators, then
release constraints would be called implicitly by the typechecker where they
would be needed. For instance, the following snippet from \cref{sec:valiant}

\begin{hscode}\SaveRestoreHook
\column{B}{@{}>{\hspre}l<{\hspost}@{}}%
\column{3}{@{}>{\hspre}l<{\hspost}@{}}%
\column{5}{@{}>{\hspre}l<{\hspost}@{}}%
\column{7}{@{}>{\hspre}l<{\hspost}@{}}%
\column{E}{@{}>{\hspre}l<{\hspost}@{}}%
\>[3]{}\mathbf{if}\;\Varid{size}\;\Varid{p}\mathbin{>}\Varid{size}\;\Varid{q}{}\<[E]%
\\
\>[3]{}\hsindent{2}{}\<[5]%
\>[5]{}\mathbf{then}\;\Conid{Linearly}.\mathbf{do}{}\<[E]%
\\
\>[5]{}\hsindent{2}{}\<[7]%
\>[7]{}\mathbf{let}\;\Varid{i}\mathrel{=}\Varid{size}\;\Varid{p}\mathbin{\Varid{`div`}}\mathrm{2}{}\<[E]%
\\
\>[5]{}\hsindent{2}{}\<[7]%
\>[7]{}(\Conid{Ur}\;(\Varid{a},\Varid{x},\Varid{b}),\Varid{release}_{\Varid{p}})\leftarrow \Varid{splitUpperMatrix}\;\Varid{p}\;\Varid{i}{}\<[E]%
\\
\>[5]{}\hsindent{2}{}\<[7]%
\>[7]{}(\Conid{Ur}\;(\Varid{y},\Varid{z}),\Varid{release}_{\Varid{n}})\leftarrow \Varid{sliceMatrixV}\;\Varid{n}\;\Varid{i}{}\<[E]%
\\
\>[5]{}\hsindent{2}{}\<[7]%
\>[7]{}\Varid{w}\;\Varid{b}\;\Varid{z}\;\Varid{q}{}\<[E]%
\\
\>[5]{}\hsindent{2}{}\<[7]%
\>[7]{}\Varid{multiplyMatricesInto}\;\Varid{y}\;\Varid{x}\;\Varid{z}{}\<[E]%
\\
\>[5]{}\hsindent{2}{}\<[7]%
\>[7]{}\Varid{w}\;\Varid{a}\;\Varid{y}\;\Varid{q}{}\<[E]%
\\
\>[5]{}\hsindent{2}{}\<[7]%
\>[7]{}\Varid{release}_{\Varid{p}};\Varid{release}_{\Varid{n}}{}\<[E]%
\\
\>[3]{}\hsindent{2}{}\<[5]%
\>[5]{}\mathbf{else}{}\<[E]%
\ColumnHook
\end{hscode}\resethooks
would instead look like
\begin{hscode}\SaveRestoreHook
\column{B}{@{}>{\hspre}l<{\hspost}@{}}%
\column{3}{@{}>{\hspre}l<{\hspost}@{}}%
\column{5}{@{}>{\hspre}l<{\hspost}@{}}%
\column{7}{@{}>{\hspre}l<{\hspost}@{}}%
\column{E}{@{}>{\hspre}l<{\hspost}@{}}%
\>[3]{}\mathbf{if}\;\Varid{size}\;\Varid{p}\mathbin{>}\Varid{size}\;\Varid{q}{}\<[E]%
\\
\>[3]{}\hsindent{2}{}\<[5]%
\>[5]{}\mathbf{then}\;\Conid{Linearly}.\mathbf{do}{}\<[E]%
\\
\>[5]{}\hsindent{2}{}\<[7]%
\>[7]{}\mathbf{let}\;\Varid{i}\mathrel{=}\Varid{size}\;\Varid{p}\mathbin{\Varid{`div`}}\mathrm{2}{}\<[E]%
\\
\>[5]{}\hsindent{2}{}\<[7]%
\>[7]{}\Conid{Ur}\;(\Varid{a},\Varid{x},\Varid{b})\leftarrow \Varid{splitUpperMatrix}\;\Varid{p}\;\Varid{i}{}\<[E]%
\\
\>[5]{}\hsindent{2}{}\<[7]%
\>[7]{}\Conid{Ur}\;(\Varid{y},\Varid{z})\leftarrow \Varid{sliceMatrixV}\;\Varid{n}\;\Varid{i}{}\<[E]%
\\
\>[5]{}\hsindent{2}{}\<[7]%
\>[7]{}\Varid{w}\;\Varid{b}\;\Varid{z}\;\Varid{q}{}\<[E]%
\\
\>[5]{}\hsindent{2}{}\<[7]%
\>[7]{}\Varid{multiplyMatricesInto}\;\Varid{y}\;\Varid{x}\;\Varid{z}{}\<[E]%
\\
\>[5]{}\hsindent{2}{}\<[7]%
\>[7]{}\Varid{w}\;\Varid{a}\;\Varid{y}\;\Varid{q}{}\<[E]%
\\
\>[3]{}\hsindent{2}{}\<[5]%
\>[5]{}\mathbf{else}{}\<[E]%
\ColumnHook
\end{hscode}\resethooks

Which, arguably, is a little cleaner. However, while the main reason for not
including releases as constraints was that we didn't think it was worth treating
quantified constraints in this article, it is worth mentioning that these authors
have had discussions among ourselves as to whether release constraints were not
a little too magical. Maybe clarifying the ends of scope with explicit calls to
release operators is desirable.

\subsection{Design of the constraint solver}
\label{sec:solver-design}

The judgements of our constraint solver (\cref{sec:constraint-solver}) are of the
form
$$
\vdashs   \constraintfont{ \constraintfont{C} }   \leadsto   \constraintfont{ \ottnt{Q} } 
$$
In which $  \constraintfont{ \constraintfont{C} }  $ is an input and $  \constraintfont{ \ottnt{Q} }  $ is an output. This is not the
typical shape of linear-logic proof-search algorithms, in particular
\cite{hh-ll} and \cite{resource-management-for-ll-proof-search}, whose
proof-search algorithm our constraint solver is based on, do not have a judgement
of this form (yet we still claim that our algorithm is essentially the same). In
fact, even the earlier, shorter, version of this article \citep{SpiwackKBWE22}
used the more typical form of judgement:
$$
  \constraintfont{ \ottnt{U} }   ;   \constraintfont{ \ottnt{L}_{\ottmv{i}} }    \vdashs    \constraintfont{ \constraintfont{C} }    \leadsto    \constraintfont{ \ottnt{L}_{\ottmv{r}} }  
$$
In this judgement $  \constraintfont{ \ottnt{L}_{\ottmv{i}} }  $ is an input context of linear atomic constraints
($  \constraintfont{ \ottnt{U} }  $ are unrestricted constraints) available so solve $  \constraintfont{ \constraintfont{C} }  $. The
judgement then outputs the \emph{leftover} constraints $  \constraintfont{ \ottnt{L}_{\ottmv{r}} }  $, those
linear atomic constraints from $  \constraintfont{ \ottnt{L}_{\ottmv{i}} }  $ which haven't been used in the
resolution of $  \constraintfont{ \constraintfont{C} }  $ and can (and, in fact, must) be used to solve other
constraints. For instance the rule to solve $  \constraintfont{ \constraintfont{C_{{\mathrm{1}}}}  \qtensor  \constraintfont{C_{{\mathrm{2}}}} }  $ is the following:
$$
\inferrule
{   \constraintfont{ \ottnt{U} }   ;   \constraintfont{ \ottnt{L}_{\ottmv{i}} }    \vdashs    \constraintfont{ \constraintfont{C_{{\mathrm{1}}}} }    \leadsto    \constraintfont{ \ottnt{L}_{\ottmv{r}} }  \\
    \constraintfont{ \ottnt{U} }   ;   \constraintfont{ \ottnt{L}_{\ottmv{r}} }    \vdashs    \constraintfont{ \constraintfont{C_{{\mathrm{2}}}} }    \leadsto    \constraintfont{ \ottnt{L}_{\ottmv{o}} }  
}
{  \constraintfont{ \ottnt{U} }   ;   \constraintfont{ \ottnt{L}_{\ottmv{i}} }    \vdashs    \constraintfont{ \constraintfont{C_{{\mathrm{1}}}}  \qtensor  \constraintfont{C_{{\mathrm{2}}}} }    \leadsto    \constraintfont{ \ottnt{L}_{\ottmv{o}} }  }
$$
we see how the leftovers of $  \constraintfont{ \constraintfont{C_{{\mathrm{1}}}} }  $ are then used as the input of $\constraintfont{C_{{\mathrm{2}}}}$.

In reference to Haskell's classic monads, we will call our presentation a
\emph{writer-style} solver and the classic presentation a \emph{state-style}
solver.

The proof-search literature prefers state-style solvers because they backtrack
less. For instance, when solving $  \constraintfont{   \multiplicityfont{ \ottsym{1} }  \scale ( \ottnt{Q}  \Lolly    \multiplicityfont{ \ottsym{1} }  \scale ( \ottnt{Q}  \Lolly  \ottnt{Q}  \qtensor  \ottnt{Q} )  )  }  $, in the writer style,
both branches of the tensor could use, say, the inner $  \constraintfont{ \ottnt{Q} }  $ hypothesis,
which violates the linearity condition and will cause backtracking, while in the
state style, both branch will coordinate and necessary pick different $  \constraintfont{ \ottnt{Q} }  $
and find a proof without backtracking at all.

But our solving algorithm must not backtrack, so we are not concerned
with this consideration. However, the constraint
$  \constraintfont{   \multiplicityfont{ \ottsym{1} }  \scale ( \ottnt{Q}  \Lolly    \multiplicityfont{ \ottsym{1} }  \scale ( \ottnt{Q}  \Lolly  \ottnt{Q}  \qtensor  \ottnt{Q} )  )  }  $ is an example where the backtracking-free
state-style solver can solve a constraint while the writer-style
solver cannot.

On the other hand, for the state-style solver not to be sensitive to the
ordering of the constraints, we had to make a concession and consider that
having the same atomic constraint as an unrestricted and a linear given was an
ambiguity, so that the state-style solver of \citet{SpiwackKBWE22} cannot solve
$  \constraintfont{   \multiplicityfont{ \ottsym{1} }  \scale (   \multiplicityfont{ \omega }  \scale \ottnt{q}   \Lolly    \multiplicityfont{ \ottsym{1} }  \scale (   \multiplicityfont{ \ottsym{1} }  \scale \ottnt{q}   \Lolly    \multiplicityfont{ \ottsym{1} }  \scale \ottnt{q}  )  )  }  $. Yet the writer-style solver of \cref{fig:constraint-solver} can solve this example.

So really, the state-style and writer-style constraint solvers are incomparable.
Why choose one over the other? Here are the considerations that went into this
choice:

\begin{itemize}
  \item Writer-style solvers are generally simpler (both to present and to
        implement). The use of state-style solver for proof-search sacrifices
        simplicity for algorithmic considerations which don't apply to this
        paper.
  \item When solving $  \constraintfont{ \constraintfont{C_{{\mathrm{1}}}}  \qtensor  \constraintfont{C_{{\mathrm{2}}}} }  $, $  \constraintfont{ \constraintfont{C_{{\mathrm{1}}}} }  $ and $  \constraintfont{ \constraintfont{C_{{\mathrm{2}}}} }  $ are perfectly
        symmetric, making typechecking robust to reordering the source program.
        This isn't the case in state-passing style where $  \constraintfont{ \constraintfont{C_{{\mathrm{1}}}} }  $ and
        $  \constraintfont{ \constraintfont{C_{{\mathrm{2}}}} }  $ are checked in different contexts. Which requires additional
        care to make sure that reordering $  \constraintfont{ \constraintfont{C_{{\mathrm{1}}}} }  $ and $  \constraintfont{ \constraintfont{C_{{\mathrm{2}}}} }  $ won't affect
        the resolution. The writer style preserves this symmetry.
  \item Solving $  \constraintfont{ \constraintfont{C_{{\mathrm{1}}}}  \qtensor  \constraintfont{C_{{\mathrm{2}}}} }  $ with a writer-style solver processes $  \constraintfont{ \constraintfont{C_{{\mathrm{1}}}} }  $
        and $  \constraintfont{ \constraintfont{C_{{\mathrm{2}}}} }  $ in parallel, whereas the state-passing style
        sequentialises $  \constraintfont{ \constraintfont{C_{{\mathrm{1}}}} }  $ and $  \constraintfont{ \constraintfont{C_{{\mathrm{2}}}} }  $. This matters for integrating
        linear constraints into \textsc{ghc}'s constraint solver. \textsc{Ghc}
        likes to suspend constraints it doesn't know how to solve yet: if
        \textsc{ghc} don't know how to solve $  \constraintfont{ \constraintfont{C_{{\mathrm{1}}}} }  $, it will pause solving
        $  \constraintfont{ \constraintfont{C_{{\mathrm{1}}}} }  $ and start solving $  \constraintfont{ \constraintfont{C_{{\mathrm{2}}}} }  $ hoping that the resolution of
        $  \constraintfont{ \constraintfont{C_{{\mathrm{2}}}} }  $ will give clues to help solve $  \constraintfont{ \constraintfont{C_{{\mathrm{1}}}} }  $. Suspension,
        however, isn't compatible with the sequentialised state-passing style:
        it is not possible to wait until $  \constraintfont{ \constraintfont{C_{{\mathrm{2}}}} }  $ is solved to solve
        $  \constraintfont{ \constraintfont{C_{{\mathrm{1}}}} }  $ if solving $  \constraintfont{ \constraintfont{C_{{\mathrm{2}}}} }  $ requires the output of $  \constraintfont{ \constraintfont{C_{{\mathrm{1}}}} }  $.
  \item Incidentally, counting down from a context like the state-passing style
        does doesn't generalise to arbitrary semiring in the style of
        \cite{abel_unified_2020}. While there is no current plan to extend
        \text{ghc} with a richer semiring than multiplicities, the
        writer-style is more future proof.
\end{itemize}

A final consideration is that the errors of the writer-style solver seems to
match programmers expectations better.
Consider the following example:
\begin{hscode}\SaveRestoreHook
\column{B}{@{}>{\hspre}l<{\hspost}@{}}%
\column{7}{@{}>{\hspre}l<{\hspost}@{}}%
\column{9}{@{}>{\hspre}l<{\hspost}@{}}%
\column{11}{@{}>{\hspre}l<{\hspost}@{}}%
\column{14}{@{}>{\hspre}c<{\hspost}@{}}%
\column{14E}{@{}l@{}}%
\column{18}{@{}>{\hspre}l<{\hspost}@{}}%
\column{E}{@{}>{\hspre}l<{\hspost}@{}}%
\>[B]{}\Varid{f}\mathrel{=}{}\<[7]%
\>[7]{}\Varid{linearly}\mathbin{\$}\Conid{Linearly}.\mathbf{do}{}\<[E]%
\\
\>[7]{}\hsindent{2}{}\<[9]%
\>[9]{}\Conid{Ur}\;\Varid{arr}\leftarrow \Varid{new}\;\mathrm{10}{}\<[E]%
\\
\>[7]{}\hsindent{2}{}\<[9]%
\>[9]{}\Varid{fr}{}\<[14]%
\>[14]{}\mathbin{::}{}\<[14E]%
\>[18]{}\constraintfont{\Conid{RW}\;\Varid{n}}\Lolly ()\RLolly()\mbox{\onelinecomment  fails here with \ensuremath{\constraintfont{\Conid{RW}\;\Varid{n}}} used more than once}{}\<[E]%
\\
\>[7]{}\hsindent{2}{}\<[9]%
\>[9]{}\Varid{fr}{}\<[14]%
\>[14]{}\mathrel{=}{}\<[14E]%
\>[18]{}\Conid{Linearly}.\mathbf{do}{}\<[E]%
\\
\>[9]{}\hsindent{2}{}\<[11]%
\>[11]{}\Varid{free}\;\Varid{arr}{}\<[E]%
\\
\>[9]{}\hsindent{2}{}\<[11]%
\>[11]{}\Varid{free}\;\Varid{arr}{}\<[E]%
\\
\>[7]{}\hsindent{2}{}\<[9]%
\>[9]{}()\leftarrow \Varid{fr}\mbox{\onelinecomment  A state-style solver would fail here saying that there is no \ensuremath{\constraintfont{\Conid{RW}\;\Varid{n}}} left to use}{}\<[E]%
\\
\>[7]{}\hsindent{2}{}\<[9]%
\>[9]{}\Varid{\Conid{Linearly}.return}\mathbin{\$}\Conid{Ur}\;(){}\<[E]%
\ColumnHook
\end{hscode}\resethooks
In this example, it is possible to type \ensuremath{\Varid{fr}} in the declarative type system.
However, \ensuremath{\Varid{fr}} is incorrect as it frees the array several time, which isn't
permitted. A state-style solver would let \ensuremath{\Varid{fr}} typecheck by using the capability
generated by \ensuremath{\Varid{new}} (on the second line) to free the array a second time,
ultimately pointing at the wrong location for a mistake. Whereas the
writer-style solver points at the actual mistake.

\section{Related work}
\label{sec:related-work}

\paragraph*{OutsideIn}
\label{sec:outsidein}

Our aim is to integrate the present work in \textsc{ghc}, and
accordingly the qualified type system in
\cref{sec:qualified-type-system} and the constraint inference
algorithm in \cref{sec:type-inference} follow a similar
presentation to that of OutsideIn~\cite{OutsideIn}, \textsc{ghc}'s
constraint solver algorithm.  Even though our presentation is
self-contained, we outline some of the differences from that work.

The solver judgement in OutsideIn takes the following form:
\[\mathcal{Q}\ ;\ Q_{\mathit{given}}\ ;\ \overline{\alpha}_{\mathit{tch}} \overset{\mathit{solv}}{\mapsto} C_{\mathit{wanted}} \leadsto Q_{\mathit{residual}}\  ; \ \theta\]
The main differences between OutsideIn's solver judgement and our solver judgements in \cref{sec:constraint-solver} are:
\begin{itemize}
  \item OutsideIn's judgement includes top-level axioms schemes separately
($\mathcal{Q}$), which we have omitted for the sake of simplicity.
  \item OutsideIn's judgement has an input context $Q_{\mathit{given}}$ whereas
        we don't. Having such a context is more realistic from an implementation
        point of view: in practice given constraints are named in
        $Q_{\mathit{given}}$, the evidence for solving a constraint $[{{C}}]$
        can then refer to those names. Since we avoid names in our presentation,
        we don't need such a context. The context $Q_{\mathit{given}}$ is also
        useful for quantified constraints \cite{quantified-constraints}, where
        proving a given is allowed to be a pattern that matches the goal.
  \item In addition to constraint inference, OutsideIn performs type
inference, requiring additional bookkeeping in the solver judgment. The solver
takes as input a set of \emph{touchable} variables $\overline{\alpha}_{tch}$
which record the type variables that can be unified at any given time, and
produces a type substitution $\theta$ as an output.
As discussed in \cref{sec:type-inference}, we do not perform type
inference, only constraint inference. Therefore, our solver need not return a
type assignment.
  \item Both OutsideIn and our solver output a set of constraints,
        $Q_{\mathit{residual}}$ and $  \constraintfont{ \ottnt{Q_{\ottmv{r}}} }  $ respectively. They are similar
        in spirit, but not used exactly the same. Note in particular that while
        OutsideIn has both an input and output context, in the terminology of
        \cref{sec:solver-design}, OutsideIn is a combination of writer-style and
        reader-style, not a state-style solver.

        Where they differ is that $Q_{\mathit{residual}}$ is used for type
        inference (when a function has no type signature), and as such only
        contains the constraints which are not in $Q_{\mathit{given}}$. In our
        solver it is essential that $  \constraintfont{ \ottnt{Q_{\ottmv{r}}} }  $ returns all the hypotheses that
        have been needed, except those who have been discharged by an
        implication, so that we can check that their usage respect the linearity
        constraints.

\item Finally, while OutsideIn has a single kind of conjunction, our constraint
language requires two: $  \constraintfont{ \ottnt{Q_{{\mathrm{1}}}}  \qtensor  \ottnt{Q_{{\mathrm{2}}}} }  $ and $  \constraintfont{ \ottnt{Q_{{\mathrm{1}}}}  \aand  \ottnt{Q_{{\mathrm{2}}}} }  $. This shows up when
generating constraints for $\kcase$ expressions in the~\rref{G-Case} rule.
OutsideIn accumulates constraints across branches (taking the union of each
branch), whereas we need to make sure that each branch of a $\kcase$-expression
consumes the same constraints.
\end{itemize}

\paragraph*{Ownership}

%
Ownership and borrowing are the key features of Rust's safe memory management model.
In \cref{sec:memory-ownership} we show how linear constraints can be used to
implement such an ownership model as a library.
Although linear constraints do not have the
convenience of Rust's syntax, we expect that they will support a
greater variety of abstractions.

Clean is another language with built-in ownership typing. Like Haskell
it is a lazy language. Mutation is performed by returning a new
reference, like in Linear Haskell without linear constraints.

\paragraph*{Languages with capabilities}

The idea of using capabilities to enforce high-level resource usage protocols is not new~\citep{DBLP:conf/pldi/DeLineF01},
and as such has been applied in practical programming languages before.
Both Mezzo~\cite{mezzo-permissions} and
\textsc{ats}~\cite{AtsLinearViews} served as inspiration for the
design of linear constraints. Of the two, Mezzo is more specialised,
being entirely built around its system of capabilities.  \textsc{Ats}
is the closest to our system because it appeals explicitly to linear
logic, and because the capabilities (known as \emph{stateful views})
are not tied to a particular use case. However,
\textsc{ats} does not have full inference of capabilities.

Other than that, the two systems have a lot of similarities. They have a
finer-grained capability system than is expressible in Rust (or our
encoding of it in \cref{sec:memory-ownership}) which makes it possible to change
the type of a reference cell upon write (though linear constraints could be used to implement such type-changing references too). They also eschew scoped
borrowing in favour of more traditional read and write capabilities.

Linear constraints are more general than either Mezzo or \textsc{ats},
while maintaining a considerably simpler inference algorithm, and at
the same time supporting a richer set of constraints (such as \textsc{gadt}s). This
simplicity is a benefit of abstracting over the simple-constraint
domain. In fact, it should be possible to see Mezzo or \textsc{ats} as
particular instantiations of the simple-constraint domain, with linear
constraints providing the general inference mechanism.


\paragraph*{Linearly typed languages}

Affe~\cite{kindly-bent} is a linearly typed \textsc{ml}-style core
language with mutable references and arrays, augmented with a notion
of borrowing. It has dedicated syntax for the scope of borrows. In
contrast, we represent scopes as functions. Affe is presented as a
fully integrated solution, while linear constraints is a small layer
on top of Linear Haskell.

\paragraph*{Logic programming}

There are a lot of commonalities between \textsc{ghc}'s constraint and logic
programs. Traditional type classes can be seen as Horn clause programs, much
like Prolog programs. \textsc{ghc} puts further restrictions in order to
avoid backtracking for speed and predictability.

The recent addition of quantified
constraints~\cite{quantified-constraints} extends type class
resolution to Hereditary Harrop~\cite{hereditary-harrop} programs. A generalisation of the
Hereditary Harrop fragment to linear logic, described by~\citet{hh-ll},
is the foundation of the Lolli language~\cite{hodas-thesis-lolli}.
The authors also coin the notion of \emph{uniform} proof. A fragment where
uniform proofs are complete supports goal-oriented proof search, like
Prolog does.

Completeness of uniform proofs is equivalent to
\cref{lem:inversion}, which, in turn, is used in the proof of the
soundness \cref{lem:generation-soundness}. Therefore our linear
constraints are compatible with quantified constraints: we simply need
to adapt~\cref{lem:inversion}.

It is interesting that goal-oriented search is baked into the
definition of OutsideIn. It is not only used as the constraint solving
strategy, but it seems to required for the soundness of the constraint
generation algorithm. Or, if they are not required, uniform proofs are
at least an effective strategy to prove soundness.



\section{Conclusion}
\label{sec:conclusion}

We showed how a simple linear type system like that of Linear
Haskell can be extended with an inference mechanism which lets the
compiler manage some of the additional complexity of linear types
instead of the programmer. Linear constraints narrow the gap between linearly
typed languages and dedicated linear-like typing disciplines such as Rust's,
Mezzo's, or \textsc{ats}'s.

\bibliographystyle{ACM-Reference-Format}
\bibliography{bibliography}
\clearpage

\appendix

\renewcommand{\appendix}{}
\renewcommand{\appendixprelim}{}

\section{Complete Desugaring}
\label{sec:appendix:desugaring}

The complete definition of the desugaring function from
\cref{sec:desugaring} can be found in
\cref{fig:full:desugaring}.

For the sake of concision, we allow ourselves to write nested patterns
in $\kcase$ expressions of the core language. Desugaring nested patterns
into atomic $\kcase$ expression is routine.

In the complete description, we use a device which was omitted in the
main body of the article. Namely, we'll need a way to turn every
$  \dsevidence{\constraintfont{   \multiplicityfont{ \omega }  \scale \ottnt{Q}  } }  $ into an $ \ottkw{Ur} \, \ottsym{(}   \dsevidence{\constraintfont{ \ottnt{Q} } }   \ottsym{)} $. For any
$\ottnt{e}  \ottsym{:}   \dsevidence{\constraintfont{   \multiplicityfont{ \omega }  \scale \ottnt{Q}  } } $, we shall write $ \underline{ \ottnt{e} }_{  \constraintfont{ \ottnt{Q} }  }   \ottsym{:}  \ottkw{Ur} \, \ottsym{(}   \dsevidence{\constraintfont{   \multiplicityfont{ \omega }  \scale \ottnt{Q}  } }   \ottsym{)}$. As a shorthand, particularly useful in nested
patterns, we will write $  \kcase_  \multiplicityfont{ \pi }   \, \ottnt{e} \, \ottkw{of} \, \ottsym{\{}   \underline{ \ottmv{x} }_{  \constraintfont{ \ottnt{Q} }  }   \to  \ottnt{e'}  \ottsym{\}} $ for
$  \kcase_  \multiplicityfont{ \pi }   \,  \underline{ \ottnt{e} }_{  \constraintfont{ \ottnt{Q} }  }  \, \ottkw{of} \, \ottsym{\{}  \ottkw{Ur} \, \ottmv{x}  \to  \ottnt{e'}  \ottsym{\}} $.
$$
\left\{
  \begin{array}{lcl}
      \underline{ \ottnt{e} }_{  \constraintfont{  \mathbf{\varepsilon}  }  }  & = &   \kcase_  \multiplicityfont{ \ottsym{1} }   \, \ottnt{e} \, \ottkw{of} \, \ottsym{\{}  \ottsym{()}  \to  \ottkw{Ur} \, \ottsym{()}  \ottsym{\}}  \\
      \underline{ \ottnt{e} }_{  \constraintfont{   \multiplicityfont{ \ottsym{1} }  \scale \ottnt{q}  }  }   & = &  \ottnt{e}  \\
      \underline{ \ottnt{e} }_{  \constraintfont{   \multiplicityfont{ \omega }  \scale \ottnt{q}  }  }   & = &   \kcase_  \multiplicityfont{ \ottsym{1} }   \, \ottnt{e} \, \ottkw{of} \, \ottsym{\{}  \ottkw{Ur} \, \ottmv{x}  \to  \ottkw{Ur} \, \ottsym{(}  \ottkw{Ur} \, \ottmv{x}  \ottsym{)}  \ottsym{\}}  \\
      \underline{ \ottnt{e} }_{  \constraintfont{ \ottnt{Q_{{\mathrm{1}}}}  \qtensor  \ottnt{Q_{{\mathrm{2}}}} }  }   & = &   \kcase_  \multiplicityfont{ \ottsym{1} }   \, \ottnt{e} \, \ottkw{of} \, \ottsym{\{}  \ottsym{(}   \underline{ \ottmv{x} }_{  \constraintfont{ \ottnt{Q_{{\mathrm{1}}}} }  }   \ottsym{,}   \underline{ \ottmv{y} }_{  \constraintfont{ \ottnt{Q_{{\mathrm{2}}}} }  }   \ottsym{)}  \to  \ottkw{Ur} \, \ottsym{(}  \ottmv{x}  \ottsym{,}  \ottmv{y}  \ottsym{)}  \ottsym{\}} 
  \end{array}
\right.
$$
We will omit the $  \constraintfont{ \ottnt{Q} }  $ in $  \underline{ \ottnt{e} }_{  \constraintfont{ \ottnt{Q} }  }  $ and write
$  \underline{ \ottnt{e} }  $ when it can be easily inferred from the context.

\begin{figure}
  \small
  \centering

$$
\left\{\;
\begin{minipage}{0.8\linewidth}
\begin{hscode}\SaveRestoreHook
\column{B}{@{}>{\hspre}l<{\hspost}@{}}%
\column{3}{@{}>{\hspre}l<{\hspost}@{}}%
\column{5}{@{}>{\hspre}l<{\hspost}@{}}%
\column{7}{@{}>{\hspre}l<{\hspost}@{}}%
\column{29}{@{}>{\hspre}c<{\hspost}@{}}%
\column{29E}{@{}l@{}}%
\column{53}{@{}>{\hspre}c<{\hspost}@{}}%
\column{53E}{@{}l@{}}%
\column{67}{@{}>{\hspre}c<{\hspost}@{}}%
\column{67E}{@{}l@{}}%
\column{E}{@{}>{\hspre}l<{\hspost}@{}}%
\>[B]{}\dsterm{\ottmv{\Varid{z}}}{  \constraintfont{ \ottnt{Q} }  ; \Gamma  \vdash \Varid{x}\mathbin{:} \upsilon \;[\mskip1.5mu  \overline{\tau} \mathbin{/} \overline{\ottmv{a} } \mskip1.5mu]}{}\<[29]%
\>[29]{}\mathrel{=}{}\<[29E]%
\\
\>[B]{}\hsindent{5}{}\<[5]%
\>[5]{}\Varid{x}\;\Varid{z}{}\<[E]%
\\
\>[B]{}\dsterm{\ottmv{\Varid{z}}}{  \constraintfont{ \ottnt{Q} }  ; \Gamma  \vdash \lambda \Varid{x}.\Varid{e}\mathbin{:} \tau_{{\mathrm{1}}} \to_{\pi} \tau_{{\mathrm{2}}} }\mathrel{=}{}\<[E]%
\\
\>[B]{}\hsindent{3}{}\<[3]%
\>[3]{}\lambda \Varid{x}.\dsterm{\ottmv{\Varid{z}}}{  \constraintfont{ \ottnt{Q} }  ; \Gamma ,\Varid{x}\mathop{:_{  \multiplicityfont{ \pi }  }} \tau_{{\mathrm{1}}}  \vdash \Varid{e}\mathbin{:} \tau_{{\mathrm{2}}} }{}\<[E]%
\\
\>[B]{}\dsterm{\ottmv{\Varid{z}}}{  \constraintfont{ \ottnt{Q_{{\mathrm{1}}}} }   \qtensor   \constraintfont{ \ottnt{Q_{{\mathrm{2}}}} }  ; \Gamma_{{\mathrm{1}}} \mathbin{+} \Gamma_{{\mathrm{2}}}  \vdash  \ottnt{e_{{\mathrm{1}}}} \; \ottnt{e_{{\mathrm{2}}}} \mathbin{:} \tau }\mathrel{=}{}\<[E]%
\\
\>[B]{}\hsindent{3}{}\<[3]%
\>[3]{} \kcase_  \multiplicityfont{ \ottsym{1} }  \;\Varid{z}\;\mathbf{of}\;\{\mskip1.5mu ( \ottmv{z_{{\mathrm{1}}}} , \ottmv{z_{{\mathrm{2}}}} )\to {}\<[E]%
\\
\>[3]{}\hsindent{2}{}\<[5]%
\>[5]{}(\dsterm{\ottmv{ \ottmv{z_{{\mathrm{1}}}} }}{  \constraintfont{ \ottnt{Q_{{\mathrm{1}}}} }  ; \Gamma_{{\mathrm{1}}}  \vdash  \ottnt{e_{{\mathrm{1}}}} \mathbin{:} \tau_{{\mathrm{1}}} \to_{1} \tau })\;(\dsterm{\ottmv{ \ottmv{z_{{\mathrm{2}}}} }}{  \constraintfont{ \ottnt{Q_{{\mathrm{2}}}} }  ; \Gamma_{{\mathrm{2}}}  \vdash  \ottnt{e_{{\mathrm{2}}}} \mathbin{:} \tau_{{\mathrm{1}}} })\mskip1.5mu\}{}\<[E]%
\\
\>[B]{}\dsterm{\ottmv{\Varid{z}}}{  \constraintfont{ \ottnt{Q_{{\mathrm{1}}}} }   \qtensor   \multiplicityfont{ \omega }  \mathbin{⋅}  \constraintfont{ \ottnt{Q_{{\mathrm{2}}}} }  ; \Gamma_{{\mathrm{1}}} \mathbin{+}  \multiplicityfont{ \omega }  \mathbin{⋅} \Gamma_{{\mathrm{2}}}  \vdash  \ottnt{e_{{\mathrm{1}}}} \; \ottnt{e_{{\mathrm{2}}}} \mathbin{:} \tau }\mathrel{=}{}\<[E]%
\\
\>[B]{}\hsindent{3}{}\<[3]%
\>[3]{} \kcase_  \multiplicityfont{ \ottsym{1} }  \;\Varid{z}\;\mathbf{of}\;\{\mskip1.5mu ( \ottmv{z_{{\mathrm{1}}}} ,\underline{ \ottmv{z_{{\mathrm{2}}}} })\to {}\<[E]%
\\
\>[3]{}\hsindent{2}{}\<[5]%
\>[5]{}(\dsterm{\ottmv{ \ottmv{z_{{\mathrm{1}}}} }}{  \constraintfont{ \ottnt{Q_{{\mathrm{1}}}} }  ; \Gamma_{{\mathrm{1}}}  \vdash  \ottnt{e_{{\mathrm{1}}}} \mathbin{:} \tau_{{\mathrm{1}}} \to_{\omega} \tau })\;(\dsterm{\ottmv{ \ottmv{z_{{\mathrm{2}}}} }}{  \constraintfont{ \ottnt{Q_{{\mathrm{2}}}} }  ; \Gamma_{{\mathrm{2}}}  \vdash  \ottnt{e_{{\mathrm{2}}}} \mathbin{:} \tau_{{\mathrm{1}}} })\mskip1.5mu\}{}\<[E]%
\\
\>[B]{}\dsterm{\ottmv{\Varid{z}}}{  \constraintfont{ \ottnt{Q} }   \qtensor   \constraintfont{ \ottnt{Q_{{\mathrm{1}}}} }  [ \overline{\upsilon} / \overline{\ottmv{a} } ]; \Gamma  \vdash \packbox\Varid{e}\mathbin{:}\exists\; \overline{\ottmv{a} } . \tau \RLolly  \constraintfont{ \ottnt{Q_{{\mathrm{1}}}} }  }\mathrel{=}{}\<[E]%
\\
\>[B]{}\hsindent{3}{}\<[3]%
\>[3]{} \kcase_  \multiplicityfont{ \ottsym{1} }  \;\Varid{z}\;\mathbf{of}\;\{\mskip1.5mu (\Varid{z'},\Varid{z''})\to {}\<[E]%
\\
\>[3]{}\hsindent{2}{}\<[5]%
\>[5]{}\packbox(\Varid{z''},\dsterm{\ottmv{\Varid{z'}}}{  \constraintfont{ \ottnt{Q} }  ; \Gamma  \vdash \Varid{e}\mathbin{:} \tau [ \overline{\upsilon} / \overline{\ottmv{a} } ]})\mskip1.5mu\}{}\<[E]%
\\
\>[B]{}\dsterm{\ottmv{\Varid{z}}}{  \constraintfont{ \ottnt{Q_{{\mathrm{1}}}} }   \qtensor   \constraintfont{ \ottnt{Q_{{\mathrm{2}}}} }  ; \Gamma_{{\mathrm{1}}} \mathbin{+} \Gamma_{{\mathrm{2}}}  \vdash \klet\ \packbox \Varid{x}\mathrel{=} \ottnt{e_{{\mathrm{1}}}} \;\mathbf{in}\; \ottnt{e_{{\mathrm{2}}}} \mathbin{:} \tau }\mathrel{=}{}\<[E]%
\\
\>[B]{}\hsindent{3}{}\<[3]%
\>[3]{} \kcase_  \multiplicityfont{ \ottsym{1} }  \;\Varid{z}\;\mathbf{of}\;\{\mskip1.5mu ( \ottmv{z_{{\mathrm{1}}}} , \ottmv{z_{{\mathrm{2}}}} )\to {}\<[E]%
\\
\>[3]{}\hsindent{2}{}\<[5]%
\>[5]{}\klet\ \packbox \Varid{z'},\Varid{x}\mathrel{=}\dsterm{\ottmv{ \ottmv{z_{{\mathrm{1}}}} }}{  \constraintfont{ \ottnt{Q_{{\mathrm{1}}}} }  ; \Gamma_{{\mathrm{1}}}  \vdash  \ottnt{e_{{\mathrm{1}}}} \mathbin{:}\exists\; \overline{\ottmv{a} } . \tau_{{\mathrm{1}}} \RLolly  \constraintfont{ \ottnt{Q} }  }\;\mathbf{in}{}\<[E]%
\\
\>[3]{}\hsindent{2}{}\<[5]%
\>[5]{} \klet_  \multiplicityfont{ \ottsym{1} }  \; \ottmv{z_{{\mathrm{2}}}} '\mathrel{=}( \ottmv{z_{{\mathrm{2}}}} ,\Varid{z'})\;\mathbf{in}{}\<[E]%
\\
\>[3]{}\hsindent{2}{}\<[5]%
\>[5]{}\dsterm{\ottmv{ \ottmv{z_{{\mathrm{2}}}} '}}{  \constraintfont{ \ottnt{Q_{{\mathrm{2}}}} }   \qtensor   \constraintfont{ \ottnt{Q} }  ; \Gamma_{{\mathrm{2}}} ,\Varid{x}\mathop{:_{1}} \tau_{{\mathrm{1}}}  \vdash  \ottnt{e_{{\mathrm{2}}}} \mathbin{:} \tau }\mskip1.5mu\}{}\<[E]%
\\
\>[B]{}\dsterm{\ottmv{\Varid{z}}}{  \constraintfont{ \ottnt{Q_{{\mathrm{1}}}} }   \qtensor   \constraintfont{ \ottnt{Q_{{\mathrm{2}}}} }  ; \Gamma_{{\mathrm{1}}} \mathbin{+} \Gamma_{{\mathrm{2}}}  \vdash  \klet_  \multiplicityfont{ \ottsym{1} }  \;\Varid{x}\mathrel{=} \ottnt{e_{{\mathrm{1}}}} \;\mathbf{in}\; \ottnt{e_{{\mathrm{2}}}} \mathbin{:} \tau }\mathrel{=}{}\<[E]%
\\
\>[B]{}\hsindent{3}{}\<[3]%
\>[3]{} \kcase_  \multiplicityfont{ \ottsym{1} }  \;\Varid{z}\;\mathbf{of}\;\{\mskip1.5mu ( \ottmv{z_{{\mathrm{1}}}} , \ottmv{z_{{\mathrm{2}}}} )\to {}\<[E]%
\\
\>[3]{}\hsindent{2}{}\<[5]%
\>[5]{} \klet_  \multiplicityfont{ \ottsym{1} }  \;\Varid{x}\mathbin{:}\dsevidence{  \constraintfont{ \ottnt{Q} }  }\to_{1} \tau_{{\mathrm{1}}} \mathrel{=}\dsterm{\ottmv{ \ottmv{z_{{\mathrm{1}}}} }}{  \constraintfont{ \ottnt{Q_{{\mathrm{1}}}} }   \qtensor   \constraintfont{ \ottnt{Q} }  ; \Gamma_{{\mathrm{1}}}  \vdash  \ottnt{e_{{\mathrm{1}}}} \mathbin{:} \tau_{{\mathrm{1}}} }{}\<[E]%
\\
\>[3]{}\hsindent{2}{}\<[5]%
\>[5]{}\mathbf{in}\;\dsterm{\ottmv{ \ottmv{z_{{\mathrm{2}}}} }}{  \constraintfont{ \ottnt{Q_{{\mathrm{2}}}} }  ; \Gamma_{{\mathrm{2}}} ,\Varid{x}\mathop{:_{1}} \tau_{{\mathrm{1}}}  \vdash  \ottnt{e_{{\mathrm{2}}}} \mathbin{:} \tau }\mskip1.5mu\}{}\<[E]%
\\
\>[B]{}\dsterm{\ottmv{\Varid{z}}}{  \multiplicityfont{ \omega }  \mathbin{⋅}  \constraintfont{ \ottnt{Q_{{\mathrm{1}}}} }   \qtensor   \constraintfont{ \ottnt{Q_{{\mathrm{2}}}} }  ;  \multiplicityfont{ \omega }  \mathbin{⋅} \Gamma_{{\mathrm{1}}} \mathbin{+} \Gamma_{{\mathrm{2}}}  \vdash  \klet_  \multiplicityfont{ \omega }  \;\Varid{x}\mathrel{=} \ottnt{e_{{\mathrm{1}}}} \;\mathbf{in}\; \ottnt{e_{{\mathrm{2}}}} \mathbin{:} \tau }\mathrel{=}{}\<[E]%
\\
\>[B]{}\hsindent{3}{}\<[3]%
\>[3]{} \kcase_  \multiplicityfont{ \ottsym{1} }  \;\Varid{z}\;\mathbf{of}\;\{\mskip1.5mu (\underline{ \ottmv{z_{{\mathrm{1}}}} }, \ottmv{z_{{\mathrm{2}}}} )\to {}\<[E]%
\\
\>[3]{}\hsindent{2}{}\<[5]%
\>[5]{} \klet_  \multiplicityfont{ \omega }  \;\Varid{x}\mathbin{:}\dsevidence{  \constraintfont{ \ottnt{Q} }  }\to_{1} \tau_{{\mathrm{1}}} \mathrel{=}\dsterm{\ottmv{ \ottmv{z_{{\mathrm{1}}}} }}{  \constraintfont{ \ottnt{Q_{{\mathrm{1}}}} }   \qtensor   \constraintfont{ \ottnt{Q} }  ; \Gamma_{{\mathrm{1}}}  \vdash  \ottnt{e_{{\mathrm{1}}}} \mathbin{:} \tau_{{\mathrm{1}}} }\;\mathbf{in}{}\<[E]%
\\
\>[3]{}\hsindent{2}{}\<[5]%
\>[5]{}\dsterm{\ottmv{ \ottmv{z_{{\mathrm{2}}}} }}{  \constraintfont{ \ottnt{Q_{{\mathrm{2}}}} }  ; \Gamma_{{\mathrm{2}}} ,\Varid{x}\mathop{:_{  \multiplicityfont{ \omega }  }} \tau_{{\mathrm{1}}}  \vdash  \ottnt{e_{{\mathrm{2}}}} \mathbin{:} \tau }\mskip1.5mu\}{}\<[E]%
\\
\>[B]{}\dsterm{\ottmv{\Varid{z}}}{  \constraintfont{ \ottnt{Q_{{\mathrm{1}}}} }   \qtensor   \constraintfont{ \ottnt{Q_{{\mathrm{2}}}} }  ; \Gamma_{{\mathrm{1}}} \mathbin{+} \Gamma_{{\mathrm{2}}}  \vdash  \klet_  \multiplicityfont{ \ottsym{1} }  \;\Varid{x}\mathbin{:}\forall\; \overline{\ottmv{a} } .  \constraintfont{ \ottnt{Q} }  \Lolly  \tau_{{\mathrm{1}}} \mathrel{=} \ottnt{e_{{\mathrm{1}}}} \;\mathbf{in}\; \ottnt{e_{{\mathrm{2}}}} \mathbin{:} \tau }\mathrel{=}{}\<[E]%
\\
\>[B]{}\hsindent{3}{}\<[3]%
\>[3]{} \kcase_  \multiplicityfont{ \ottsym{1} }  \;\Varid{z}\;\mathbf{of}\;\{\mskip1.5mu ( \ottmv{z_{{\mathrm{1}}}} , \ottmv{z_{{\mathrm{2}}}} )\to {}\<[E]%
\\
\>[3]{}\hsindent{2}{}\<[5]%
\>[5]{} \klet_  \multiplicityfont{ \ottsym{1} }  \;\Varid{x}\mathbin{:}\forall\; \overline{\ottmv{a} } .\dsevidence{  \constraintfont{ \ottnt{Q} }  }\to_{1} \tau_{{\mathrm{1}}} \mathrel{=}\dsterm{\ottmv{ \ottmv{z_{{\mathrm{1}}}} }}{  \constraintfont{ \ottnt{Q_{{\mathrm{1}}}} }   \qtensor   \constraintfont{ \ottnt{Q} }  ; \Gamma_{{\mathrm{1}}}  \vdash  \ottnt{e_{{\mathrm{1}}}} \mathbin{:} \tau_{{\mathrm{1}}} }\;\mathbf{in}{}\<[E]%
\\
\>[3]{}\hsindent{2}{}\<[5]%
\>[5]{}\dsterm{\ottmv{ \ottmv{z_{{\mathrm{2}}}} }}{  \constraintfont{ \ottnt{Q_{{\mathrm{2}}}} }  ; \Gamma_{{\mathrm{2}}} ,\Varid{x}\mathop{:_{1}}\forall\; \overline{\ottmv{a} } .  \constraintfont{ \ottnt{Q} }  \Lolly  \tau_{{\mathrm{1}}}  \vdash  \ottnt{e_{{\mathrm{2}}}} \mathbin{:} \tau }\mskip1.5mu\}{}\<[E]%
\\
\>[B]{}\dsterm{\ottmv{\Varid{z}}}{  \multiplicityfont{ \omega }  \mathbin{⋅}  \constraintfont{ \ottnt{Q_{{\mathrm{1}}}} }   \qtensor   \constraintfont{ \ottnt{Q_{{\mathrm{2}}}} }  ;  \multiplicityfont{ \omega }  \mathbin{⋅} \Gamma_{{\mathrm{1}}} \mathbin{+} \Gamma_{{\mathrm{2}}}  \vdash  \klet_  \multiplicityfont{ \omega }  \;\Varid{x}\mathbin{:}\forall\; \overline{\ottmv{a} } .  \constraintfont{ \ottnt{Q} }  \Lolly  \tau_{{\mathrm{1}}} \mathrel{=} \ottnt{e_{{\mathrm{1}}}} \;\mathbf{in}\; \ottnt{e_{{\mathrm{2}}}} \mathbin{:} \tau }\mathrel{=}{}\<[E]%
\\
\>[B]{}\hsindent{3}{}\<[3]%
\>[3]{} \kcase_  \multiplicityfont{ \ottsym{1} }  \;\Varid{z}\;\mathbf{of}\;\{\mskip1.5mu (\underline{ \ottmv{z_{{\mathrm{1}}}} }, \ottmv{z_{{\mathrm{2}}}} )\to {}\<[E]%
\\
\>[3]{}\hsindent{2}{}\<[5]%
\>[5]{} \klet_  \multiplicityfont{ \omega }  \;\Varid{x}\mathbin{:}\forall\; \overline{\ottmv{a} } .\dsevidence{  \constraintfont{ \ottnt{Q} }  }\to_{1} \tau_{{\mathrm{1}}} \mathrel{=}\dsterm{\ottmv{ \ottmv{z_{{\mathrm{1}}}} }}{  \constraintfont{ \ottnt{Q_{{\mathrm{1}}}} }   \qtensor   \constraintfont{ \ottnt{Q} }  ; \Gamma_{{\mathrm{1}}}  \vdash  \ottnt{e_{{\mathrm{1}}}} \mathbin{:} \tau_{{\mathrm{1}}} }\;\mathbf{in}{}\<[E]%
\\
\>[3]{}\hsindent{2}{}\<[5]%
\>[5]{}\dsterm{\ottmv{ \ottmv{z_{{\mathrm{2}}}} }}{  \constraintfont{ \ottnt{Q_{{\mathrm{2}}}} }  ; \Gamma_{{\mathrm{2}}} ,\Varid{x}\mathop{:_{  \multiplicityfont{ \omega }  }} \tau_{{\mathrm{1}}}  \vdash  \ottnt{e_{{\mathrm{2}}}} \mathbin{:} \tau }\mskip1.5mu\}{}\<[E]%
\\
\>[B]{}\dsterm{\ottmv{\Varid{z}}}{  \multiplicityfont{ \omega }  \mathbin{⋅}  \constraintfont{ \ottnt{Q_{{\mathrm{1}}}} }   \qtensor   \constraintfont{ \ottnt{Q_{{\mathrm{2}}}} }  ;  \multiplicityfont{ \omega }  \mathbin{⋅} \Gamma_{{\mathrm{1}}} \mathbin{+} \Gamma_{{\mathrm{2}}}  \vdash  \kcase_  \multiplicityfont{ \ottsym{1} }  \;\Varid{e}\;\mathbf{of}\;\{\mskip1.5mu \overline{\ottmv{K}_i\ \overline{\ottmv{x}_i } \to \ottnt{e}_i }\mskip1.5mu\}\mathbin{:} \tau }{}\<[67]%
\>[67]{}\mathrel{=}{}\<[67E]%
\\
\>[B]{}\hsindent{3}{}\<[3]%
\>[3]{} \kcase_  \multiplicityfont{ \ottsym{1} }  \;\Varid{z}\;\mathbf{of}\;\{\mskip1.5mu (\underline{ \ottmv{z_{{\mathrm{1}}}} }, \ottmv{z_{{\mathrm{2}}}} )\to {}\<[E]%
\\
\>[3]{}\hsindent{2}{}\<[5]%
\>[5]{} \kcase_  \multiplicityfont{ \ottsym{1} }  \;(\dsterm{\ottmv{ \ottmv{z_{{\mathrm{1}}}} }}{  \constraintfont{ \ottnt{Q_{{\mathrm{1}}}} }  ; \Gamma_{{\mathrm{1}}}  \vdash \Varid{e}\mathbin{:}\Conid{T}\; \overline{\tau} })\;\mathbf{of}{}\<[E]%
\\
\>[5]{}\hsindent{2}{}\<[7]%
\>[7]{}\{\mskip1.5mu \overline{\Conid{K}\; \overline{\ottmv{x} }_{\ottmv{i}} \to \dsterm{\ottmv{ \ottmv{z_{{\mathrm{2}}}} }}{  \constraintfont{ \ottnt{Q_{{\mathrm{2}}}} }  ; \Gamma_{{\mathrm{2}}} ,\overline{ \ottmv{x_{\ottmv{i}}} \mathop{:_{(  \multiplicityfont{ \pi }  \mathbin{⋅}  \multiplicityfont{ \pi_{\ottmv{i}} }  )}} \upsilon_{\ottmv{i}} [ \overline{\tau} / \overline{\ottmv{a} } ]} \vdash  \ottnt{e_{\ottmv{i}}} \mathbin{:} \tau }}\mskip1.5mu\}\mskip1.5mu\}{}\<[E]%
\\
\>[B]{}\dsterm{\ottmv{\Varid{z}}}{  \constraintfont{ \ottnt{Q_{{\mathrm{1}}}} }   \qtensor   \constraintfont{ \ottnt{Q_{{\mathrm{2}}}} }  ; \Gamma_{{\mathrm{1}}} \mathbin{+} \Gamma_{{\mathrm{2}}}  \vdash  \kcase_  \multiplicityfont{ \omega }  \;\Varid{e}\;\mathbf{of}\;\{\mskip1.5mu \overline{\ottmv{K}_i\ \overline{\ottmv{x}_i } \to \ottnt{e}_i }\mskip1.5mu\}\mathbin{:} \tau }{}\<[53]%
\>[53]{}\mathrel{=}{}\<[53E]%
\\
\>[B]{}\hsindent{3}{}\<[3]%
\>[3]{} \kcase_  \multiplicityfont{ \ottsym{1} }  \;\Varid{z}\;\mathbf{of}\;\{\mskip1.5mu ( \ottmv{z_{{\mathrm{1}}}} , \ottmv{z_{{\mathrm{2}}}} )\to {}\<[E]%
\\
\>[3]{}\hsindent{2}{}\<[5]%
\>[5]{} \kcase_  \multiplicityfont{ \omega }  \;(\dsterm{\ottmv{ \ottmv{z_{{\mathrm{1}}}} }}{  \constraintfont{ \ottnt{Q_{{\mathrm{1}}}} }  ; \Gamma_{{\mathrm{1}}}  \vdash \Varid{e}\mathbin{:}\Conid{T}\; \overline{\tau} })\;\mathbf{of}{}\<[E]%
\\
\>[5]{}\hsindent{2}{}\<[7]%
\>[7]{}\{\mskip1.5mu \overline{\Conid{K}\; \overline{\ottmv{x} }_{\ottmv{i}} \to \dsterm{\ottmv{ \ottmv{z_{{\mathrm{2}}}} }}{  \constraintfont{ \ottnt{Q_{{\mathrm{2}}}} }  ; \Gamma_{{\mathrm{2}}} ,\overline{ \ottmv{x_{\ottmv{i}}} \mathop{:_{(  \multiplicityfont{ \pi }  \mathbin{⋅}  \multiplicityfont{ \pi_{\ottmv{i}} }  )}} \upsilon_{\ottmv{i}} [ \overline{\tau} / \overline{\ottmv{a} } ]} \vdash  \ottnt{e_{\ottmv{i}}} \mathbin{:} \tau }}\mskip1.5mu\}\mskip1.5mu\}{}\<[E]%
\ColumnHook
\end{hscode}\resethooks
\end{minipage}
\right.
$$

  \caption{Desugaring}
  \label{fig:full:desugaring}
\end{figure}

\end{document}